\title{\vspace{-2cm}Improving time series estimation and prediction via transfer learning}
\author{Yuchang Lin, Qianqian Zhu and Guodong Li\thanks{Address for correspondence: Qianqian Zhu, School of Statistics and Data Science, Shanghai University of Finance and Economics, Shanghai, China. Email: zhu.qianqian@mail.shufe.edu.cn}  \\ \vspace{-0.3cm}\textit{Shanghai University of Finance and Economics and University of Hong Kong}}
\numberwithin{equation}{section}
\newcolumntype{L}[1]{>{\raggedright\let\newline\\\arraybackslash\hspace{0pt}}m{#1}}
\newcolumntype{C}[1]{>{\centering\let\newline\\\arraybackslash\hspace{0pt}}m{#1}}
\newcolumntype{R}[1]{>{\raggedleft\let\newline\\\arraybackslash\hspace{0pt}}m{#1}}
\newcommand*{\addFileDependency}[1]{
  \typeout{(#1)}
  \@addtofilelist{#1}
  \IfFileExists{#1}{}{\typeout{No file #1.}}
}
\newtheorem{assum}{Assumption}
\newtheorem{definition}{Definition}
\newtheorem{lemma}{Lemma}
\newtheorem{proposition}{Proposition}
\newtheorem{theorem}{Theorem}
\newtheorem{remark}{Remark}
\newtheorem{claim}{Claim}
\DeclareMathOperator*{\argmin}{arg\,min}
\newcommand{\bm}{\mathbf}
\newcommand{\bbm}{\boldsymbol}
\newcommand{\rrb}{\rrbracket}
\newcommand{\llb}{\llbracket}
\newcommand{\cm}[1]{\mbox{\boldmath$\mathscr{#1}$}}
\newcommand{\op}{\mathrm{op}}
\newcommand{\F}{\mathrm{F}}
\newcommand{\tr}{\mathrm{tr}}
\newcommand{\prox}{\mathrm{prox}}
\newcommand{\vect}{\mathrm{vec}}
\tikzstyle{startstop} = [rectangle, rounded corners, minimum width=3cm, minimum height=1cm,text centered, draw=black, fill=purple!30]
\tikzstyle{stop} = [rectangle, rounded corners, minimum width=3.6cm, minimum height=1cm, draw=black, fill=purple!30, text width = 3.6cm]
\tikzstyle{start} = [rectangle, rounded corners, minimum width=2.7cm, minimum height=1cm, draw=black, fill=purple!30, text width = 2.7cm]
\tikzstyle{io} = [trapezium, trapezium left angle=70, trapezium right angle=110, minimum width=3cm, minimum height=1cm, draw=black, fill=blue!30]
\tikzstyle{process} = [rectangle, rounded corners, minimum width=3cm, minimum height=1cm, draw=black, fill=blue!10, text width=10cm]
\tikzstyle{process_new} = [rectangle, rounded corners, minimum width=3cm, minimum height=1cm, draw=black, fill=blue!10, text width=11.5cm]
\tikzstyle{process2} = [rectangle, rounded corners, minimum width=3cm, minimum height=1cm, draw=black, fill=blue!10, text width=10cm]
\tikzstyle{decision} = [diamond, minimum width=3cm, minimum height=1cm, text centered, draw=black, fill=green!30]
\tikzstyle{arrow} = [thick,->,>=stealth]
\DeclareRobustCommand\sampleline[1]{%
    \tikz\draw[#1] (0,0) (0,\the\dimexpr\fontdimen22\textfont2\relax)
    -- (2em,\the\dimexpr\fontdimen22\textfont2\relax);%
}
\begin{document}

\setlength{\parindent}{16pt}

\maketitle

\begin{abstract}
	There are many time series in the literature with high dimension yet limited sample sizes, such as macroeconomic variables, and it is almost impossible to obtain efficient estimation and accurate prediction by using the corresponding datasets themselves.
	This paper fills the gap by introducing a novel representation-based transfer learning framework for vector autoregressive models, and information from related source datasets with rich observations can be leveraged to enhance estimation efficiency through representation learning. 
	A two-stage regularized estimation procedure is proposed with well established non-asymptotic properties, and algorithms with alternating updates are suggested to search for the estimates.
	Our transfer learning framework can handle time series with varying sample sizes and asynchronous starting and/or ending time points, thereby offering remarkable flexibility in integrating information from diverse datasets. 
	Simulation experiments are conducted to evaluate the finite-sample performance of the proposed methodology, and its usefulness is demonstrated by an empirical analysis on 20 macroeconomic variables from Japan and another nine countries.
\end{abstract}

\textit{Keywords}: high-dimensional time series; non-asymptotic properties; tensor decomposition; transfer learning; vector autoregression

\newpage
\vspace{-1cm}

\linespread{1.55}
\selectfont{}

\section{Introduction}

With the rapid advancement of information technology, high-dimensional data are now ubiquitous, and particularly many of them are time-dependent.
Examples can be found in various fields, including economics and finance \citep{nicholson2020high,wang2024high}, healthcare \citep{DP16,davis2016sparse}, and environmental studies \citep{bahadori2014fast,xu2018muscat}.
These high-dimensional data usually come with large sample sizes, however, the situation is different for macroeconomic variables in the field of economics.
Specifically, the number of variables can be large \citep{stock2009forecasting,Gao_Tsay2022}, while their sample sizes may be limited since they are typically recorded at a lower, say quarterly or even yearly, frequency.
Moreover, it is common for these variables to have asynchronous starting and/or ending time points, while all existing statistical and econometric tools for multivariate or high-dimensional time series require the observations to have uniformly aligned time period.
This leads to an unavoidable data truncation, and hence a much shorter sample size.
It is an important task in the literature to effectively analyze and accurately forecast macroeconomic variables with high dimension yet limited sample sizes.

As an illustrative example, we consider the $20$ quarterly macroeconomic variables from Japan, and the earliest records can trace back to the first quarter of 1955, resulting in 276 observations of the longest variable; see Figure \ref{fig:country_indicator_length} for details.
However, there are only 87 observations for the shortest variable, ending up with the sample size of 87, and it is almost impossible to provide a reliable analysis by using these $20$ macroeconomic variables from Japan themselves.
On the other hand, from Figure \ref{fig:country_indicator_length}, we also have observations for the same $20$ macroeconomic variables but from another nine countries, and they have substantially richer observations; see, for example, the sample size for the dataset of United States is 216.
These macroeconomic variables may have similar interdependence across ten countries although their definitions may even be slightly different.
This raises an interesting question: can we make use of such similarity to leverage information from these data-rich datasets to enhance estimation efficiency and forecasting accuracy for the target dataset of Japan?

\begin{figure}[ht]
	\centering
	\includegraphics[width=0.99\linewidth]{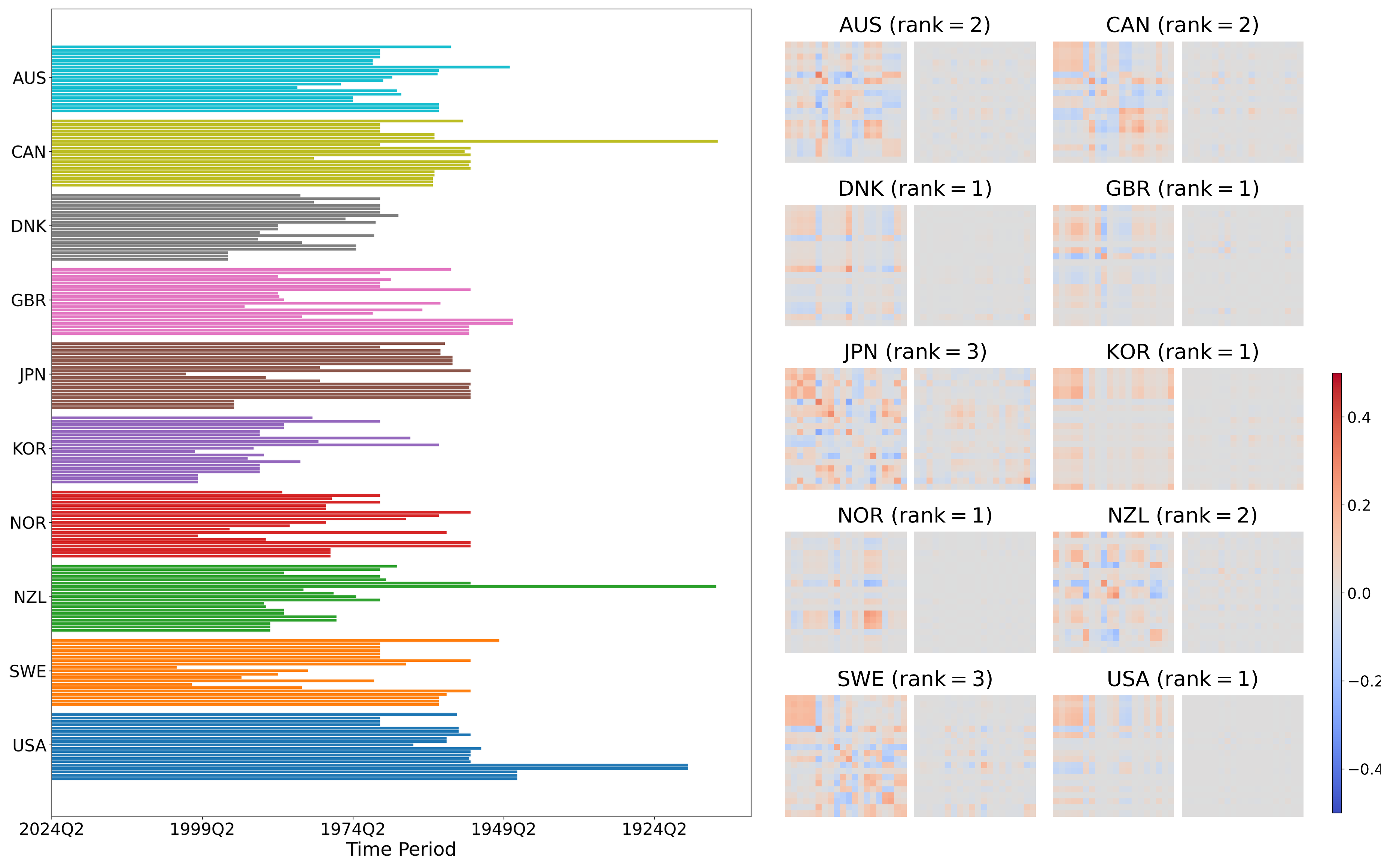}
	\caption{Ending and starting time points for 20 quarterly macroeconomic variables across ten countries (left panel), and heatmaps for projection matrices of response factor spaces before and after removing the common space with rank three (right panel).
	}
	\label{fig:country_indicator_length}
\end{figure} 


As arguably the most widely used multivariate time series model, the vector autoregression (VAR) has been a primary workhorse for many high-dimensional tasks, including macroeconomic variables; see \cite{basu2015regularized,kock2015oracle,wang2022high}. 
The general VAR model has $O(N^2)$ parameters, where $N$ is the number of variables or sequences, and it hence requires much more sample sizes for a reliable inference.
This limitation becomes more pronounced for the example of macroeconomic variables due to the limited sample size, and it is necessary to first restrict the parameter space such that a reasonable estimation can be achieved.
A direct method in the literature is to assume that the coefficient matrices are sparse and then to apply sparsity-inducing regularized estimation \citep{basu2015regularized, han2015direct, kock2015oracle}. 
This sparse VAR has been widely used in many applications; see, for example, time-course gene expression data, where dependence among genes is believed to be sparse \citep{LZLR09}.
However, for macroeconomic variables, one often observes strong cross-sectional dependence among $N$ scalar series, and it is usually described by assuming that the $N$ variables are driven by a small number of common latent factors \citep{lam2012factor,BW16}.

Another commonly used approach for dimension reduction is to impose low-rank structures to coefficient matrices \citep{negahban2011estimation,basu2019low}, and it will lead to the reduced-rank \citep{velu2013multivariate} and low-multilinear-rank \citep{wang2022high} models when the low-rank assumption is applied to VAR models.
On the one hand, the above low-rank assumption is imposed mainly to reduce the dimension of parameter spaces to $O(N)$. On the other hand, together with the regression setting, it will lead to a nice interpretation of supervised factor modeling \citep{HUANG2025,wang2024high}.
Specifically, we can interpret the low-rank VAR as projecting responses and predictors into a small number of latent factors, termed response and predictor factors, respectively.
The response factors are used to summarize all predictable components of the market, while the predictor factors contain all driving forces.
This well matches the scenario of macroeconomic variables, however, the sample sizes may still not be enough to provide a reliable estimation and accurate prediction, say for the 20 macroeconomic variables from Japan.

Interestingly, when the low-rank VAR model in Section \ref{sec:SupervisedFactor} is applied to the 20 macroeconomic variables from each of ten countries, the response factors are extracted in a similar way across these countries.
Specifically, although the response factor spaces, spanned by the corresponding factor loadings, vary in terms of ranks and directions, they share a common space with a rank as small as three; see Figure \ref{fig:country_indicator_length} for details.
This may be due to the fact that these 20 macroeconomic variables are defined in almost the same way across different countries, and they hence have the similar interdependence.
Moreover, such similarity can also be observed in predictor factors, as well as the temporal ones summarizing the lagged variables; see Section \ref{sec:RealData} for empirical evidences.
As a result, we may consider a new methodology to first estimate the three common spaces by using the data-rich time series from the nine countries, other than Japan, and then to plug them in the VAR modeling on the target time series of Japan, leading to more efficient estimation and more accurate prediction.

The above proposal is related to a huge literature of transfer learning, which seeks to enhance target task performance by strategically transferring knowledge from related source domains.
The concept of transfer learning comes from the machine learning literature \citep{Pan2010ASO}, and it recently has attracted more and more attentions from statisticians.
There are two commonly used ways in the literature to design a transfer learning framework, and they are based on distances and representations, respectively.
The distance-based transfer learning framework measures the similarity between target and source domains by using a distance metric; see, for example, the $\ell_1$-norm or angle-based distance in linear regression \citep{Li2020TLLM,Li2023TLGLM,Gu2022TLangleLM}, and the nuclear-norm in matrix regression \citep{park2025transfer}.
Our problem can be solved by the representation-based transfer learning framework, and it assumes the parameters at target and source domains to share common representations exactly  
\citep{du2020few,Tripuraneni2020diversity,tripuraneni21linear} or loosely with possible task-specific adaptations \citep{chua2021similarRep,duan2023adaptive,chen2025distributed}.
The transfer learning techniques have been widely discussed for independent data with theoretical guarantees, while there are only very few distance-based methods for time series data in the literature \citep{zeng2024TLsptialAR,ma2025transfer}.
To the best of our knowledge, it is still lack of a representation-based transfer learning approach for time series data.

This paper has four main contributions below.  
First, we introduce a transfer learning framework for VAR models at Section \ref{sec:Motivation}, where transition tensors are assumed to have shared representations and task-specific parameters, while similarity measures introduce another layer of flexibility.
This design enables dual advantages: common representations facilitate information transfer from source datasets to enhance target estimation, while data-driven similarity metrics dynamically calibrate source relevance based on approximation accuracy. Moreover, we allow for source and target datasets with asynchronous observation periods.
Secondly, a two-stage estimation procedure is suggested at Section \ref{subsec:Estimation}, and their non-asymptotic properties are established at Section \ref{subsec:Theory}.
Thirdly, due to the nonconvex optimization nature, Section \ref{sec:Algorithms} provides algorithms with alternating updates to search for estimates, and the initialization and hyperparameter selection are also carefully discussed.
Finally, the proposed methodology is applied to the example of macroeconomic variables in Section \ref{sec:RealData}, and prediction improvement can be observed significantly for the 20 variables. 

In addition, the proposed methodology is extended to handle the case with numerous source datasets at Section \ref{sec:LargeK}, and its finite-sample performance is evaluated at Sections \ref{sec:Simulation}. 
A short conclusion and discussion is given at Section \ref{sec:Conclusions}, and all technical proofs are delayed to Appendix.
Throughout this paper, we denote vectors by boldface lower case letters, e.g., $\bm{a}$, matrices by boldface capital letters, e.g., $\bm{A}$, and tensors by Euler script letters, e.g., $\cm{A}$. For a vector $\bm{a}$, denote by $\|\bm{a}\|_2$ its Euclidean norm. 
For a matrix $\bm{A}$, denote by $\bm{A}^\top$, $\|\bm{A}\|_\F$, $\sigma_i(\bm{A})$ and $\mathcal{M}(\bm{A})$ its transpose, Frobenius norm, $i$-th largest singular value, and the space spanned by its columns, respectively. 
When $\bm{A}$ is symmetric, we further denote by $\lambda_{\max}(\bm{A})$ and $\lambda_{\min}(\bm{A})$ its largest and smallest eigenvalue, respectively. 
For two matrices $\bm{A}_1$ and $\bm{A}_2$, we denote their Kronecker product by
$\bm A_1 \otimes \bm A_2$. 
For positive integers $p\geq q$, denote the set of column orthonormal matrices by $\mathbb{O}^{p\times q}:=\{\bm{A}\in\mathbb{R}^{p\times q}:\bm{A}^\top\bm{A}=\bm{I}_q\}$. 
For tensors $\cm{A}$ and $\cm{B}$, denote by $\|\cm{A}\|_\textup{F}$ the Frobenius norm, by $\cm{A}_{(i)}$ the mode-$i$ matricization, and by $\langle\cm{A},\cm{B}\rangle$ the generalized inner product; see the Appendix \ref{sec:prelim} for more tensor introduction and algebra preliminaries. 
For two real-valued sequences $\{x_k\}$ and $\{y_k\}$, $x_k\gtrsim y_k$ or $x_k\lesssim y_k$ if there exists a $C>0$ such that $x_k\geq Cy_k$ or $x_k\leq Cy_k$ for all $k$, respectively. 
In addition, we write $x_k\asymp y_k$ if $x_k\gtrsim y_k$ and $y_k\gtrsim x_k$.
The dataset in Section \ref{sec:RealData} and computer programs for the analysis are available at \url{https://github.com/LinyuchangSufe/TLVAR}. 

\section{Model settings}\label{sec:Motivation}

\subsection{Low-rank VAR and supervised factor modeling} \label{sec:SupervisedFactor}
For $N$-dimensional time series $\{\bm y_{t}\}$, we consider a vector autoregressive (VAR) model below,
\begin{equation}\label{eq:VAR}
	\bm y_{t} = \bm A_{1}\bm y_{t-1} +\cdots \bm A_{p} \bm y_{t-p} + \bbm\varepsilon_{t},
\end{equation}
where $\bm A_{j} \in \mathbb{R}^{N\times N}$ with $1\leq j\leq p$ are transition matrices, and $\{\bbm \varepsilon_{t}\}$ are independent and identically distributed ($i.i.d.$) with zero mean and finite covariance matrix $\bm \Sigma_{\bbm \varepsilon}\in \mathbb{R}^{N\times N}$. 
We first rearrange the transition matrices into a tensor $\cm{A}\in \mathbb{R}^{N\times N\times p}$ such that its mode-1 matricization is $\cm{A}_{(1)} = (\bm A_{1}, \ldots, \bm A_{p})\in \mathbb{R}^{N\times Np}$.
Denote $\bm Z_{t} = (\bm y_{t-1}, \ldots, \bm y_{t-p})\in\mathbb{R}^{N\times p}$ and $\bm x_{t} =(\bm y_{t-1}^\top, \ldots, \bm y_{t-p}^\top)^\top\in\mathbb{R}^{Np}$. Note that $\bm x_{t} = \vect{(\bm Z_t)} $, and model \eqref{eq:VAR} can be rewritten into
\begin{equation}\label{eq:VARptensor}
	\bm y_{t} =\langle\cm{A}, \bm Z_{t}\rangle + \bbm \varepsilon_{t}  =\cm{A}_{(1)} \bm x_{t} + \bbm \varepsilon_{t}.
\end{equation}

We impose a low-rank constraint to the transition tensor $\cm{A}$ as in \cite{wang2022high}. Suppose that it has multilinear low ranks $(r_{1}, r_{2}, r_{3})$, i.e., $\text{rank}(\cm{A}_{(j)})=r_{j}$ for $1\leq j \leq 3$, and then there exists a Tucker decomposition \citep{de2000multilinear},  
\begin{equation}\label{eq:HOSVD}
	\cm{A} = \cm{S} \times_1 \bm U \times_2 \bm V \times_3 \bm L=: \llb \cm{S} ;\bm U,\bm V,\bm L\rrb,
\end{equation} 
where $\cm{S} \in \mathbb{R}^{r_{1}\times r_{2} \times r_{3}}$ is the core tensor, and $\bm U \in \mathbb{R}^{N\times r_{1}}, \bm V \in \mathbb{R}^{N\times r_{2}}$ and $\bm L \in \mathbb{R}^{p\times r_{3}}$ are factor matrices. 
The above decomposition is not unique, since 
$\cm{A}=\cm{S} \times_1 \bm{U} \times_2 \bm{V} \times_3 \bm{L} =\left(\cm{S}\times_1 \bm{O}_1 \times_2 \bm{O}_2 \times_3 \bm{O}_3\right) \times_1\left(\bm{U} \bm{O}_1^{-1}\right) \times_2\left(\bm{V} \bm{O}_2^{-1}\right) \times_3\left(\bm{L} \bm{O}_3^{-1}\right)$ for any invertible matrices $\bm{O}_i \in \mathbb{R}^{r_i \times r_i}$ with $1\leq i \leq 3$.
Without loss of generality, we restrict the three factor matrices to be orthonormal, i.e., $\bm U \in \mathbb{O}^{N\times r_{1}}, \bm V \in \mathbb{O}^{N\times r_{2}}$ and $\bm L \in \mathbb{O}^{p\times r_{3}}$; see, for example, the high-order singular value decomposition (HOSVD) in \cite{kolda2009tensor}.
Note that $\bm U, \bm V$ and $\bm L$ are still not unique, while their spanned spaces, $\mathcal{M}(\bm U), \mathcal{M}(\bm V)$ and $\mathcal{M}(\bm L)$, along with the corresponding projection matrices, $\bm U \bm U^\top, \bm V \bm V^\top$ and $\bm L \bm L^\top$, can be uniquely defined.

Denote $\cm{G} = \cm{S} \times_3 \bm L \in \mathbb{R}^{r_1\times r_2 \times p}$, and let $\bm G_j \in \mathbb{R}^{r_1 \times r_2}$ be its $j$-th frontal slice for $1\leq j \leq p$. It holds that $\cm{G}_{(1)} = (\bm G_{1}, \ldots, \bm G_{p})\in \mathbb{R}^{r_1\times r_1p}$, and $\cm{A} = \cm{G} \times_1 \bm U \times_2 \bm V$. As a result, $\bm A_j = \bm U \bm G_j \bm V^\top$ with $1\leq j \leq p$, and model \eqref{eq:VAR} or \eqref{eq:VARptensor} can be rewritten into
\begin{equation}\label{eq:MLRVAR}
	\bm y_{t} = \sum_{j=1}^p \bm U \bm G_j \bm V^\top \bm y_{t-j} + \bbm \varepsilon_{t}
	\hspace{3mm}\text{or}\hspace{3mm}
	\bm U^\top\bm y_{t} = \sum_{j=1}^p  \bm G_j \bm V^\top \bm y_{t-j} + \bm U^\top\bbm \varepsilon_{t}, 
\end{equation}
which can be interpreted from supervised factor modeling perspectives \citep{HUANG2025}. 
Specifically, we first project responses $\bm y_{t}$ onto subspace $\mathcal{M} (\bm U)$ and its orthogonal complement, $\mathcal{M}^\perp(\bm U)$, i.e. $\bm y_{t}= \mathbf{U}\mathbf{U}^{\top}\mathbf{y}_{t}+(\mathbf{I}_{N} - \mathbf{U}\mathbf{U}^{\top})\mathbf{y}_{t}$, and these two parts can be verified to have completely different dynamic structures,
\begin{equation}
	\mathbf{U}\mathbf{U}^{\top}\mathbf{y}_{t} = \sum_{j=1}^p \mathbf{U}\bm G_j \bm V^\top \bm y_{t-j}+ \mathbf{U}\mathbf{U}^{\top}\boldsymbol{\varepsilon}_{t} \quad \text{and} \quad
	(\mathbf{I}_{N} - \mathbf{U}\mathbf{U}^{\top})\mathbf{y}_{t} = (\mathbf{I}_{N} - \mathbf{U}\mathbf{U}^{\top})\boldsymbol{\varepsilon}_{t},
\end{equation}
where all information of $\mathbf{y}_{t}$ related to temporally dependent structures is contained in $\mathcal{M} (\bm U)$, whereas $\mathcal{M}^\perp(\bm U)$ includes only purely idiosyncratic and serially independent components.
Consequently, we call $\mathbf{U}^{\top}\mathbf{y}_{t}\in \mathbb{R}^{r_1}$ the \textit{response factor} since it summarizes all predictable components in responses, and accordingly $\mathcal{M} (\bm U)$ is referred to the \textit{response factor space}.

On the other hand, for the dimension reduction on predictors, we project $\bm y_{t-j}$ onto $\mathcal{M}(\bm V)$, leading to $\bm V\bm V^\top \bm y_{t-j}$, for each $1\leq j \leq p$.
Note that, to measure the relationship between random variables $\bm X$ and $\bm Y$ after removing the effects of a random vector $\bm Z$, we can use the partial covariance function, defined as $\mathrm{pcov}(\bm X,\bm Y| \bm Z)=\mathrm{cov}\{\bm X-\mathbb{E}(\bm X|\bm Z),\,\bm Y-\mathbb{E}(\bm Y|\bm Z)\}$. 
As a result, when $\mathbb{E}\|\bm y_t\|_2^2<\infty$, we can verify that
$$ \mathrm{pcov}(\bm y_t,\bm y_{t-l}\mid \bm V\bm V^\top\bm y_{t-j}, 1\leq j\leq p)=\bm 0 \hspace{3mm}\text{for all $l\geq 1$}, $$
i.e., the space $\mathcal{M}(\bm V)$ captures all information of $\bm y_{t-j}$ that contributes to forecasting $\bm y_t$, or $\bm V^\top \bm y_{t-j}$ contains all driving forces of the market.
Thus, we call $\bm V^\top \bm y_{t}\in \mathbb{R}^{r_2}$ the \textit{predictor factor} for simplicity, and $\mathcal{M}(\bm V)$ can be referred to the \textit{predictor factor space}. 

Similarly, we can conduct the dimension reduction in the temporal direction. Specifically, denote $\cm{H} = \cm{S} \times_1 \bm U \times_2 \bm V \in \mathbb{R}^{N\times N \times r_3}$, and let $\bm H_k\in\mathbb{R}^{N\times N}$ be its $k$-th frontal slice for $1\leq k \leq r_3$. It then holds that $\cm{H}_{(1)} = (\bm H_{1}, \ldots, \bm H_{r_3})\in \mathbb{R}^{N\times Nr_3}$, and $\cm{A}=\cm{H}\times_3 \bm L$.
Correspondingly, model \eqref{eq:VAR} or \eqref{eq:VARptensor} can be rewritten into
\[
\bm y_{t} =\langle \cm{H},\bm Z_{t}\bm L \rangle+ \bbm \varepsilon_{t}= \sum_{k=1}^{r_3} \bm H_k\bm z_{k,t} + \bbm \varepsilon_{t},
\]
where $\bm Z_{t} = (\bm y_{t-1}, \ldots, \bm y_{t-p})\in\mathbb{R}^{N\times p}$, $\bm Z_{t}\bm L=(\bm z_{1,t},\ldots,\bm z_{r_3,t})^{\top}\in\mathbb{R}^{N\times r_3}$, and $N$-dimensional vectors $\bm z_{k,t}$'s are linear combinations of $\bm y_{t-j}$ with $1\leq j\leq p$.
Therefore, we call $\bm L^{\top}\bm Z_{t}^{\top}\in \mathbb{R}^{r_3\times N}$ the \textit{temporal factor}, and $\mathcal{M}(\bm L)$ can be referred to the \textit{temporal factor space}. 

This paper concentrates on the VAR model at  \eqref{eq:VAR} or \eqref{eq:VARptensor}, with the low-rank constraint at \eqref{eq:HOSVD}.
Note that matrices $\bm U, \bm V$ and $\bm L$ serve as factor loadings to extract the three types of factors, respectively, and they can be referred to the representations of  coefficient tensor $\cm{A}$.
When the model is applied to the example of macroeconomic variables, it ends up with similar  response, predictor and temporal representations across ten countries; see Figure \ref{fig:country_indicator_length} and Section \ref{sec:RealData} for details.
This may be due to the fact that these macroeconomic variables have similar interdependence across different countries. 
As a result, this paper will introduce a transfer learning framework for VAR models by making use of such similarity.

\subsection{Transfer learning framework} \label{sec:ProblemSetup}

Consider $K+1$ $N$-dimensional time series, one target time series $\{\bm y_{t,0}\}$ and $K$ source time series $\{\bm y_{t,k}\}$ with $k\in [K]=\{1,\ldots, K\}$, and each of them is assumed to follow the VAR model with low-rank constraints in Section \ref{sec:SupervisedFactor}.
Specifically, for each $k\in \{0\} \cup [K]$, 
\begin{equation}\label{eq:VARp}
	\bm y_{t,k} =\cm{A}_{k(1)} \bm x_{t,k} + \bbm \varepsilon_{t,k} = \bm A_{1,k}\bm y_{t-1,k} +\cdots +\bm A_{p,k} \bm y_{t-p,k} + \bbm \varepsilon_{t,k}, 
\end{equation}
where $\cm{A}_k\in\mathbb{R}^{N\times N\times p}$ is the transition tensor with $\bm A_{j,k} \in \mathbb{R}^{N\times N}$ for $1\leq j\leq p$ and its mode-1 matricization $\cm{A}_{k(1)}=(\bm A_{1,k},\ldots,\bm A_{p,k})\in\mathbb{R}^{N\times Np}$, 
$\bm x_{t,k} =(\bm y_{t-1,k}^\top, \ldots, \bm y_{t-p,k}^\top)^\top\in\mathbb{R}^{Np}$,
and $\{\bbm \varepsilon_{t,k}\}$ are $i.i.d.$ with zero mean and finite covariance matrix $\bm \Sigma_{\bbm \varepsilon,k}\in \mathbb{R}^{N\times N}$.
Moreover, the transition tensor $\cm{A}_k$ has multilinear low ranks $(r_{1,k}, r_{2,k}, r_{3,k})$, i.e. $\text{rank}(\cm{A}_{k(j)})=r_{j,k}$ for $1\leq j\leq 3$, and we then have Tucker decomposition,
$\cm{A}_k =\llb \cm{S}_k ;\bm U_k,\bm V_k,\bm L_k\rrb$,
where $\cm{S}_k \in \mathbb{R}^{r_{1,k}\times r_{2,k} \times r_{3,k}}$ is the core tensor, and $\bm U_k \in \mathbb{O}^{N\times r_{1,k}}, \bm V_k \in \mathbb{O}^{N\times r_{2,k}}$ and $\bm L_k \in \mathbb{O}^{p\times r_{3,k}}$ are orthonormal factor matrices.
Finally, we assume that, across all target and source time series, the error terms are  independent, and the order $p$ is the same without loss of generality.

For a transfer learning problem, it is crucial to formalize the similarity between target and source domains or time series, and we can then leverage auxiliary information from the source time series to enhance estimation efficiency at the target one.
This paper assumes the similarity of response, predictor and temporal representations, respectively.
Specifically, there exist three common representations, $\bm U \in \mathbb{O}^{N\times s_1}, \bm V \in \mathbb{O}^{N\times s_2}$ and $\bm L \in \mathbb{O}^{p\times s_3}$, such that $\mathcal{M}(\bm U)=\mathcal{M}(\bm U_k,0\leq k\leq K)$, $\mathcal{M}(\bm V)=\mathcal{M}(\bm V_k,0\leq k\leq K)$, and $\mathcal{M}(\bm L)=\mathcal{M}(\bm L_k,0\leq k\leq K)$, where $s_j\geq \max\{r_{j,k},0\leq k\leq K\}$ with $1\leq j\leq 3$, and they are ranks of the common response, predictor and temporal representations, respectively.
As a result, for each $k\in \{0\} \cup [K]$, we have $\bm U_k = \bm U \bm O_{1,k}, \bm V_k = \bm V \bm O_{2,k}$ and $\bm L_k = \bm L \bm O_{3,k}$, where $\bm O_{i,k} \in \mathbb{O}^{s_{i} \times r_{i,k}}$ with $1\leq i\leq 3$ are rotation matrices, and then the transition tensor has an unified decomposition,
\begin{equation}\label{eq:add1}
	\cm{A}_k = \llb \cm{S}_k ;\bm U \bm O_{1,k},\bm V \bm O_{2,k},\bm L \bm O_{3,k}\rrb = \llb \cm{D}_k ;\bm U ,\bm V ,\bm L\rrb,
\end{equation}
where $\cm{D}_k = \llb \cm{S}_k ;\bm O_{1,k} , \bm O_{2,k} , \bm O_{3,k}\rrb \in \mathbb{R}^{s_1 \times s_2 \times s_3}$ varies among different VAR modeling tasks. It is noteworthy to point out that the above decomposition is also not unique.

In the meanwhile, the constraint at \eqref{eq:add1} may be restrictive in real applications, and some flexibility is needed to accommodate possible deviations.
To this end, we first introduce two similarity measures, $h_{\mathrm{S}}\geq 0$ and $h_{\mathrm{T}}\ge 0$, for the departures of transition tensors from the low-rank structure at \eqref{eq:add1} in the source and target domains, respectively.
It is then assumed that there exist three matrices $\bm U \in \mathbb{O}^{N\times s_1}, \bm V \in \mathbb{O}^{N\times s_2}$ and $\bm L \in \mathbb{O}^{p\times s_3}$ such that  
\begin{equation}\label{eq:similaritymeasure}
	\max_{k\in [K]} \min_{\cm{D}_k } \| \cm{A}_k - \llb   \cm{D}_k; \bm U, \bm V, \bm L \rrb \|_\F \le h_{\mathrm{S}} \quad\text{and}\quad \min_{\cm{D}_0 } \| \cm{A}_0 - \llb   \cm{D}_0; \bm U, \bm V, \bm L \rrb \|_\F \le h_{\mathrm{T}}.
\end{equation}
Smaller values of $h_{\mathrm{S}}$ and $h_{\mathrm{T}}$ imply stronger representational similarity, and the case with $h_{\mathrm{S}}=h_{\mathrm{T}}=0$ corresponds to an exact multilinear low-rank structure with common representations of $\bm U$, $\bm V$ and $\bm L$ at \eqref{eq:add1}.
When $h_{\mathrm{S}}>0$ and $h_{\mathrm{T}}>0$, it allows the transition tensors $\cm{A}_k$'s to deviate from this restrict structure, and hence more flexibility; see \cite{duan2023adaptive}.

\begin{remark}\label{remark:similarity}
	From \eqref{eq:similaritymeasure}, we have $\cm{A}_k = \llb \cm{D}_k; \bm U, \bm V, \bm L \rrb + \cm{R}_k$ with $\cm{R}_k\in \mathbb{R}^{N\times N\times p}$, and it holds that $\|\cm{R}_k\|_{\F}\leq h_{\mathrm{S}}$ for $k\in [K]$, and $\|\cm{R}_0\|_{\F}\leq h_{\mathrm{T}}$.
	The deviation tensor $\cm{R}_k$ may be low-rank to accommodate task-specific  representations, and it may also be sparse for heterogeneous signals.
	The framework in this paper is more general than the representation-based transfer learning that assumes fully shared representations across all tasks \citep{Tripuraneni2020diversity}, making it much more flexible for high-dimensional time series. 
\end{remark}

This paper employs different similarity measures for the sources and target domains and, specifically, we may consider the case with $h_{\mathrm{S}}<h_{\mathrm{T}}$.
As a result, source information can be aggressively leveraged under tighter similarity constraints, while target dynamics can be better estimated through a relaxed constraint. 
In addition, the framework at \eqref{eq:similaritymeasure} separates shared latent representations, i.e. $\bm U$, $\bm V$ and $\bm L$, from task-specific adaptations, $\cm{D}_k$, and this enables two advantages: source domain representation learning and efficient target adaptation.
Specifically, it can improve estimation efficiency of $\bm U$, $\bm V$ and $\bm L$ by aggregating information from source domains and hence leveraging larger combined sample sizes, and we then transfer these estimated representations to learn the target-specific tensor $\cm{D}_0$, enhancing estimation efficiency of transition tensor $\cm{A}_0$.
This adaptive mechanism, termed the transfer learning for VAR models, formalizes a representation-based transfer learning approach for high-dimensional time series modeling and forecasting. 

\section{High-dimensional estimation}\label{sec:Method}

\subsection{Transfer learning via ordinary least squares method}\label{subsec:Estimation}

Suppose that there are $K+1$ observed $N$-dimensional time series, $\{\bm y_{t,k}, -p+1\leq t\leq T_k\}$ with $k\in \{0\} \cup [K]$, generated by model \eqref{eq:VARp} with the constraint at \eqref{eq:similaritymeasure}.
This section conducts a transfer learning to improve the estimation efficiency and prediction accuracy of the target time series, $\{\bm y_{t,0}, -p+1\leq t\leq T_0\}$, by leveraging the auxiliary information from source time series,  $\{\bm y_{t,k}, -p+1\leq t\leq T_k\}$ with $k\in [K]$. 
These datasets can be temporally unaligned, with heterogeneous starting and/or ending points across domains. This setup enables leveraging cross-domain temporal dependencies while accommodating asynchronous data availability, which is a common situation in macroeconomic and financial applications.

Denote $\bm x_{t,k}=(\bm y_{t-1,k}^\top , \ldots, \bm y_{t-p,k}^\top)^\top\in\mathbb{R}^{Np}$, $\bm X_k = (\bm x_{1,k}, \ldots, \bm x_{T_k,k})\in\mathbb{R}^{Np\times T_k}$, 
$\bm Y_k = (\bm y_{1,k}, \ldots, \bm y_{T_k,k})\in\mathbb{R}^{N\times T_k}$, and
$\bm E_k = (\bbm{\varepsilon}_{1,k}, \ldots, \bbm{\varepsilon}_{T_k,k})\in\mathbb{R}^{N\times T_k}$.  
It holds that, from model \eqref{eq:VARp}, 
$\bm Y_k=\cm{A}_{k(1)}\bm X_k+\bm E_k$. We consider the ordinary least squares (OLS) estimation, and the corresponding loss functions can be defined below, for each $k\in \{0\} \cup [K]$
\[
\mathcal{L}_k(\cm{A}_k) = \frac{1}{2T_k}\sum_{t=1}^{T_k}\|\bm y_{t,k} - \sum_{j=1}^p\bm A_{j,k}\bm y_{t-j,k} \|_\F^2 = \frac{1}{2T_k}\left\|\bm Y_k - \cm{A}_{k(1)}\bm X_k\right\|_\F^2.
\]
We next introduce a two-stage regularized estimation procedure for the transfer learning.

\begin{itemize}
\item \textit{Stage I: representation learning}. This stage is to extract common representations from source datasets.
Specifically, given the order $p$ and  ranks $s_1$, $s_2$ and $s_3$, we leverage source datasets to jointly estimate these shared representations $\bm U, \bm V$ and $\bm L$ and task-specific tensors $\{\cm{D}_k\}$ with $k\in [K]$. 
\begin{equation}\label{eq:MTLloss}
	(\widehat{\bm U}, \widehat{\bm V}, \widehat{\bm L},\{\widehat{\cm{A}}_k,\widehat{\cm{D}}_k, k\in [K]\}) \in \argmin
	\sum_{k\in [K]} w_k \{\mathcal{L}_k(\cm{A}_k) + \lambda_k \|\cm{A}_k - \llb   \cm{D}_k; \bm U, \bm V, \bm L \rrb \|_\F\}.
\end{equation}
The dataset-specific weights $\{w_k\}$ are related to the source selection, while the regularization parameter $\lambda_k$ for each $k\in [K]$ forces the transition tensor $\cm{A}_k$ not far from the multilinear low-rank structure at \eqref{eq:add1}.
We delay their selection to Section \ref{subsec:tuning}.

\item \textit{Stage II: transfer learning.}
Based on the estimated representations, $\widehat{\bm U}, \widehat{\bm V}$ and $\widehat{\bm L}$, from \eqref{eq:MTLloss}, this stage implements transfer learning for the target time series by solving the following optimization problem,
\begin{equation}\label{eq:TLloss}
		(\widehat{\cm{A}}_0,\widehat{\cm{D}}_0)
		\in 
		\argmin \{ \mathcal{L}_0(\cm{A}_0) + \lambda_0 \|\cm{A}_0 - \llb \cm{D}_0; \widehat{\bm U}, \widehat{\bm V}, \widehat{\bm L} \rrb \|_\F \},
\end{equation}
where the regularization parameter $\lambda_0$ controls the strength of transfer, and its selection is delayed to Section \ref{subsec:tuning}.
The source datasets are usually data-rich, and hence $\widehat{\bm U}, \widehat{\bm V}$ and $\widehat{\bm L}$ will be much more efficient.
As a result, the estimation efficiency of $\widehat{\cm{A}}_0$ will be improved, and hence the prediction accuracy of the target time series.
\end{itemize}

\begin{remark}
In the proposed transfer learning framework at Section \ref{sec:ProblemSetup}, the similarity is assumed to each of response, predictor and temporal representations, however, it may not be the case at some scenarios. For example, the source and target time series may share similar representations only for response and predictor factors. At this situation, we may keep $\bm U$ and $\bm V$ to be shared, while the common representation $\bm L$ is switched to be task-specific, i.e. $\bm L_k$ with $k\in\{0\}\cup[K]$. The proposed two-stage estimation procedure can be adjusted accordingly. 
\end{remark}

\subsection{Non-asymptotic properties}\label{subsec:Theory}

This subsection establishes theoretical guarantees of the two-stage estimaton procedure in Section \ref{subsec:Estimation}. Consider the VAR model at \eqref{eq:VARp}, and denote its 
matrix polynomial by
$\bm{\Xi}_k(z) = \bm{I}_N - \sum_{i=1}^{p} \bm A_{i,k} z^i$ for each $k \in \{0\} \cup [K]$, where $z \in \mathbb{C}$ with $\mathbb{C}$ being the complex space.

\begin{assum}\label{asmp:1}
	(Stationarity) 
	For each $k\in \{0\} \cup [K]$, the determinant of $\bm{\Xi}_k(z)$ is nonzero for all $|z|<1$ with $z \in \mathbb{C}$. 
\end{assum}

\begin{assum}\label{asmp:2}
	(Sub-Gaussian errors) 
	For each $k\in \{0\} \cup [K]$, 
	let $\bbm{\varepsilon}_{t,k}=\bm{\Sigma}_{\bbm{\varepsilon},k}^{1/2}\bbm{\zeta}_{t,k}$ with $\bm\Sigma_{\bbm \varepsilon,k} = \textup{var}(\bbm{\varepsilon}_{t,k} )$, $\mathbb{E}(\bbm{\zeta}_{t,k})=\bm{0}$ and $\textup{var}(\bbm{\zeta}_{t,k})=\bm{I}_N$.
	Assume that $\bm\Sigma_{\bbm \varepsilon,k}$ are positive definite for all $k\in \{0\} \cup [K]$, $\{\bbm{\zeta}_{t,k}\}$ are $i.i.d.$ across all time points $t$ and $k\in \{0\} \cup [K]$, and the entries of $\bbm{\zeta}_{t,k}$ are mutually independent and $\sigma^2$-sub-Gaussian.
\end{assum}

Assumption \ref{asmp:1} gives a necessary and sufficient condition for the existence of a unique strictly stationary solution to the VAR model at \eqref{eq:VARp}. 
The sub-Gaussian condition in Assumption \ref{asmp:2} is standard in the literature of high-dimensional time series \citep{zheng20,wang2022high,wang2024common}, while the independence assumption across datasets is commonly used in the literature of transfer learning \citep{Li2020TLLM,tripuraneni21linear}. 

Denote by $\bm{A}_{j,k}^*$ and $\cm{A}_k^*$ with $1\leq j\leq p$ and $k \in \{0\} \cup [K]$ the true transition matrices and tensors of model \eqref{eq:VARp}, respectively, and they should not be far from the multilinear low-rank structure at \eqref{eq:add1} such that a successful transfer learning can be achieved. 

\begin{assum}\label{asmp:3}
	(Task-relatedness)
	There exist $\bm U^*, \bm V^*, \bm L^*$ and $\{\cm{D}_{k}^*, k\in \{0\} \cup [K]\}$ such that $\max_{k \in [K]}\| \cm{A}_k^* - \llb   \cm{D}_k^*; \bm U^*, \bm V^*, \bm L^* \rrb \|_\F \le h_{\mathrm{S}}$, $\| \cm{A}_0^* - \llb   \cm{D}_0^*; \bm U^*, \bm V^*, \bm L^* \rrb \|_\F \le h_{\mathrm{T}}$, $\|\cm{D}_k^*\|_\F \le \alpha_1$ for all $k\in \{0\} \cup [K]$, and $\min_{1\leq j\leq 3}\lambda_{\min}(s_j\sum_{k\in[K]}w_k\cm{D}_{k(j)}^* \cm{D}_{k(j)}^{*\top}) \geq \alpha_2 $, where $h_{\mathrm{S}}$, $h_{\mathrm{T}}$, $\alpha_1$ and $\alpha_2$ are four positive numbers.
	
\end{assum}

The conditions in Assumption \ref{asmp:3} have been widely used in the literature of linear representation learning \citep{Tripuraneni2020diversity,duan2023adaptive}.  
Particularly, the values of $h_{\mathrm{S}}$ and $h_{\mathrm{T}}$ control the task similarity, i.e., the deviations of transition tensors from the low-rank structure, and a smaller value will lead to greater task similarity across source and target datasets. 
In addition, the task-specific tensors, $\{\cm{D}_k^*,k\in \{0\} \cup [K]\}$, are uniformly bounded, and this actually is mild under Assumption \ref{asmp:1}. 
Moreover, the value of $\alpha_2$ is related to the task diversity, which facilitates the full coverage of representation spaces, $\mathcal{M}(\bm U)$, $\mathcal{M}(\bm V)$, and $\mathcal{M}(\bm L)$, and hence leads to a faster convergence rate than the case without such condition \citep{Tripuraneni2020diversity,duan2023adaptive,tian2024RepLLM}.  

We next establish non-asymptotic properties of the two-stage estimation in Section \ref{subsec:Estimation}, and they rely on the temporal and cross-sectional dependence of both source and target time series \citep{basu2015regularized}.
To this end, for a matrix polynomial $\bm \Xi(z)$, we define
\begin{equation*}
	\mu_{\min}(\bm \Xi) = \min_{|z| = 1} \lambda_{\min}\big(\bm \Xi^\dagger(z)\bm \Xi(z)\big)
	\hspace{3mm}\text{and}\hspace{3mm}
	\mu_{\max}(\bm \Xi) = \max_{|z| = 1} \lambda_{\max}\big(\bm \Xi^\dagger(z)\bm \Xi(z)\big),
\end{equation*}
where  $\bm \Xi^\dagger(z)$ is the conjugate transpose of $\bm \Xi(z)$, and hence the quantities of $\mu_{\min}(\bm \Xi_k) $ and $\mu_{\max}(\bm \Xi_k) $ with $k \in \{0\} \cup [K]$.
Moreover, the VAR$(p)$ model at \eqref{eq:VARp} can be rewritten into an equivalent VAR$(1)$ form with a companion matrix $\bm{\Phi}_k\in\mathbb{R}^{Np\times Np}$ for each $k \in \{0\} \cup [K]$; see the Appendix \ref{sec:VMAinf} for details. As a result, the corresponding matrix polynomial is given by
$\widetilde{\bm{\Xi}}_k(z) = \bm{I}_{Np} - \bm{\Phi}_k z$, and we can similarly define $\mu_{\min}(\widetilde{\bm{\Xi}}_k)$'s and $\mu_{\max}(\widetilde{\bm{\Xi}}_k)$'s.

Let $s_{\max} = \max\{ s_1,s_2,s_3\}$ and $\mu = \alpha_1/ \alpha_2$, and we first present non-asymptotic properties of the representation learning estimation. 
Denote $\rho = \min_{k \in [K]} \{{\lambda_{\min}(\bm \Sigma_{\bbm \varepsilon,k})}/{\mu_{\max}(\bm \Xi_k)}\}$, $L = \max_{k \in [K]} \{ {3\lambda_{\max}(\bm \Sigma_{\bbm \varepsilon,k})}/{\mu_{\min}(\bm \Xi_k)}\}$, $\widetilde{L} = \max_{k \in [K]} \{{3\lambda_{\max}(\bm \Sigma_{\bbm \varepsilon,k})}/{\mu_{\min}(\widetilde{\bm \Xi}_k)}\}$, $\kappa = L / \rho$ and $\widetilde{\kappa} = \widetilde{L} / \rho$.
Since $\bm U$, $\bm V$ and ${\bm L}$ are not unique, we define the corresponding estimation errors below,
\[
\mathrm{dist}\{(\widehat{\bm U}, \widehat{\bm V}, \widehat{\bm L}), (\bm U^*, \bm V^*, \bm L^*)\}=
\max\{\|\sin \Theta (\widehat{\bm U}, \bm U^* )\|_2, \|\sin \Theta (\widehat{\bm V}, \bm V^* )\|_2, \|\sin \Theta (\widehat{\bm L}, \bm L^* )\|_2\},
\]
where  $\Theta(\bm A, \bm B)\in\mathbb{R}^s$ are the $s$ principal angles between the column spaces of $\bm A\in\mathbb{O}^{N\times s}$ and $\bm B\in\mathbb{O}^{N\times s}$, and $\sin (\cdot)$ takes sines to each entry of the vector \citep{stewart1998perturbation}. 
Note that the VAR models at \eqref{eq:VARp} for $K$ source time series, together with the multilinear low-rank structure at \eqref{eq:add1}, have the complexity of $d_{\mathrm{M}} = N(s_1+s_2) + p s_3 + K s_1 s_2 s_3$.

\begin{theorem}[Representation learning]\label{thm:rep}
	Suppose that $ h_{\mathrm{S}} \lesssim L^{-1} \sigma  \sqrt{[N^2p + N\log (N K)]/T_k}$ for all $k\in [K]$, and $T_{k} \gtrsim \max \{\kappa^{2}\sigma^{4}, \kappa \sigma^{2} \}(Np + \log K)$.
	If Assumptions \ref{asmp:1}--\ref{asmp:3} hold and $
	\lambda_k \asymp \kappa^{3/2} \mu^2 s_{\max}\sigma \sqrt{[N^2p + N\log (N K)]/T_k}$, then
	\begin{equation}
		\sum_{k\in [K]} w_k \| \llb   \widehat{\cm{D}}_k; \widehat{\bm U}, \widehat{\bm V}, \widehat{\bm L} \rrb  -\llb   \cm{D}_k^*; \bm U^*, \bm V^*, \bm L^* \rrb   \|_\F^2  \lesssim \kappa^2\alpha_1^2 \left( \frac{\sigma^2}{ L} \cdot \frac{d_{\mathrm{M}}}{\widetilde{T}} + h_{\mathrm{S}}^2\right)
	\end{equation}
	and hence
	\begin{equation}
		\mathrm{dist}\{(\widehat{\bm U}, \widehat{\bm V}, \widehat{\bm L}), (\bm U^*, \bm V^*, \bm L^*)\}
		\lesssim 
		s_{\max}^{1/2}\kappa\mu \left( \frac{ \sigma}{ L^{1/2}} \cdot \sqrt{\frac{d_{\mathrm{M}}}{\widetilde{T}}} + h_{\mathrm{S}}\right)
	\end{equation}
with a probability at least
$1 - \exp\left(- C[Np+\log(NK)]\right) - \exp(-C \min_{k \in [K]} \{T_k\}) $, where $\widetilde{T} = \min_{k\in [K]} \{ w_k^{-1} \lambda_{\max}^{-1}(\bm \Sigma_{\bbm \varepsilon,k})T_k \}$, and $C>0$ is a generic constant.
\end{theorem}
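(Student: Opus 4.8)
Since the estimator in \eqref{eq:MTLloss} is the global minimizer of a penalized least-squares criterion, I would run the classical ``basic inequality $+$ restricted strong convexity $+$ deviation bound'' programme, adapted to three new features: the weighted aggregation over the $K$ source tasks, the similarity slack $h_{\mathrm S}$, and the temporal/cross-sectional dependence of each VAR$(p)$ series. Write $\Delta_k = \widehat{\cm A}_k - \cm A_k^*$, $\bm\Pi_k^* = \llb \cm D_k^*;\bm U^*,\bm V^*,\bm L^*\rrb$ and $\widehat{\bm\Pi}_k = \llb \widehat{\cm D}_k;\widehat{\bm U},\widehat{\bm V},\widehat{\bm L}\rrb$, so that $\cm A_k^* = \bm\Pi_k^* + \cm R_k^*$ with $\|\cm R_k^*\|_\F \le h_{\mathrm S}$ by Assumption~\ref{asmp:3}. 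The target quantity is $\sum_k w_k\|\widehat{\bm\Pi}_k - \bm\Pi_k^*\|_\F^2$, from which the $\sin\Theta$ bound is extracted via task diversity; the argument splits into a deterministic part on a good event and a probabilistic part establishing that event.

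\textbf{Deterministic part.} First I would plug the feasible point $(\bm U^*,\bm V^*,\bm L^*,\{\cm A_k^*,\cm D_k^*\}_{k\in[K]})$ into \eqref{eq:MTLloss} and bound the resulting penalty by $h_{\mathrm S}$. Expanding each quadratic around the truth, $\mathcal{L}_k(\widehat{\cm A}_k) - \mathcal{L}_k(\cm A_k^*) = \tfrac1{2T_k}\|\Delta_{k(1)}\bm X_k\|_\F^2 - \tfrac1{T_k}\langle\bm E_k\bm X_k^\top,\Delta_{k(1)}\rangle$, and doing triangle-inequality bookkeeping on the penalty terms (relating $\|\widehat{\bm\Pi}_k - \bm\Pi_k^*\|_\F$ to $\|\widehat{\cm A}_k - \widehat{\bm\Pi}_k\|_\F$, $\|\Delta_k\|_\F$ and $h_{\mathrm S}$), yields a master inequality whose left side carries $\sum_k w_k\{\tfrac1{2T_k}\|\Delta_{k(1)}\bm X_k\|_\F^2 + \lambda_k\|\widehat{\cm A}_k - \widehat{\bm\Pi}_k\|_\F\}$ and whose right side carries $\sum_k w_k\{\tfrac1{T_k}|\langle\bm E_k\bm X_k^\top,\Delta_{k(1)}\rangle| + \lambda_k(\|\Delta_k\|_\F + 2h_{\mathrm S})\}$. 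I would split $\Delta_k = (\widehat{\cm A}_k - \widehat{\bm\Pi}_k) - \cm R_k^* + (\widehat{\bm\Pi}_k - \bm\Pi_k^*)$: the inner product against the \emph{unstructured} residual $\widehat{\cm A}_k - \widehat{\bm\Pi}_k$ is bounded by $\tfrac1{T_k}\|\bm E_k\bm X_k^\top\|_\F\,\|\widehat{\cm A}_k - \widehat{\bm\Pi}_k\|_\F$ and absorbed by the penalty once $\lambda_k$ exceeds twice that Frobenius deviation, which is of order $\sigma\sqrt{[N^2 p + N\log(NK)]/T_k}$ up to the spectral factors folded into the stated $\lambda_k\asymp\kappa^{3/2}\mu^2 s_{\max}\sigma\sqrt{[N^2 p + N\log(NK)]/T_k}$; the term against $\cm R_k^*$ is $O(h_{\mathrm S})$ times the same quantity; and the term against $\widehat{\bm\Pi}_k - \bm\Pi_k^*$ is controlled by a \emph{localized} deviation bound over the joint low-multilinear-rank set that $\{\widehat{\bm\Pi}_k - \bm\Pi_k^*\}$ inhabits --- crucially, because $\bm U,\bm V,\bm L$ are \emph{shared}, this set has dimension $d_{\mathrm M} = N(s_1+s_2) + ps_3 + Ks_1s_2s_3$ rather than $K$ times a per-task complexity, which is exactly where the $d_{\mathrm M}$ in the conclusion originates. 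Combining with the restricted strong convexity bound $\tfrac1{T_k}\|\Delta_{k(1)}\bm X_k\|_\F^2 \gtrsim \rho\|\Delta_k\|_\F^2$, a Cauchy--Schwarz step to peel off the term linear in $\|\Delta_k\|_\F$ (using that $T_k$ is large), and the hypothesis $h_{\mathrm S}\lesssim L^{-1}\sigma\sqrt{[N^2p+N\log(NK)]/T_k}$, I obtain $\sum_k w_k\|\Delta_k\|_\F^2 \lesssim \kappa^2\alpha_1^2(\sigma^2 L^{-1} d_{\mathrm M}/\widetilde T + h_{\mathrm S}^2)$ and the same order for $\sum_k w_k\|\widehat{\bm\Pi}_k - \bm\Pi_k^*\|_\F^2$, with $\widetilde T$ recording the weighted effective sample size.

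\textbf{Probabilistic part.} The good event requires two concentration statements for each source $k$, for which I would use the spectral-density machinery for stable sub-Gaussian VAR processes \citep{basu2015regularized,wang2022high}. Under Assumptions~\ref{asmp:1}--\ref{asmp:2}, the stationary covariance of $\bm x_{t,k}$ has eigenvalues sandwiched between constant multiples of $\lambda_{\min}(\bm\Sigma_{\bbm\varepsilon,k})/\mu_{\max}(\bm\Xi_k)$ and $\lambda_{\max}(\bm\Sigma_{\bbm\varepsilon,k})/\mu_{\min}(\bm\Xi_k)$ and, through the VAR$(1)$ companion form with polynomial $\widetilde{\bm\Xi}_k(z) = \bm I_{Np} - \bm\Phi_k z$, the quantity $\mu_{\min}(\widetilde{\bm\Xi}_k)$ enters the lower deviation bounds; this gives (i) a restricted eigenvalue bound yielding $\tfrac1{T_k}\|\Delta_{k(1)}\bm X_k\|_\F^2 \gtrsim \rho\|\Delta_k\|_\F^2$ on the relevant set once $T_k \gtrsim \max\{\kappa^2\sigma^4,\kappa\sigma^2\}(Np+\log K)$, precisely the stated sample-size requirement, and (ii) the Frobenius and localized deviation bounds on $\tfrac1{T_k}\bm E_k\bm X_k^\top$ used above. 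A covering/discretization argument over the Grassmannians of $(\bm U,\bm V,\bm L)$ together with the $K$ cores $\{\cm D_k\}$, plus a union bound over $k\in[K]$, converts per-task bounds into the aggregate $d_{\mathrm M}$-rate and produces the failure probability $\exp(-C[Np+\log(NK)]) + \exp(-C\min_{k\in[K]}\{T_k\})$.

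\textbf{From tensor error to subspace error, and the main obstacle.} For the $\sin\Theta$ bound I would, for each mode $j\in\{1,2,3\}$, form the weighted Gram matrix $\bm G_j = \sum_k w_k\widehat{\bm\Pi}_{k(j)}\widehat{\bm\Pi}_{k(j)}^\top$, whose column space is $\mathcal{M}(\widehat{\bm M}_j)$ with $\widehat{\bm M}_1 = \widehat{\bm U}$, $\widehat{\bm M}_2 = \widehat{\bm V}$, $\widehat{\bm M}_3 = \widehat{\bm L}$. Its population analogue $\bm G_j^* = \sum_k w_k\bm\Pi_{k(j)}^*\bm\Pi_{k(j)}^{*\top}$ has column space $\mathcal{M}(\bm M_j^*)$ and, by the task-diversity clause of Assumption~\ref{asmp:3}, smallest nonzero eigenvalue $\gtrsim \alpha_2/s_j$; since $\|\bm G_j - \bm G_j^*\|_\op \lesssim \alpha_1\big(\sum_k w_k\|\widehat{\bm\Pi}_k - \bm\Pi_k^*\|_\F^2\big)^{1/2}$, a Davis--Kahan / Wedin $\sin\Theta$ inequality gives $\|\sin\Theta(\widehat{\bm M}_j,\bm M_j^*)\|_2 \lesssim s_j^{1/2}\alpha_2^{-1}\alpha_1\big(\sum_k w_k\|\widehat{\bm\Pi}_k - \bm\Pi_k^*\|_\F^2\big)^{1/2}$, and inserting the first bound together with $\mu = \alpha_1/\alpha_2$ yields $\mathrm{dist}\{(\widehat{\bm U},\widehat{\bm V},\widehat{\bm L}),(\bm U^*,\bm V^*,\bm L^*)\} \lesssim s_{\max}^{1/2}\kappa\mu(\sigma L^{-1/2}\sqrt{d_{\mathrm M}/\widetilde T} + h_{\mathrm S})$. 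I expect the main obstacle to be the deterministic bookkeeping intertwined with restricted strong convexity: the error set is not a clean low-rank cone, since $\cm A_k$ is unpenalized and only its structured part $\widehat{\bm\Pi}_k$ is constrained, the slack $h_{\mathrm S}$ enlarges it, and all of this must be propagated through $K$ weighted, temporally dependent tasks while keeping the curvature $\rho$ and the conditioning factors $\kappa,\widetilde\kappa,\mu$ explicit; in particular, extracting the shared-representation dimension $d_{\mathrm M}$ --- rather than $K$ times a per-task count --- from the localized empirical-process bound under VAR dependence is the crux.
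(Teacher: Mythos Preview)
Your programme is sound and would yield a correct proof, but it follows a genuinely different route from the paper. The paper does \emph{not} run a direct basic inequality on $\mathcal{L}_k(\widehat{\cm A}_k)-\mathcal{L}_k(\cm A_k^*)$. Instead it passes to the infimal convolution $\widetilde{\mathcal{L}}_k(\bm\beta)=\min_{\bbm\alpha}\{\mathcal{L}_k(\bbm\alpha)+\lambda_k\|\bm\beta-\bbm\alpha\|_2\}$ (the machinery of \citealp{duan2023adaptive}) and proceeds in two phases. A \emph{localization} phase first shows that the penalty saturates, i.e.\ $\widehat{\cm A}_k=\widehat{\bm\Pi}_k$ exactly for every $k$: this uses a Huber-type lower bound for $\widetilde{\mathcal{L}}_k$ together with a task-diversity counting lemma to bound $\|(\bm I-\widehat{\bm P})\bm B^*\|_2$ by contradiction, then shows $\widehat{\bm B}\widehat{\bm d}_k$ lands in the neighbourhood of $\bm B^*\bm d_k^*$ where $\widetilde{\mathcal{L}}_k=\mathcal{L}_k$. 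The specific scaling $\lambda_k\asymp\kappa^{3/2}\mu^2 s_{\max}\eta_k^*$ is dictated precisely by this step. Only then does a \emph{sharp} phase apply strong convexity, comparing at $\bm B^*\bm d_k^*$ rather than $\cm A_k^*$: the shift $\|\nabla\mathcal{L}_k(\bm B^*\bm d_k^*)-\nabla\mathcal{L}_k(\cm A_k^*)\|\le Lh_{\mathrm S}$ enters additively in the deviation $\gamma$, delivering the linear $h_{\mathrm S}$ dependence cleanly.

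Your direct comparison at $\cm A_k^*$ instead carries a $\lambda_k h_{\mathrm S}$ term on the right side of the master inequality; balancing it against $\rho\|\Delta_k\|_\F^2$ produces a $\sqrt{\lambda_k h_{\mathrm S}/\rho}$ contribution rather than $\kappa h_{\mathrm S}$. Under the theorem's hypothesis $h_{\mathrm S}\lesssim\eta_k^*/L$ this still matches the stated rate up to factors in $\kappa,\mu,s_{\max}$, so your route closes, but the Moreau-envelope detour is what buys the paper its clean constants and explains why $\lambda_k$ must carry those prefactors (in your argument $\lambda_k\asymp\eta_k^*$ would already suffice for penalty absorption). Your identification of the crux---extracting the shared dimension $d_{\mathrm M}$ from the aggregated empirical process via covering over $\bm\Omega(2s_1,2s_2,2s_3;N,K,p)$ under VAR dependence---is exactly what the paper does, and your Wedin step for the $\sin\Theta$ bound coincides with the paper's.
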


The above theorem requires a not too large deviation of $h_{\mathrm{S}}$ such that the proposed estimation procedure can learn the common representations successfully.
In the meanwhile, we need a penalty of $\lambda_k\gtrsim h_{\mathrm{S}}$ such that the estimates with exact multilinear low-rank structures can be achieved, i.e., $\widehat{\cm{A}}_k=\llb   \widehat{\cm{D}}_k; \widehat{\bm U}, \widehat{\bm V}, \widehat{\bm L} \rrb $ for all $k\in [K]$.
In addition, $\widetilde{T}$ denotes the variance-adjusted effective sample size, and it holds that $\widetilde{T}\asymp \sum_{k \in [K]} T_k$ when we choose $w_j=T_j/(\sum_{k \in [K]} T_k)$ with $1\leq j\leq K$ as in Section \ref{subsec:tuning}.
When the parameters of $\kappa$, $\mu$, $L$, $\sigma$ and $s_{\max}$ are bounded away from zero and infinite, the learned common representations have a convergence rate of $\sqrt{d_{\mathrm{M}}/\widetilde{T}}+h_{\mathrm{S}}$, where $h_{\mathrm{S}}$ is a bias term, and $\sqrt{d_{\mathrm{M}}/\widetilde{T}}$ admits a variance reduction due to the pooled sample size across all $K$ source time series. The deviation of $h_{\mathrm{S}}$ should be smaller than $\sqrt{d_{\mathrm{M}}/\widetilde{T}}$ such that the variance reduction from a large $\widetilde{T}$ is not offset.
We next state non-asymptotic properties of the transfer learning estimation.

\begin{theorem}[Transfer learning]\label{thm:TL}
	Suppose that $ h_{\mathrm{T}} \lesssim  L^{-1} \sigma  \sqrt{[N^2p + N\log (N) ]/T_0}$, and $T_{0} \gtrsim \max\{ \widetilde{\kappa}^{2}\sigma^{4}, \widetilde{\kappa} \sigma^{2} \}Np$. If $
	\lambda_0 \asymp \kappa^{3/2} \mu^2 s_{\max}\sigma\sqrt{[N^2p + N\log (N)]/T_0}$, and the conditions of Theorem \ref{thm:rep} hold, then 
	\begin{equation}
		\|\widehat{\cm{A}}_0-\cm{A}_0^*\|_\F
		\lesssim 
		\underbrace{\frac{s_{\max}^{1/2}\kappa\mu \sigma} {L^{1/2}} \sqrt{\frac{d_{\mathrm{M}}}{\widetilde{T}}}}_{Representation}
		+ \underbrace{\frac{\sigma M_{0}}{\rho}\sqrt{\frac{s_1s_2s_3}{T_{0}}}}_{Transfer} 
		+  \underbrace{s_{\max}\kappa^{5/2}\mu^2  \max\{h_{\mathrm{S}},h_{\mathrm{T}}\} }_{Approximation},
	\end{equation}
    with a probability at least $1 - \exp\left(-C [Np+\log(NK)]\right)-  \exp(-C \min_{k \in [K]} \{T_k\}) - \exp\left(-C T_0\right)$,
	where
	$M_{0} =\lambda_{\max}(\bm \Sigma_{\bbm \varepsilon,0})/\mu_{\min}^{1/2}(\bm \Xi_0)$, and $C>0$ is a generic constant. 
\end{theorem}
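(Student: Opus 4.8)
The plan is to decompose the target estimation error into three sources and bound each one in turn, reusing the machinery of Theorem \ref{thm:rep} for the representation piece. Write $\widehat{\cm{A}}_0 - \cm{A}_0^* = (\widehat{\cm{A}}_0 - \llb \cm{D}_0^*; \widehat{\bm U}, \widehat{\bm V}, \widehat{\bm L}\rrb) + (\llb \cm{D}_0^*; \widehat{\bm U}, \widehat{\bm V}, \widehat{\bm L}\rrb - \llb \cm{D}_0^*; \bm U^*, \bm V^*, \bm L^*\rrb) + (\llb \cm{D}_0^*; \bm U^*, \bm V^*, \bm L^*\rrb - \cm{A}_0^*)$. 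The last term is the approximation bias, bounded by $h_{\mathrm{T}}$ directly from Assumption \ref{asmp:3}. The middle term is controlled purely by the representation error: since $\cm{D}_0^*$ has $\|\cm{D}_0^*\|_\F \le \alpha_1$, a standard perturbation bound for Tucker products (plugging a fixed core into perturbed orthonormal factors) gives $\lesssim \alpha_1 \,\mathrm{dist}\{(\widehat{\bm U}, \widehat{\bm V}, \widehat{\bm L}), (\bm U^*, \bm V^*, \bm L^*)\}$, into which I substitute the rate from Theorem \ref{thm:rep}; up to the $\mu$ and $\kappa$ factors this yields the $Representation$ term. So the real work is the first term, which is where the target optimization problem \eqref{eq:TLloss} enters.

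For the first term I would run the standard basic-inequality argument for the penalized least-squares estimator \eqref{eq:TLloss}. Since $\widehat{\cm{A}}_0, \widehat{\cm{D}}_0$ is optimal and $(\cm{A}_0^*, \cm{D}_0^*)$ is feasible, comparing objective values gives $\mathcal{L}_0(\widehat{\cm{A}}_0) + \lambda_0 \|\widehat{\cm{A}}_0 - \llb \widehat{\cm{D}}_0; \widehat{\bm U}, \widehat{\bm V}, \widehat{\bm L}\rrb\|_\F \le \mathcal{L}_0(\cm{A}_0^*) + \lambda_0 \|\cm{A}_0^* - \llb \cm{D}_0^*; \widehat{\bm U}, \widehat{\bm V}, \widehat{\bm L}\rrb\|_\F$. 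The loss difference $\mathcal{L}_0(\widehat{\cm{A}}_0) - \mathcal{L}_0(\cm{A}_0^*)$ expands into a restricted-strong-convexity (RSC) quadratic term in $\widehat{\cm{A}}_0 - \cm{A}_0^*$ minus a cross term $\langle T_0^{-1}\bm E_0 \bm X_0^\top, (\widehat{\cm{A}}_0 - \cm{A}_0^*)_{(1)}\rangle$. The RSC lower bound, with curvature $\asymp \rho$, follows from the concentration results for VAR Gram matrices of \cite{basu2015regularized} under Assumptions \ref{asmp:1}--\ref{asmp:2} once $T_0 \gtrsim \max\{\widetilde{\kappa}^2\sigma^4, \widetilde{\kappa}\sigma^2\}Np$; the deviation bound $\|T_0^{-1}\bm E_0\bm X_0^\top\|$ against any tensor constrained to the estimated representation spaces (effective dimension $s_1 s_2 s_3$ plus the low-dimensional factor directions, aggregated to $N^2 p + N\log N$ on the ambient side) supplies the $\sigma M_0/\rho \cdot \sqrt{s_1 s_2 s_3 / T_0}$ scale and also dictates the choice of $\lambda_0$. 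On the penalty side, the right-hand term is bounded using $\|\cm{A}_0^* - \llb \cm{D}_0^*; \widehat{\bm U}, \widehat{\bm V}, \widehat{\bm L}\rrb\|_\F \le \|\cm{A}_0^* - \llb \cm{D}_0^*; \bm U^*, \bm V^*, \bm L^*\rrb\|_\F + \alpha_1\,\mathrm{dist}(\cdots) \le h_{\mathrm{T}} + (\text{representation rate})$, so the extra penalty contribution is absorbed into the $Approximation$ and $Representation$ terms. Rearranging the basic inequality and solving the resulting quadratic in $\|\widehat{\cm{A}}_0 - \cm{A}_0^*\|_\F$ gives the stated three-term bound, with the $\max\{h_{\mathrm{S}}, h_{\mathrm{T}}\}$ appearing because the representation rate inherited from Theorem \ref{thm:rep} already carries an $h_{\mathrm{S}}$ term while the target penalty slack carries $h_{\mathrm{T}}$. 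The probability statement is the union of the event from Theorem \ref{thm:rep} and an independent $1 - \exp(-CT_0)$ event for the target-specific RSC and deviation bounds.

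The main obstacle, I expect, is the deviation/cross-term control $\langle T_0^{-1}\bm E_0 \bm X_0^\top, (\widehat{\cm{A}}_0 - \cm{A}_0^*)_{(1)}\rangle$ combined with correctly identifying the effective dimension of the constraint set that $\widehat{\cm{A}}_0$ lives in. Because the representations $\widehat{\bm U}, \widehat{\bm V}, \widehat{\bm L}$ are \emph{random} (estimated from the source data), one cannot directly condition on them and apply a fixed-subspace bound; the clean fix is that $\widehat{\bm U}, \widehat{\bm V}, \widehat{\bm L}$ are independent of the target noise $\bm E_0$ (Assumption \ref{asmp:2} gives cross-dataset independence), so conditionally on the Stage-I estimates the target analysis proceeds as if the representation spaces were fixed, and one then takes expectations / intersects events. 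The second delicate point is that $\widehat{\cm{A}}_0$ need not exactly equal $\llb \widehat{\cm{D}}_0; \widehat{\bm U}, \widehat{\bm V}, \widehat{\bm L}\rrb$ when $h_{\mathrm{T}} > 0$, so the error tensor is only \emph{approximately} low-multilinear-rank with respect to the estimated factors; handling this requires splitting $\widehat{\cm{A}}_0 - \cm{A}_0^*$ into its projection onto the (fixed, conditionally) estimated Tucker model and a residual, bounding the residual by $O(\lambda_0^{-1})$ times the penalty slack via the basic inequality, and showing the cross term against the residual is itself $O(\lambda_0 \cdot \|\text{residual}\|_\F)$ — this is exactly why the penalty level $\lambda_0$ must be calibrated to the noise deviation, and it is the step where the bookkeeping with the $\kappa, \mu, s_{\max}$ factors is heaviest.
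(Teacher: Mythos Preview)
Your strategy is viable and correctly identifies the three error sources, but the paper's execution differs from yours in two structural ways.

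First, your ``second delicate point'' dissolves in the paper's argument: with $\lambda_0 \asymp \kappa^{3/2}\mu^2 s_{\max}\eta_0^*$ chosen large relative to the gradient norm, an infimal-convolution lemma (their Lemma~\ref{lem-inf-conv} and Claim~\ref{claim-lowrank-crude}) forces $\widehat{\cm{R}}_0=0$ exactly, so $\widehat{\cm{A}}_0=\llb\widehat{\cm{D}}_0;\widehat{\bm U},\widehat{\bm V},\widehat{\bm L}\rrb$ and there is no approximately-low-rank residual to split off. This removes the bookkeeping you anticipate being heaviest.

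Second, instead of your basic inequality centered at $\cm{A}_0^*$ plus a conditioning argument, the paper centers the analysis at $\bm B^*\bm d_0^*$ and compares $\widehat{\bm B}\widehat{\bm d}_0$ not to $\cm{A}_0^*$ but to $\widehat{\bm P}\bm B^*\bm d_0^*$, the projection of the true low-rank point onto the estimated subspace (feasible because $\widehat{\bm P}\bm B^*\bm d_0^*=\widehat{\bm B}\bm u$ for some $\bm u$). The cross term then becomes $\langle\nabla\mathcal{L}_0(\bm B^*\bm d_0^*),\widehat{\bm P}(\widehat{\bm B}\widehat{\bm d}_0-\bm B^*\bm d_0^*)\rangle$, and the key trick is to write $\widehat{\bm P}=\bm P^*+(\widehat{\bm P}-\bm P^*)$: the $\bm P^*$ piece is controlled by $\|\bm P^*\nabla\mathcal{L}_0\|_2\lesssim\sigma M_0\sqrt{s_1s_2s_3/T_0}$ with $\bm P^*$ \emph{deterministic}, so no conditioning is needed, while the $(\widehat{\bm P}-\bm P^*)$ piece is handled by the Stage-I subspace error $\|\widehat{\bm P}-\bm P^*\|_2$ times the full gradient $\eta_0$. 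Your conditioning-on-Stage-I argument is also correct (independence across datasets does make it legitimate), but the paper's deterministic-$\bm P^*$ decomposition is cleaner and avoids having to re-derive the projected deviation bound conditionally. What your approach buys is a more standard penalized-regression template; what the paper's buys is a sharper separation of where each of the three rate terms enters, and no residual analysis.
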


When the quantities of $\kappa$, $\mu$, $\sigma$, $M_{0}$ and $s_{\max}$ are bounded away from zero and infinite, the estimation error bound in Theorem \ref{thm:TL} reduces to $\|\widehat{\cm{A}}_0-\cm{A}_0^*\|_\F
\lesssim \sqrt{{d_{\mathrm{M}}}/{\widetilde{T}}} + \sqrt{{s_1s_2s_3}/{T_{0}}} +\max\{h_{\mathrm{S}}, h_{\mathrm{T}}\}$, where the first part comes from the errors in estimating three common representations, $\widehat{\bm U}$, $\widehat{\bm V}$ and $\widehat{\bm L}$, at Stage I, and the second one is generated when we estimate the task-specific parameter tensor $\cm{D}_0\in\mathbb{R}^{s_1\times s_2\times s_3}$ in Stage II.
The third part captures model misspecification due to the low-rank approximation at \eqref{eq:similaritymeasure}.
When $h_{\mathrm{S}}$ and $h_{\mathrm{T}}$ are smaller than $\sqrt{{d_{\mathrm{M}}}/{\widetilde{T}}}$ and $\sqrt{{s_1s_2s_3}/{T_{0}}}$, respectively, we can obtain a pronounced improvement on estimation error bound, i.e., $\|\widehat{\cm{A}}_0-\cm{A}_0^*\|_\F
\lesssim \sqrt{{d_{\mathrm{M}}}/{\widetilde{T}}} + \sqrt{{s_1s_2s_3}/{T_{0}}}$.
Moreover, the proposed transfer leaning framework can achieve further improvements when the source datasets are sufficiently abundant such that we can have a larger value of $\widetilde{T}$.

On the other hand, it is of interest to consider two extreme cases: the worst and exact transfer learning.
For the worst transfer learning, the conditions of Theorem \ref{thm:rep} may not be satisfied; see, for example, the value of $h_{\mathrm{S}}$ is too large. Accordingly, the learned common representations are no longer informative.
The exact transfer learning refers to the case with $h_{\mathrm{S}}=h_{\mathrm{T}}=0$ and $\lambda_k = \infty$ for all $k\in \{0\} \cup [K]$ in the two-stage estimation procedure.

\begin{proposition}[Worst transfer learning]\label{prop:worse}
	Suppose that Assumptions~\ref{asmp:1}--\ref{asmp:2} hold. If $T_0 \gtrsim \max \{ \widetilde{\kappa}^2 \sigma^4, \widetilde{\kappa}\sigma^2 \} Np$ and 
	$ \lambda_0 \asymp \kappa^{3/2} \mu^2 s_{\max}\sigma\sqrt{[N^2p + N\log (N)]/T_0}$, then \[
	\|\widehat{\cm{A}}_0-\cm{A}_0^*\|_\F
	\lesssim
	s_{\max}\kappa^{3/2}\mu^2 M_0\sqrt{\frac{N^2p}{T_{0}}}.
	\]
	with a probability at least $1 - \exp(-C [Np+\log(N)]) - \exp(-C T_0)$.
\end{proposition}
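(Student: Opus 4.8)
The plan is to treat Stage II in isolation and discard any benefit from the learned representations $\widehat{\bm U},\widehat{\bm V},\widehat{\bm L}$, since in the worst case they carry no useful information. The key observation is that the optimization problem \eqref{eq:TLloss} always admits the feasible choice $\cm{D}_0=\bm 0$ (or any $\cm{D}_0$), which forces no structural constraint at all on $\cm{A}_0$; more to the point, for \emph{any} fixed $\cm{A}_0$ we can upper bound the penalty $\lambda_0\|\cm{A}_0-\llb\cm{D}_0;\widehat{\bm U},\widehat{\bm V},\widehat{\bm L}\rrb\|_\F$ by a controlled quantity. So the strategy is a standard basic-inequality argument for a penalized least-squares estimator. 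First I would write down the basic inequality: since $(\widehat{\cm{A}}_0,\widehat{\cm{D}}_0)$ minimizes $\mathcal{L}_0(\cm{A}_0)+\lambda_0\|\cm{A}_0-\llb\cm{D}_0;\widehat{\bm U},\widehat{\bm V},\widehat{\bm L}\rrb\|_\F$, comparing against the true $(\cm{A}_0^*,\cm{D}_0^{**})$ where $\cm{D}_0^{**}$ is chosen to minimize the penalty at $\cm{A}_0^*$, I obtain
\[
\mathcal{L}_0(\widehat{\cm{A}}_0)+\lambda_0\|\widehat{\cm{A}}_0-\llb\widehat{\cm{D}}_0;\widehat{\bm U},\widehat{\bm V},\widehat{\bm L}\rrb\|_\F
\le \mathcal{L}_0(\cm{A}_0^*)+\lambda_0\min_{\cm{D}_0}\|\cm{A}_0^*-\llb\cm{D}_0;\widehat{\bm U},\widehat{\bm V},\widehat{\bm L}\rrb\|_\F.
\]
Because $\cm{A}_0^*$ has multilinear ranks at most $(s_1,s_2,s_3)$ and is itself bounded (Assumption \ref{asmp:3} gives $\|\cm{D}_0^*\|_\F\le\alpha_1$, hence $\|\cm{A}_0^*\|_\F\le\alpha_1$), the right-hand penalty is at most $\lambda_0\|\cm{A}_0^*\|_\F\lesssim\lambda_0\alpha_1$; the given rate for $\lambda_0$ then makes this term of the order $\kappa^{3/2}\mu^2 s_{\max}\sigma\alpha_1\sqrt{N^2p/T_0}$, which (using $\mu=\alpha_1/\alpha_2$, i.e. $\alpha_1\lesssim\mu M_0$-type bounds via the stationarity constants, and absorbing $\sigma$, $\rho$ into $M_0$) is exactly of the claimed order $s_{\max}\kappa^{3/2}\mu^2 M_0\sqrt{N^2p/T_0}$.

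Next I would convert the loss difference into a quadratic-form lower bound plus a cross term. Writing $\bm\Delta=(\widehat{\cm{A}}_0-\cm{A}_0^*)_{(1)}$, the identity $\bm Y_0=\cm{A}_{0(1)}^*\bm X_0+\bm E_0$ gives
\[
\mathcal{L}_0(\widehat{\cm{A}}_0)-\mathcal{L}_0(\cm{A}_0^*)=\frac{1}{2T_0}\|\bm\Delta\bm X_0\|_\F^2-\frac{1}{T_0}\langle\bm E_0\bm X_0^\top,\bm\Delta\rangle.
\]
For the first term I invoke a restricted-eigenvalue / deviation bound for the Gram matrix $\bm X_0\bm X_0^\top/T_0$, uniformly over the relevant low-rank cone, which under Assumptions \ref{asmp:1}--\ref{asmp:2} holds with probability $1-\exp(-CT_0)$ on the event $T_0\gtrsim\max\{\widetilde\kappa^2\sigma^4,\widetilde\kappa\sigma^2\}Np$; this is the standard machinery of \cite{basu2015regularized} adapted to the companion form via $\widetilde{\bm\Xi}_0$, and it should already be established in the paper's appendix for use in Theorem \ref{thm:TL}. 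For the cross term I use a deviation bound $\|\bm E_0\bm X_0^\top/T_0\|_{\op}$ restricted to low-rank directions, giving $|\langle\bm E_0\bm X_0^\top/T_0,\bm\Delta\rangle|\lesssim \sigma M_0\sqrt{(N^2p)/T_0}\,\|\bm\Delta\|_\F$ on an event of probability $1-\exp(-C[Np+\log N])$. Here I would note that $\bm\Delta$, being a difference of two tensors each of multilinear rank $\le(s_1,s_2,s_3)$, has multilinear rank $\le(2s_1,2s_2,2s_3)$, so the effective dimension in the deviation bound is $O(N(s_1+s_2)s_{\max}+ps_3 s_{\max}+s_1s_2s_3)$, which is dominated by $N^2p$ and absorbed into the stated rate.

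The main obstacle I anticipate is handling the penalty term $\lambda_0\|\widehat{\cm{A}}_0-\llb\widehat{\cm{D}}_0;\widehat{\bm U},\widehat{\bm V},\widehat{\bm L}\rrb\|_\F$ on the left-hand side of the basic inequality together with the fact that $\widehat{\cm{A}}_0$ need \emph{not} itself be low-rank when $\widehat{\bm U},\widehat{\bm V},\widehat{\bm L}$ are arbitrary. The resolution is that we do not need $\widehat{\cm{A}}_0$ to be low-rank: the penalty term on the left is nonnegative and can simply be dropped, and the deviation bound for the cross term must instead be taken over the \emph{unrestricted} error $\bm\Delta$. But an unrestricted bound on $\|\bm E_0\bm X_0^\top/T_0\|_{\op}$ in the full $N\times Np$ space would cost a factor $\sqrt{Np}$ rather than merely $\sqrt{N^2p/T_0}\cdot\|\bm\Delta\|_\F$; to avoid this, I would instead argue that $\widehat{\cm{A}}_0-\llb\widehat{\cm{D}}_0;\cdots\rrb$ is controlled by $\lambda_0^{-1}$ times the loss gap, forcing $\widehat{\cm{A}}_0$ to lie within $O(\lambda_0\cdot\text{const})$ of a rank-$(s_1,s_2,s_3)$ tensor, so that $\bm\Delta$ decomposes as a low-rank part (handled by the restricted deviation bound) plus a small residual of Frobenius norm $\lesssim\lambda_0\alpha_1$ (handled by absorbing into the bound); this peeling argument is the delicate step. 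Combining the quadratic lower bound $c\rho\|\bm\Delta\|_\F^2$ with the linear upper bounds and solving the resulting quadratic inequality in $\|\bm\Delta\|_\F$ yields $\|\widehat{\cm{A}}_0-\cm{A}_0^*\|_\F=\|\bm\Delta\|_\F\lesssim s_{\max}\kappa^{3/2}\mu^2 M_0\sqrt{N^2p/T_0}$, and a union bound over the three events gives the stated probability $1-\exp(-C[Np+\log N])-\exp(-CT_0)$.
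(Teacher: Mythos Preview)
Your basic-inequality route has a genuine gap. After dropping the left penalty and bounding the right penalty by the \emph{constant} $\lambda_0\|\cm{A}_0^*\|_\F\lesssim\lambda_0\alpha_1$, the inequality you obtain is
\[
\tfrac{\rho}{2}\|\bm\Delta\|_\F^2 \;\le\; \eta_0^*\|\bm\Delta\|_\F + \lambda_0\alpha_1,
\]
and solving this quadratic gives $\|\bm\Delta\|_\F\lesssim \eta_0^*/\rho + \sqrt{\lambda_0\alpha_1/\rho}$. The second term scales like $(N^2p/T_0)^{1/4}$, not $(N^2p/T_0)^{1/2}$, so the claimed rate does not follow. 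Separately, your detour through low-rank structure and peeling is unnecessary: the paper already establishes (event $\mathcal{E}_3$, Claim~\ref{claim:deviationRSSTL}) the \emph{full} gradient bound $\|\nabla\mathcal{L}_0(\bbm\alpha_0^*)\|_2\le\eta_0^*$, which controls the cross term by $\eta_0^*\|\bm\Delta\|_\F$ for arbitrary $\bm\Delta$, with no rank restriction needed. Also note that Proposition~\ref{prop:worse} assumes only Assumptions~\ref{asmp:1}--\ref{asmp:2}, so you cannot freely invoke $\|\cm{A}_0^*\|_\F\le\alpha_1$ from Assumption~\ref{asmp:3}.

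The paper's argument is much shorter and sidesteps all of this. It uses Lemma~\ref{lemma:personalization} (a ``personalization'' lemma from \cite{duan2023adaptive}): for a $\rho$-strongly convex quadratic loss, adding any penalty of the form $\lambda_0\|\bbm\beta-\bbm\alpha\|_2$ can shift the minimizer from the unpenalized OLS solution $\widetilde{\bbm\alpha}_0$ by at most $\lambda_0/\rho$, \emph{regardless of where $\bbm\beta=\widehat{\bm B}\widehat{\bm d}_0$ sits}. This is immediate from first-order optimality, since the subgradient of the penalty has norm at most $\lambda_0$. Combined with $\|\widetilde{\bbm\alpha}_0-\bbm\alpha_0^*\|_2\le\eta_0^*/\rho$, the triangle inequality gives $\|\widehat{\bbm\alpha}_0-\bbm\alpha_0^*\|_2\le(\eta_0^*+\lambda_0)/\rho\lesssim\lambda_0/\rho$, which is the stated bound. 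If you prefer to stay with your basic-inequality framework, the fix is to \emph{keep} the left penalty and observe that $g(\cm{A})=\min_{\cm D}\|\cm{A}-\llb\cm{D};\widehat{\bm U},\widehat{\bm V},\widehat{\bm L}\rrb\|_\F$ is the distance to a linear subspace and hence $1$-Lipschitz, so the penalty difference is bounded by $\lambda_0\|\bm\Delta\|_\F$; this makes the right-hand side linear in $\|\bm\Delta\|_\F$ and recovers the correct rate.
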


\begin{proposition}[Exact transfer learning]\label{prop:PoolVAR}
	Suppose that Assumptions \ref{asmp:1}--\ref{asmp:3} hold with $h_{\mathrm{S}}=h_{\mathrm{T}}=0$, and $\lambda_k = \infty$ for all $k\in \{0\} \cup [K]$ in the two-stage estimation procedure. If
	$T_{k} \gtrsim  \max\{s_1 ,s_2\} \max \{\widetilde{\kappa}^{2} \sigma^{4}, \widetilde{\kappa} \sigma^{2}\} N$ for all $k\in [K]$, and 
	$T_{0} \gtrsim s_1s_2s_3\max \{ \widetilde{\kappa}^{2}\sigma^{4}, \widetilde{\kappa} \sigma^{2} \}$, then
	\begin{equation*}
		\|\widehat{\cm{A}}_0-\cm{A}_0^*\|_\F
		\lesssim 
		\frac{s_{\max}^{1/2}\kappa\mu \sigma} {L^{1/2}} \sqrt{\frac{d_{\mathrm{M}}}{\widetilde{T}}}
		+ \frac{\sigma M_{0}}{\rho}\sqrt{\frac{s_1s_2s_3}{T_{0}}} 
	\end{equation*}
	with a probability at least $1 - \exp\left(-C [Np+\log(NK)]\right)-  \exp(-C \min_{k \in [K]} \{T_k\}) - \exp\left(-C T_0\right)$.
\end{proposition}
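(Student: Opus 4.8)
Since $\lambda_k=\infty$ for every $k\in\{0\}\cup[K]$, the penalties in \eqref{eq:MTLloss} and \eqref{eq:TLloss} act as hard constraints: Stage~I reduces to the constrained least squares $(\widehat{\bm U},\widehat{\bm V},\widehat{\bm L},\{\widehat{\cm{D}}_k\})\in\argmin\sum_{k\in[K]}w_k\,\mathcal{L}_k(\llb\cm{D}_k;\bm U,\bm V,\bm L\rrb)$ over orthonormal $\bm U,\bm V,\bm L$ and cores $\cm{D}_k$, with $\widehat{\cm{A}}_k=\llb\widehat{\cm{D}}_k;\widehat{\bm U},\widehat{\bm V},\widehat{\bm L}\rrb$; and Stage~II reduces, given $\widehat{\bm U},\widehat{\bm V},\widehat{\bm L}$, to the $s_1s_2s_3$-dimensional ordinary least squares $\widehat{\cm{D}}_0\in\argmin_{\cm{D}_0}\mathcal{L}_0(\llb\cm{D}_0;\widehat{\bm U},\widehat{\bm V},\widehat{\bm L}\rrb)$ with $\widehat{\cm{A}}_0=\llb\widehat{\cm{D}}_0;\widehat{\bm U},\widehat{\bm V},\widehat{\bm L}\rrb$. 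Because $h_{\mathrm S}=h_{\mathrm T}=0$, Assumption~\ref{asmp:3} forces $\cm{A}_k^*=\llb\cm{D}_k^*;\bm U^*,\bm V^*,\bm L^*\rrb$ for all $k$, so the truth is feasible in both stages. The plan is therefore to run a streamlined version of the proofs of Theorems~\ref{thm:rep} and~\ref{thm:TL}: the bias terms carried by $h_{\mathrm S},h_{\mathrm T}$ vanish, and $\lambda_k=\infty$ removes the slack one otherwise needs to guarantee that $\widehat{\cm{A}}_k$ has the exact shared-representation Tucker form. The only genuinely new bookkeeping is that all restricted-strong-convexity and noise-deviation estimates are now run over the \emph{small} constraint manifolds, which is exactly what relaxes the sample-size requirements to $T_k\gtrsim\max\{s_1,s_2\}\max\{\widetilde{\kappa}^2\sigma^4,\widetilde{\kappa}\sigma^2\}N$ and $T_0\gtrsim s_1s_2s_3\max\{\widetilde{\kappa}^2\sigma^4,\widetilde{\kappa}\sigma^2\}$.

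\emph{Stage~I.} Optimality gives the basic inequality $\sum_k w_k[\mathcal{L}_k(\widehat{\cm{A}}_k)-\mathcal{L}_k(\cm{A}_k^*)]\le 0$; writing $\bm\Delta_k:=\widehat{\cm{A}}_k-\cm{A}_k^*$ and using $\bm Y_k=\cm{A}_{k(1)}^*\bm X_k+\bm E_k$, this becomes $\sum_k \tfrac{w_k}{2T_k}\|\bm\Delta_{k(1)}\bm X_k\|_\F^2\le\sum_k\tfrac{w_k}{T_k}\langle\bm E_k,\bm\Delta_{k(1)}\bm X_k\rangle$. First I would lower-bound the left side by a restricted-eigenvalue inequality: each $\bm\Delta_k$ lies in the difference of the shared-representation Tucker manifold, a set of statistical complexity $O(d_{\mathrm M})$, so a Basu--Michailidis-type concentration bound for the Gram matrices $\bm X_k\bm X_k^\top/T_k$ over this set gives LHS $\gtrsim \rho\sum_k w_k\|\bm\Delta_k\|_\F^2$ on the stated Stage~I event. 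Next I would control the linear term by peeling the factor modes of $\bm\Delta_{k(1)}=\widehat{\bm U}\widehat{\cm{D}}_{k(1)}(\widehat{\bm L}\otimes\widehat{\bm V})^\top-\bm U^*\cm{D}_{k(1)}^*(\bm L^*\otimes\bm V^*)^\top$ and applying sub-Gaussian deviation inequalities uniformly over the factor Stiefel manifolds and over $k\in[K]$ (the union over $k$ produces the $\log(NK)$ terms), obtaining $\sum_k\tfrac{w_k}{T_k}\langle\bm E_k,\bm\Delta_{k(1)}\bm X_k\rangle\lesssim \sigma\sqrt{d_{\mathrm M}/\widetilde{T}}\,\big(\sum_k w_k\|\bm\Delta_k\|_\F^2\big)^{1/2}$. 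Combining the two and using $1/\rho=\kappa/L$ gives $\sum_k w_k\|\bm\Delta_k\|_\F^2\lesssim (\sigma^2/L)(d_{\mathrm M}/\widetilde{T})$, which up to the $\alpha_1,\kappa,\mu$ prefactors is the $h_{\mathrm S}=0$ conclusion of Theorem~\ref{thm:rep}. Finally, exactly as in Theorem~\ref{thm:rep}, I would convert this pooled error into a subspace distance: the task-diversity bound $\min_j\lambda_{\min}(s_j\sum_k w_k\cm{D}_{k(j)}^*\cm{D}_{k(j)}^{*\top})\ge\alpha_2$ makes $\sum_k w_k\widehat{\cm{A}}_{k(j)}\widehat{\cm{A}}_{k(j)}^\top$ well separated in the directions spanned by $\bm U^*,\bm V^*,\bm L^*$, and a Wedin/Davis--Kahan perturbation argument yields $\mathrm{dist}\{(\widehat{\bm U},\widehat{\bm V},\widehat{\bm L}),(\bm U^*,\bm V^*,\bm L^*)\}\lesssim s_{\max}^{1/2}\kappa\mu\,(\sigma/L^{1/2})\sqrt{d_{\mathrm M}/\widetilde{T}}$, the Representation term.

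\emph{Stage~II.} Since Stage~I uses only the source samples, $\widehat{\bm U},\widehat{\bm V},\widehat{\bm L}$ are measurable functions of the source data and hence independent of $\{\bm y_{t,0}\}$; conditioning on them, $\widehat{\cm{A}}_0$ is honest OLS on the \emph{fixed} $s_1s_2s_3$-dimensional subspace $\mathcal V:=\{\llb\cm{D}_0;\widehat{\bm U},\widehat{\bm V},\widehat{\bm L}\rrb:\cm{D}_0\in\mathbb{R}^{s_1\times s_2\times s_3}\}$. Let $\widetilde{\cm{A}}_0\in\mathcal V$ be the $\|\cdot\|_\F$-projection of $\cm{A}_0^*$ onto $\mathcal V$. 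A standard constrained-OLS error analysis (basic inequality $\mathcal{L}_0(\widehat{\cm{A}}_0)\le\mathcal{L}_0(\widetilde{\cm{A}}_0)$, RSC for $\bm X_0\bm X_0^\top/T_0$ restricted to $\mathcal V$ in the regime $T_0\gtrsim s_1s_2s_3\max\{\widetilde{\kappa}^2\sigma^4,\widetilde{\kappa}\sigma^2\}$, and a sub-Gaussian bound for the noise cross-term $\langle\bm E_0,(\widehat{\cm{A}}_{0(1)}-\widetilde{\cm{A}}_{0(1)})\bm X_0\rangle$ over the $s_1s_2s_3$-dimensional $\mathcal V$) yields $\|\widehat{\cm{A}}_0-\widetilde{\cm{A}}_0\|_\F\lesssim (\sigma M_0/\rho)\sqrt{s_1s_2s_3/T_0}$, the Transfer term. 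For the bias, $h_{\mathrm T}=0$ gives $\cm{A}_0^*=\llb\cm{D}_0^*;\bm U^*,\bm V^*,\bm L^*\rrb$, so by definition of the projection and multilinearity, $\|\widetilde{\cm{A}}_0-\cm{A}_0^*\|_\F\le\|\llb\cm{D}_0^*;\widehat{\bm U},\widehat{\bm V},\widehat{\bm L}\rrb-\llb\cm{D}_0^*;\bm U^*,\bm V^*,\bm L^*\rrb\|_\F\lesssim\|\cm{D}_0^*\|_\F\cdot\mathrm{dist}\{(\widehat{\bm U},\widehat{\bm V},\widehat{\bm L}),(\bm U^*,\bm V^*,\bm L^*)\}$, which by Stage~I is of the order of the Representation term (absorbing $\|\cm{D}_0^*\|_\F\le\alpha_1$ into the $\mu$ prefactor). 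The triangle inequality $\|\widehat{\cm{A}}_0-\cm{A}_0^*\|_\F\le\|\widehat{\cm{A}}_0-\widetilde{\cm{A}}_0\|_\F+\|\widetilde{\cm{A}}_0-\cm{A}_0^*\|_\F$ then gives the claimed bound, and a union bound over the Stage~I event and the Stage~II event produces the stated probability.

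\emph{Main obstacle.} The crux is the restricted-strong-convexity / covering-number control on the correct low-dimensional sets, so that the sample-size thresholds really are $\max\{s_1,s_2\}N$ and $s_1s_2s_3$ rather than the $N^2p$ that would merely reproduce Theorem~\ref{thm:TL}. This requires a sharp metric-entropy estimate for the set of differences $\{\llb\cm{D}_k;\bm U,\bm V,\bm L\rrb-\llb\cm{D}_k^*;\bm U^*,\bm V^*,\bm L^*\rrb\}$ as $(\bm U,\bm V,\bm L)$ ranges over the Stiefel manifolds, carefully combined with the union over the $K$ sources and with the VAR-specific dependent-data concentration (which is why $\kappa$ is replaced by $\widetilde{\kappa}$ in the sample-size conditions, through the VAR$(1)$ companion form). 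A secondary but essential point is the independence of $\widehat{\bm U},\widehat{\bm V},\widehat{\bm L}$ from the target sample, which is what reduces Stage~II to fixed-dimensional OLS and delivers the clean $\sqrt{s_1s_2s_3/T_0}$ rate in the Transfer term without paying a covering-number cost over the representations.
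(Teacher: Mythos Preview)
Your proposal is correct and follows essentially the same route as the paper. Both arguments reduce the $\lambda_k=\infty$, $h_{\mathrm S}=h_{\mathrm T}=0$ case to a constrained least-squares problem, run the basic inequality together with restricted strong convexity and a sub-Gaussian deviation bound over the shared-representation Tucker class to control Stage~I (the paper packages this as Claim~A4/Proposition~A1, then applies Wedin's theorem exactly as you suggest), and pivot Stage~II through the projection of $\cm{A}_0^*$ onto $\mathcal{M}(\widehat{\bm B})$ (your $\widetilde{\cm{A}}_0$ is the paper's $\widehat{\bm P}\bm B^*\bm d_0^*$ in vectorized form).

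The one substantive stylistic difference is how the Stage~II noise term is handled. You condition on the source data so that $(\widehat{\bm U},\widehat{\bm V},\widehat{\bm L})$ is fixed and then bound $\langle\nabla\mathcal{L}_0(\cm{A}_0^*),\cdot\rangle$ directly over the fixed $s_1s_2s_3$-dimensional subspace $\mathcal V$; this is clean and immediately explains the $T_0\gtrsim s_1s_2s_3$ threshold. The paper instead works unconditionally and splits $\widehat{\bm P}=\bm P^*+(\widehat{\bm P}-\bm P^*)$, controlling $\|\bm P^*\nabla\mathcal{L}_0(\bbm\alpha_0^*)\|_2$ via event $\mathcal{E}_4$ and absorbing the $(\widehat{\bm P}-\bm P^*)$ piece into the Representation term using the Stage~I subspace bound. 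Your conditioning argument is arguably more transparent in this exact setting; the paper's decomposition is what makes the argument reusable for finite $\lambda_k$ and $h>0$ in Theorem~\ref{thm:TL}.
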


From Proposition \ref{prop:worse}, when the source time series fail to provide useful auxiliary information, the proposed transfer learning framework can still attain an error rate no greater than that of conventional single-task learning on the target time series, hence avoiding negative transfer. 
Note that, in fact, our framework imposes no structural constraint on coefficient tensors, $\cm{A}_k$'s, and this generality inevitably requests larger sample sizes although the bound at Proposition \ref{prop:PoolVAR} is the same as that at Theorem \ref{thm:TL} with sufficiently small values of $h_{\mathrm{S}}$ and $h_{\mathrm{T}}$.
As in the representation-based transfer learning \citep{Tripuraneni2020diversity,tripuraneni21linear,du2020few}, the exact transfer learning assumes fully shared representations across source and target time series, and hence less sample sizes are needed. 

\section{Implementation Issues}\label{sec:Algorithms}

\subsection{Algorithms}\label{subsec:Algorithm1} 

This subsection introduces algorithms for the two optimization problems in the proposed two-stage estimation procedure in Section \ref{subsec:Estimation}.

We first consider the optimization problem at \eqref{eq:MTLloss} from the representation learning at Stage I. 
Let $\cm{R}_k = \cm{A}_k -\llb \cm{D}_k; \bm U, \bm V, \bm L \rrb$, and it is then equivalent to
\begin{equation}\label{eq:MTLlossform2}
	(\widehat{\bm U}, \widehat{\bm V}, \widehat{\bm L},\{\widehat{\cm{R}}_k,\widehat{\cm{D}}_k, k\in [K]\}) 
	\in 
	\argmin  \sum_{k\in [K]} w_k \{\mathcal{L}_k(\llb   \cm{D}_k; \bm U, \bm V, \bm L \rrb+\cm{R}_k ) + \lambda_k \|\cm{R}_k\|_\F\} .
\end{equation}
The above optimization can be conducted by alternating updates between two blocks, $\{\bm U,\bm V, \bm L, \cm{D}_k, k\in [K] \}$ and $\{\cm{R}_k, k\in [K]\}$, while this block-coordinate structure allows for an efficient algorithm.
Specifically, by fixing the block of $\{\bm U,\bm V, \bm L, \cm{D}_k, k\in [K] \}$, we can update $\{\cm{R}_k, k\in [K]\}$ by solving $K$ separated sub-optimization problems,
\begin{equation}\label{eq:single task}
	\min \mathcal{L}_k(\llb \cm{D}_k^{(i)}; \bm U^{(i)}, \bm V^{(i)}, \bm L^{(i)} \rrb+\cm{R}_k ) + \lambda_k \|\cm{R}_k\|_\F
\end{equation}
for each $k \in [K]$, where $\{\bm U^{(i)},\bm V^{(i)}, \bm L^{(i)}, \cm{D}_k^{(i)}, k\in [K] \}$ are outputs from the $i$th iteration with $i\geq 0$.
The loss function $\mathcal{L}_k(\llb \cm{D}_k^{(i)}; \bm U^{(i)}, \bm V^{(i)}, \bm L^{(i)} \rrb +\cdot)$ at \eqref{eq:single task} is smooth and convex, while Frobenius norm $\|\cdot\|_\F$ is convex but non-differentiable at the origin. 
As a result, the overall objective function is a combination of one smooth loss and one non-smooth penalty term, making the proximal gradient descent method \citep{Parikh2014prox} well-suited here. 
Denote by $\eta_k$ the step size, and by $\prox_{c}\{\cm{A}\}= (1-c/\|\cm{A}\|_\F)_{+} \cm{A}$ the proximal operator, where $\cm{A}$ is a tensor, and $x_+=x$ as $x\geq 0$, and $0$ as $x<0$.
We then can update each $\cm{R}_k$ by
\begin{equation}\label{eq:update1}
	\cm{R}_k^{(i+1)} = \prox_{\eta_k \lambda_k} \{\cm{R}_k^{(i)} - \eta_k \nabla \mathcal{L}_k(\llb   \cm{D}_k^{(i)}; \bm U^{(i)}, \bm V^{(i)}, \bm L^{(i)} \rrb + \cm{R}_k^{(i)}  )\},
\end{equation}
where $\cm{R}_k^{(i)}$ contains outputs from the $i$th iteration.
Step size $\eta_k$ is usually chosen to be the inverse of the Lipschitz constant of $\nabla \mathcal{L}_k(\cdot)$ to ensure convergence \citep{beck2009fast}. As a result, we set it to $\eta_k = T_k \lambda_{\max}^{-1}(\bm X_k \bm X_k^\top)$ in this paper.

On the other hand, by fixing the block of $\{\cm{R}_k, k\in [K]\}$, we can update $\{\bm U,\bm V, \bm L, \cm{D}_k, k\in [K] \}$ by minimizing $\sum_{k\in [K]} w_k \mathcal{L}_k(\llb \cm{D}_k; \bm U, \bm V, \bm L \rrb+\cm{R}_k^{(i+1)} )$, while it is challenging to directly optimize this objective function with orthonormality constraints.
Note that our aim is to search for a multilinear low-rank tensor, $\llb \cm{D}_k; \bm U, \bm V, \bm L \rrb$, rather than its specific decomposition, and hence there is no harm to give up the orthonormality of $\bm U$, $\bm V$ and $\bm L$.
Specifically, we further adjust the loss function by adding three regularization terms,
\begin{equation}\label{eq:poolconstrained}
	\begin{split}
		\mathcal{L}^{\mathrm{RL}}(\bm U, \bm V, \bm L,\cm{D}_k, k\in [K])=&  \frac{a}{4}\left\| \bm U^\top \bm U-b^2\bm I_{s_1}\right\|_\F^2+ \frac{a}{4}\left\| \bm V^\top \bm V-b^2\bm I_{s_2}\right\|_\F^2 + \frac{a}{4}\left\| \bm L^\top \bm L-b^2\bm I_{s_3}\right\|_\F^2\\
		&+\sum_{k\in [K]} w_k \mathcal{L}_k(\llb   \cm{D}_k; \bm U, \bm V, \bm L \rrb+\cm{R}_k^{(i+1)} ),
	\end{split}
\end{equation}
where $a$, $b>0$ are two tuning parameters; see also \cite{han2021optimal}. 
The regularization terms, $\| \bm U^\top \bm U-b^2\bm I_{s_1}\|_\F^2$, $\| \bm V^\top \bm V-b^2\bm I_{s_2}\|_\F^2$ and $\| \bm L^\top \bm L-b^2\bm I_{s_3}\|_\F^2$,
are employed to prevent $\bm{U}$, 
$\bm{V}$, and $\bm{L}$ from being singular, and they can also balance the scaling of $\bm{U}$, $\bm{V}$, $\bm{L}$ and $\cm{D}_k$'s. 
A gradient descent algorithm can be employed to minimize the objective function at \eqref{eq:poolconstrained}.

Denote by $\widetilde{\bm{U}}$, $\widetilde{\bm{V}}$, $\widetilde{\bm{L}}$ and $\cm{\widetilde{D}}_k$'s the minimizers of $\mathcal{L}^{\mathrm{RL}}(\bm U, \bm V, \bm L,\cm{D}_k, k\in [K])$, and it then holds that $\widetilde{\bm{U}}^\top\widetilde{\bm{U}} = b^2\bm{I}_{s_1}$, $\widetilde{\bm{V}}^\top\widetilde{\bm{V}} = b^2\bm{I}_{s_2}$ and $\widetilde{\bm{L}}^\top\widetilde{\bm{L}} = b^2\bm{I}_{s_3}$, i.e., they are also the minimizers of $\sum_{k\in [K]} w_k \mathcal{L}_k(\llb   \cm{D}_k; \bm U, \bm V, \bm L \rrb+\cm{R}_k^{(i+1)} )$.
In fact, if this is not true, then there exist invertible matrices $\bm{O}_i\in\mathbb{R}^{s_i\times s_i}$ such that $\widetilde{\bm{U}} =\bar{\bm{U}}\bm{O}_1$, $\widetilde{\bm{V}} =\bar{\bm{V}}\bm{O}_2$ and $\widetilde{\bm{L}} =\bar{\bm{L}}\bm{O}_3$, where $\bar{\bm{U}}^\top\bar{\bm{U}} = b^2\bm{I}_{s_1}$, $\bar{\bm{V}}^\top\bar{\bm{V}} = b^2\bm{I}_{s_2}$ and $\bar{\bm{L}}^\top\bar{\bm{L}} = b^2\bm{I}_{s_3}$.
Let $\cm{\widebar{D}}_k = \cm{\widetilde{D}}_k\times_1\bm{O}_1\times_2\bm{O}_2\times_3\bm{O}_3$ with $k\in[K]$. Then $\llb   \cm{\widebar{D}}_k; \bar{\bm{U}}, \bar{\bm{V}}, \bar{\bm{L}} \rrb =\llb   \cm{\widetilde{D}}_k; \widetilde{\bm{U}}, \widetilde{\bm{V}}, \widetilde{\bm{L}} \rrb$ for all $k\in[K]$, while the three regularization terms are reduced to zero. This leads to a contradiction with the definition of minimizers.
Moreover, the proposed algorithm is not sensitive to the choice of regularization parameters, $a$ and $b$, in $\mathcal{L}^{\mathrm{RL}}(\bm U, \bm V, \bm L,\cm{D}_k, k\in [K])$, and they are set to one in all our simulation and empirical studies.
Similar regularization methods have been widely applied to non-convex low-rank matrix estimation problems; see \cite{tu2016low}, \cite{wang2017unified} and references therein.

\begin{algorithm}[htp]
	\caption{Alternating Algorithm for Representation Learning at Stage I}\label{algorithm}
	\begin{spacing}{1.0}
		\begin{algorithmic}[1]
			\State \textbf{Input:} $K$ source datasets $\{\bm y_{t,k},-p+1\leq t\leq T_k\}$, weights $\{w_k\}$, regularization parameters $\{\lambda_k\}$, number of iteration $I$.
			\State \textbf{Initialization:} 
			\State \quad Initialize matrices $\bm{U}^{(0)} \in \mathbb{R}^{N \times s_1}$, $\bm{V}^{(0)} \in \mathbb{R}^{N \times s_2}$ and $\bm{L}^{(0)} \in \mathbb{R}^{p \times s_3}$, and tensors $\cm{D}_k^{(0)} \in \mathbb{R}^{s_1\times s_2 \times s_3}$ and $\cm{R}_k^{(0)} \in \mathbb{R}^{N \times N \times p}$ for all $k\in[K]$.
			\For{iteration $i = 1,2\ldots I$}
			\State Step 1: Given $\bm U^{(i)}, \bm V^{(i)}, \bm L^{(i)}$ and $\cm{D}_k^{(i)}$, calculate $\cm{R}_k^{(i+1)}$ by using the update at \eqref{eq:update1} for each $k\in[K]$;
			\State Step 2: Given $\{\cm{R}_k^{(i+1)},k\in[K]\}$, calculate $\bm U^{(i+1)}, \bm V^{(i+1)}, \bm L^{(i+1)}, \{\cm{D}_k^{(i+1)}, k\in[K]\}$  by minimizing $\mathcal{L}^{\mathrm{RL}}(\bm U, \bm V, \bm L,\cm{D}_k, k\in [K])$ at \eqref{eq:poolconstrained}. 
			\EndFor
			\State \textbf{Output:} Obtain $\widehat{\bm U}, \widehat{\bm V}$ and $\widehat{\bm L}$ by orthonormalizing $\bm{U}^{(I)}$, $\bm{V}^{(I)}$ and $\bm{L}^{(I)}$, respectively.
		\end{algorithmic}
	\end{spacing}
\end{algorithm}

Algorithm \ref{algorithm} gives the details in searching for the common representations, yielding $\widehat{\bm U}, \widehat{\bm V}$ and $\widehat{\bm L}$. We can further orthonormalize the three representations if needed.

We next consider the optimization problem at \eqref{eq:TLloss} from the transfer learning at Stage II.
Let $\cm{R}_0=\cm{A}_0-\llb \cm{D}_0; \widehat{\bm U}, \widehat{\bm V}, \widehat{\bm L} \rrb$, and it is equivalent to the following optimization problem,
\begin{equation}\label{eq:target task}
	(\widehat{\cm{R}}_0,\widehat{\cm{D}}_0)
	\in\argmin \{\mathcal{L}_0(\llb \cm{D}_0; \widehat{\bm U}, \widehat{\bm V}, \widehat{\bm L} \rrb+\cm{R}_0 ) + \lambda_0 \|\cm{R}_0\|_\F \}, 
\end{equation}
which can be conducted by alternating updates between two blocks, $\cm{D}_0$ and $\cm{R}_0$.
Specifically, by fixing $\cm{D}_0$ at $\cm{D}_0^{(i)}$, we use the proximal gradient descent method at \eqref{eq:update1} to update the block of $\cm{R}_0$, ending up with $\cm{R}_0^{(i+1)}$, where $\cm{D}_0^{(j)}$ and $\cm{R}_0^{(j)}$ are outputs from the $j$th iteration.
In the meanwhile, by fixing $\cm{R}_0$ at $\cm{R}_0^{(i+1)}$, we can even have an explicit form to update $\cm{D}_0$,
\begin{equation}\label{eq:closedformD0}
	\cm{D}_{0(1)}^{(i+1)} = [\widehat{\bm U}^\top (\bm Y_0 -  \cm{R}_{0(1)}^{(i+1)} \bm X_0 ) \bm X_0^\top(\widehat{\bm L} \otimes \widehat{\bm V}) ] [(\widehat{\bm L} \otimes \widehat{\bm V})^\top \bm X_0 \bm X_0^\top (\widehat{\bm L} \otimes \widehat{\bm V})]^{-1}.
\end{equation}

Algorithm \ref{algorithmTL} presents the details of transfer learning, and it leads to the estimates of $\widehat{\cm{R}}_0$ and $\widehat{\cm{D}}_0$.
As a result, we have the estimated transition tensor, $\widehat{\cm{A}}_0=\llb \widehat{\cm{D}}_0; \widehat{\bm U}, \widehat{\bm V}, \widehat{\bm L} \rrb+\widehat{\cm{R}}_0$, and hence the corresponding transition matrices $\widehat{\bm A}_{j,0}$ with $1\leq j\leq p$.

\begin{algorithm}[H]
	\caption{Alternating Algorithm for Transfer Learning at Stage II}\label{algorithmTL}
	\begin{spacing}{0.9}
		\begin{algorithmic}[1]
			\State \textbf{Input:} Target dataset $\{\bm y_{t,0},-p+1 \le t \le T_0\}$, $\{\widehat{\bm U}, \widehat{\bm V}, \widehat{\bm L}\}$ estimated from Stage I, regularization parameters $\{\lambda_0\}$, number of iteration $I$.
			\State \textbf{Initialize:} 
			\State \quad Initialize $\cm{D}_0^{(0)} \in \mathbb{R}^{s_1\times s_2 \times s_3}$ and $\cm{R}_0^{(0)} \in \mathbb{R}^{N \times N \times p}$.
			\For{iteration $i = 1,2\ldots I$}
			\State Step 1: Given $\cm{D}_0^{(i)}$, update $\cm{R}_0$ by using the proximal gradient descent method,
			\[
			\cm{R}_0^{(i+1)} = \prox_{\eta_0 \lambda_0} \{\cm{R}_0^{(i)} - \eta_0 \nabla \mathcal{L}_k(\llb   \cm{D}_0^{(i)}; \widehat{\bm U}, \widehat{\bm V}, \widehat{\bm L} \rrb + \cm{R}_0^{(i)}  )\} \hspace{3mm}\text{with}\hspace{3mm} \eta_0 = T_0 \lambda_{\max}^{-1}(\bm X_0 \bm X_0^\top);
			\]
			\State Step 2: Given $\cm{R}_0^{(i+1)}$, calculate $ \cm{D}_0^{(i+1)}$ by using the update at \eqref{eq:closedformD0}.
			\EndFor
			\State \textbf{Output:} Obtain $\widehat{\cm{D}}_0=\cm{D}_0^{(I)}$ and $\widehat{\cm{R}}_0=\cm{R}_0^{(I)}$.
		\end{algorithmic}
	\end{spacing}
\end{algorithm}

\subsection{Parameter initialization}\label{subsec:Initialization}
The two algorithms in Section \ref{subsec:Algorithm1} both involves non-convex optimization, and hence a good initialization is necessary.
We first consider $\bm{U}^{(0)}$, $\bm{V}^{(0)}$ and $\bm{L}^{(0)}$, which encode the dominant low-dimensional representations across all source datasets, and the procedure is given below. 
\begin{itemize}
	\item \textit{Source tensor approximation}: 
	Following \cite{wang2022high}, we fit a rank-constrained VAR model to each source task, $\widetilde{\cm{A}}_k(r_{1,k}, r_{2,k}, r_{3,k}) = \argmin \mathcal{L}_k(\cm{A}_k)$, where Tucker ranks $(r_{1,k}, r_{2,k}, r_{3,k})$ are selected by the ridge-type ratio method \citep{xia2015consistently}.
	\item \textit{Subspace extraction}: For each source task $k\in [K]$, we extract the source-specific representations $\widetilde{\bm U}_k$, $\widetilde{\bm V}_k$ and $\widetilde{\bm L}_k$ by conducting the HOSVD on $\widetilde{\cm{A}}_k({r}_{1,k}, {r}_{2,k}, {r}_{3,k})$.
	\item \textit{Subspace aggregation via principal component analysis}: 
	We aggregate the estimated representation space across $K$ sources by the following weighted projection matrices:
	\[\widetilde{\bm \Sigma}_u = \sum_{k\in [K]} w_k \widetilde{\bm U}_k \widetilde{\bm U}_k^\top, \;\;
	\widetilde{\bm \Sigma}_v = \sum_{k\in [K]} w_k \widetilde{\bm V}_k \widetilde{\bm V}_k^\top \;\;\text{and}\;\; 
	\widetilde{\bm \Sigma}_l = \sum_{k\in [K]} w_k \widetilde{\bm L}_k \widetilde{\bm L}_k^\top.\]
	We conduct eigen-decomposition on $\widetilde{\bm \Sigma}_u$ and suppress these insignificant eigenvalues. Accordingly, rank $s_1$ is the number of significant eigenvalues, and $\bm U^{(0)}$ contains the corresponding eigenvectors. Similarly, we can obtain $\bm V^{(0)}$ and $\bm L^{(0)}$, as well as ranks $s_2$ and $s_3$, by conducting eigen-decomposition on $\widetilde{\bm \Sigma}_v$ and $\widetilde{\bm \Sigma}_l$, respectively.
\end{itemize}

Note that the above procedure also suggests a method to select the ranks of common representations.
Moreover, let $\lambda$ be an eigenvalue of matrix $\widetilde{\bm \Sigma}_u$, and $\bm u$ be the corresponding eigenvector.
It holds that $\lambda = \bm u^\top \widetilde{\bm \Sigma}_u \bm u = \sum_{k\in [K]} w_k \|\widetilde{\bm U}_k^\top \bm u \|_2^2 = \sum_{k\in [K]} w_k \cos^2(\theta_{k})$,
where $\theta_k$ is the principal angle between $\bm u$ and $\mathcal{M}(\widetilde{\bm U}_k)$.
As a result, we have $0\le \lambda \le 1$, and a larger value of $\lambda$ implies that its eigenvector is more consistently aligned with source-specific response representations.
Therefore, $\bm U^{(0)}$ can capture all directions within $\{\widetilde{\bm U}_k, k\in[K]\}$ if all positive eigenvalues are taken into account, while it will choose the most well-aligned directions if we only consider the eigenvalues exceeding a threshold, say between $70\%$ and $80\%$. The situations for predictor and temporal representations, $\widetilde{\bm \Sigma}_v$ and $\widetilde{\bm \Sigma}_l$, are similar.

We finally consider the initialization for task-specific tensors $\cm{D}_k$ and $\cm{R}_k$ with $k\in\{0\}\cup [K]$, and the deviation tensor can be simply set to $\cm{R}_{k}^{(0)}=\bm{0}$. 
In the meanwhile, we can set $\cm{D}_k^{(0)} = \llb \widetilde{\cm{A}}_k({r}_{1,k}, {r}_{2,k}, {r}_{3,k}); \bm U^{(0)\top}, \bm V^{(0)\top}, \bm L^{(0)\top} \rrb$ with $k\in[K]$ for all source tasks, where $\bm{U}^{(0)}$, $\bm{V}^{(0)}$ and $\bm{L}^{(0)}$ are the initialized common representations.
Finally, from \eqref{eq:closedformD0}, we can initialize $\cm{D}_0^{(0)}$ by 
$\cm{D}_{0(1)}^{(0)} = [\widehat{\bm U}^\top \bm Y_0  \bm X_0^\top(\widehat{\bm L} \otimes \widehat{\bm V}) ] [(\widehat{\bm L} \otimes \widehat{\bm V})^\top \bm X_0 \bm X_0^\top (\widehat{\bm L} \otimes \widehat{\bm V})]^{-1}$.

\subsection{Hyperparameter selection}\label{subsec:tuning}
We first consider the selection of weights $\{w_k, k\in[K]\}$, which determine the relative contribution of each source dataset in the representation learning at Stage I, and our aim is to obtain an efficient estimation of common representations.
From Theorem \ref{thm:rep}, the variance-adjusted effective sample size has the form of, $\widetilde{T} = \min_{k\in [K]} \{ w_k^{-1} \lambda_{\max}^{-1}(\bm \Sigma_{\bbm \varepsilon,k})T_k \}$, and a larger $\widetilde{T}$ will lead to a sharper estimation error bound for common representations.
As a result, we maximize $\widetilde{T}$ with the constraint of $\sum_{k=1}^{K} w_k = 1$, leading to the optimal weights of 
$$
w_k^{\mathrm{opt}} = \frac{T_k \lambda_{\max}^{-1}(\bm \Sigma_{\bbm \varepsilon,k})}{\sum_{k\in[K]}T_k \lambda_{\max}^{-1}(\bm \Sigma_{\bbm \varepsilon,k})}.
$$
The error variances $\bm \Sigma_{\bbm \varepsilon,k}$'s are unknown in real application, and we may estimate them based on residual sequences $\{\widehat{\bbm \varepsilon}_{t,k}\}$ from the fitted VAR models. 
However, this will introduces additional variation, especially when the dimension $N$ is large.
On the other hand, we may factor out the quantities of $\lambda_{\max}^{-1}(\bm \Sigma_{\bbm \varepsilon,k})$ with $k\in[K]$ by assuming their equivalency, leading to a simplified choice of $w_k = T_k / (\sum_{k\in [K]}T_k)$ with $k\in[K]$.

We next consider the selection of regularization parameters $\lambda_k$ for $k\in\{0\}\cup[K]$, which control how far we allow the transition tensors to depart from the multilinear low-rank structure.
In the representation learning at Stage I, a larger value of $\lambda_k$ will enhance estimation efficiency of the shared representations $\bm U, \bm V$ and $\bm L$, when $\cm{A}_k$ aligns well with the low-rank structure $\llb \cm{D}_k; \bm U, \bm V, \bm L \rrb$, while a smaller value can mitigate bias when $\cm{A}_k$ is far from the low-rank structure.
On the other hand, in the transfer learning at Stage II, a larger value of $h_{\mathrm{T}}$ can incorporate more task-specific variations, i.e., $\lambda_0$ can be different from $\{\lambda_k,k\in[K]\}$ such that more flexibility can be accommodated.
Specifically, from Theorems \ref{thm:rep} and \ref{thm:TL}, we suggest $\lambda_k =c_{\mathrm{S}} \sqrt{[N^2p + N\log (N K)]/T_k}$ for $k\in [K]$ and $\lambda_0 =c_{\mathrm{T}} \sqrt{[N^2p + N\log (N) ]/T_0}$, where two constants $c_{\mathrm{S}}$ and $c_{\mathrm{T}}$ can be chosen by validation methods. 

\section{Extension to the case with many source datasets}\label{sec:LargeK}

For the $K$ source tasks in the transfer learning framework in Section \ref{sec:ProblemSetup}, each is assumed to have one task-specific parameter tensor $\cm{D}_k$ and three common representations, $\bm{U}$,$\bm{V}$ and $\bm{L}$, however, it may also have task-specific representations, $\bm{U}_k$,$\bm{V}_k$ and $\bm{L}_k$.
When the proposed methodology in Sections \ref{sec:ProblemSetup} and \ref{subsec:Estimation} is applied, these significant task-specific representations will be identified as a part of common representations, while the insignificant ones contribute to the deviation $h_{\mathrm{S}}$, leading to very large values of $s_j$ with $1\leq j\leq 3$ and $h_{\mathrm{S}}$ when there are many source datasets.
This makes it difficult to learn common representations efficiently such that they can be used to improve estimation efficiency of the target dataset.

To solve this problem, we further introduce task-specific representations to the multilinear low-rank structure at \eqref{eq:add1}, i.e. $\cm{A}_k = \llb   \cm{D}_{k}; [\bm U~\bm U_k], [\bm V~\bm V_k],[\bm L~\bm L_k] \rrb$, and accordingly the similarity measure $h_{\mathrm{S}}$ at \eqref{eq:similaritymeasure} can also be adjusted to
\begin{equation*}
	\max_{k\in [K]} \min_{\substack{ \cm{D}_{k} , \bm U_k, \bm V_k,\bm L_k}} \| \cm{A}_k - \llb   \cm{D}_{k}; [\bm U~\bm U_k], [\bm V~\bm V_k],[\bm L~\bm L_k] \rrb \|_\F \le h_{\mathrm{S}},
\end{equation*}
where $\cm{D}_k \in \mathbb{R}^{(s_1+r_{1,k})\times (s_2+r_{2,k}) \times (s_3+r_{3,k})}$ is task-specific parameter tensor, $\bm U_k \in \mathbb{O}^{N\times r_{1,k}}, \bm V_k \in \mathbb{O}^{N\times r_{2,k}}$ and $\bm L_k \in \mathbb{O}^{p\times r_{3,k}}$ are task-specific representations, and $\bm U \in \mathbb{O}^{N\times s_1}, \bm V \in \mathbb{O}^{N\times s_2}$ and $\bm L \in \mathbb{O}^{p\times s_3}$ are common representations.
Moreover, the optimization problem at \eqref{eq:MTLloss} for the representation learning in Stage I can also be modified into   
\begin{equation*}
	\begin{split}
		(\widehat{\bm U}, \widehat{\bm V}, \widehat{\bm L},& \{\widehat{\cm{A}}_k,\widehat{\cm{D}}_{k},\widehat{\bm U}_{k},\widehat{\bm V}_{k},\widehat{\bm L}_{k}, k\in [K]\}) \\
		&\in 
		\argmin
		 \sum_{k\in [K]} w_k \{\mathcal{L}_k(\cm{A}_k) 
		+ \lambda_k \|\cm{A}_k - \llb   \cm{D}_{k}; [\bm U~\bm U_k], [\bm V~\bm V_k],[\bm L~\bm L_k] \rrb \|_\F\}. 
	\end{split}
\end{equation*}
By a method similar to Sections \ref{subsec:Theory} and \ref{sec:Algorithms}, we can establish theoretical properties and algorithms for the corresponding two-stage estimation procedure.
Finally, we may also introduce the task-specific representations, $\bm U_0, \bm V_0$ and $\bm L_0$, to the target dataset if needed.

\section{Simulation studies}\label{sec:Simulation}

This section conducts three simulation experiments to evaluate how the proposed methodology is affected by three key quantities, the similarity measure $h=h_{\mathrm{S}}=h_{\mathrm{T}}$, the sample size of target dataset $T_0$, and the number of source datasets $K$, respectively.

The data are generated by model \eqref{eq:VARp} with $p=1$ and $4$, where the innovations $\bbm{\varepsilon}_{t,k}$'s are $N$-dimensional standard normal random vectors, and they are independent across all $t$ and $k$.
The sample sizes of source time series are fixed at $T_k = 300$ with $k \in [K]$ for simplicity, and the number of replications is 500.
For the VAR model with $p=4$, we construct the transition tensors $\cm A_{k}= \llbracket  \cm D_k; \bm U_k,\bm V_k, \bm L_k\rrbracket + \cm R_k$ with $k\in \{0\} \cup [K]$ by using a three-step procedure below. (i.) We first randomly generate three shared representations, $\bm U \in \mathbb{O}^{N\times s_1}$, $\bm V \in \mathbb{O}^{N\times s_2}$ and $\bm L \in \mathbb{O}^{p\times s_3}$ with $s_3=3$, and then randomly pick up two columns from each of them to form $\bm U_k\in \mathbb{O}^{N\times 2}$, $\bm V_k\in \mathbb{O}^{N\times 2}$ and $\bm L_k\in \mathbb{O}^{p\times 2}$, respectively.
(ii.) For the task-specific tensors $\cm D_k= \llbracket\cm S_k;\bm O_{1,k}, \bm O_{2,k}, \bm O_{3,k}\rrbracket\in \mathbb{R}^{2\times 2 \times 2}$, we set $\cm S_k \in \mathbb{R}^{2\times 2 \times 2}$ to a super-diagonal tensor with entries being sampled from $[0.5,0.8]$ uniformly, and then being normalized to have unit Frobenius norm subsequently, while $\bm O_{j,k}\in \mathbb{O}^{2\times 2}$ with $1\leq j\leq 3$ are randomly generated orthonormal matrices.
(iii.) For the deviation tensor $\cm R_k$, we first generate a random tensor with the same size, then project it such that $\cm R_{k(1)} \in \mathcal{M}^\perp(\bm U)$, $\cm R_{k(2)} \in \mathcal{M}^\perp(\bm V)$ and $\cm R_{k(3)}  \in \mathcal{M}^\perp(\bm L)$, and finally standardize it such that $\|\cm R_k\|_\F = h$. 
On the other hand, for the VAR model with $p=1$, there is only one transition matrix. We then suppress the temporal factor loadings, $\bm L$ and $\bm L_k$, and all other settings are the same.

In addition, we compare the two-stage estimation procedure (TL-VAR or TL) in Section \ref{subsec:Estimation} with three related modeling procedures: the pooling VAR (Pool-VAR or Pool) model, i.e., the exact transfer learning with $\lambda_k =\infty$ for all $k\in\{0\}\cup[K]$, the multilinear low-rank VAR (MLR-VAR or MLR) modeling only for the target time series $\{\bm y_{t,0}\}$ with ranks $(2,2,2)$ and $2$ for the cases with $p=4$ and 1, respectively, and the general VAR modeling only for the target time series $\{\bm y_{t,0}\}$ with order $p=4$ and 1.
The Pool-VAR is a variant of our method with $h=0$, and it hence enforces an identical low-rank structure across source and target transition tensors or matrices with shared representations. In the meanwhile, both MLR-VAR and VAR are applied to target time series only, where the VAR has no structural constraint, while the MLR-VAR imposes a low-rank structure.

The first simulation experiment is to investigate the impact of similarity measure $h$, and the sample size of target time series is fixed at $T_0 =100$.
We consider four different settings of $(K,N,s_1,s_2)=(5, 10, 3, 3)$, $(10, 10, 3, 3)$, $(10, 20, 3, 3)$ or $(10, 20, 5, 5)$, and the value of $h$ varies from $0$ to $2$ with an increment of $0.25$.
Figure \ref{fig:RMSE_wrt_h} gives root mean squared errors (RMSEs) of the estimated transition matrix or tensor, $\widehat{\bm A}_0$ or $\widehat{\cm{A}}_0$, from the target time series, and the RMSEs from general VAR models with $(N,p) = (20,4)$ are not presented in the last two subplots since they all exceed $10$ due to poor performance.
It can be seen that, by leveraging the low-rank assumption, the TL-VAR, Pool-VAR and MLR-VAR significantly outperform the general VAR when the value of $h$ is small.
However, their performance become worse when $h$ increases, since the true transition matrices or tensors gradually deviate from low-rank structures and become more heterogeneous, 
In fact, the Pool-VAR and MLR-VAR even have a worse performance than the general VAR for a very large value of $h$, while our TL-VAR can still maintain a better performance.
This confirms the flexibility of our transfer learning framework in balancing structural assumptions with dataset dissimilarity. 
\begin{figure}
	\centering
	\includegraphics[width=0.95\linewidth]{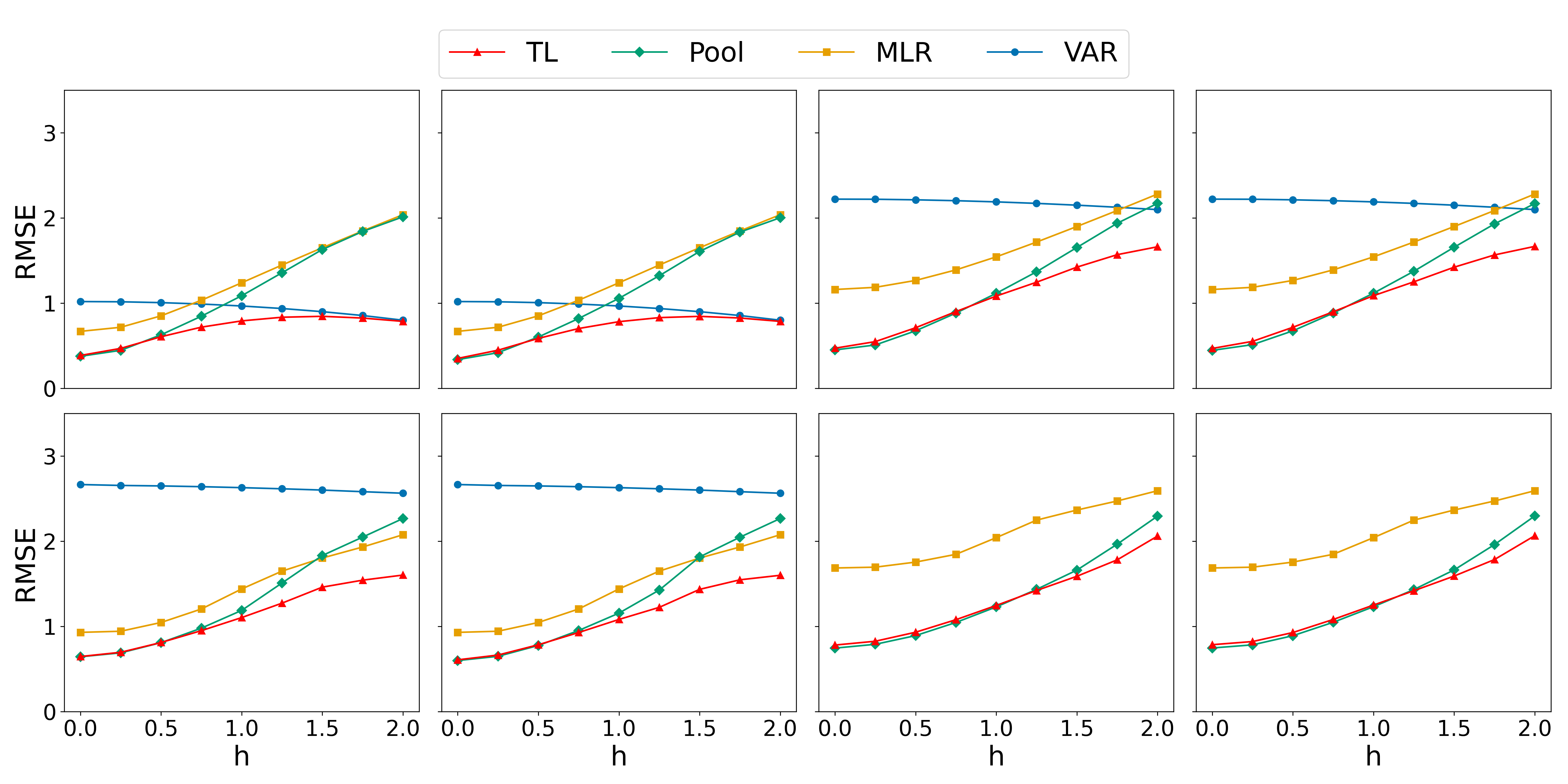}
	\caption{RMSEs  of the estimated transition matrix or tensor, $\widehat{\bm A}_0$ or $\widehat{\cm{A}}_0$, from the target time series with varying similarity measures $h$. There are four competing estimation procedures (TL, Pool, MLR and VAR), two VAR models with $p=1$ (upper panel) and $p=4$ (lower panel), and four settings of $(K,N,s_1,s_2)$ from the left to right panels.}		
	\label{fig:RMSE_wrt_h}
\end{figure}  

The second experiment is to evaluate finite-sample performance of the proposed methodology, as well as the three competing estimation procedures, with varying sample sizes of target datasets, and the similarity measure is fixed at $h=0.5$.
We use the same four settings of $(K,N,s_1,s_2)$ in the first experiment, and the sample size $T_0$ varies from 50 to 300 with an increment of 50.  
Figure \ref{fig:RMSE_wrt_T0} presents the RMSEs of $\widehat{\bm A}_0$ or $\widehat{\cm{A}}_0$ from the target time series. It can be seen that the TL-VAR has a substantial efficiency gain in data-scarce regimes (i.e. smaller $T_0$), making necessary to borrow structural information from source datasets to compensate the limited number of observations in target datasets. 
On the other hand, in data-rich regimes (i.e. larger $T_0$), the efficiency gain looks small, and the single-task method, i.e. MLR-VAR, also has a good performance.
Of course, it becomes significant again when we zoom in these plots. As a result, we can conclude the necessity of our transfer learning framework, especially when the sample sizes of target datasets are small.

\begin{figure}
	\centering
	\includegraphics[width=0.95\linewidth]{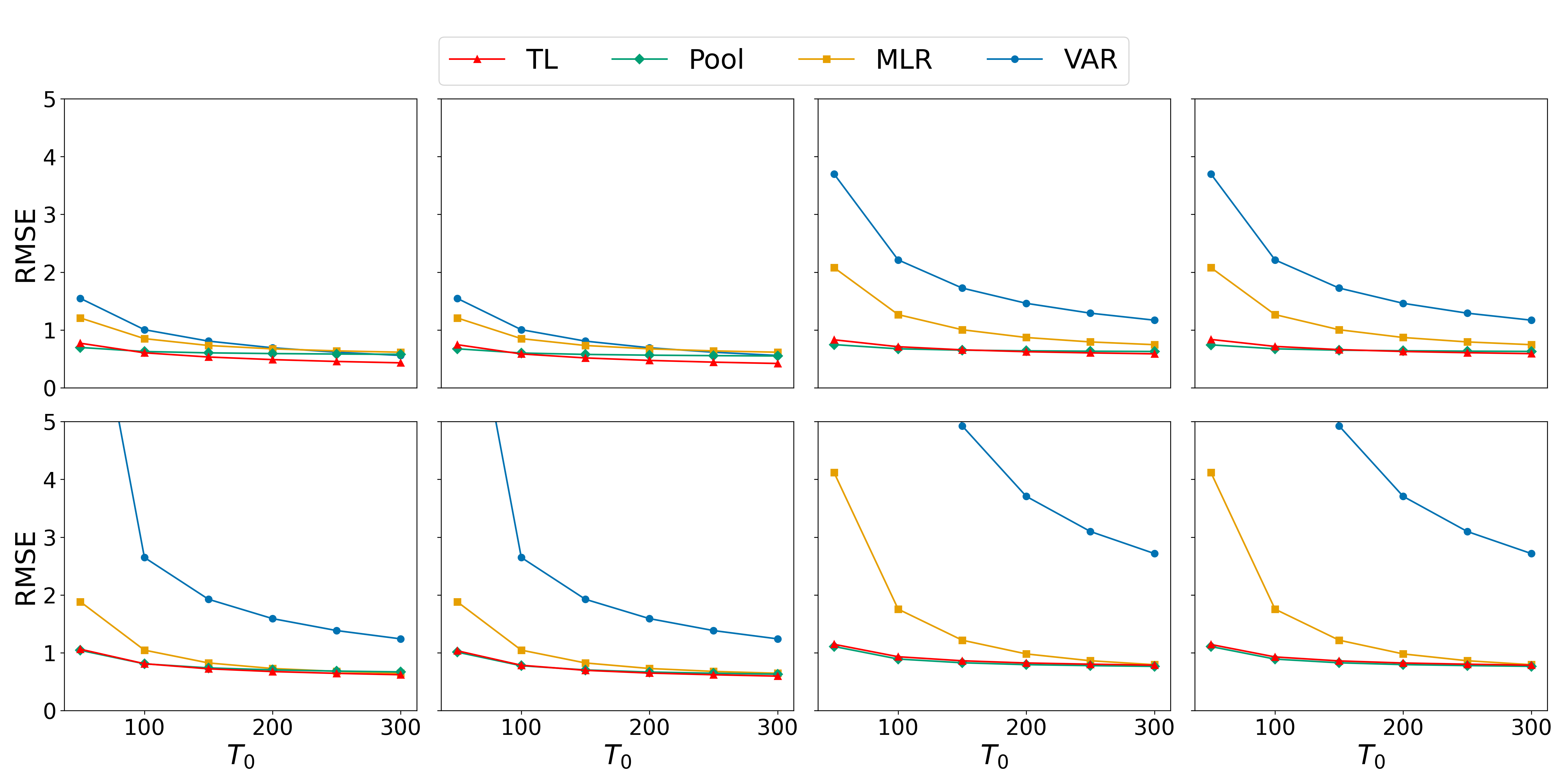}
	\caption{RMSEs  of the estimated transition matrix or tensor, $\widehat{\bm A}_0$ or $\widehat{\cm{A}}_0$, from the target time series with varying sample sizes $T_0$. There are four competing estimation procedures (TL, Pool, MLR and VAR), two VAR models with $p=1$ (upper panel) and $p=4$ (lower panel), and four settings of $(K,N,s_1,s_2)$ from the left to right panels. }
	\label{fig:RMSE_wrt_T0}
\end{figure}

The third experiment is to evaluate how the number of source datasets $K$ affects the transfer learning procedure.
We set three different values of $K$, i.e. $5$, $10$ and $50$, and the similarity measure $h$ varies from $0$ to $1$ with an increment of $0.25$.
The three competing estimation procedures are not considered in this experiment, and the other settings are fixed at $(T_0,N,s_1,s_2)=(100,20,5,5)$.
Figure \ref{fig:RMSE_wrt_K} provides the RMSEs of $\widehat{\bm A}_0$ or $\widehat{\cm{A}}_0$ from the target time series.
It can be seen that increasing $K$ can uniformly improve estimation efficiency of target time series, with RMSEs decreasing when more source datasets are incorporated.
The improvement comes from the better estimation of common representations, $\bm U$, $\bm V$ and $\bm L$, due to a larger pool of related sources.
As a result, our transfer learning procedure has benefits proportional to the availability of  informative source datasets, suggesting that practitioners should leverage related sources, as many as possible, to enhance estimation accuracy.


\begin{figure}[htbp]
	\centering
	\includegraphics[width=0.95\linewidth]{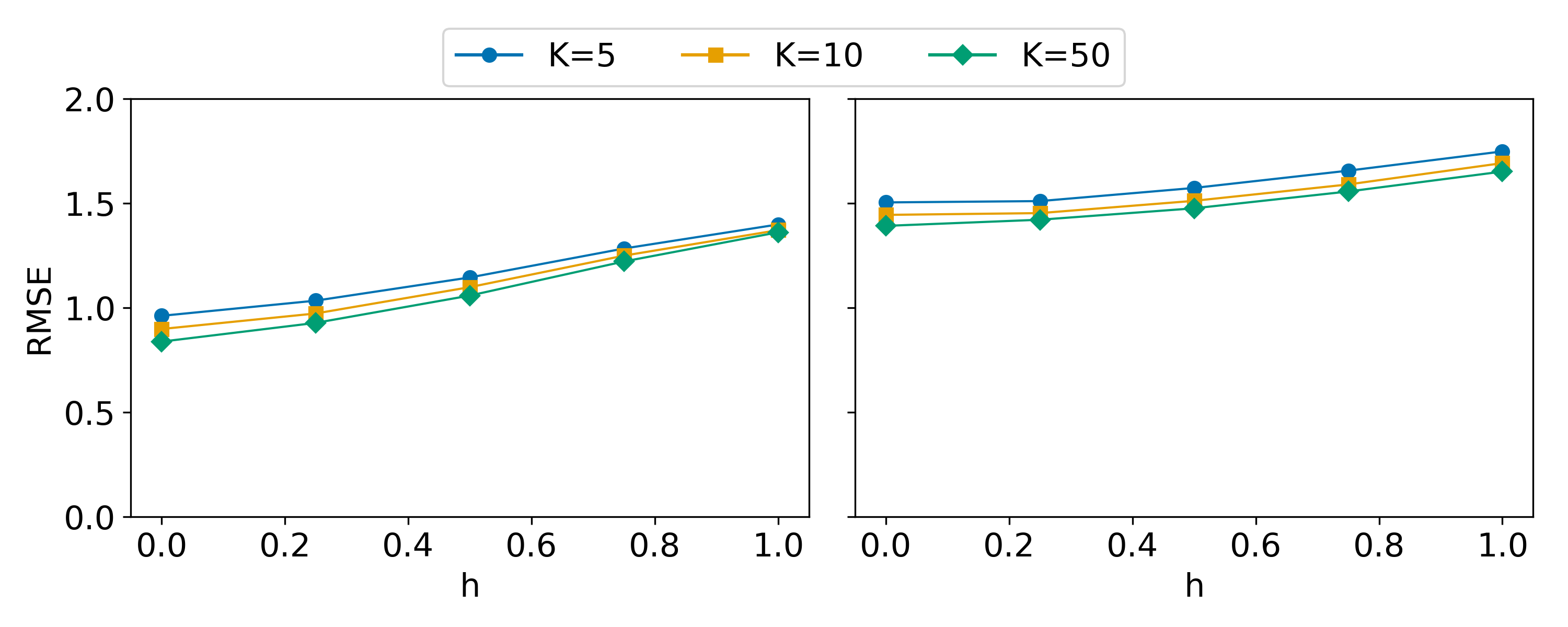}
	\caption{RMSEs of the estimated transition matrix or tensor, $\widehat{\bm A}_0$ or $\widehat{\cm{A}}_0$, from the target time series with varying similarity measure $h$. There are three numbers of source datasets, and two VAR models with $p=1$ (left panel) and $p=4$ (right panel). }
	\label{fig:RMSE_wrt_K}
\end{figure}

\section{Empirical Analysis}\label{sec:RealData}

This section analyzes $20$ quarterly macroeconomic variables of Japan, and they can be classified into six categories: GDP \& production, labor market, interest rates \& money supply, prices \& inflation, housing \& real estate, and others. 
The same 20 macroeconomic variables are also available at another $K=9$ countries: Australia (AUS), Canada (CAN), Denmark (DNK), Great Britain (GBR), South Korea (KOR), Norway (NOR), New Zealand (NZL), Sweden (SWE), and the United States (USA). 
The data are downloaded from the OECD (\url{www.oecd.org}), and Figure \ref{fig:country_indicator_length} gives the ending and starting time points of each variable across all ten countries.
We first truncate the data such that the 20 variables have uniformly aligned time period for each country, and a preprocessing procedure is then conducted: a possible seasonal adjustment, a possible transformation to remove non-stationarity, and finally standardizing each resulting sequence with mean zero and variance one; see Appendix \ref{sec:additional-empirical} for details.
As a result, the target time series of Japan contains $87$ time points, whereas the source time series have sample sizes from $91$ (Sweden) to $216$ (United States).

We first explore structural similarity of $N=20$ macroeconomic variables across ten countries.
To this end, by a method similar to the initialization in Section \ref{subsec:Initialization}, a multilinear low-rank VAR model \citep{wang2022high} is fitted to $N$-dimensional time series from each country, where the order is set to $p=4$ as in \cite{koop2013forecasting}, and then the HOSVD is conducted to the estimated transition tensor $\widetilde{\cm{A}}_k$, leading to task-specific representations of $\widetilde{\bm U}_k$, $\widetilde{\bm V}_k$ and $\widetilde{\bm L}_k$ with $0\leq k\leq K$.
As a result, the eigen-decomposition can be performed to the weighted projection matrices of ten countries, $\sum_{k=0}^K w_k \widetilde{\bm U}_k \widetilde{\bm U}_k^\top$,
$\sum_{k=0}^K w_k \widetilde{\bm V}_k \widetilde{\bm V}_k^\top$ and
$\sum_{k=0}^K w_k \widetilde{\bm L}_k \widetilde{\bm L}_k^\top$, and we can choose the common representations of $\widetilde{\bm{U}}$, $\widetilde{\bm{V}}$ and $\widetilde{\bm{L}}$ by using the elbow method with ranks $(s_1, s_2, s_3) = (3, 5, 2)$, capturing $79.08\%$, $77.60\%$ and $70.86\%$ of the total variance, respectively.  
Moreover, to measure whether $\widetilde{\bm{U}}$ can well represent these $\widetilde{\bm U}_k$'s, we project each $\widetilde{\bm U}_k$ to the orthogonal complement of $\mathcal{M}(\widetilde{\bm{U}})$, and it results in the residual $(\bm{I}_N - \widetilde{\bm{U}} \widetilde{\bm{U}}^{\top})\widetilde{\bm U}_k$, whose projection matrix is presented in Figure \ref{fig:country_indicator_length} in terms of heatmaps.  
The residual projection matrices of $\widetilde{\bm V}_k$'s and $\widetilde{\bm L}_k$'s are given in Appendix \ref{sec:additional-empirical}.
It can be seen that these residual projection matrices are all nearly zero, suggesting that these task-specific representations share three common spaces with small ranks, respectively.

We next apply the proposed transfer learning framework to model the target time series of Japan by leveraging auxiliary information from source datasets of the $K=9$ countries.
The ranks of common representations are selected to $(s_1, s_2, s_3) = (3, 5, 2)$ by the elbow method during the parameter initialization with nine source countries in Section \ref{subsec:Initialization}, and they capture $81.29\%, 76.46\%$, and $71.40\%$ of the total variance, respectively. 
Moreover, for regularization parameters $\lambda_k =c_{\mathrm{S}} \sqrt{[N^2p + N\log (N K)]/T_k}$ with $k\in[K]$, the last 20 time points for all nine countries are reserved for the validation method to search for $c_{\mathrm{S}}$, among $(0,2]$ with an increment of 0.25, and a rolling forecasting method with fixed starting points is used to calculate prediction errors, leading to the selection of $c_{\mathrm{S}}=0.5$. 
We set $c_{\mathrm{T}}=c_{\mathrm{S}}$, and the algorithms in Section \ref{subsec:Algorithm1} are applied to the whole datasets, leading to the estimated transition tensor, $\widehat{\cm{A}}_0=\llb \widehat{\cm{D}}_0; \widehat{\bm U}, \widehat{\bm V}, \widehat{\bm L} \rrb+\widehat{\cm{R}}_0$, for the target time series of Japan.

Figure \ref{fig:projection_matrixUVL} presents the projection matrices of three common representations, i.e. $\widehat{\bm U}\widehat{\bm U}^\top$, $\widehat{\bm V}\widehat{\bm V}^\top$ and $\widehat{\bm L}\widehat{\bm L}^\top$.
It can be seen that the response representations reveal interpretable groupings, which are well aligned with macroeconomic categories, while the predictor ones capture fine-grained temporal dependencies rather than static groupings, suggesting that predictive signals are not confined within macroeconomic categories but span across them.
Moreover, the common temporal representations concentrate to the first three lags,  and it is consistent with a macroeconomic modeling principle that recent data hold greater predictive relevance.
We now focus on the estimated transition tensor $\widehat{\cm{A}}_{0(1)}=(\widehat{\bm A}_{1,0},\ldots,\widehat{\bm A}_{4,0})$, and Figure \ref{fig:realdataJPNcoe} gives heatmaps of the four transition matrices.
Larger absolute values of coefficients can be observed when the responses lie in the category of prices \& inflation. 
Specifically, lagged labor market variables, i.e. hourly earnings and unit labor costs, exert positive pressure on price and inflation levels, indicative of cost-push inflation dynamics where rising wages propagate through supply chains.
In the meanwhile, lagged price and inflation variables exhibit negative effects, reflecting mean reversion mechanisms that dampen inflationary persistence, i.e. a stabilizing feedback likely tied to monetary policy adjustments or market corrections. 
Finally, the estimated transition matrices diminish as lags increase, consistent with macroeconomic inertia where recent shocks dominate short-term fluctuations.

\begin{figure}[htp]
	\centering
	\includegraphics[width=0.95\linewidth]{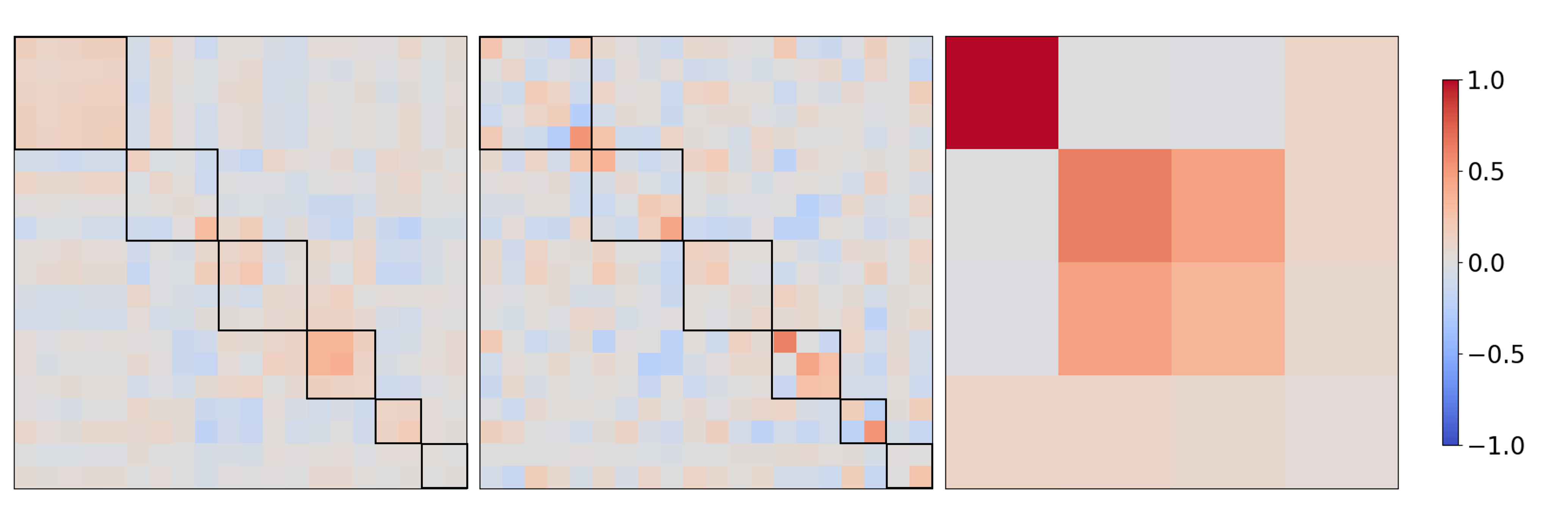}
	\caption{Heatmaps for projection matrices of the common response, predictor and temporal representations with ranks of 3, 5 and 2 from the left to right panels, respectively. The marked squares in the left and middle panels refer to the six macroeconomic categories.}
	\label{fig:projection_matrixUVL}
\end{figure}

\begin{figure}[htp]
	\centering
	\includegraphics[width=1\linewidth]{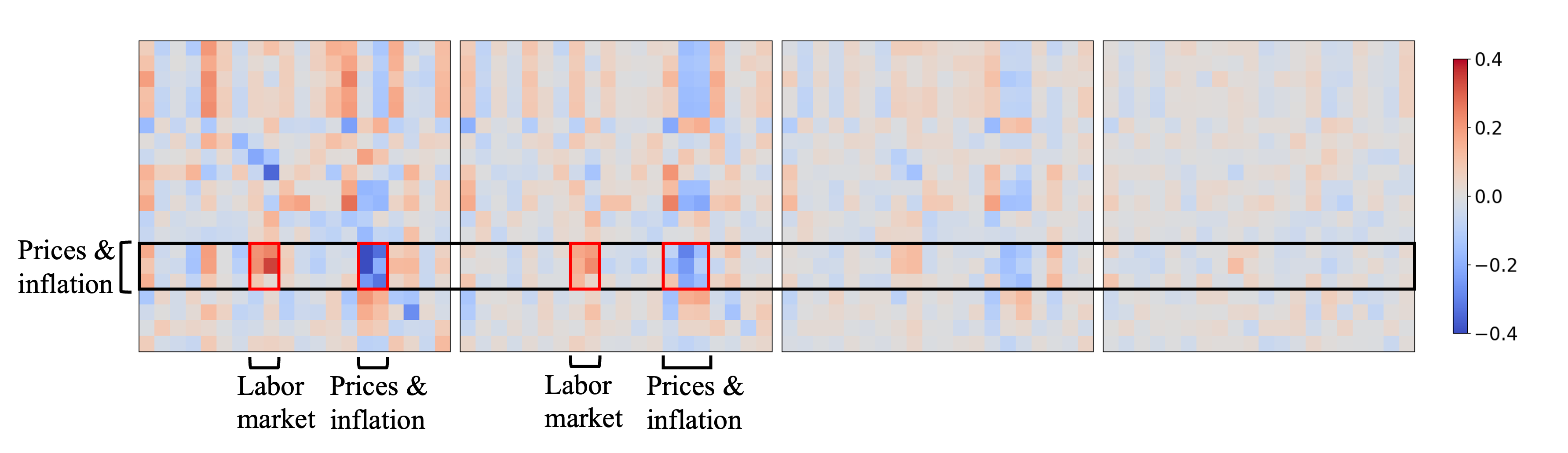}
	\caption{Heatmaps for transition matrices $\widehat{\bm A}_{j,0}$ at lags from $j=1$ to 4 (from the left to right panels) with $\widehat{\cm{A}}_{0(1)}=(\widehat{\bm A}_{1,0},\ldots,\widehat{\bm A}_{4,0})$.}
	\label{fig:realdataJPNcoe}
\end{figure}

Finally, we compare the proposed transfer learning framework (TL-VAR or TL) with five competitors: two transfer learning and three single-task learning methods. The two transfer learning approaches include the pooling VAR (Pool-VAR or Pool), i.e. the exact transfer learning, which enforces an identical shared representation structure across all countries, and the initialization VAR (Initial-VAR or Initial), where the initialized parameters of $\bm U^{(0)}$, $\bm V^{(0)}$ and $\bm L^{(0)}$ are used as the fitted common representations.
The three single-task learning methods use the target time series of Japan only, and they include the general VAR model without structural constraint (VAR), the multilinear low-rank VAR (MLR-VAR or MLR) with ranks being selected by the ridge-type ratio method, and the sparse VAR (Sparse-VAR or Sparse) with lasso penalty, where the hyperparameter is selected by grid search among $\{0.1, 0.3, 0.5, 0.7, 0.9\}$. The order of all the above methods is fixed at $p=4$.
A rolling forecasting procedure with fixed starting points is conducted to the dataset of Japan, and the last 20 time points are reserved for the testing.
For our TL-VAR model, the common representations $\widehat{\bm U}, \widehat{\bm V}$ and $\widehat{\bm L}$ are estimated from the full datasets, and we only re-estimate models for the target time series of Japan at each prediction.
On the other hand, to evaluate effects of the number of source datasets, we consider two groups of sources: all the nine countries, i.e. $K=9$, and the case without South Korea (KOR), Norway (NOR) and Sweden (SWE), i.e. $K=6$, since they have sample sizes comparable to that of Japan.

\begin{table}
	\caption{\label{table_prediction} Rolling forecasting errors with nine $K=9$ or six $K=6$ source datasets. There are three transfer learning methods, TL-VAR, Pool-VAR and Initial-VAR, and three single-task learning methods, VAR, MLR-VAR and Sparse-VAR.}
	\begin{center}
		\resizebox{0.95\textwidth}{!}{
			\begin{tabular}{cccccccccc}
				\toprule
				& \multicolumn{3}{c}{$K=9$} & \multicolumn{3}{c}{$K=6$} & \multicolumn{3}{c}{Single-task Learning} \\
				\cmidrule(r){2-4} \cmidrule(r){5-7} \cmidrule(r){8-10}
				& \textbf{TL} & \textbf{Pool} & \textbf{Initial} & \textbf{TL} & \textbf{Pool} & \textbf{Initial} & \textbf{VAR} & \textbf{MLR} & \textbf{Sparse} \\
				\hline
				RMSFE & \textbf{4.836} & 5.185 & 5.451 & 4.992 &  5.502  & 5.465  & 135.444 & 6.444 & 5.094 \\
				MAFE & \textbf{1.948} & 2.202 & 2.312 & 2.082 & 2.256& 2.319  & 43.482 & 2.258 & 2.283 \\
				\bottomrule
		\end{tabular}}
	\end{center}
\end{table}

Table \ref{table_prediction} gives the root mean squared forecast error (RMSFE) and mean absolute forecast error (MAFE) to evaluate forecasting performance, and it can be seen that
our TL-VAR method, whether using nine or six source datasets, outperforms all competitors, demonstrating its ability to achieve positive transfer by leveraging information from related sources. 
Moreover, the superior performance of TL-VAR against Pool-VAR implies that overly restrictive assumptions, such as imposing an exact low-rank condition on the transition matrices, may limit the potential for transfer learning improvements. In addition, the TL-VAR method with $K=9$ exhibits slightly better prediction performance than that with $K=6$, aligning with simulation findings that increasing the number of related informative source datasets will enhance the prediction accuracy of our transfer learning framework.

\section{Conclusion and discussion}\label{sec:Conclusions}

This paper introduces a representation-based transfer learning framework for vector autoregressive (VAR) models, and it can leverage auxiliary information from source time series to enhance estimation efficiency and prediction accuracy of target time series. 
A two-stage regularized estimation procedure is then suggested, and its non-asymptotic error bounds are established. 
The proposed transfer learning framework has no temporal alignment constraint on both source and target time series, and it hence is flexible to integrate datasets with asynchronous time periods. 
Moreover, from both theoretical analysis and simulation studies, our method can achieve positive transfer when transition matrices of VAR models can be sufficiently approximated by low-rank structures with common representations.
Its advantages against single-task learning methods become more pronounced for the cases with smaller target sample sizes or a larger number of available related source datasets.
Finally, the applications to 20 macroeconomic variables from Japan and another nine countries further confirm the necessity of the proposed transfer learning framework.

This paper can be extended along three directions below.
First, to capture persistent dynamics or volatility clustering in high-dimensional time series, the proposed transfer learning framework can be generalized to more advanced models, such as autoregressive moving average \citep{Wilms2023} or generalized autoregressive conditional heteroscedastic \citep{Zhu_Li_Zhang_Li2025} models.
In addition, given that economic and financial data typically exhibit heavy-tailed distributions \citep{wang2023robust}, it is of interest to incorporate robust estimation methods into the transfer learning framework.   
Finally, complex dependent data become increasingly common in the era of big data \citep{wang2024high}, and it is of importance to apply the transfer learning techniques to tensor-valued time series. 
It will be valuable contributions to the literature to investigate transfer learning techniques for these advanced frameworks, and then to establish their theoretical foundations to guide practitioners. 

%
\section*{Appendix}\label{appendix}
This Appendix provides notation, technical details for all theorems and propositions, as well as additional empirical results.
Specifically, Section~\ref{sec:notation} introduces the notation, problem setup, and equivalent formulations of the VAR($p$) model.
Section~\ref{sec:mainproof} presents the detailed proofs of Theorems~\ref{thm:rep}--\ref{thm:TL} and Propositions~\ref{prop:worse}--\ref{prop:PoolVAR} in the manuscript, which need Propositions~\ref{prop:determrep}--\ref{prop:determTLpool} and Lemmas~\ref{lemma:deterministic1}--\ref{lemma:uniondeviation}.
Section~\ref{sec:techlemmas} collects auxiliary Lemmas~\ref{lem-inf-conv}--\ref{lem-frac} for Section~\ref{sec:mainproof}.
Finally, Section~\ref{sec:additional-empirical} reports additional empirical results that complement those presented in the manuscript.


\setcounter{section}{0} 
\setcounter{equation}{0} 
\setcounter{figure}{0} 
\setcounter{table}{0} 
\def\theequation{A\arabic{section}.\arabic{equation}}
\def\thesection{A\arabic{section}}

\renewcommand{\thetheorem}{A\arabic{theorem}}  
\renewcommand{\thelemma}{A\arabic{lemma}}  
\renewcommand{\thedefinition}{A\arabic{definition}}  
\renewcommand{\theproposition}{A\arabic{proposition}}  
\renewcommand{\thefigure}{A\arabic{figure}}  
\renewcommand{\thetable}{A\arabic{table}}  

\section{Notation and Problem Setup}\label{sec:notation}

This section collects the technical preliminaries for the theoretical analysis.
Section \ref{sec:prelim} introduces the core notation and basic tensor–matrix algebra used throughout the proofs. 
Section \ref{sec:vecform} restates the problem setup by rewriting the VAR($p$) model into an equivalent vectorized form. 
Section \ref{sec:VMAinf} rewrites the VAR($p$) model into a stacked VAR$(1)$ form and a moving-average representation to characterize temporal and cross-sectional dependence, leading to the task-specific regularity parameters used in our non-asymptotic analysis.
Section \ref{sec:definitions} presents key definitions employed in establishing the main theoretical results.

\subsection{Tensor algebra \label{sec:prelim}}

Tensors are multi-dimensional generalizations of matrices and serve as higher-order extensions of familiar linear-algebraic objects. An $m$-th order tensor $\bm{\mathcal{X}} \in \mathbb{R}^{p_1 \times \cdots \times p_m}$ is an $m$-way array. We summarize here the basic tensor notation and operations used throughout this Appendix, and refer to \citet{kolda2009tensor} for a detailed review of tensor algebra.

Matricization, also known as unfolding, is the process of reordering the elements of a third- or higher-order tensor into a matrix. The most commonly used matricization is the one-mode matricization defined as follows.  For any $m$-th-order tensor $\cm{X}\in\mathbb{R}^{p_1\times\cdots\times p_m}$, its mode-$s$  matricization $\cm{X}_{(s)}\in\mathbb{R}^{p_s\times p_{-s}}$, with $p_{-s}=\prod_{i=1, i\neq s}^{m}p_i$, is the matrix obtained by setting the $s$-th tensor mode as its rows and collapsing all the others into  its columns, for $s=1,\dots, m$. Specifically,  the $(i_1,\dots,i_d)$-th element of $\cm{X}$  is mapped to the $(i_s,j)$-th element of $\cm{X}_{(s)}$, where
\begin{equation}
	j=1+\sum_{\substack{k=1\\k\neq s}}^m(i_k-1)J_k~~\text{with}~~J_k=\prod_{\substack{\ell=1\\\ell\neq s}}^{k-1}p_\ell.
\end{equation}

We next review the concepts of tensor-matrix multiplication, tensor generalized inner product and norm. For any $m$-th-order tensor $\cm{X}\in\mathbb{R}^{p_1\times\cdots\times p_m}$ and matrix $\bm{Y}\in\mathbb{R}^{q_k\times p_k}$ with $1\leq k\leq m$,
the mode-$k$ multiplication $\cm{X}\times_k\bm{Y}$ produces an $m$-th-order tensor in $\mathbb{R}^{p_1\times\cdots\times p_{k-1}\times q_k\times p_{k+1}\times\cdots\times p_m}$  defined by
\begin{equation}
	\left(\cm{X}\times_k\bm{Y}\right)_{i_1\cdots i_{k-1}ji_{k+1}\dots i_d}=\sum_{i_k=1}^{p_k}\cm{X}_{i_1\cdots i_m}\bm{Y}_{ji_k}.
\end{equation}
For any two tensors $\cm{X}\in\mathbb{R}^{p_1\times p_2\times\cdots\times p_m}$ and $\cm{Y}\in\mathbb{R}^{p_1\times p_2\times\cdots p_n}$ with $m\geq n$, their generalized inner product $\langle\cm{X},\cm{Y}\rangle$ is the $(m-n)$-th-order tensor  in $\mathbb{R}^{p_{n+1}\times\dots\times p_m}$ defined by
\begin{equation}\label{eq:tensorinner}
	\langle\cm{X},\cm{Y}\rangle_{i_{n+1}\dots i_{m}}=\sum_{i_1=1}^{p_1}\sum_{i_2=1}^{p_2}\dots\sum_{i_n=1}^{p_n}\cm{X}_{i_1i_2\dots i_ni_{n+1}\dots i_m}\cm{Y}_{i_1i_2\dots i_n},
\end{equation}
where  $1\leq i_{n+1}\leq p_{n+1},\dots,1\leq i_m\leq p_m$. In particular, when $m=n$, it reduces to the conventional real-valued inner product.  In addition,  the Frobenius norm of any tensor $\cm{X}$ is defined as $\|\cm{X}\|_{\text{F}}=\sqrt{\langle\cm{X},\cm{X}\rangle}$.

\subsection{Problem setup}\label{sec:vecform}
For each task $k \in \{0\} \cup [K]$, $\bm Y_k = \cm A_{k(1)} \bm X_k + \bm E_k$, where $ \cm{A}_{k} \in \mathbb{R}^{N\times N \times p}$, $\bm Y_k = (\bm y_{1,k}, \ldots, \bm y_{T_k,k}) \in \mathbb{R}^{N \times T_k}$, $\bm X_k = (\bm x_{1,k}, \ldots, \bm x_{T_k,k})\in \mathbb{R}^{Np \times T_k}$ with $\bm x_{t,k}=(\bm y_{t-1,k}^\top , \ldots, \bm y_{t-p,k}^\top)^\top \in \mathbb{R}^{Np}$ and $\bm E_k = (\bbm{\varepsilon}_{1,k}, \ldots, \bbm{\varepsilon}_{T_k,k}) \in \mathbb{R}^{Np \times T_k}$. Define the square loss for each task as
$$
\mathcal{L}_k(\bbm \alpha_k) = \frac{1}{2 T_k} \| \bm y_k - \bm Z_k \bbm \alpha_k \|_2^2,
$$
where $ \bm y_k = \text{vec}(\bm Y_k^\top), \bm Z_k =  \bm I_N \otimes \bm X_k^\top $, and $ \bbm \alpha_k = \text{vec}(\cm A_{k(1)}^\top) $.
Let $\bbm \alpha_k^*$ denotes the ground truth of $\bbm \alpha_k$. Suppose there exist $\bm B^* = \bm U^* \otimes \bm L^* \otimes \bm V^*$ and $\bm d_k = \text{vec}(\cm D_{k(1)}^\top)$ for all $k \in \{0\} \cup [K]$, where $\bm U^* \in \mathbb{O}^{N \times s_1}$, $\bm V^* \in \mathbb{O}^{N \times s_2}$, and $\bm L^* \in \mathbb{O}^{p \times s_3}$. 
Then, by the similarity measure as defined in the main text, equation \eqref{eq:similaritymeasure} and Assumption \ref{asmp:3}($ii$), it follows that
$$ \max_{k\in [K]}  \min_{\bm d_k \in \mathbb{R}^{s_1 s_2 s_3}}\| \bbm \alpha_k^* - \bm B^* \bm d_k \|_\F \le h_{\mathrm{S}} 
\quad\text{and}\quad 
 \min_{\bm d_0 \in \mathbb{R}^{s_1 s_2 s_3}}\| \bbm \alpha_0^* - \bm B^* \bm d_0\|_\F \le h_{\mathrm{T}}$$
The overall estimation consists of two stages: representation learning and transfer learning.
\begin{itemize}
\item \textbf{Stage I (Representation learning):} Estimate the shared representation $\bm B$ by solving a weighted minimization over source tasks:
\begin{equation}\label{eq:vecrepresentation learning}
    \begin{split}
         \widehat{\bm B} \in& \argmin_{\substack{
        \bm B = \bm U \otimes \bm L \otimes \bm V, \\\bm U,\bm V,\bm L \in \mathbb{O}}} \sum_{k\in [K]} w_k\min_{\substack{\bbm \alpha_k , \bm d_k}} [\mathcal{L}_k(\bbm \alpha_k) + \lambda_k\|\bbm \alpha_k - \bm B \bm d_k\|_2].
    \end{split}
\end{equation}
\item \textbf{Stage II (Transfer learning):} Adapt the learned representation to the target task
$$
(\widehat{\bbm \alpha}_0, \widehat{\bm d}_0) \in \argmin_{\bbm \alpha, \bm d} \left\{ \mathcal{L}_0(\bbm \alpha) + \lambda_0 \| \widehat{\bm B} \bm d - \bbm \alpha \|_2 \right\}.
$$
\end{itemize}
Denote 
\begin{equation}\label{eq:infcov}
    \widetilde{\mathcal{L}}_k(\bbm \beta) = \min_{\bbm\alpha} \{\mathcal{L}_k(\bbm\alpha) + \lambda_k \|\bbm\beta - \bbm\alpha\|_2\}. 
\end{equation}
Then, for each source task $ k \in [K] $, the optimal solutions to the inner minimizations satisfy the following equivalent forms:
\begin{align}
    &\widehat{\bm d}_k \in \argmin_{\bm d} \widetilde{\mathcal{L}}_k (  \widehat{\bm B}\bm d ),
    \label{eq:representation-1} \\
    &\widehat{\bbm \alpha}_k \in \argmin_{\bbm \alpha} \{ \mathcal{L}_k (\bbm \alpha) + \lambda_k \| \widehat{\bm B} \widehat{\bm d}_k - \bbm \alpha \|_2 \}, \label{eqn-lowrank-mtl-theta} \\
    & \sum_{k=1}^{K} w_k \widetilde{\mathcal{L}}_k (  \widehat{\bm B} \widehat{\bm d}  ) \leq \sum_{k=1}^{K} w_k \widetilde{\mathcal{L}}_k (  \bm B^{*} \bm d^{*}  ) .\label{eq:lowrank-mtl-2}
\end{align}

\subsection{Equivalent forms for VAR($p$) model}\label{sec:VMAinf}
Following the stability analysis framework of \cite{basu2015regularized}, we quantify both temporal and cross-sectional dependencies in the model.
For any $k\in\{0\}\cup [K]$, let $\bm \Xi_k(z) = \bm I_N - \sum_{i=1}^p \bm A_{i,k} z^i$ be the matrix polynomials with $z \in \mathbb{C}$.
Define $\mu_{\min}(\bm \Xi_k) = \min_{|z| = 1}\lambda_{\min}(\bm \Xi_k^\dagger(z)\bm \Xi_k(z))$, where $\bm \Xi_k^\dagger(z)$ is the conjugate transpose of $\bm \Xi_k(z)$.
We further introduce the following task-specific quantities:
	$$
	\rho_k = \frac{\lambda_{\min}(\bm \Sigma_{\bbm \varepsilon,k})}{\mu_{\max}(\bm \Xi_k)}, \;\;
	L_k = \frac{3\lambda_{\max}(\bm \Sigma_{\bbm \varepsilon,k})}{\mu_{\min}(\bm \Xi_k)},\;\;
	\kappa_k = \frac{L_k}{\rho_k}, \;\;\text{and}\;\;
	M_{k} = \frac{\lambda_{\max}(\bm \Sigma_{\bbm \varepsilon,k})}{\mu_{\min}^{1/2}(\bm \Xi_k)}.
	$$

The VAR($p$) process for the sequence ${\bm y_t \in \mathbb{R}^N}$ can be equivalently expressed as a VAR($1$) process in the stacked lag form ${\bm x_t \in \mathbb{R}^{Np}}$, given by $\bm x_{t,k} = \bm \Phi_k^* \bm x_{t-1,k} + \bm e_{t,k}$, and hence the VMA($\infty$) representation of $\bm x_{t,k} = \sum_{i=1}^\infty \bm \Phi_k^{*i} \bm e_{t-i,k}$ or $\widetilde{\bm x}_k = \widetilde{\bm \Phi}_k^* \widetilde{\bm e}_k$ by Assumption \ref{asmp:1}--\ref{asmp:2}, where
        \begin{equation}\label{eq:bigVAR1coe}
        	\bm{\Phi}_k^* =
			\begin{bmatrix}
			\bm{A}_{1,k}^* & \bm{A}_{2,k}^* & \cdots & \bm{A}_{p-1,k}^* & \bm{A}_{p,k}^* \\
			\bm{I}_N & \bm{0} & \cdots & \bm{0} & \bm{0} \\
			\bm{0} & \bm{I}_N & \cdots & \bm{0} & \bm{0} \\
			\vdots & \vdots & \ddots & \vdots & \vdots \\
			\bm{0} & \bm{0} & \cdots & \bm{I}_N & \bm{0}
			\end{bmatrix}, \quad
			\widetilde{\bm \Phi}_k^* =
			\begin{bmatrix}
			\bm{I}_{Np} & \bm{\Phi}_k^* & \bm{\Phi}_k^{*2} & \cdots & \bm{\Phi}_k^{*T_k - p-1} & \cdots \\
			\bm{0} & \bm{I}_{Np} & \bm{\Phi}_k^* & \cdots & \bm{\Phi}_k^{*T_k-p - 2} & \cdots \\
			\bm{0} & \bm{0} & \bm{I}_{Np} & \cdots & \bm{\Phi}_k^{*T_k-p - 3} & \cdots \\
			\vdots & \vdots & \vdots & \ddots & \vdots & \ddots \\
			\bm{0} & \bm{0} & \bm{0} & \cdots & \bm{I}_{Np} & \cdots
			\end{bmatrix},
        \end{equation}
    $ \bm e_{t,k} =(\bbm \varepsilon_{t,k}^\top, \bm 0^\top, \ldots, \bm 0^\top)^\top \in \mathbb{R}^{Np}$, $\widetilde{\bm e}_k = (\bm e_{T_k,k}^{\top}, \bm e_{T_k-1,k}^{\top},\ldots, \bm e_{1,k}^{\top})\in \mathbb{R}^{Np T_k}$ and $\widetilde{\bm x}_k = (\bm x_{T_k}^\top,  \bm x_{T_k-1,k}^\top,\\ \ldots, \bm x_{1,k}^\top  )^\top \in \mathbb{R}^{Np T_k}$. Thus, by Assumption \ref{asmp:2}, it follows that
    \begin{equation}\label{eq:VMAinf}
    	\widetilde{\bm e}_k = \widetilde{\bm \Sigma}_{\bbm \varepsilon,k}^{1/2}\widetilde{\bbm \zeta}_{k},
    \end{equation} 
        where $\widetilde{\bm \Sigma}_{\bbm \varepsilon,k} = \bm I_{pT_k} \otimes \bbm \Sigma_{\bbm \varepsilon,k}$ and
        $\widetilde{\bbm \zeta}_k = (\overline{\bbm \zeta}_{T_k-1,k}, \ldots, \overline{\bbm \zeta}_{p,k}^{\top})$ with $\overline{\bbm \zeta}_{t,k} = (\bbm \zeta_{t,k}^\top,\bm 0^\top,\ldots, \bm 0^\top)^\top \in \mathbb{R}^{Np}$.

	We now characterize the temporal and cross-sectional dependence implied by the VMA($\infty$) representation.
For each $k \in \{0\} \cup [K]$, define the characteristic matrix polynomial 
$$\widetilde{\bm \Xi}_k(z) = \bm{I}_{Np} - \bm{\Phi}_k^* z,$$ 
with $z \in \mathbb{C}.$ Recall that
	$
	\mu_{\min}(\widetilde{\bm \Xi}_k) = \min_{|z| = 1} \lambda_{\min}\big(\widetilde{\bm \Xi}_k^\dagger(z)\widetilde{\bm \Xi}_k(z)\big)$ and 
	$\mu_{\max}(\widetilde{\bm \Xi}_k) =\max_{|z| = 1} \\\lambda_{\max}\big(\widetilde{\bm \Xi}_k^\dagger(z)\widetilde{\bm \Xi}_k(z)\big),
	$
	which differ from $\mu_{\min}(\bm \Xi_k)$ and $\mu_{\max}(\bm \Xi_k)$ defined earlier in \cite{basu2015regularized}.
	For each $k \in \{0\} \cup [K]$, define the task-specific quantities:
	$$
    \widetilde{L}_k = \frac{3\lambda_{\max}(\bm \Sigma_{\bbm \varepsilon,k})}{\mu_{\min}(\widetilde{\bm \Xi}_k)}
    \quad\text{and}\quad
	\widetilde{\kappa}_k = \frac{\widetilde{L}_k}{\rho_k}.
	$$
	We also define the global quantities: $\rho = \min_{k \in [K]}   \{\rho_k\},  
    L = \max_{k \in [K]} \{L_k\},
    \widetilde{L} = \max_{k \in [K]} \{\widetilde{L}_k\}, $
	$$
    \widetilde{T} = \min_{k\in [K]}\left\{\frac{T_k}{w_k\lambda_{\max}(\bm \Sigma_{\bbm \varepsilon,k})}\right\},
    \quad \kappa = \frac{L}{\rho}
    \quad\text{and}\quad
    \widetilde{\kappa} = \frac{\widetilde{L}}{\rho}.
    $$

\subsection{Definitions}\label{sec:definitions}

We next introduce the regularity conditions and deviation quantities used in the non-asymptotic error bounds for the target estimator $ \widehat{\bbm \alpha}_0 $ under a shared low-rank representation.
We assume that each loss function $ \mathcal{L}_k: \mathbb{R}^{N^2p} \to \mathbb{R} $.

\begin{definition}[Regularity and $h$-relatedness]
A function $ \mathcal{L} $ is said to be $ (\bbm \beta, \rho, L, \eta) $-regular if 
($i$) $ \mathcal{L} $ is convex and twice differentiable;
($ii$) $ \rho \bm I \preceq \nabla^2 \mathcal{L}(\bbm \beta) \preceq L \bm I $; 
($iii$) $ \| \nabla \mathcal{L}(\bbm \beta) \|_2 \le \eta $.
Moreover, $ \{\mathcal{L}_k,k\in\{0\} \cup [K]\} $ is said to be $h$-related with regularity parameters $ (\bm B^*,\{\bbm \alpha_k^*,\bm d_k^* , \eta_k,k\in [K] \},\rho, L) $, if for each $ k \in \{0\} \cup [K] $, $ \mathcal{L}_k $ is $ (\bm B^* \bm d_k^*, \rho, L, \eta_k) $-regular and $ \| \bbm \alpha_k^* - \bm B^* \bm d_k^* \|_2 \le h $.
\end{definition}
To control the statistical error, we next introduce the following deviation quantities:
\begin{align*}
 	&    \left\| \bm P^* \nabla \mathcal{L}_k(\bm B^* \bm d_k^*) \right\|_2 \le \tau_k, \quad
	\text{and} \quad \sup_{\bm W \in \bm \Omega(2s_1, 2s_2, 2s_3; N, K, p)} 
        \sum_{k\in [K]} \left\langle w_k^{1/2} \nabla \mathcal{L}_k (\bm B^* \bm d_k^*), \bm w_k \right\rangle   \le \gamma,
\end{align*}
where $ \bm P^* = \bm B^* \bm B^{*\top}$ and
\begin{align*}
    \bm \Omega(s_1, s_2, s_3; N, K, p) = \Big\{\bm W = &[\bm w_1 \; \cdots \; \bm w_K]: ~ \bm U \in \mathbb{O}^{N \times s_1}, \bm V \in \mathbb{O}^{N \times s_2}, \bm L \in \mathbb{O}^{p \times s_3},\\
    &~\bm d_k \in \mathbb{R}^{s_1 s_2 s_3},~\bm w_k = (\bm U\otimes \bm L \otimes \bm V) \bm d_k, ~ \|\bm W_k\|_\F = 1 \Big\}.
\end{align*}

To facilitate the technical analysis, we define a matrix set that characterizes structured low-dimensional directions, along with its covering and decomposition properties.
\begin{definition}[Matrix Set $\bm \Theta(\epsilon, d, q)$] \label{definition:matrix_set}

Define the matrix set
$$
\bm \Theta(\epsilon, d, q) = \left\{ \bm{W} \in \bm \Theta(\epsilon, d, q) \mid \|\bm{W}\|_\mathrm{F} = 1 \right\}.
$$
Let $\overline{\bm \Theta}$ be its $\epsilon$-covering. For any $\bm{W} \in \bm \Theta(\epsilon, d, q)$, there exists $\overline{\bm{W}} \in \overline{\bm \Theta}$ such that
\begin{itemize}
\item[($i$)] $\|\bm{W} - \overline{\bm{W}}\|_\F \leq \epsilon$, and the covering number of $\overline{\bm \Theta}$ is at most $(C/\epsilon)^d$;
\item[($ii$)] $\bm{W} - \overline{\bm{W}} = \sum_{i=1}^q \bm{N}_i,$
where $\bm{N}_i/\|\bm{N}_i\|_\F \in \bm \Theta(\epsilon, d, q)$ and
$\|\bm{W} - \overline{\bm{W}}\|_\F^2 = \sum_{i=1}^q \|\bm{N}_i\|_\F^2.$
\end{itemize}

\end{definition}

\section{Proofs of Main Results}\label{sec:mainproof}

This section is organized as follows. Section \ref{proofs_theorems} includes proofs for Theorems~\ref{thm:rep}--\ref{thm:TL} and Propositions~\ref{prop:worse}--\ref{prop:PoolVAR}, together with auxiliary Propositions \ref{prop:determrep}--\ref{prop:determTLpool}.  Section~\ref{sec:proofofprop} contains the proofs of Propositions~\ref{prop:determrep}--\ref{prop:determTLpool}.  Section~\ref{sec:proofoflemma} states the supporting Lemmas \ref{lemma:deterministic1}--\ref{lemma:uniondeviation} needed for Propositions~\ref{prop:determrep}--\ref{prop:determTLpool}, and provides their proofs.

Specifically, Propositions~\ref{prop:determrep}--\ref{prop:probeventsrep} establish the key deterministic and probabilistic results used to prove Theorem~\ref{thm:rep}. Propositions~\ref{prop:determTL}--\ref{prop:probeventsTL} provide the corresponding transfer-learning guarantees for the target task, which are used to prove Theorem~\ref{thm:TL}. 
Proposition~\ref{prop:determTLpool} is stated as a special case of Proposition~\ref{prop:determTL} and underlies Proposition~\ref{prop:PoolVAR}.

\subsection{Proofs of Theorems \ref{thm:rep}--\ref{thm:TL} and Propositions \ref{prop:worse}--\ref{prop:PoolVAR}}\label{proofs_theorems} 

\begin{proof}[\bf Proof of Theorem \ref{thm:rep}]
	
	We first introduce two critical random events that form the basis of our subsequent deterministic analysis:
\begin{align*}
    \mathcal{E}_1 &= \left\{ k \in [K]: \rho \bm I_{N^2p} \preceq  \nabla^2 \mathcal{L}_k(\bbm \alpha_k^*)   \preceq L\bm I_{N^2p},~
    \left\| \nabla \mathcal{L}_k(\bbm \alpha_k^*) \right\|_2 \lesssim\eta_k^*\right\},
\end{align*}
and event $\mathcal{E}_2$ follows that 
\begin{equation}
        \sup_{\bm W \in \bm \Omega(2s_1, 2s_2, 2s_3; N, K, p)} \sum_{k \in [K]} \left\langle w_k^{1/2} \nabla \mathcal{L}_k (\bbm \alpha_k^*), \bm w_k \right\rangle \lesssim \gamma^*,
\end{equation}
where $\bm{W} = [ \bm{w}_1: \cdots : \bm{w}_K]$ 
and $\sum_{k \in [K]}  \| \bm{w}_k \|_{2}^2 = 1$.

\begin{proposition}[Representation Deterministic Analysis]\label{prop:determrep}
    Let events $\mathcal{E}_1$ and $\mathcal{E}_2$ hold.
    Suppose there exist constants $ C > 0 $ such that for each $ k \in [K] $, $2C L h_{\mathrm{S}} \leq \eta_k^* + \frac{\lambda_k}{2 C s_{\max}\kappa^{3/2} \mu^2 }$.
    If Assumption \ref{asmp:3} holds and 
    $
    2C \kappa^{3/2} \mu^2 s_{\max} \eta_k^* < \lambda_k \le \rho \alpha_1
    $, then 
    $$\Big\| \left\{w_k^{1/2} (\widehat{\bm B} \widehat{\bm d}_k  - \bm B^{*} \bm d_k^{*}) ,k \in [K]\right\} \Big\|_\F \lesssim \rho^{-1} (\gamma^* + Lh_{\mathrm{S}}) $$ 
    and let $s_{\max} = \max\{s_1,s_2,s_3\} $,
    $$\max\{\|\sin \Theta (\widehat{\bm U}, \bm U^* )\|_\F, \|\sin \Theta (\widehat{\bm V}, \bm V^* )\|_\F, \|\sin \Theta (\widehat{\bm L}, \bm L^* )\|_\F\}
		\lesssim  \frac{s_{\max}^{1/2}}{\rho \alpha_2}(\gamma^*  + L h_{\mathrm{S}}).$$
\end{proposition}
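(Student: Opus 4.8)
The plan is to run a deterministic $M$-estimation argument on the events $\mathcal{E}_1$ and $\mathcal{E}_2$, working throughout with the smoothed losses $\widetilde{\mathcal{L}}_k$ from \eqref{eq:infcov}. The first step is to record the regularity of $\widetilde{\mathcal{L}}_k$ via the infimal-convolution lemma (Lemma~\ref{lem-inf-conv}): $\widetilde{\mathcal{L}}_k$ is convex and $\lambda_k$-Lipschitz, and on the region where the inner prox is inactive — a neighbourhood of $\bm B^*\bm d_k^*$ — it coincides with $\mathcal{L}_k$, hence is $\rho$-strongly convex and $L$-smooth by $\mathcal{E}_1$, with gradient at the pivot satisfying $\|\nabla\widetilde{\mathcal{L}}_k(\bm B^*\bm d_k^*)\|_2 \lesssim \eta_k^* + Lh_{\mathrm{S}}$, the $h_{\mathrm{S}}$ term arising from the gap $\|\bbm\alpha_k^* - \bm B^*\bm d_k^*\|_2 \le h_{\mathrm{S}}$. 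The hypotheses $2C\kappa^{3/2}\mu^2 s_{\max}\eta_k^* < \lambda_k \le \rho\alpha_1$, $\|\cm{D}_k^*\|_\F \le \alpha_1$, and $2CLh_{\mathrm{S}} \le \eta_k^* + \lambda_k/(2Cs_{\max}\kappa^{3/2}\mu^2)$ are exactly what is needed to localize $\widehat{\bm B}\widehat{\bm d}_k$ by a crude-then-refined bootstrap (first an a priori bound from $\lambda_k$-Lipschitzness, then a sharpening), keeping $\widehat{\bm B}\widehat{\bm d}_k$ inside the curved region; this both certifies the exact-recovery identity $\widehat{\bbm\alpha}_k = \widehat{\bm B}\widehat{\bm d}_k$ in \eqref{eqn-lowrank-mtl-theta} and licenses the curvature bound along the segment from $\bm B^*\bm d_k^*$ to $\widehat{\bm B}\widehat{\bm d}_k$.

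Next I would invoke the basic inequality \eqref{eq:lowrank-mtl-2} and expand $\widetilde{\mathcal{L}}_k$ to second order about $\bm B^*\bm d_k^*$. Writing $\bm v_k = \widehat{\bm B}\widehat{\bm d}_k - \bm B^*\bm d_k^*$ and $\bm W = [\,w_1^{1/2}\bm v_1,\ldots,w_K^{1/2}\bm v_K\,]$, strong convexity gives $\tfrac{\rho}{2}\|\bm W\|_\F^2 \le -\sum_{k\in[K]}\langle w_k^{1/2}\nabla\widetilde{\mathcal{L}}_k(\bm B^*\bm d_k^*),\,w_k^{1/2}\bm v_k\rangle$, and the right-hand side splits as the $\nabla\mathcal{L}_k(\bbm\alpha_k^*)$ piece plus a remainder bounded by $Lh_{\mathrm{S}}\|\bm W\|_\F$ (Cauchy--Schwarz with $\sum_k w_k = 1$). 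For the first piece, each $\bm v_k$ is the vectorization of $\llb\widehat{\cm{D}}_k;\widehat{\bm U},\widehat{\bm V},\widehat{\bm L}\rrb - \llb\cm{D}_k^*;\bm U^*,\bm V^*,\bm L^*\rrb$, which has multilinear rank at most $(2s_1,2s_2,2s_3)$ with factor spaces $\mathcal{M}(\widehat{\bm U})+\mathcal{M}(\bm U^*)$ and their analogues not depending on $k$; hence $\bm W/\|\bm W\|_\F \in \bm \Omega(2s_1,2s_2,2s_3;N,K,p)$ and $\mathcal{E}_2$ bounds that piece by $\gamma^*\|\bm W\|_\F$. Solving $\tfrac{\rho}{2}\|\bm W\|_\F^2 \le (\gamma^* + Lh_{\mathrm{S}})\|\bm W\|_\F$ yields the first displayed bound.

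For the subspace bounds I would pass from the estimation error to angle errors using the task-diversity part of Assumption~\ref{asmp:3}. Since $(\bm I_N - \widehat{\bm U}\widehat{\bm U}^\top)$ annihilates the mode-one matricization of $\llb\widehat{\cm{D}}_k;\widehat{\bm U},\widehat{\bm V},\widehat{\bm L}\rrb$ and $\bm L^*\otimes\bm V^*$ has orthonormal columns, matricizing $\bm v_k$ in the first mode gives $\|(\bm I_N - \widehat{\bm U}\widehat{\bm U}^\top)\bm U^*\cm{D}_{k(1)}^*\|_\F \le \|\bm v_k\|_2$. Taking the $w_k$-weighted sum, rewriting the left side as $\tr\big((\bm I_N - \widehat{\bm U}\widehat{\bm U}^\top)\bm U^*(\sum_k w_k\cm{D}_{k(1)}^*\cm{D}_{k(1)}^{*\top})\bm U^{*\top}\big)$, and invoking $\lambda_{\min}(s_1\sum_k w_k\cm{D}_{k(1)}^*\cm{D}_{k(1)}^{*\top}) \ge \alpha_2$ gives $\|\sin\Theta(\widehat{\bm U},\bm U^*)\|_\F^2 \lesssim (s_1/\alpha_2)\|\bm W\|_\F^2$; the same argument in modes two and three handles $\widehat{\bm V}$ and $\widehat{\bm L}$. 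Combining with the first bound and taking the maximum over the three modes (collapsing $s_1,s_2,s_3$ to $s_{\max}$) produces the second displayed bound, the factors $\kappa^{3/2}\mu^2 s_{\max}$ entering only the admissibility conditions on $\lambda_k$ rather than the final rates.

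The step I expect to be the main obstacle is the infimal-convolution bookkeeping in the first paragraph. Because $\widetilde{\mathcal{L}}_k$ is only locally strongly convex — far from the pivot it flattens to the linear $\lambda_k\|\cdot\|_2$ regime — one must first confine $\widehat{\bm B}\widehat{\bm d}_k$ to the curved region by the crude-then-refined bootstrap and, simultaneously, certify the exact-recovery identity $\widehat{\bbm\alpha}_k = \widehat{\bm B}\widehat{\bm d}_k$; threading the constants $\kappa$, $\mu$ and $s_{\max}$ through this bootstrap is what pins down the precise form of the $\lambda_k$ conditions. By contrast, the quadratic-form manipulation and the diversity-to-angles conversion are comparatively routine once the localization is secured.
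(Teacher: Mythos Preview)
Your proposal is correct and mirrors the paper's three-phase argument: per-task localization of $\widehat{\bm B}\widehat{\bm d}_k$ into the strongly convex region of $\widetilde{\mathcal{L}}_k$, then the aggregate basic inequality \eqref{eq:lowrank-mtl-2} plus strong convexity and the $\gamma^*$-bound from $\mathcal{E}_2$, and finally the diversity-to-angles conversion. Your reading of the difficulty is also right: the localization is the delicate step.

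One point deserves sharpening. Your description of the crude bound as coming ``from $\lambda_k$-Lipschitzness'' is not quite how it goes: Lipschitzness of $\widetilde{\mathcal{L}}_k$ gives only upper control, not the lower bound needed to trap $\widehat{\bm B}\widehat{\bm d}_k$. The paper instead uses the Huber-type lower bound of Lemma~\ref{lem:repre-lower-bound} (which formalizes exactly the quadratic-then-linear behavior you identify), and---crucially---it invokes the diversity condition already at this crude stage, not only in your step~3. Concretely, Claim~\ref{claim-lowrank-subspace} first bounds the subspace error $\|(\bm I-\widehat{\bm P})\bm B^*\|_2$ by a contradiction argument: if the estimated subspace were far off, the counting Lemma~\ref{lem-frac} forces a positive $w_k$-fraction of tasks to have large $\|(\bm I-\widehat{\bm P})\bm B^*\bm d_k^*\|_2$, and the Huber lower bound then makes $\sum_k w_k\widetilde{\mathcal{L}}_k(\widehat{\bm B}\widehat{\bm d}_k)$ too large to satisfy \eqref{eq:lowrank-mtl-2}. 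Only after this subspace control can one localize each $\widehat{\bm B}\widehat{\bm d}_k$ via the per-task optimality \eqref{eq:representation-1} (Claims~\ref{claim-lowrank-z}--\ref{claim-lowrank-crude}). So diversity enters twice.

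For the angle bounds, your direct trace argument is a clean alternative to the paper's route, which applies Wedin's theorem to the stacked error matrix to control $\|\widehat{\bm U}\widehat{\bm U}^\top-\bm U^*\bm U^{*\top}\|_2$ and its analogues for $\bm V,\bm L$; both give the same rate in $s_{\max},\rho,\gamma^*$.
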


Since the above bound critically depends on events $\mathcal{E}_1$ and $\mathcal{E}_2$, the subsequent probabilistic analysis (Proposition \ref{prop:probeventsrep}) establishes that these events occur with high probability. 

\begin{proposition}[Probability of Random Events $\mathcal{E}_1$ and $\mathcal{E}_2$]\label{prop:probeventsrep}
    Let 
    $$\eta_k^* =\sigma M_{k}\sqrt{\frac{N^2p+N\log (NK) }{T_k}}
    \quad 
\quad \text{and} \quad \gamma^* =\sigma L^{1/2}\sqrt{\frac{d_M}{\widetilde{T}}}.$$
    Under Assumptions \ref{asmp:1}--\ref{asmp:3}, if $ T_{\min}  \gtrsim[ Np+\log (K) ]\max(\widetilde{\kappa}^{2} \sigma^{4}, \widetilde{\kappa} \sigma^{2})$,
    then events $\mathcal{E}_1$ and $\mathcal{E}_2$ hold with probability at least $
	1 -\exp\left(-C_1 [Np+\log (NK) ]\right) - 2\exp(-C_2 T_{\min} \min\left( \widetilde{\kappa}^{-2} \sigma^{-4}, \widetilde{\kappa}^{-1} \sigma^{-2} \right))
	$. 
    \end{proposition}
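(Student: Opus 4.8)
The plan is to trace both events back to concentration of the two empirical objects that determine $\mathcal{L}_k$, namely the sample autocovariance $T_k^{-1}\bm X_k\bm X_k^\top$ and the sample cross‑covariance $T_k^{-1}\bm X_k\bm E_k^\top$. Since $\mathcal{L}_k$ is quadratic in the vectorized form of Section~\ref{sec:vecform}, a direct computation gives $\nabla^2\mathcal{L}_k(\bbm\alpha_k^*)=\bm I_N\otimes(T_k^{-1}\bm X_k\bm X_k^\top)$ and, using $\bm Y_k-\cm{A}_{k(1)}^*\bm X_k=\bm E_k$, also $\nabla\mathcal{L}_k(\bbm\alpha_k^*)=-T_k^{-1}\mathrm{vec}(\bm X_k\bm E_k^\top)$, so that $\|\nabla\mathcal{L}_k(\bbm\alpha_k^*)\|_2=T_k^{-1}\|\bm X_k\bm E_k^\top\|_\F$. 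Thus $\mathcal{E}_1$ is exactly the two‑sided spectral bound $\rho\le\lambda_{\min}(T_k^{-1}\bm X_k\bm X_k^\top)$, $\lambda_{\max}(T_k^{-1}\bm X_k\bm X_k^\top)\le L$ together with $T_k^{-1}\|\bm X_k\bm E_k^\top\|_\F\lesssim\eta_k^*$ for every $k$, so the work is a restricted‑eigenvalue statement and a deviation bound, both for VAR‑generated designs.

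For the Hessian half I would use the stacked VAR$(1)$ and VMA$(\infty)$ representations from Section~\ref{sec:VMAinf}, which exhibit $\bm x_{t,k}$ as a sub‑Gaussian linear process with population autocovariance $\bm\Gamma_{x,k}$ obeying the spectral‑density sandwich $\lambda_{\min}(\bm\Sigma_{\bbm\varepsilon,k})/\mu_{\max}(\bm\Xi_k)\le\lambda_{\min}(\bm\Gamma_{x,k})$ and $\lambda_{\max}(\bm\Gamma_{x,k})\le\lambda_{\max}(\bm\Sigma_{\bbm\varepsilon,k})/\mu_{\min}(\bm\Xi_k)$. A Hanson–Wright–type deviation bound for such processes (the content of the auxiliary Lemmas~\ref{lemma:deterministic1}--\ref{lemma:uniondeviation}) shows that once $T_k\gtrsim\max(\widetilde\kappa^2\sigma^4,\widetilde\kappa\sigma^2)(Np+\log K)$ the operator‑norm error of $T_k^{-1}\bm X_k\bm X_k^\top$ about $\bm\Gamma_{x,k}$ is a controlled fraction of $\rho_k$; the factor $3$ built into $L_k$ is precisely what absorbs the upper deviation, giving the two‑sided bound with $\rho=\min_k\rho_k$ and $L=\max_k L_k$. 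For the gradient half I would bound $\|\bm X_k\bm E_k^\top\|_\F\le\sqrt N\,\|\bm X_k\bm E_k^\top\|_{\mathrm{op}}$ and apply a matrix sub‑Gaussian bound for the cross term (again from the VMA representation, using that $\bm E_k$ is independent of $\bm X_k$ at matching times), yielding $T_k^{-1}\|\bm X_k\bm E_k^\top\|_{\mathrm{op}}\lesssim\sigma\sqrt{\lambda_{\max}(\bm\Sigma_{\bbm\varepsilon,k})\lambda_{\max}(\bm\Gamma_{x,k})}\sqrt{(Np+N\log(NK))/T_k}$ on a high‑probability set; since $\sqrt{\lambda_{\max}(\bm\Sigma_{\bbm\varepsilon,k})\lambda_{\max}(\bm\Gamma_{x,k})}\le\lambda_{\max}(\bm\Sigma_{\bbm\varepsilon,k})/\mu_{\min}^{1/2}(\bm\Xi_k)=M_k$, this reproduces $\eta_k^*$. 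A union bound over $k\in[K]$ then delivers $\mathcal{E}_1$ with probability at least $1-\exp(-C_1[Np+\log(NK)])-2\exp(-C_2(\min_{k\in[K]}T_k)\min(\widetilde\kappa^{-2}\sigma^{-4},\widetilde\kappa^{-1}\sigma^{-2}))$.

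For $\mathcal{E}_2$ I would condition on $\mathcal{E}_1$ — so that $\|\bm X_k\|_{\mathrm{op}}^2\le LT_k$ for all $k$ — and control the supremum of the sub‑Gaussian process $\bm W\mapsto\sum_{k\in[K]}\langle w_k^{1/2}\nabla\mathcal{L}_k(\bbm\alpha_k^*),\bm w_k\rangle$ over the multilinear‑rank set $\bm\Omega(2s_1,2s_2,2s_3;N,K,p)$ by an $\epsilon$‑net/peeling argument built on Definition~\ref{definition:matrix_set} (the union‑deviation lemma). For a fixed $\bm W$ with $\sum_k\|\bm w_k\|_2^2=1$, reshaping $\bm w_k$ and using $\|\nabla\mathcal{L}_k(\bbm\alpha_k^*)\|=T_k^{-1}\|\bm X_k\bm E_k^\top\|_\F$, the conditional variance proxy is $\lesssim\sigma^2\sum_k w_k\lambda_{\max}(\bm\Sigma_{\bbm\varepsilon,k})\|\bm X_k\|_{\mathrm{op}}^2T_k^{-2}\|\bm w_k\|_2^2\le\sigma^2 L\,\widetilde T^{-1}$; the metric entropy of $\bm\Omega$ is of order $d_{\mathrm M}=N(s_1+s_2)+ps_3+Ks_1s_2s_3$ because a Tucker tensor of multilinear ranks $(2s_1,2s_2,2s_3)$ is parametrised by three factor matrices contributing $Ns_1$, $Ns_2$ and $ps_3$ directions and $K$ core tensors contributing $Ks_1s_2s_3$, the factors of two from the doubled ranks being absorbed into constants. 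Combining the covering number $(C/\epsilon)^{d_{\mathrm M}}$ with the variance proxy and the decomposition property of Definition~\ref{definition:matrix_set}($ii$) for the residual term gives $\gamma^*\asymp\sigma L^{1/2}\sqrt{d_{\mathrm M}/\widetilde T}$ with failure probability at most $\exp(-C[Np+\log(NK)])$; intersecting with the $\mathcal{E}_1$ event and re‑collecting the tail terms yields the claimed probability.

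The main obstacle I anticipate lies in $\mathcal{E}_2$: pinning down the metric entropy of the multilinear‑rank set precisely, so that the exponent is $d_{\mathrm M}$ rather than a crude $N^2p$, and handling the entanglement between the random, temporally dependent designs $\{\bm X_k\}$ and the net — this is what forces the "condition on $\{\bm X_k\}$, then invoke $\mathcal{E}_1$" two‑step together with a careful peeling across the rank‑$2s_j$ shells, which is exactly the role of the supporting lemmas. The VAR‑specific concentration feeding $\mathcal{E}_1$ is comparatively routine given the VMA$(\infty)$ representation, though the bookkeeping linking $\mu_{\min}$ and $\mu_{\max}$ of the two polynomials $\bm\Xi_k$ and $\widetilde{\bm\Xi}_k$ to the constants $\rho$, $L$, $\widetilde L$, $\widetilde\kappa$ still needs care.
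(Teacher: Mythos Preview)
Your plan is essentially the paper's own route: reduce $\mathcal{E}_1$ to a two-sided eigenvalue bound on $T_k^{-1}\bm X_k\bm X_k^\top$ via the VMA$(\infty)$ representation plus Hanson--Wright (the content of Lemma~\ref{lemma:RSCRSS}), bound the gradient norm by a martingale/Chernoff argument over an $\epsilon$-net (Lemma~\ref{lemma:deviation}), and handle $\mathcal{E}_2$ by a covering of the Tucker-rank set with metric entropy $d_{\mathrm M}$ (Lemmas~\ref{lemma:tensor_covering} and~\ref{lemma:uniondeviation}). Your identification of the Hessian and gradient expressions, the variance proxy $\sigma^2 L/\widetilde T$, and the entropy count are all what the paper uses.

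One technical wrinkle to tighten: you write ``condition on $\mathcal{E}_1$'' (or on $\{\bm X_k\}$) before running the sub-Gaussian argument for $\mathcal{E}_2$. Because $\bm X_k$ and $\bm E_k$ share innovations, conditioning on the full design destroys the i.i.d.\ sub-Gaussian structure of $\bm E_k$. The paper avoids this by \emph{intersecting} rather than conditioning: it bounds $\mathbb{P}[\{S(\cm W)\ge z_1\}\cap\{\max_k\|\cm W_k\|_\F^{-2}R_k(\cm W_k)\le z_2\}]$ via the tower rule over $\mathcal F_{t-1,k}$ (a sequential martingale Chernoff bound, see \eqref{eq:deviationMartingale} and the proof of Lemma~\ref{lemma:uniondeviation}), then adds back the probability of the bad RSS event from Lemma~\ref{lemma:RSCRSS}. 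You flag this entanglement as the main obstacle, so you are aware of it; just be sure the write-up uses the intersect-and-martingale device, not literal conditioning. (Also a minor citation slip: the concentration content lives in Lemmas~\ref{lemma:RSCRSS}--\ref{lemma:uniondeviation}, not Lemmas~\ref{lemma:deterministic1}--\ref{lemma:deterministic2}, which are purely deterministic.)
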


With Propositions \ref{prop:determrep}--\ref{prop:probeventsrep} established, the proof of Theorem~\ref{thm:rep} is complete.  

\end{proof}

\begin{proof}[\bf Proof of Theorem \ref{thm:TL}]
The proof of Theorem \ref{thm:TL} builds upon the results established in Theorem \ref{thm:rep}. We first introduce another two random events related to the target dataset that form the basis of our subsequent deterministic analysis:
\begin{align*}
    \mathcal{E}_3 = \left\{ \rho \bm I_{N^2p} \preceq  \nabla^2 \mathcal{L}_0(\bbm \alpha_0^*)   \preceq L\bm I_{N^2p},
    \left\| \nabla \mathcal{L}_0(\bbm \alpha_0^*) \right\|_2 \lesssim\eta_0^* \right\}\quad \text{and}\quad
    \mathcal{E}_4 = \left\{\left\| \bm P^* \nabla \mathcal{L}_0(\bbm \alpha_0^*) \right\|_2 \lesssim\tau_0^*\right\}.
\end{align*}

\begin{proposition}[Transfer Learning Deterministic Analysis]\label{prop:determTL}
    Suppose there exist constants $ C > 0 $ such that $2C L h_{\mathrm{T}} \leq \eta_0^* + \frac{\lambda_0}{2 C s_{\max}\kappa^{3/2} \mu^2 }$. 
    If
    $
    2C \kappa^{3/2} \mu^2 s_{\max} \eta_0^* < \lambda_0 \le \rho \alpha_1
    $ and the conditions of Proposition \ref{prop:determrep}, events $\mathcal{E}_3$ and $\mathcal{E}_4$ hold, then
    $$
    \| \widehat{\bbm \alpha}_0  - \bbm \alpha^{*}_0 \|_2 
     \lesssim 
    s_{\max}^{1/2} \kappa^{3/2} \mu  \gamma^* 
    + \frac{ \tau_0^*  }{\rho}
    +   s_{\max} \kappa^{5/2} \mu^2  \max\{h_{\mathrm{S}},h_{\mathrm{T}}\}.
    $$
\end{proposition}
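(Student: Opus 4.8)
\textbf{Proof plan for Proposition~\ref{prop:determTL}.}

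The plan is to treat the Stage~II program as a single convex penalized least-squares problem and then propagate the Stage~I guarantees of Proposition~\ref{prop:determrep} through it. First I would note that, since $\mathcal{L}_0$ does not depend on $\bm d$, the inner minimization over $\bm d$ is a least-squares projection: $\min_{\bm d}\|\widehat{\bm B}\bm d-\bbm\alpha\|_2=\|(\bm I-\widehat{\bm P})\bbm\alpha\|_2$ with minimizer $\widehat{\bm B}^\top\bbm\alpha$, where $\widehat{\bm P}:=\widehat{\bm B}\widehat{\bm B}^\top=(\widehat{\bm U}\widehat{\bm U}^\top)\otimes(\widehat{\bm L}\widehat{\bm L}^\top)\otimes(\widehat{\bm V}\widehat{\bm V}^\top)$ is an orthogonal projection of rank $s_1s_2s_3$. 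Hence $\widehat{\bbm\alpha}_0$ solves $\min_{\bbm\alpha}\{\mathcal{L}_0(\bbm\alpha)+\lambda_0\|(\bm I-\widehat{\bm P})\bbm\alpha\|_2\}$ and $\widehat{\bm d}_0=\widehat{\bm B}^\top\widehat{\bbm\alpha}_0$. From Proposition~\ref{prop:determrep} (valid under the stated hypotheses) together with a Kronecker telescoping identity, $\|\widehat{\bm P}-\bm P^*\|_{\op}$ is bounded by the sum of the three single-mode $\sin\Theta$ distances, hence $\|\widehat{\bm P}-\bm P^*\|_{\op}\lesssim \mathrm{dist}\lesssim s_{\max}^{1/2}(\rho\alpha_2)^{-1}(\gamma^*+Lh_{\mathrm S})$; this is the channel through which the Stage~I error enters.

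Next I would run the standard basic-inequality argument with reference point $\bbm\beta_0:=\widehat{\bm P}\bbm\alpha_0^*$, which lies in the range of $\widehat{\bm B}$ and so incurs zero penalty. Optimality of $\widehat{\bbm\alpha}_0$ gives $\mathcal{L}_0(\widehat{\bbm\alpha}_0)+\lambda_0\|(\bm I-\widehat{\bm P})\widehat{\bbm\alpha}_0\|_2\le\mathcal{L}_0(\bbm\beta_0)$. Writing $\bbm\Delta:=\widehat{\bbm\alpha}_0-\bbm\alpha_0^*$ and using that $\mathcal{L}_0$ is quadratic with $\rho\bm I\preceq\nabla^2\mathcal{L}_0\preceq L\bm I$ on event $\mathcal{E}_3$, the left-hand side is at least $\mathcal{L}_0(\bbm\alpha_0^*)+\langle\nabla\mathcal{L}_0(\bbm\alpha_0^*),\bbm\Delta\rangle+\tfrac{\rho}{2}\|\bbm\Delta\|_2^2+\lambda_0\|(\bm I-\widehat{\bm P})\widehat{\bbm\alpha}_0\|_2$, while the right-hand side is at most $\mathcal{L}_0(\bbm\alpha_0^*)+\eta_0^*\,\|(\bm I-\widehat{\bm P})\bbm\alpha_0^*\|_2+\tfrac{L}{2}\|(\bm I-\widehat{\bm P})\bbm\alpha_0^*\|_2^2$ (Taylor expansion plus event $\mathcal{E}_3$). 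The "bias" $\|(\bm I-\widehat{\bm P})\bbm\alpha_0^*\|_2$ is controlled via $\bbm\alpha_0^*=\bm B^*\bm d_0^*+\bm r_0^*$ with $\|\bm r_0^*\|_2\le h_{\mathrm T}$, $\|\bm d_0^*\|_2\le\alpha_1$, and $(\bm I-\widehat{\bm P})\bm B^*\bm d_0^*=(\bm I-\widehat{\bm P})(\bm P^*-\widehat{\bm P})\bm B^*\bm d_0^*$, giving $\|(\bm I-\widehat{\bm P})\bbm\alpha_0^*\|_2\lesssim\alpha_1\,\mathrm{dist}+h_{\mathrm T}$. I would then split $\bbm\Delta=\widehat{\bm P}\bbm\Delta+(\bm I-\widehat{\bm P})\bbm\Delta$: the gradient against the complement obeys $|\langle\nabla\mathcal{L}_0(\bbm\alpha_0^*),(\bm I-\widehat{\bm P})\bbm\Delta\rangle|\le\eta_0^*\|(\bm I-\widehat{\bm P})\bbm\Delta\|_2$ and is absorbed by $\lambda_0\|(\bm I-\widehat{\bm P})\bbm\Delta\|_2$ since the lower bound on $\lambda_0$ forces $\lambda_0\gtrsim\eta_0^*$; the gradient against $\widehat{\bm P}\bbm\Delta$ uses $\|\widehat{\bm P}\nabla\mathcal{L}_0(\bbm\alpha_0^*)\|_2\le\|\bm P^*\nabla\mathcal{L}_0(\bbm\alpha_0^*)\|_2+\|\widehat{\bm P}-\bm P^*\|_{\op}\|\nabla\mathcal{L}_0(\bbm\alpha_0^*)\|_2\lesssim\tau_0^*+\mathrm{dist}\cdot\eta_0^*$ by events $\mathcal{E}_3$ and $\mathcal{E}_4$. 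Collecting terms and using a reverse triangle inequality on the penalty leaves a scalar quadratic inequality of the form $\tfrac{\rho}{2}\|\bbm\Delta\|_2^2\lesssim(\tau_0^*+\mathrm{dist}\cdot\eta_0^*)\|\bbm\Delta\|_2+(\lambda_0+\eta_0^*)\bigl(\alpha_1\,\mathrm{dist}+h_{\mathrm T}\bigr)+L\bigl(\alpha_1\,\mathrm{dist}+h_{\mathrm T}\bigr)^2$.

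Finally I would solve this quadratic and substitute the Stage~I rate for $\mathrm{dist}$, the choice $\lambda_0\asymp\kappa^{3/2}\mu^2 s_{\max}\eta_0^*$, and the side conditions $\lambda_0\le\rho\alpha_1$ and $2CLh_{\mathrm T}\le\eta_0^*+\lambda_0/(2Cs_{\max}\kappa^{3/2}\mu^2)$. The three summands in the claimed bound then emerge respectively from the $\mathrm{dist}\cdot\eta_0^*$ term and the $\mathrm{dist}$-driven parts of the bias (the \emph{Representation} term $s_{\max}^{1/2}\kappa^{3/2}\mu\,\gamma^*$, which accumulates one factor $s_{\max}^{1/2}\kappa\mu$ from propagating $\mathrm{dist}$ and one $\kappa^{1/2}$ from converting $\lambda_0$-strength into error), from the $\tau_0^*\|\bbm\Delta\|_2$ term (the \emph{Transfer} term $\tau_0^*/\rho$), and from the $h_{\mathrm S},h_{\mathrm T}$ parts of the bias (the \emph{Approximation} term, which picks up an extra factor $s_{\max}^{1/2}\kappa\mu$ and hence the larger prefactor $s_{\max}\kappa^{5/2}\mu^2$). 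The step I expect to be the main obstacle is precisely this last accounting: one must show that no cross term between the Stage~I representation error and the Stage~II penalty degenerates into a $\sqrt{\gamma^*}$ or $\sqrt{\max\{h_{\mathrm S},h_{\mathrm T}\}}$ contribution — which requires using $\lambda_0\le\rho\alpha_1$ and the coupled bounds tying $\lambda_0,\eta_0^*,h_{\mathrm S},h_{\mathrm T}$ together so that every quadratic-formula cross term is linearly dominated — and to keep careful track of the $\kappa,\mu,s_{\max}$ powers through the Kronecker perturbation bound and the bias amplification. A secondary obstacle is establishing the telescoping estimate $\|\widehat{\bm P}-\bm P^*\|_{\op}\lesssim\mathrm{dist}$ in the mixed Frobenius/operator norms in which the $\sin\Theta$ distances of Proposition~\ref{prop:determrep} are stated.
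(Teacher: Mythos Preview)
Your plan is close in spirit but the quadratic inequality you write down is the wrong one, and the ``main obstacle'' you flag is not a bookkeeping issue but a genuine gap. Once you apply the reverse triangle inequality on the penalty you pick up $(\lambda_0+\eta_0^*)(\alpha_1\,\mathrm{dist}+h_{\mathrm T})$ on the right-hand side, and solving the quadratic in $\|\bbm\Delta\|_2$ then produces a contribution of order $\sqrt{\rho^{-1}\lambda_0(\alpha_1\,\mathrm{dist}+h_{\mathrm T})}$. The side conditions $\lambda_0\le\rho\alpha_1$ and $2CLh_{\mathrm T}\le\eta_0^*+\lambda_0/(2Cs_{\max}\kappa^{3/2}\mu^2)$ only cap $\rho^{-1}\lambda_0$ by the constant $\alpha_1$; they do not make $\sqrt{\alpha_1\,\mathrm{dist}}$ or $\sqrt{\alpha_1 h_{\mathrm T}}$ dominated by any of the three linear summands in the statement, since $\gamma^*$ and $h_{\mathrm S},h_{\mathrm T}$ can be arbitrarily small. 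So as written the argument cannot reach the claimed bound.

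The paper avoids this by removing $\lambda_0$ from the basic inequality altogether. It first shows, via the infimal-convolution lemma and the Stage~I localization claims (which in turn need $\lambda_0>Cs_{\max}\kappa^{3/2}\mu^2\eta_0$), that the penalty is inactive at the optimum, i.e.\ $\widehat{\bbm\alpha}_0=\widehat{\bm B}\widehat{\bm d}_0$ exactly and $\widetilde{\mathcal L}_0=\mathcal L_0$ in a ball around $\bm B^*\bm d_0^*$. The comparison is then between $\mathcal L_0(\widehat{\bm B}\widehat{\bm d}_0)$ and $\mathcal L_0(\widehat{\bm P}\bm B^*\bm d_0^*)$, with both sides expanded around $\bm B^*\bm d_0^*$ (not $\bbm\alpha_0^*$). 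Because $\widehat{\bm B}\widehat{\bm d}_0-\widehat{\bm P}\bm B^*\bm d_0^*=\widehat{\bm P}(\widehat{\bm B}\widehat{\bm d}_0-\bm B^*\bm d_0^*)$ already lies in $\mathrm{range}(\widehat{\bm P})$, the gradient term is automatically of the form $\langle\widehat{\bm P}\nabla\mathcal L_0(\bm B^*\bm d_0^*),\cdot\rangle$ and splits into $\bm P^*$ plus $(\widehat{\bm P}-\bm P^*)$ pieces as you intended; the resulting quadratic has only $\tfrac{L}{2}\|(\widehat{\bm P}-\bm P^*)\bm B^*\bm d_0^*\|_2^2$ on the right, so solving it yields the linear rate $\rho^{-1}\tau_0+\rho^{-1}\eta_0\,\|\widehat{\bm P}-\bm P^*\|_2+\kappa^{1/2}\alpha_1\|\widehat{\bm P}-\bm P^*\|_2$. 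One then converts $(\tau_0,\eta_0,\gamma)$, which are evaluated at $\bm B^*\bm d_0^*$, to $(\tau_0^*,\eta_0^*,\gamma^*)$ via $\|\nabla\mathcal L_0(\bm B^*\bm d_0^*)-\nabla\mathcal L_0(\bbm\alpha_0^*)\|_2\le Lh_{\mathrm T}$ and adds $\|\bm B^*\bm d_0^*-\bbm\alpha_0^*\|_2\le h_{\mathrm T}$. Your route can be salvaged with a small change: instead of the reverse triangle, decompose $\widehat{\bm P}\bbm\alpha_0^*-\widehat{\bbm\alpha}_0=-\widehat{\bm P}\bbm\Delta-(\bm I-\widehat{\bm P})\widehat{\bbm\alpha}_0$ directly and let the penalty absorb $\eta_0^*\|(\bm I-\widehat{\bm P})\widehat{\bbm\alpha}_0\|_2$; then $\lambda_0$ never enters the quadratic and the square-root contamination disappears.
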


\begin{proposition}[Probability of Random Events $\mathcal{E}_3$ and $\mathcal{E}_4$]\label{prop:probeventsTL}
    Let 
    $$\eta_0^* =\sigma M_{0}\sqrt{\frac{N^2p+N\log (N) }{T_0}}
    \quad \text{and} \quad
    \tau_0^* = \sigma M_{0}\sqrt{\frac{s_1s_2s_3}{T_0}}.$$
    Under Assumptions \ref{asmp:1}--\ref{asmp:3}, if $T_0 \gtrsim [Np+\log(N)]\max ( \widetilde{\kappa}^{2}\sigma^{4}, \widetilde{\kappa} \sigma^{2} )$, 
    then events $\mathcal{E}_3$ and $\mathcal{E}_4$ hold with probability at least $
	1 - \exp\left(-C_1 [Np+\log N ]\right) - 2\exp(-C_2 T_0 \min\left( \widetilde{\kappa}^{-2} \sigma^{-4}, \widetilde{\kappa}^{-1} \sigma^{-2} \right))
	$.
    \end{proposition}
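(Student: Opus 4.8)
The plan is to mirror the proof of Proposition~\ref{prop:probeventsrep}, now specialized to the single target task (so $\log(NK)$ is replaced by $\log N$ and the pooled effective sample size collapses to $T_0$), and to add one genuinely new piece for the projected gradient in $\mathcal{E}_4$. In vectorized form $\nabla^2\mathcal{L}_0(\bbm\alpha) = T_0^{-1}\bm I_N\otimes(\bm X_0\bm X_0^\top)$ does not depend on $\bbm\alpha$, while $\nabla\mathcal{L}_0(\bbm\alpha_0^*) = -T_0^{-1}\bm Z_0^\top\bm e_0 = -T_0^{-1}\vect(\bm X_0\bm E_0^\top)$ with $\bm e_0 = \vect(\bm E_0^\top)$. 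Thus $\mathcal{E}_3$ decomposes into a Gram-matrix (Hessian) statement and a cross-term (gradient) statement, and $\mathcal{E}_4$ is a reduced-dimension variant of the cross-term bound.

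For the Hessian part, I would first use the stability analysis of \cite{basu2015regularized} through the stacked VAR($1$)/VMA($\infty$) representation of Section~\ref{sec:VMAinf}: the stationary covariance $\bm\Gamma_{x,0} = \mathbb{E}(\bm x_{t,0}\bm x_{t,0}^\top)$ obeys $\lambda_{\min}(\bm\Gamma_{x,0})\ge\lambda_{\min}(\bm\Sigma_{\bbm\varepsilon,0})/\mu_{\max}(\bm\Xi_0)$ and $\lambda_{\max}(\bm\Gamma_{x,0})\le\lambda_{\max}(\bm\Sigma_{\bbm\varepsilon,0})/\mu_{\min}(\bm\Xi_0)$, hence $\rho\bm I\preceq\bm\Gamma_{x,0}\preceq(L/3)\bm I$. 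Then I would invoke the Hanson--Wright-type deviation lemma for sub-Gaussian VAR data (Lemma~\ref{lemma:uniondeviation} and the auxiliary lemmas) to show $\|T_0^{-1}\bm X_0\bm X_0^\top - \bm\Gamma_{x,0}\|_2\lesssim L$ whenever $T_0\gtrsim[Np+\log N]\max(\widetilde\kappa^2\sigma^4,\widetilde\kappa\sigma^2)$, with failure probability $2\exp(-C_2 T_0\min(\widetilde\kappa^{-2}\sigma^{-4},\widetilde\kappa^{-1}\sigma^{-2}))$; the tilde quantities $\widetilde L,\widetilde\kappa$ enter because this concentration is governed by the stability of the whole stacked vector $\widetilde{\bm x}_0 = \widetilde{\bm\Phi}_0^*\widetilde{\bm e}_0$. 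This delivers $\rho\bm I_{N^2p}\preceq\nabla^2\mathcal{L}_0(\bbm\alpha_0^*)\preceq L\bm I_{N^2p}$.

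For the gradient parts, I would condition on the Hessian event and exploit that $\bm x_{t,0}$ is independent of $\bbm\varepsilon_{t,0}$, so for fixed unit vectors $\bm u,\bm v$ the quantity $\bm u^\top\bm X_0\bm E_0^\top\bm v = \sum_t(\bm u^\top\bm x_{t,0})(\bm v^\top\bbm\varepsilon_{t,0})$ is a martingale with variance proxy $\lesssim\sigma^2\lambda_{\max}(\bm\Sigma_{\bbm\varepsilon,0})T_0 L$ on that event. Covering the relevant unit sphere and union-bounding then gives, for $\mathcal{E}_3$, $\|\nabla\mathcal{L}_0(\bbm\alpha_0^*)\|_2 = T_0^{-1}\|\bm X_0\bm E_0^\top\|_\F\lesssim\sigma M_0\sqrt{(N^2p+N\log N)/T_0} = \eta_0^*$ (the sphere has dimension $Np$ per column, with $N$ columns, and $M_0\asymp\lambda_{\max}^{1/2}(\bm\Sigma_{\bbm\varepsilon,0})L^{1/2}$), with failure probability $\exp(-C_1[Np+\log N])$. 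For $\mathcal{E}_4$, since $\bm P^* = \bm B^*\bm B^{*\top}$ is a fixed orthogonal projection of rank $s_1s_2s_3$ (because $\bm B^* = \bm U^*\otimes\bm L^*\otimes\bm V^*$ has orthonormal columns), one has $\|\bm P^*\nabla\mathcal{L}_0(\bbm\alpha_0^*)\|_2 = \sup_{\|\bm d\|_2=1}\langle\bm B^*\bm d,\nabla\mathcal{L}_0(\bbm\alpha_0^*)\rangle$; running the same martingale argument but covering only the $s_1s_2s_3$-dimensional unit sphere yields $\|\bm P^*\nabla\mathcal{L}_0(\bbm\alpha_0^*)\|_2\lesssim\sigma M_0\sqrt{s_1s_2s_3/T_0} = \tau_0^*$. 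A union bound over the three events finishes the proof. Unlike the event $\mathcal{E}_2$ of Proposition~\ref{prop:probeventsrep}, no supremum over a Grassmannian is needed here, since $\bm B^*$ is the fixed truth rather than an estimate.

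The main obstacle is the bilinear/quadratic-form concentration for the dependent VAR observations: obtaining the deviation bounds for $T_0^{-1}\bm X_0\bm X_0^\top$ and $T_0^{-1}\bm X_0\bm E_0^\top$ with the correct dependence on $M_0$, $\sigma$, the effective dimensions, and especially the precise tail exponents in $\widetilde\kappa$, which forces us to work through the VMA($\infty$) representation rather than treating the rows of $\bm X_0$ as i.i.d. One must also check that conditioning on the Hessian event when bounding the gradients does not degrade the stated probabilities, and that replacing the covering entropy $N^2p$ (for $\mathcal{E}_3$) by $s_1s_2s_3$ (for $\mathcal{E}_4$) is legitimate — this is exactly where it matters that $\bm B^*$ is non-random, so the analysis here must precede, not follow, the control of the Stage-I error of $\widehat{\bm B}$.
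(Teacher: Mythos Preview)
Your proposal is correct and follows essentially the same route as the paper: the paper reduces Proposition~\ref{prop:probeventsTL} to two claims that are the $K=1$ specializations of Claims~\ref{claim:singledeviation} and~\ref{claim:deviationMTL}, and for $\mathcal{E}_4$ it uses exactly your observation that $\|\bm P^*\nabla\mathcal{L}_0(\bbm\alpha_0^*)\|_2 = \sup_{\|\bm w\|_2=1}\langle\nabla\mathcal{L}_0(\bbm\alpha_0^*),\bm B^*\bm w\rangle$ over the fixed $s_1s_2s_3$-dimensional sphere, then invokes Lemma~\ref{lemma:deviation}. One small pointer: the Hessian concentration you describe is the content of Lemma~\ref{lemma:RSCRSS} (Hanson--Wright via the VMA($\infty$) form~\eqref{eq:R_k}), not Lemma~\ref{lemma:uniondeviation}, which handles the pooled cross-term for $\mathcal{E}_2$; otherwise your plan matches the paper's argument in all essential points.
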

With Propositions \ref{prop:probeventsrep}--\ref{prop:probeventsTL} established, the proof of Theorem~\ref{thm:TL} is now complete.

\end{proof}

\begin{proof}[\bf Proof of Proposition \ref{prop:worse}]
    In the case of worst transfer learning, $h_{\mathrm{S}}$ or $h_{\mathrm{T}}$ is large that can not satisfy the requirements in Theorems \ref{thm:rep}--\ref{thm:TL}. 
    By Lemma \ref{lemma:personalization}, we obtain that
    \begin{align*}
    \| \widehat{\bbm \alpha}_0  - \bbm \alpha^{*}_0 \|_2 
     \leq \frac{\eta_0^* + \lambda_0}{\rho} \leq \frac{3\lambda_0}{2\rho} 
     \leq \frac{3C s_{\max} \kappa^{3/2} \mu^2 }{\rho}.
    \end{align*}
    This completes the proof.
\end{proof}

\begin{proof}[\bf Proof of Proposition \ref{prop:PoolVAR}]
    Proposition \ref{prop:PoolVAR} is a special case of Theorems \ref{thm:rep} and \ref{thm:TL} when $h_{\mathrm{S}} = h_{\mathrm{T}} = 0$ and $\lambda_k = \infty$ hold for all $k$. In this case, we have $\bbm \alpha_k^* = \bm B^* \bm d_k^*$ for all $k$ with a common representation space. 
    To prove Proposition \ref{prop:PoolVAR}, it is sufficient to show Propositions \ref{prop:determrep}--\ref{prop:probeventsTL} under these conditions. It is easy to show that Propositions \ref{prop:determrep}, \ref{prop:probeventsrep} and \ref{prop:probeventsTL} hold for this case.   
    Moreover, Proposition \ref{prop:determTL} can be simplified as in Proposition \ref{prop:determTLpool} below, which can be derived directly by Claim \ref{claim-lowrank-subspace-sharp} and Lemma \ref{lemma:deterministic2}.
    In addition, by the same arguments as in Claim \ref{claim:pooldeviation}, we can show event $\mathcal{E}_4$ in Proposition \ref{prop:determTLpool} holds with high probability. Thus, the proof of Proposition \ref{prop:PoolVAR} is complete. 
     
    \begin{proposition}\label{prop:determTLpool}
    Suppose the conditions of Theorem \ref{prop:PoolVAR} and event $\mathcal{E}_4$ hold,
    then we have 
    $$
    \| \widehat{\bm B} \widehat{\bm d}_0  - \bm B^* \bm d_0^* \|_2 
     \lesssim 
    s_{\max}^{1/2} \kappa^{3/2} \mu  \gamma^*  + \frac{ \tau_0^*  }{\rho}.
    $$
    \end{proposition}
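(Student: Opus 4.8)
The plan is to run the transfer-learning deterministic argument of Proposition \ref{prop:determTL} in the degenerate regime $h_{\mathrm{S}}=h_{\mathrm{T}}=0$, $\lambda_k=\infty$, where the exact constraint $\bbm\alpha_k^* = \bm B^*\bm d_k^*$ holds for every $k\in\{0\}\cup[K]$ with a genuinely common representation space. First I would invoke Theorem \ref{thm:rep} (equivalently Proposition \ref{prop:determrep}) specialized to $h_{\mathrm{S}}=0$: since $\lambda_k=\infty$ already forces $\widehat{\bbm\alpha}_k = \widehat{\bm B}\widehat{\bm d}_k$ exactly, the Stage-I guarantee reduces to a subspace-estimation bound, namely $\max\{\|\sin\Theta(\widehat{\bm U},\bm U^*)\|_\F,\|\sin\Theta(\widehat{\bm V},\bm V^*)\|_\F,\|\sin\Theta(\widehat{\bm L},\bm L^*)\|_\F\}\lesssim s_{\max}^{1/2}(\rho\alpha_2)^{-1}\gamma^*$, hence $\|\widehat{\bm B}\widehat{\bm B}^\top - \bm B^*\bm B^{*\top}\|$ is controlled by the same order up to the stated constants. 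This pins down the quality of the learned representation that Stage II will exploit.

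Next I would carry out the Stage-II analysis. With $\widehat{\bbm\alpha}_0 = \widehat{\bm B}\widehat{\bm d}_0$ (again forced by $\lambda_0 = \infty$, or more precisely by the exact-transfer specialization), the target estimator solves a least-squares problem restricted to the estimated subspace $\mathcal{M}(\widehat{\bm B})$. I would decompose the error as
\[
\widehat{\bm B}\widehat{\bm d}_0 - \bm B^*\bm d_0^* = \big(\bm P_{\widehat{\bm B}} - \bm P_{\bm B^*}\big)\bbm\alpha_0^* + \big(\widehat{\bm B}\widehat{\bm d}_0 - \bm P_{\widehat{\bm B}}\bbm\alpha_0^*\big),
\]
where $\bm P_{\widehat{\bm B}}=\widehat{\bm B}\widehat{\bm B}^\top$ and we used $\bm P_{\bm B^*}\bbm\alpha_0^* = \bbm\alpha_0^*$. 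The first term is a pure representation-misalignment term, bounded via the $\sin\Theta$ estimate from the previous step (using $\|\bbm\alpha_0^*\|_2 = \|\bm d_0^*\|_2 \le \alpha_1$ and $\mu = \alpha_1/\alpha_2$) by $\lesssim s_{\max}^{1/2}\kappa^{3/2}\mu\,\gamma^*$; here the extra $\kappa^{3/2}\mu$ factors come from converting the $\rho\alpha_2$ denominator and the loss-curvature/condition-number bookkeeping, exactly as in Claim \ref{claim-lowrank-subspace-sharp}. The second term is the in-subspace estimation error of a $(\rho,L)$-regular quadratic loss projected onto an $s_1 s_2 s_3$-dimensional space: by Lemma \ref{lemma:deterministic2} and the restricted strong convexity $\rho\bm I \preceq \nabla^2\mathcal{L}_0 \preceq L\bm I$ on event $\mathcal{E}_3$, it is bounded by $\lesssim \rho^{-1}\|\bm P_{\widehat{\bm B}}\nabla\mathcal{L}_0(\bbm\alpha_0^*)\|_2$. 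Using event $\mathcal{E}_4$, $\|\bm P^*\nabla\mathcal{L}_0(\bbm\alpha_0^*)\|_2 \lesssim \tau_0^*$, plus the fact that $\bm P_{\widehat{\bm B}}$ is close to $\bm P^*$ so that $\|\bm P_{\widehat{\bm B}}\nabla\mathcal{L}_0(\bbm\alpha_0^*)\|_2 \lesssim \tau_0^* + \|\sin\Theta\|\cdot\eta_0^*$ and the cross term is absorbed into the first term's order, this contributes $\lesssim \rho^{-1}\tau_0^*$. Adding the two pieces gives $\|\widehat{\bm B}\widehat{\bm d}_0 - \bm B^*\bm d_0^*\|_2 \lesssim s_{\max}^{1/2}\kappa^{3/2}\mu\,\gamma^* + \rho^{-1}\tau_0^*$, which is the claimed bound.

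The main obstacle I expect is the cross-term control in the second decomposition piece: one must show that replacing $\bm P^*$ by the data-dependent $\bm P_{\widehat{\bm B}}$ inside $\|\bm P_{\widehat{\bm B}}\nabla\mathcal{L}_0(\bbm\alpha_0^*)\|_2$ does not inflate the rate, i.e.\ that $\|(\bm P_{\widehat{\bm B}}-\bm P^*)\nabla\mathcal{L}_0(\bbm\alpha_0^*)\|_2 \le \|\sin\Theta\|_\F\cdot\|\nabla\mathcal{L}_0(\bbm\alpha_0^*)\|_2 \lesssim (s_{\max}^{1/2}(\rho\alpha_2)^{-1}\gamma^*)\,\eta_0^*$ is of the same or smaller order than the two leading terms under the stated sample-size conditions $T_0 \gtrsim s_1 s_2 s_3 \max\{\widetilde\kappa^2\sigma^4,\widetilde\kappa\sigma^2\}$ and $T_k \gtrsim \max\{s_1,s_2\}\max\{\widetilde\kappa^2\sigma^4,\widetilde\kappa\sigma^2\}N$. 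This is where the $\kappa$-power bookkeeping must be done carefully, but it is routine given Claim \ref{claim-lowrank-subspace-sharp} and Lemma \ref{lemma:deterministic2}, and indeed the excerpt itself notes the proposition "can be derived directly by Claim \ref{claim-lowrank-subspace-sharp} and Lemma \ref{lemma:deterministic2}."
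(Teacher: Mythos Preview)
Your proposal is correct and follows essentially the same route as the paper. The paper does not spell out a separate proof of Proposition~\ref{prop:determTLpool}; it simply points to Claim~\ref{claim-lowrank-subspace-sharp} (the projection-error bound $\|\widehat{\bm P}-\bm P^*\|_2\lesssim s_{\max}^{1/2}\gamma^*/(\rho\alpha_2)$) and Lemma~\ref{lemma:deterministic2}, and the full argument is the specialization of the proof of Proposition~\ref{prop:determTL} to $h_{\mathrm S}=h_{\mathrm T}=0$. That argument starts from the optimality inequality $\mathcal{L}_0(\widehat{\bm B}\widehat{\bm d}_0)\le \mathcal{L}_0(\widehat{\bm P}\bm B^*\bm d_0^*)$, sandwiches it with strong convexity/smoothness to obtain a quadratic inequality in $\|\widehat{\bm B}\widehat{\bm d}_0-\bm B^*\bm d_0^*\|_2$, and then splits the inner product $\langle\nabla\mathcal{L}_0(\bm B^*\bm d_0^*),\widehat{\bm P}(\widehat{\bm B}\widehat{\bm d}_0-\bm B^*\bm d_0^*)\rangle$ via $\widehat{\bm P}=\bm P^*+(\widehat{\bm P}-\bm P^*)$, yielding precisely the $\tau_0^*/\rho$ piece and the $\eta_0^*\cdot\|\widehat{\bm P}-\bm P^*\|_2$ cross term. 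Your direct error decomposition $(\widehat{\bm P}-\bm P^*)\bbm\alpha_0^*+(\widehat{\bm B}\widehat{\bm d}_0-\widehat{\bm P}\bbm\alpha_0^*)$ is a cosmetic rearrangement of the same computation: the first piece is the paper's $\|(\widehat{\bm P}-\bm I)\bm B^*\bm d_0^*\|_2$ bound, and your in-subspace bound $\rho^{-1}\|\bm P_{\widehat{\bm B}}\nabla\mathcal{L}_0(\bbm\alpha_0^*)\|_2$ is exactly what the quadratic inequality delivers after the $\widehat{\bm P}=\bm P^*+(\widehat{\bm P}-\bm P^*)$ split. The cross-term obstacle you flag is the same one handled in \eqref{eqn-lowrank-sharp-2}, resolved identically.
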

   
\end{proof}


\subsection{Proofs of Propositions \ref{prop:determrep}--\ref{prop:probeventsTL}}\label{sec:proofofprop}

\begin{proof}[\bf Proof of Proposition 
    \ref{prop:determrep}]

    Note that $\nabla \mathcal{L}_k(\bbm \alpha_k) = T_k^{-1} \bm Z_k^{\top} (\bm y_k - \bm Z_k \bbm \alpha_k)$ and $\nabla^2 \mathcal{L}_k(\bbm \alpha_k) = T_k^{-1} \bm Z_k^{\top} \bm Z_k \preceq L \bm I_{N^2p} $. 
    For any $ k \in [K] $, we have
    $$
     \| \nabla \mathcal{L}_k (\bbm \alpha_k^*) - \nabla \mathcal{L}_k (\bm B^* \bm d_k^*) \|_2 =  \| T_k^{-1} \bm Z_k^\top \bm Z_k (\bbm \alpha_k^* - \bm B^* \bm d_k^*) \|_2 
    \leq \| T_k^{-1} \bm Z_k^\top \bm Z_k \|_2  \| \bbm \alpha_k^* - \bm B^* \bm d_k^* \|_2 \leq L h_{\mathrm{S}}.
    $$
    By triangle's inequality, 
    \begin{equation}
        \begin{split}
             \|\nabla \mathcal{L}_k (\bm B^* \bm d_k^*) \|_2 
    \leq& \|\nabla \mathcal{L}_k (\bbm \alpha_k^*) \|_2  + 
    \|\nabla \mathcal{L}_k (\bbm \alpha_k^*) -\nabla \mathcal{L}_k (\bm B^* \bm d_k^*)\|_2 \\
    \leq& \|\nabla \mathcal{L}_k (\bbm \alpha_k^*) \|_2 + L h_{\mathrm{S}}.
        \end{split}
    \end{equation}
   Moreover, under the conditions of Proposition \ref{prop:determrep}, we have
    \begin{align}
    2 C  s_{\max} \kappa^{3/2} \mu^2\eta_k^* < \lambda_k 
    \quad \text{and} \quad 
    2 C L h_{\mathrm{S}} \leq \eta_k^* + \frac{\lambda_k}{2 C s_{\max}\kappa^{3/2} \mu^2 },
    \end{align}
    Consequently, it holds that
   \begin{equation}
    \begin{split}
        C s_{\max}  \kappa^{3/2} \mu^2 \eta_k
        \le &  C s_{\max}  \kappa^{3/2} \mu^2 \eta_k^* + C s_{\max}  \kappa^{3/2} \mu^2  Lh_{\mathrm{S}}  \\
        \le & (C+1/2) s_{\max}  \kappa^{3/2} \mu^2 \eta_k^* + \frac{\lambda_k}{4C}  
        \le  \lambda_k.
    \end{split}
   \end{equation}
    In addition, we have
    \begin{equation}
    \begin{split}
	\gamma \leq& \sup_{\bm W \in \bm \Omega(2s_1, 2s_2,2s_3; N, K, p)} \sum_{k\in [K]} \left\langle w_k^{1/2} \nabla \mathcal{L}_k (\bm A_k^*), \bm w_k \right\rangle \\
    &+ \sup_{\bm W \in \bm \Omega(2s_1, 2s_2, 2s_3; N, K, p)} \sum_{k\in [K]} \left\langle w_k^{1/2} (\nabla \mathcal{L}_k (\bm B^* \bm d_k^*) - \nabla \mathcal{L}_k (\bm A_k^*)), \bm w_k \right\rangle \\
    \leq& \gamma^* + \left\| \{ w_k^{1/2} [ \nabla \mathcal{L}_k (\bm B^* \bm d_k^*) - \nabla \mathcal{L}_k (\bm A_k^*) ],k \in [K] \} \right\|_\F
    \leq  \gamma^* + L h_{\mathrm{S}}.
    \end{split}
    \end{equation}
    This together with Lemma \ref{lemma:deterministic1}, implies that
    \begin{align*}
\Big\| \left\{w_k^{1/2} (\widehat{\bm B} \widehat{\bm d}_k  - \bm B^{*} \bm d_k^{*}) ,k \in [K]\right\} \Big\|_\F \lesssim \rho^{-1}(\gamma^*  + L h_{\mathrm{S}})
    \end{align*}
    and 
    $$\max\{\|\sin \Theta (\widehat{\bm U}, \bm U^* )\|_\F, \|\sin \Theta (\widehat{\bm V}, \bm V^* )\|_\F, \|\sin \Theta (\widehat{\bm L}, \bm L^* )\|_\F\}
		\lesssim  \frac{ s_{\max}^{1/2}}{\rho \alpha_2}(\gamma^*  + L h_{\mathrm{S}}),$$ 
    which completes the proof.
\end{proof}

\begin{proof}[\bf Proof of Proposition \ref{prop:probeventsrep}]

    By Assumptions \ref{asmp:1}--\ref{asmp:2} and the conditions of Proposition \ref{prop:probeventsrep}, the proof of Proposition \ref{prop:probeventsrep} is done by showing the following Claims \ref{claim:singledeviation}--\ref{claim:pooldeviation}.

    \begin{claim}[Event $\mathcal{E}_1$] \label{claim:singledeviation} 
         If $T_{\min}  \gtrsim [Np+\log (K) ]\max ( \widetilde{\kappa}^{2}\sigma^{4}, \widetilde{\kappa} \sigma^{2} )$ with $T_{\min} = \min_{k\in [K]}\{T_k\}$, then with probability at least
        $
        1 - \exp\left(-C_1 [Np+\log (NK) ]\right) - 2\exp(-C_2 T_{\min} \min\left( \widetilde{\kappa}^{-2} \sigma^{-4}, \widetilde{\kappa}^{-1} \sigma^{-2} \right)),
        $ for all $k \in [K]$, we have 
         \begin{equation}
           \rho \bm I_{N^2p} \preceq  \nabla^2 \mathcal{L}_k(\bbm \alpha_k^*)   \preceq L \bm I_{N^2p} 
         \quad \text{and} \quad
            \left\| \nabla \mathcal{L}_k(\bbm \alpha_k^*) \right\|_2 \lesssim\sigma M_{k}\sqrt{\frac{N^2p+N\log (N K)}{T_k}}.
        \end{equation}
    \end{claim}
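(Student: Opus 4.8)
The plan is to prove Claim \ref{claim:singledeviation} by establishing two separate concentration statements and then combining them via a union bound over the $K$ source tasks. The first is a restricted-eigenvalue-type bound showing that $T_k^{-1}\bm Z_k^\top \bm Z_k$ concentrates around its population counterpart $\bm I_N \otimes \bm \Gamma_k$, where $\bm \Gamma_k = \operatorname{var}(\bm x_{t,k})$, uniformly enough to sandwich the Hessian between $\rho \bm I$ and $L \bm I$. The second is a tail bound on the gradient $\|\nabla \mathcal{L}_k(\bbm\alpha_k^*)\|_2 = \|T_k^{-1}\bm Z_k^\top \bm e_k\|_2$, where $\bm e_k = \bm y_k - \bm Z_k \bbm\alpha_k^*$ is the vectorized innovation. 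Both are standard consequences of the Hanson–Wright-type machinery for stable Gaussian VAR processes developed in \cite{basu2015regularized}, adapted here to the stacked VAR(1) / VMA($\infty$) representation of Section \ref{sec:VMAinf}.

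First I would invoke the VMA($\infty$) representation $\widetilde{\bm x}_k = \widetilde{\bm\Phi}_k^* \widetilde{\bm e}_k$ from \eqref{eq:bigVAR1coe}--\eqref{eq:VMAinf} so that $\bm X_k$ is a linear functional of the sub-Gaussian vector $\widetilde{\bbm\zeta}_k$ with covariance controlled by $\mu_{\min}(\widetilde{\bm\Xi}_k)$, $\mu_{\max}(\widetilde{\bm\Xi}_k)$ and $\lambda_{\min}(\bm\Sigma_{\bbm\varepsilon,k})$, $\lambda_{\max}(\bm\Sigma_{\bbm\varepsilon,k})$. This is exactly what enters the definitions of $\rho_k, L_k, \widetilde{L}_k$ and $\widetilde\kappa_k$. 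Applying a Hanson–Wright / matrix-Bernstein inequality to the quadratic form $\bm u^\top(T_k^{-1}\bm Z_k^\top \bm Z_k - \bm I_N\otimes\bm\Gamma_k)\bm u$, together with an $\epsilon$-net argument over the relevant set of directions (here the full sphere in $\mathbb{R}^{N^2 p}$, or a structured subset if one wants a sharper sample-size requirement), yields that with the stated sample-size condition $T_k \gtrsim \max(\widetilde\kappa^2\sigma^4, \widetilde\kappa\sigma^2)(Np + \log K)$ the deviation is at most, say, $\rho_k/2$, which gives $\rho \bm I \preceq \nabla^2\mathcal{L}_k(\bbm\alpha_k^*) \preceq L\bm I$ using $\rho \le \rho_k$ and $L_k \le L$ and the factor-of-3 slack built into the definition of $L_k$. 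For the gradient, I would write $T_k^{-1}\bm Z_k^\top \bm e_k$ as a bilinear form in two jointly Gaussian vectors (the regressors and the independent innovations), again invoke a Hanson–Wright bound, and use $M_k = \lambda_{\max}(\bm\Sigma_{\bbm\varepsilon,k})/\mu_{\min}^{1/2}(\bm\Xi_k)$ to obtain $\|\nabla\mathcal{L}_k(\bbm\alpha_k^*)\|_2 \lesssim \sigma M_k\sqrt{[N^2 p + N\log(NK)]/T_k}$ with probability at least $1 - \exp(-C[Np + \log(NK)])$. The $N\log(NK)$ term inside the square root arises from the $\epsilon$-net over the $N$ coordinate blocks together with the $\log K$ needed for the later union bound.

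Finally I would take a union bound over $k \in [K]$: the Hessian event fails for a fixed $k$ with probability $\exp(-C_2 T_k \min(\widetilde\kappa^{-2}\sigma^{-4}, \widetilde\kappa^{-1}\sigma^{-2}))$, so summing over $k$ and bounding $T_k \ge T_{\min}$ contributes $2\exp(-C_2 T_{\min}\min(\widetilde\kappa^{-2}\sigma^{-4},\widetilde\kappa^{-1}\sigma^{-2}))$ after absorbing the extra $\log K$ into the exponent using the sample-size condition $T_{\min} \gtrsim (Np + \log K)\max(\widetilde\kappa^2\sigma^4,\widetilde\kappa\sigma^2)$; the gradient event contributes $\exp(-C_1[Np+\log(NK)])$ after the union bound, since the $\log K$ is already built into the rate $\sqrt{N\log(NK)/T_k}$. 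Combining gives exactly the probability stated in the claim. The main obstacle I anticipate is getting the concentration constants and the net cardinalities to line up so that the $\log K$ factors land in the right places — specifically, ensuring the sample-size condition is strong enough that the union-bound loss over the $K$ tasks is dominated by the per-task tail, and that the $\widetilde\kappa$-dependence (rather than $\kappa$) correctly tracks through the stacked VAR(1) representation, since it is $\widetilde{\bm\Xi}_k$ and not $\bm\Xi_k$ that governs the concentration of $\bm X_k\bm X_k^\top$. These are bookkeeping-heavy but follow the blueprint of \cite{basu2015regularized} and \cite{wang2022high} closely, so I would cite those results for the core quadratic-form tail bounds and only spell out the adaptation to the multi-task weighting and the dimension $d_{\mathrm M}$.
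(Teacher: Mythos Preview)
Your overall plan is correct and matches the paper for the Hessian part: both reduce $\nabla^2\mathcal{L}_k = T_k^{-1}\bm I_N\otimes(\bm X_k\bm X_k^\top)$ to bounding the extreme eigenvalues of $T_k^{-1}\bm X_k\bm X_k^\top$, invoke the VMA($\infty$) representation \eqref{eq:VMAinf}, apply Hanson--Wright with an $\epsilon$-net over $\mathbb{S}^{Np-1}$ (the paper's Lemma~\ref{lemma:RSCRSS} with $\bm W_k\in\bm\Theta(\epsilon,Np,1)$), and then union-bound over $k\in[K]$.

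For the gradient, however, the paper does \emph{not} treat $T_k^{-1}\bm Z_k^\top\bm e_k$ as a bilinear Hanson--Wright form. Instead it (i) splits $\nabla\mathcal{L}_k$ into its $N$ row-blocks, using $\|\nabla\mathcal{L}_k\|_2^2\le N\max_{1\le i\le N}\|\nabla\mathcal{L}_{i,k}\|_2^2$, and (ii) for each block applies a martingale Chernoff argument (the technique behind Lemma~\ref{lemma:deviation}): on the event $\{R_k(\bm w_k)\le L_k\}$ already secured in the Hessian step, one conditions on $\mathcal{F}_{t-1,k}$ and uses that $\langle\bbm\varepsilon_{t,k},\bm w_k^\top\bm x_{t,k}\rangle$ is conditionally sub-Gaussian with variance proxy $\sigma^2\lambda_{\max}(\bm\Sigma_{\bbm\varepsilon,k})\|\bm w_k^\top\bm x_{t,k}\|_2^2$, then telescopes over $t$. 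The rate $N^2p+N\log(NK)$ comes from an $Np$-dimensional net combined with the union over the $N$ blocks and $K$ tasks, exactly as you guessed, but via conditioning rather than a joint quadratic form. Your route---rewriting the bilinear form as a quadratic form in the stacked innovation vector $\widetilde{\bbm\zeta}_k$ and applying Hanson--Wright, as in \cite{basu2015regularized}---is a legitimate alternative and yields the same bound; just be careful that Assumption~\ref{asmp:2} is only sub-Gaussian, not Gaussian, so ``jointly Gaussian'' should be replaced by ``entries of $\widetilde{\bbm\zeta}_k$ are independent sub-Gaussian.'' The paper's martingale approach buys a cleaner separation between the RSS event and the deviation event (the gradient bound is proved \emph{conditional} on $R_k\le L_k$, then combined by the law of total probability), which is why both $\widetilde\kappa$ and $M_k$ appear with the roles you anticipated.
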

    \begin{claim}[Event $\mathcal{E}_2$]\label{claim:deviationMTL} 
         If $T_{\min} \gtrsim [s_1s_2s_3 + \log(K)]\max ( \widetilde{\kappa}^{2}\sigma^{4}, \widetilde{\kappa} \sigma^{2} )$, then with probability at least
        $
        1  - 2\exp\left(-C T_{\min} \min\left( \widetilde{\kappa}^{-2} \sigma^{-4}, \widetilde{\kappa}^{-1} \sigma^{-2} \right)\right),
        $ for all $k \in [K]$, we have
         \begin{equation}
            \left\| \bm P^* \nabla \mathcal{L}_k(\bbm \alpha_k^*) \right\|_2 \lesssim\sigma M_{k}\sqrt{\frac{s_1s_2s_3+\log K}{T_k}}.
        \end{equation}

    \end{claim}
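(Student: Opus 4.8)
The plan is to use the projection $\bm P^*$ to collapse the ambient dimension $N^2p$ down to $s_1s_2s_3$, to recognize the resulting quantity as the supremum of a mean‑zero bilinear form in the innovations over a unit sphere, to control that supremum with the dependent‑data concentration machinery of \citet{basu2015regularized}, and to finish with an $\epsilon$‑net and a union bound over $k\in[K]$. First, since $\bm P^*=\bm B^*\bm B^{*\top}$ with $\bm B^*=\bm U^*\otimes\bm L^*\otimes\bm V^*$ column‑orthonormal (the factor matrices $\bm U^*,\bm V^*,\bm L^*$ are orthonormal), one has $\|\bm P^*\nabla\mathcal L_k(\bbm\alpha_k^*)\|_2=\|\bm B^{*\top}\nabla\mathcal L_k(\bbm\alpha_k^*)\|_2=\sup_{\bm u\in\mathbb S^{s_1s_2s_3-1}}\langle\bm B^*\bm u,\nabla\mathcal L_k(\bbm\alpha_k^*)\rangle$. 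This is the crucial gain over Claim~\ref{claim:singledeviation}: the true subspaces $\mathcal M(\bm U^*),\mathcal M(\bm V^*),\mathcal M(\bm L^*)$ are fixed, so only the $s_1s_2s_3$ core coordinates are free, and the effective dimension of the supremum is $s_1s_2s_3$ rather than $N^2p$.

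Second, at the truth $\nabla\mathcal L_k(\bbm\alpha_k^*)=T_k^{-1}\bm Z_k^\top(\bm y_k-\bm Z_k\bbm\alpha_k^*)=T_k^{-1}\bm Z_k^\top\bm e_k$, where $\bm e_k$ is the reshaped innovation vector built from $\{\bbm\varepsilon_{t,k}\}$. Writing $\bm u=\vect(\cm D_{u(1)}^\top)$ and $\cm W_u=\llb\cm D_u;\bm U^*,\bm V^*,\bm L^*\rrb$, so that $\|\cm W_u\|_\F=\|\bm u\|_2=1$, a short computation with the vectorization identities gives $\langle\bm B^*\bm u,\nabla\mathcal L_k(\bbm\alpha_k^*)\rangle=T_k^{-1}\sum_{t=1}^{T_k}(\cm W_{u(1)}\bm x_{t,k})^\top\bbm\varepsilon_{t,k}$. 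Since $\bm x_{t,k}$ is $\sigma(\bbm\varepsilon_{s,k}:s\le t-1)$‑measurable and independent of $\bbm\varepsilon_{t,k}$, this is a sum of martingale differences and in particular has mean zero. For a fixed $\bm u$, both $\cm W_{u(1)}\bm X_k$ and the stacked innovations are linear images of the i.i.d.\ $\sigma^2$‑sub‑Gaussian array $\{\bbm\zeta_{t,k}\}$ through the stacked VAR$(1)$/VMA$(\infty)$ representation of Section~\ref{sec:VMAinf}, so a Hanson--Wright/Bernstein‑type inequality (as developed in \citet{basu2015regularized} and packaged in the auxiliary lemmas of Section~\ref{sec:techlemmas}) applies. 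Tracking the spectral quantities governing the stationary covariance and temporal dependence of $\{\bm x_{t,k}\}$ --- $\lambda_{\max}(\bm\Sigma_{\bbm\varepsilon,k})$, $\mu_{\min}(\bm{\Xi}_k)$, $\mu_{\max}(\bm{\Xi}_k)$ and $\mu_{\min}(\widetilde{\bm{\Xi}}_k)$ --- these assemble into the prefactor $M_k=\lambda_{\max}(\bm\Sigma_{\bbm\varepsilon,k})/\mu_{\min}^{1/2}(\bm{\Xi}_k)$ and the conditioning constant $\widetilde\kappa$, yielding a tail of the form $\mathbb P(|\langle\bm B^*\bm u,\nabla\mathcal L_k(\bbm\alpha_k^*)\rangle|>t)\le 2\exp(-cT_k\widetilde\kappa^{-1}\min\{(t/\sigma^2 M_k)^2,\,t/(\sigma^2 M_k)\})$ uniformly over unit $\bm u$ (only $\|\cm W_u\|_\F=1$ matters).

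Third, I would take a $\tfrac12$‑net $\mathcal N$ of $\mathbb S^{s_1s_2s_3-1}$ with $|\mathcal N|\le 5^{s_1s_2s_3}$, apply the above tail at every $\bm u\in\mathcal N$ and every $k\in[K]$ (the $\log K$ term enters through the union over $k$), and use the standard comparison $\sup_{\mathbb S}\langle\bm B^*\bm u,\nabla\mathcal L_k(\bbm\alpha_k^*)\rangle\le 2\sup_{\mathcal N}\langle\bm B^*\bm u,\nabla\mathcal L_k(\bbm\alpha_k^*)\rangle$. Taking $t\asymp\sigma M_k\sqrt{(s_1s_2s_3+\log K)/T_k}$ and invoking the hypothesis $T_{\min}\gtrsim(s_1s_2s_3+\log K)\max(\widetilde\kappa^2\sigma^4,\widetilde\kappa\sigma^2)$ --- which is exactly what keeps the quadratic branch of the Bernstein tail active at this level of $t$ --- the union bound leaves a residual probability of order $2\exp(-CT_{\min}\min(\widetilde\kappa^{-2}\sigma^{-4},\widetilde\kappa^{-1}\sigma^{-2}))$, giving $\|\bm P^*\nabla\mathcal L_k(\bbm\alpha_k^*)\|_2\lesssim\sigma M_k\sqrt{(s_1s_2s_3+\log K)/T_k}$ simultaneously for all $k\in[K]$.

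The main obstacle is the second step: correctly extracting the variance proxy of the dependent bilinear form so that it scales as $(\sigma M_k)^2/T_k$, with the sub‑exponential branch activating only below the stated sample‑size threshold. This requires threading the VMA$(\infty)$‑representation bounds $\mu_{\min}(\bm{\Xi}_k),\mu_{\max}(\bm{\Xi}_k),\mu_{\min}(\widetilde{\bm{\Xi}}_k)$ through the martingale structure of $\sum_t\bm x_{t,k}\bbm\varepsilon_{t,k}^\top$ --- essentially reusing the estimate already built for Claim~\ref{claim:singledeviation}, but restricted to the fixed $s_1s_2s_3$‑dimensional subspace $\mathcal M(\bm B^*)$. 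Once that per‑direction bound is in hand, the dimension reduction via $\bm P^*$ and the net/union‑bound arguments are routine.
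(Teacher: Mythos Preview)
Your proposal is correct and follows essentially the same route as the paper. The paper's proof is extremely terse: it writes $\|\bm P^*\nabla\mathcal L_k(\bbm\alpha_k^*)\|_2=\|\bm B^{*\top}\nabla\mathcal L_k(\bbm\alpha_k^*)\|_2=\sup_{\bm w\in\mathbb{S}^{s_1s_2s_3-1}}\langle\nabla\mathcal L_k(\bbm\alpha_k^*),\bm B^*\bm w\rangle$, observes that $\bm B^*\bm w$ ranges over a set of type $\bm\Theta(\epsilon,s_1s_2s_3,1)$, and then simply invokes Lemma~\ref{lemma:deviation} with $d=s_1s_2s_3$. That lemma is precisely the packaged version of your steps two and three --- the conditional sub-Gaussian martingale bound on $S_k(\bm W_k)=T_k^{-1}\sum_t\langle\bbm\varepsilon_{t,k},\bm W_k\bm x_{t,k}\rangle$ intersected with the RSS event $R_k(\bm W_k)\le L_k$ (supplied by Lemma~\ref{lemma:RSCRSS}), followed by an $\epsilon$-net on the $d$-dimensional index set and a union bound over $k\in[K]$ --- so you have effectively re-derived Lemma~\ref{lemma:deviation} inline rather than citing it.
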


    \begin{claim}[Event $\mathcal{E}_3$]\label{claim:pooldeviation} 
    If 
    $
        T_{\min} \gtrsim [ N(s_1 \vee s_2) + \log K]\max\left( \widetilde{\kappa}^{2} \sigma^{4}, \widetilde{\kappa} \sigma^{2} \right),
    $
    then with probability at least
    $
        1 - \exp\left( -C_1 d_M \right) - 2\exp( -C_2 T_{\min} \min( \widetilde{\kappa}^{-2} \sigma^{-4}, \widetilde{\kappa}^{-1} \sigma^{-2} )),
    $
    we have
      
    \begin{equation}
        \sup_{\bm W \in \bm \Omega(2s_1, 2s_2, 2s_3; N, K, p)} \sum_{k \in [K]}w_k^{1/2} \left\langle  \nabla \mathcal{L}_k (\bbm \alpha_k^*), \bm w_k \right\rangle \lesssim \sigma M_{k}\sqrt{\frac{d_M}{\widetilde{T}}}.
    \end{equation}
    \end{claim}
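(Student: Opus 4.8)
The plan is to read the left-hand side as the supremum of a pooled martingale functional indexed by the multilinear low-rank set $\bm\Omega(2s_1,2s_2,2s_3;N,K,p)$, and to bound it by a one-step chaining argument (covering plus peeling) whose per-atom tail is supplied by the same time-series concentration machinery (of Basu--Michailidis type) that underlies Claim~\ref{claim:singledeviation}. First I would make the gradient explicit: since $\nabla\mathcal{L}_k(\bbm\alpha_k^*) = T_k^{-1}\bm Z_k^{\top}(\bm y_k - \bm Z_k\bbm\alpha_k^*)$ and, at the truth, $\bm y_k-\bm Z_k\bbm\alpha_k^*$ is a vectorization of the innovation array $\bm E_k$, each summand equals $w_k^{1/2}T_k^{-1}\langle\bm y_k-\bm Z_k\bbm\alpha_k^*,\bm Z_k\bm w_k\rangle$, a sum of martingale differences in $t$ (each row of $\bm Z_k$ is $\mathcal{F}_{t-1,k}$-measurable while the matching innovation is the fresh $\bbm\varepsilon_{t,k}$), and these are independent across $k$. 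The index set has effective dimension $d_M = N(s_1+s_2)+ps_3+Ks_1s_2s_3$: the loadings $\bm U,\bm V,\bm L$ contribute $Ns_1+Ns_2+ps_3$ and each core-coordinate vector $\bm d_k\in\mathbb{R}^{s_1 s_2 s_3}$ contributes $s_1 s_2 s_3$. Invoking Definition~\ref{definition:matrix_set} with $d=d_M$, I take a constant-radius $\epsilon$-net $\overline{\bm\Omega}$ of cardinality at most $(C/\epsilon)^{d_M}$.

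Next, for a fixed net point $\overline{\bm W}=[\overline{\bm w}_1\ \cdots\ \overline{\bm w}_K]$ with $\sum_k\|\overline{\bm w}_k\|_2^2=1$, I would work on the event $\mathcal{E}_1$ of Claim~\ref{claim:singledeviation}, which gives $T_k^{-1}\bm Z_k^{\top}\bm Z_k\preceq L\bm I_{N^2 p}$ and hence $T_k^{-2}\|\bm Z_k\overline{\bm w}_k\|_2^2\le L\,T_k^{-1}\|\overline{\bm w}_k\|_2^2$. Combining this with the $\sigma^2$-sub-Gaussianity of $\bm\Sigma_{\bbm\varepsilon,k}^{-1/2}\bbm\varepsilon_{t,k}$ and independence across $k$, the pooled functional is, conditionally on the designs, a sub-Gaussian (in the larger-deviation regime, sub-exponential) martingale with variance proxy
\[
v^2 \;\lesssim\; \sigma^2 L\sum_{k\in[K]}\frac{w_k\lambda_{\max}(\bm\Sigma_{\bbm\varepsilon,k})}{T_k}\,\|\overline{\bm w}_k\|_2^2 \;\le\; \frac{\sigma^2 L}{\widetilde T},
\]
using $w_k\lambda_{\max}(\bm\Sigma_{\bbm\varepsilon,k})/T_k\le 1/\widetilde T$ and $\sum_k\|\overline{\bm w}_k\|_2^2=1$; this is exactly where $\widetilde T$ (rather than $\max_k T_k$) enters. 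A tail bound at level $t\asymp v\sqrt{d_M}$ together with a union bound over $\overline{\bm\Omega}$ then shows $\sum_k w_k^{1/2}\langle\nabla\mathcal{L}_k(\bbm\alpha_k^*),\overline{\bm w}_k\rangle\lesssim \sigma L^{1/2}\sqrt{d_M/\widetilde T}=\gamma^*\asymp\sigma M_k\sqrt{d_M/\widetilde T}$ uniformly over the net with probability at least $1-\exp(-C_1 d_M)$; the sub-exponential regime (active only when $T_{\min}$ is not large) contributes the stated $\exp(-C_2 T_{\min}\min(\widetilde{\kappa}^{-2}\sigma^{-4},\widetilde{\kappa}^{-1}\sigma^{-2}))$ term, which also absorbs $\Pr(\mathcal{E}_1^c)$ from Claim~\ref{claim:singledeviation}.

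Finally I would upgrade the net bound to the full supremum using the decomposition property (ii) of Definition~\ref{definition:matrix_set}: writing $\bm W-\overline{\bm W}=\sum_{i=1}^q\bm N_i$ with $\bm N_i/\|\bm N_i\|_\F\in\bm\Omega(2s_1,2s_2,2s_3;N,K,p)$ and $\sum_i\|\bm N_i\|_\F^2=\|\bm W-\overline{\bm W}\|_\F^2\le\epsilon^2$, linearity of the functional in $\bm W$ gives that its residual part is at most $\sqrt q\,\epsilon$ times the full supremum $\Phi$, so $\Phi\le(\text{net maximum})+\sqrt q\,\epsilon\,\Phi$; with $q$ a constant and $\epsilon<1/\sqrt q$ this self-bounding recursion yields $\Phi\lesssim(\text{net maximum})\lesssim\sigma M_k\sqrt{d_M/\widetilde T}$, and intersecting with $\mathcal{E}_1$ gives the claimed probability. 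The main obstacle I expect is twofold and intertwined: getting the pooled-sample-size scaling right, i.e. checking that independence across $k$ together with the weights $w_k$ collapses the variance proxy to $\sigma^2 L/\widetilde T$ and not $\max_k(\cdot)$; and simultaneously controlling the VAR-induced dependence between $\bm Z_k$ and the innovations, which rules out a naive ``condition on the design'' Gaussian argument and forces the tail through the martingale / Hanson--Wright concentration of \cite{basu2015regularized}, while also requiring that the operator-norm bound $T_k^{-1}\bm Z_k^{\top}\bm Z_k\preceq L\bm I$ hold uniformly over the (already high-probability) event $\mathcal{E}_1$.
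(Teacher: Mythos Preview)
Your proposal is essentially the paper's own argument: both reduce the supremum to a maximum over an $\epsilon$-net of the multilinear low-rank set (with covering number $(C/\epsilon)^{d_M}$), bound the pooled functional at each net point by a Chernoff/martingale argument exploiting independence across $k$ and the tower rule over $t$, collapse the variance proxy to $\sigma^2 L/\widetilde T$ via the RSS event, and close with the self-bounding recursion from Definition~\ref{definition:matrix_set}(ii).

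One refinement worth noting: you condition on the full event $\mathcal{E}_1$ of Claim~\ref{claim:singledeviation}, which gives the operator-norm bound $T_k^{-1}\bm Z_k^\top\bm Z_k\preceq L\bm I_{N^2p}$ over \emph{all} directions and therefore requires $T_{\min}\gtrsim (Np+\log K)\max(\widetilde\kappa^2\sigma^4,\widetilde\kappa\sigma^2)$. The paper instead invokes Lemma~\ref{lemma:RSCRSS} only over the low-rank set $\bm\Theta(\epsilon,d,8)$ with $d=N(s_1+s_2)+ps_3+s_1s_2s_3$, i.e.\ it needs $R_k(\cm W_k)\le L_k$ merely for low-rank $\cm W_k$, which is what yields the sharper sample-size condition $T_{\min}\gtrsim[N(s_1\vee s_2)+\log K]\max(\widetilde\kappa^2\sigma^4,\widetilde\kappa\sigma^2)$ stated in the claim. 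Your route still proves the inequality, but under a stronger hypothesis than the claim asserts; to match the statement exactly, replace the appeal to $\mathcal{E}_1$ by the restricted RSS bound over the multilinear low-rank class. Relatedly, the second probability term $2\exp(-C_2 T_{\min}\min(\widetilde\kappa^{-2}\sigma^{-4},\widetilde\kappa^{-1}\sigma^{-2}))$ arises entirely from this RSS event (via Hanson--Wright in Lemma~\ref{lemma:RSCRSS}), not from a sub-exponential tail of the deviation functional itself, which in the paper's argument is handled purely by the sub-Gaussian martingale bound.
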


    \begin{proof}[\bf Proof of Claim \ref{claim:singledeviation}]
            First, we will show that for all $k\in  [K]$, it follows that $$\rho  \bm I_{N^2p} \preceq  \nabla^2 \mathcal{L}_k(\bbm \alpha_k^*)   \preceq L \bm I_{N^2p}.$$
            Note that $T_k\nabla^2 \mathcal{L}_k(\bbm \alpha_k^*)  = \bm{Z}_k^\top \bm{Z}_k =  \bm{I}_N\otimes (\bm{X}_k \bm{X}_k^\top) $. It follows that 
        \begin{align*}
            \lambda_{\min}(\bm{X}_k \bm{X}_k^\top) = \lambda_{\min}(\bm{I}_N\otimes (\bm{X}_k \bm{X}_k^\top)) \le \lambda_{\max}(\bm{I}_N\otimes (\bm{X}_k \bm{X}_k^\top)) = \lambda_{\max}(\bm{X}_k \bm{X}_k^\top).
        \end{align*}
        Therefore, it suffices to show that, for all $k \in [K]$,
        \begin{align*}
            \rho \le T_k^{-1}\lambda_{\min}(\bm{X}_k \bm{X}_k^\top) \le T_k^{-1}\lambda_{\max}(\bm{X}_k \bm{X}_k^\top) \le L.
        \end{align*}
        For any $\bm{w}_k \in \mathbb{R}^{Np}$ with $\|\bm{w}_k\|_2 = 1$, we have
        \begin{align*}
            &\lambda_{\min}(\bm{X}_k \bm{X}_k^\top) 
            = \inf_{\bm{w} \in \mathbb{R}^{Np}, \|\bm{w}\|_2 = 1} \bm{w}^\top \bm{X}_k \bm{X}_k^\top \bm{w} 
            \le& \bm{w}_k^\top \bm{X}_k \bm{X}_k^\top \bm{w}_k 
           \le \sup_{\bm{w} \in \mathbb{R}^{Np}, \|\bm{w}\|_2 = 1} \bm{w}^\top \bm{X}_k \bm{X}_k^\top \bm{w} 
            = \lambda_{\max}(\bm{X}_k \bm{X}_k^\top).
        \end{align*}
        In addition, for any $\bm w_k$, there exists $\overline{\bm w}_k \in \overline{\bm \Theta}$ such that
            \begin{itemize}
                \item $\|\bm{w}_k - \overline{\bm{w}}_k\|_\F \leq \epsilon$, and the covering number of $\overline{\bm \Theta}$ is at most $(3/\epsilon)^{Np}$, by Lemma \ref{lemma:covering};
                \item $\bm{w}_k - \overline{\bm{w}}_k =  \bm{n}_k,$
                where $\bm{n}_k/\|\bm{n}_k\|_\F \in \bm \Theta(\epsilon, Np, 1)$ and
                $\|\bm{w}_k - \overline{\bm{w}}_k\|_\F^2 =  \|\bm{n}_k\|_\F^2$.
            \end{itemize}
        Thus, $\bm w_k\in \bm \Theta(\epsilon,Np,1)$. 
        Combining this with Lemma \ref{lemma:RSCRSS},  it follows that when $T_{\min} \gtrsim (Np+\log (K))\max ( \widetilde{\kappa}^{2}\sigma^{4}, \widetilde{\kappa} \sigma^{2} )$, then with probability at least $1-2\exp(-C T_{\min}
        \min ( \widetilde{\kappa}^{-2}\sigma^{-4},  \widetilde{\kappa}^{-1} \sigma^{-2}))$, for all $k\in [K]$, we have
        \begin{equation}\label{eq:singleRSCRSS-2}
           \frac{\rho}{2}  \le  T_k^{-1}\sum_{t=1}^{T_k}\|\bm w_{k}^\top  \bm{x}_{t,k}\|_2^2  \le \frac{L}{2}  .
        \end{equation}
        
        Second, let $S_k(\bm w_k) = T_k^{-1}\sum_{t=1}^{T_k}\langle\bbm \varepsilon_{t,k} \bm x_{t,k}^\top, \bm w_k \rangle $ and $R_k(\bm w_k) = T_k^{-1}\sum_{t=1}^{T_k}\|\bm w_k^\top \bm x_{t,k}\|_2^2$. Separate $ \mathcal{L}_k(\bbm \alpha_k^*)$ into $N$ parts such that $\mathcal{L}_k(\bbm \alpha_k^*) = \sum_{i=1}^N \mathcal{L}_{i,k}(\bbm \alpha_{i,k}^*) = T_k^{-1} \sum_{i=1}^N\sum_{t=1}^{T_k}  \varepsilon_{it,k}^2$, where $\mathcal{L}_{i,k}(\bbm \alpha_{i,k}^*) = T_k^{-1}\sum_{t=1}^{T_k}  \varepsilon_{it,k}^2 $ with $\varepsilon_{it,k} =  y_{it,k} - (\bm I\otimes \bm x_t^\top )\bbm \alpha_{i,k}$, and $\bbm \alpha_{i,k}$ is the $i$-th row of $\cm A_{k(1)}$. Then by Lemma \ref{lemma:deviation} and union bound, we have
        \begin{equation}
            \begin{split}
                &\mathbb{P}\left[ \bigcup_{k\in [K]} \left\{ \{\|  \nabla \mathcal{L}_k(\bbm \alpha_k^*)\|_\F \geq t_k\} \cap\{R_k(\bm w_k)\leq  L_k \}\right\}\right] \\
               \leq & \mathbb{P}\left[ \bigcup_{k\in [K]} \left\{ \{\max_{1\le i\le N}\|  \nabla \mathcal{L}_{i,k}(\bbm \alpha_{k,i}^*)\|_\F^2 \geq t_k^2/N \} \cap\{R_k(\bm w_k)\leq  L_k \} \right\}\right] \\
               \leq & \sum_{k\in [K]}\mathbb{P}\left[\{\sup_{\overline{\bm w}_k\in\overline{\bm \Theta}} \max_{1\le i\le N}\left\langle  \nabla \mathcal{L}_{i,k}(\bbm \alpha_{k,i}^*),\bm w_k\right\rangle\geq t_k/\sqrt{N}\} \cap\{R_k(\bm w_k)\leq  L_k \} \right]\\
                \leq & \sum_{k\in [K]}\mathbb{P}\left[\{\max_{\overline{\bm w}_k\in\overline{\bm \Theta}} \max_{1\le i\le N}\left\langle  \nabla \mathcal{L}_{i,k}(\bbm \alpha_{k,i}^*),\bm w_k\right\rangle\geq (1-\epsilon)t_k/\sqrt{N}\} \cap\{R_k(\bm w_k)\leq  L_k \} \right]\\
                    \leq & \sum_{k\in  [K]} |\overline{\bm \Theta}|\cdot \Bigg\{\mathbb{P}[\{ \max_{1\le i\le N}S_k(\bm w_k)\geq (1-\epsilon)t_k/\sqrt{N}\}\cap\{R_k(\bm w_k)\leq L_k \}]\Bigg\} \\
                    \leq & KN|\overline{\bm \Theta}|\cdot\exp\left[-\frac{(1-\epsilon)^2 t_k^2 T_k}{\sigma^2N \lambda_{\max}(\bm \Sigma_{\bbm{\varepsilon},k})L}\right].
            \end{split}
        \end{equation}
This together with \eqref{eq:singleRSCRSS-2} and the law of total probability, implies that
            \begin{equation}
                \begin{split}
                &\mathbb{P}\left[ \bigcup_{k\in [K]} \left\{ \|  \nabla \mathcal{L}_k(\bbm \alpha_k^*)\|_\F \geq t_k\right\}\right]\\
                 \le & \mathbb{P}\left[ \bigcup_{k\in [K]} \left\{ \{\max_{1\le i\le N}\|  \nabla \mathcal{L}_{i,k}(\bbm \alpha_{k,i}^*)\|_\F^2 \geq  t_k^2/N \} \cap\{R_k(\bm w_k)\leq  L_k \} \right\}\right]
                  +\mathbb{P}\Bigg[ \bigcup_{k\in [K]} \Bigg\{ R_k(\bm w_k) >  L_k\Bigg\} \Bigg]\\
                    \leq & \sum_{k\in  [K]}  |\overline{\bm \Theta}|\cdot\Bigg\{\sum_{i\in [N]}\exp\left[-\frac{(1-\epsilon)^2 T_k t_k^2}{\sigma^2N \lambda_{\max}(\bm \Sigma_{\bbm{\varepsilon},k})L}\right] \Bigg\}
                    + 2\exp\left(-C  T_{\min} \min (  \widetilde{\kappa}^{-2}\sigma^{-4},  \widetilde{\kappa}^{-1} \sigma^{-2})\right).
                \end{split}
            \end{equation}
       Take $\epsilon=2^{-1}$ and $t_k=C\sigma M_{2} \sqrt{[N^2p + N\log (NK) ]/T_k}$. When $T_{\min} \gtrsim [Np+\log (K)]\max ( \widetilde{\kappa}^{2}\sigma^{4}, \widetilde{\kappa} \sigma^{2} ) $, we have
            \begin{align*}
                &\mathbb{P}\left[ \bigcup_{k\in [K]} \left\{\| \nabla \mathcal{L}_k(\bbm \alpha_k^*)\|_\F \gtrsim \sigma M_{2}\sqrt{\frac{N^2p+N\log (NK)}{T_k}} \right\}\right]\\
                \leq & \exp(-C_1 [Np+\log (NK)])+2\exp(-C_2  T_{\min}  \min (  \widetilde{\kappa}^{-2}\sigma^{-4},  \widetilde{\kappa}^{-1} \sigma^{-2})).
            \end{align*}
           This completes the proof.

    \end{proof}
    
   \begin{proof}[\bf Proof of Claim \ref{claim:deviationMTL}]
        For any $k \in[K]$ and fixed $\bm P^* = \bm B^* \bm B^{*\top}$ with $\bm B^{*\top} \bm B^* = \bm I_{s_1s_2s_3},$ by dual property of $\ell_2$-norm, it follows that 
        \begin{align*}
            \left\| \bm P^* \nabla \mathcal{L}_k(\bbm \alpha_k^*) \right\|_2 =\left\| \bm B^{*\top} \nabla \mathcal{L}_k(\bbm \alpha_k^*) \right\|_2 = \sup_{\bm w \in \mathbb{R}^{s_1s_2 s_3}} \langle   \nabla \mathcal{L}_k(\bbm \alpha_k^*), \bm B^{*}\bm w\rangle.
        \end{align*}
        Together with $\bm w \in \bm \Theta(\epsilon, s_1s_2s_3, 1)$, the proof is complete by Lemma \ref{lemma:deviation}.
   \end{proof}
    
    \begin{proof}[\bf Proof of Claim \ref{claim:pooldeviation}]

    This claim follows directly from Lemma~\ref{lemma:uniondeviation}. Specifically, note that     	
     \begin{align*}
            \sup_{\bm W \in \bm \Omega(2s_1, 2s_2, 2s_3; N, K, p)}  \sum_{k \in [K]} w_k^{1/2} \left\langle \nabla \mathcal{L}_k (\bbm \alpha_k^*), \bm w_k  \right\rangle
            = \sup_{\cm W\in\bm \Theta(\epsilon, d_M, 8)}\sum_{k\in [K]} w_k^{1/2} \left\langle  \nabla \mathcal{L}_k(\cm{A}_k^*),\cm W_k\right\rangle,
        \end{align*}
        where $d_M = N(s_1+s_2)+ps_3 + Ks_1s_2s_3$, $\cm{W} = [ \cm{W}_1: \cdots : \cm{W}_K]$ with each $\cm{W}_k$ of Tucker ranks $(2s_1, 2s_2, 2s_3)$ and $\sum_{k \in [K]} \| \cm{W}_k \|_{\textup{F}}^2 = 1$. The desired result thus follows immediately from Lemma~\ref{lemma:uniondeviation}.    \end{proof}
\end{proof}

\begin{proof}[\bf Proof of Proposition 
    \ref{prop:determTL}]

By Claim \ref{claim-lowrank-crude} and \eqref{eq:representation-10}, it holds that $\mathcal{L}_0 (\widehat{\bm B} \widehat{\bm d}_0 ) =
\widetilde{L}_0 (\widehat{\bm B} \widehat{\bm d}_0 ) \leq \mathcal{L}_0 ( \widehat{\bm P} \bm B^* \bm d^*_0 ) $.
By the strong smoothness and convexity of $\mathcal{L}_0$ near $\bm B^* \bm d^*_0$, we have
\begin{align*}
& \mathcal{L}_0 ( \widehat{\bm P} \bm B^* \bm d^*_0 )  \leq \mathcal{L}_0 (  \bm B^* \bm d^*_0 ) 
+ \langle \nabla \mathcal{L}_0 (  \bm B^* \bm d^*_0 )  , (\widehat{\bm P} - \bm I) \bm B^* \bm d^*_0  \rangle + \frac{L}{2} \| (\widehat{\bm P} - \bm I) \bm B^* \bm d^*_0   \|_2^2 ,\\
& \mathcal{L}_0 (\widehat{\bm B} \widehat{\bm d}_0 ) \geq \mathcal{L}_0 (  \bm B^* \bm d^*_0 ) 
+ \langle \nabla \mathcal{L}_0 (  \bm B^* \bm d^*_0 )  ,\widehat{\bm B} \widehat{\bm d}_0 - \bm B^* \bm d^*_0  \rangle + \frac{\rho}{2} \| \widehat{\bm B} \widehat{\bm d}_0  - \bm B^* \bm d^*_0   \|_2^2.
\end{align*}
Combining the above inequalities, we get
\begin{equation}\label{eqn-lowrank-sharp-1}
	\begin{split}
&\frac{\rho}{2} \| \widehat{\bm B} \widehat{\bm d}_0  - \bm B^* \bm d^*_0   \|_2^2 +
\langle \nabla \mathcal{L}_0 (  \bm B^* \bm d^*_0 )  , \widehat{\bm B} \widehat{\bm d}_0 - \widehat{\bm P} \bm B^* \bm d^*_0  \rangle\\
\leq&  \frac{L}{2} \| (\widehat{\bm P} - \bm I) \bm B^* \bm d^*_0   \|_2^2 
=  \frac{L}{2} \| (\widehat{\bm P} - \bm P^*) \bm B^* \bm d^*_0   \|_2^2\\
\le &  \frac{L}{2} \Big( \| \widehat{\bm P} - \bm P^* \|_2  \| \bm B^* \|_2 \| \bm d^*_0   \|_2 \Big)^2  \\
\leq& \frac{L}{2} \bigg(
\frac{6\alpha_1 s_{\max}^{1/2}\gamma}{\rho \alpha_2}
\bigg)^2,
	\end{split}
\end{equation}
where the last inequality follows from Claim \ref{claim-lowrank-subspace-sharp}, $\|\bm B^*\|_2 \le 1$ and $\| \bm d^*_0 \|_2 \leq \alpha_1$.

Since $\widehat{\bm P} \widehat{\bm B} \widehat{\bm d}_0  = \widehat{\bm B} \widehat{\bm d}_0 $, it holds that
$$
\widehat{\bm B} \widehat{\bm d}_0 - \widehat{\bm P} \bm B^* \bm d^*_0
= \widehat{\bm P} ( \widehat{\bm B} \widehat{\bm d}_0 - \bm B^* \bm d^*_0 )
=  \bm P^* ( \widehat{\bm B} \widehat{\bm d}_0 - \bm B^* \bm d^*_0 ) + (\widehat{\bm P}  - \bm P^*) ( \widehat{\bm B} \widehat{\bm d}_0 - \bm B^* \bm d^*_0 ).
$$
Recall that $\| \bm P^*  \nabla \mathcal{L}_0 (  \bm B^* \bm d^*_0 )  \|_2 \le \tau_0$ and $\| \nabla \mathcal{L}_0 (  \bm B^* \bm d^*_0 )  \|_2 \le \eta_0$. Then we have
\begin{align*}
| \langle \nabla \mathcal{L}_0 (  \bm B^* \bm d^*_0 )  , \bm P^* ( \widehat{\bm B} \widehat{\bm d}_0 - \bm B^* \bm d^*_0 ) \rangle |
& \leq \| \bm P^*  \nabla \mathcal{L}_0 (  \bm B^* \bm d^*_0 )  \|_2 \|  \widehat{\bm B} \widehat{\bm d}_0 - \bm B^* \bm d^*_0 \|_2  \leq  \tau_0 \|  \widehat{\bm B} \widehat{\bm d}_0 - \bm B^* \bm d^*_0 \|_2,
\end{align*}
and by Claim \ref{claim-lowrank-subspace-sharp}, we have
\begin{align*}
& | \langle \nabla \mathcal{L}_0 (  \bm B^* \bm d^*_0 )  , (\bm P^* - \widehat{\bm P} ) ( \widehat{\bm B} \widehat{\bm d}_0 - \bm B^* \bm d^*_0 ) \rangle | \\
& \leq \| \nabla \mathcal{L}_0 (  \bm B^* \bm d^*_0 )\|_\F \|  \widehat{\bm P} - \bm P^* \|_2 \| \widehat{\bm B} \widehat{\bm d}_0 - \bm B^* \bm d^*_0 \|_2 \\ 
 &\leq \eta_0 \frac{6s_{\max}^{1/2}\gamma}{\rho \alpha_2}  \| \widehat{\bm B} \widehat{\bm d}_0 - \bm B^* \bm d^*_0 \|_2.
\end{align*}
Combining the above two inequalities, we have
$$
 | \langle \nabla \mathcal{L}_0 (  \bm B^* \bm d^*_0 )  ,   \widehat{\bm B} \widehat{\bm d}_0 - \bm B^* \bm d^*_0  \rangle | \leq \bigg(
\tau_0 +  \eta_0 \frac{6s_{\max}^{1/2}\gamma}{\rho \alpha_2} 
\bigg)  \| \widehat{\bm B} \widehat{\bm d}_0 - \bm B^* \bm d^*_0 \|_2, 
$$
and thus,
\begin{align}
& \frac{\rho}{2} \| \widehat{\bm B} \widehat{\bm d}_0  - \bm B^* \bm d^*_0   \|_2^2 +
\langle \nabla \mathcal{L}_0 (  \bm B^* \bm d^*_0 )  , \widehat{\bm B} \widehat{\bm d}_0 - \widehat{\bm P} \bm B^* \bm d^*_0  \rangle  \notag\\
& \geq \frac{\rho}{2} \| \widehat{\bm B} \widehat{\bm d}_0  - \bm B^* \bm d^*_0   \|_2 \bigg[
\| \widehat{\bm B} \widehat{\bm d}_0  - \bm B^* \bm d^*_0   \|_2 - \frac{2}{\rho} \bigg(
\tau_0 +  \eta_0 \frac{6 s_{\max}^{1/2}\gamma}{\rho \alpha_2} 
\bigg)
\bigg].
\label{eqn-lowrank-sharp-2}
\end{align}

From \eqref{eqn-lowrank-sharp-1} and \eqref{eqn-lowrank-sharp-2}, it holds that
\begin{align*}
 \| \widehat{\bm B} \widehat{\bm d}_0  - \bm B^* \bm d^*_0   \|_2 \bigg[
\| \widehat{\bm B} \widehat{\bm d}_0  - \bm B^* \bm d^*_0   \|_2 - \frac{2}{\rho}\bigg(
\tau_0 +  \eta_0 \frac{6 s_{\max}^{1/2}\gamma}{\rho \alpha_2} 
\bigg)
\bigg]
\leq \kappa \bigg(
\frac{6\alpha_1 s_{\max}^{1/2}\gamma}{\rho \alpha_2} 
\bigg)^2.
\end{align*}
Therefore,
\begin{align*}
\| \widehat{\bm B} \widehat{\bm d}_0  - \bm B^* \bm d^*_0   \|_2 &  \lesssim 
\frac{\tau_0}{\rho} + \frac{s_{\max}^{1/2}\gamma}{\rho \alpha_2} \left(\frac{\eta_0}{\rho}+ \alpha_1\kappa^{1/2}\right).
\end{align*}

    From the conditions of Lemma \ref{lemma:deterministic2}, we obtain
    \begin{align}
    2 C  s_{\max} \kappa^{3/2} \mu^2\eta_k^* < \lambda_k.
    \label{eqn-armul-deterministic-lowrank-lambda}
    \end{align}
    Note that $\nabla \mathcal{L}_k(\bbm \alpha_k) = T_k^{-1} \bm Z_k^{\top} (\bm y_k - \bm Z_k \bbm \alpha_k)$ and $\nabla^2 \mathcal{L}_k(\bbm \alpha_k) = T_k^{-1} \bm Z_k^{\top} \bm Z_k \preceq L \bm I_{N^2p} $. 
    For any $ k $, we have
    $$
     \| \nabla \mathcal{L}_k (\bbm \alpha_k^*) - \nabla \mathcal{L}_k (\bm B^* \bm d_k^*) \|_2 =  \| T_k^{-1} \bm Z_k^\top \bm Z_k (\bbm \alpha_k^* - \bm B^* \bm d_k^*) \|_2 
    \leq \| T_k^{-1} \bm Z_k^\top \bm Z_k \|_2  \| \bbm \alpha_k^* - \bm B^* \bm d_k^* \|_2 \leq L h.
    $$
    By the triangle inequality and the condition for $ h $ in Lemma \ref{lemma:deterministic2}, we obtain
    \begin{equation}\label{eq:turnLh}
    \begin{split}
    	 \| \nabla \mathcal{L}_k (\bm B^* \bm d_k^*) \|_2 
    &\leq  \| \nabla \mathcal{L}_k (\bbm \alpha_k^*) \|_2 +  \| \nabla \mathcal{L}_k (\bm B^* \bm d_k^*) - \nabla \mathcal{L}_k (\bbm \alpha_k^*) \|_2 \\
    &\leq \eta_k^* +  L h \\
    &< \frac{(2C + 1)\eta_k^*}{2 C} + \frac{\lambda_k}{4 C^2 s_{\max} \kappa^{3/2} \mu^2 }.
    \end{split}
    \end{equation}
    Consequently,
    \begin{align*}
    &C s_{\max} \kappa^{3/2} \mu^2   \| \nabla \mathcal{L}_k (\bm B^* \bm d_k^*) \|_2
    <(C + 1/2)s_{\max} \kappa^{3/2} \mu^2  \eta_k^* + \frac{\lambda_k}{4 C} 
    \overset{\mathrm{(i)}}{<} \lambda_k,
    \end{align*}
    where the inequality $ \mathrm{(i)} $ follows from \eqref{eqn-armul-deterministic-lowrank-lambda} as well as $ C \geq 1 $. This together with Lemma \ref{lemma:deterministic1} ensures that $ \widehat{\bbm \alpha}_0 = \widehat{\bm B} \widehat{\bm d}_0 $ and
    \begin{align*}
    \| \widehat{\bm B} \widehat{\bm d}_0 - \bm B^* \bm d_0^* \|_2 & \lesssim 
    \frac{\tau_0}{\rho} + \frac{\gamma s_{\max}^{1/2}}{\rho \alpha_2} \left( \frac{\eta_0}{\rho} + \alpha_1 \kappa^{1/2} \right),
    \end{align*}
    where $ \| \bm P^* \nabla \mathcal{L}_0 (\bm B^* \bm d_0^*) \|_2 \le \tau_0  $,
    $$
    \sup_{\bm w \in \bm \Omega(2s_1, 2s_2,2s_3; N, 1,p)} \langle  \mathcal{L}_0 (\bm B^* \bm d_k^*), \bm w \rangle\le \tau
    \quad \text{and} \quad 
     \sup_{\bm W \in \bm \Omega(2s_1, 2s_2,2s_3 ;N, K, p)} \sum_{k\in [K]} \left\langle w_k^{1/2} \nabla \mathcal{L}_k (\bm B^* \bm d_k^*), \bm w_k \right\rangle \le \gamma.
    $$
    By the triangle inequality, we have
    $$
    \| \widehat{\bbm \alpha}_0 - \bbm \alpha_0^* \|_2 \leq \| \widehat{\bbm \alpha}_0 - \bm B^* \bm d_0^* \|_2 + \| \bm B^* \bm d_0^* - \bbm \alpha_0^* \|_2 \leq \| \widehat{\bm B} \widehat{\bm d}_0 - \bm B^* \bm d_0^* \|_2 + h_{\mathrm{T}}.    $$
    Similar to \eqref{eq:turnLh}, we can show that $ \tau_0 \leq \tau_0^* + Lh_{\mathrm{T}} $, $ \eta_0 \leq \eta_0^* + Lh_{\mathrm{T}} $ and $\gamma \leq \gamma^* + L h_{\mathrm{S}}.$
    Based on the above inequalities, we conclude that
    \begin{align*}
    \| \widehat{\bbm \alpha}_0 - \bbm \alpha_0^* \|_2 & \lesssim 
    \frac{\tau_0}{\rho} + \frac{s_{\max}^{1/2} \gamma}{\rho \alpha_2} \left( \frac{\eta_0}{\rho} + \alpha_1 \kappa^{1/2} \right) + h_{\mathrm{T}} \\
    &\le \frac{\tau_0^* + Lh_{\mathrm{T}}}{\rho} + \frac{s_{\max}^{1/2} (\gamma^* + Lh_{\mathrm{S}})}{\rho \alpha_2} \left( \frac{\eta_0^* + Lh}{\rho} + \alpha_1 \kappa^{1/2} \right)+h_{\mathrm{T}} \\
    & \lesssim \frac{\tau_0^*}{\rho} + \kappa h_{\mathrm{T}} + \frac{s_{\max}^{1/2}}{\alpha_2} \left( \frac{\gamma^*}{\rho} + \kappa h_{\mathrm{S}} \right) \left( \frac{\eta_0^*}{\rho} + \kappa h_{\mathrm{T}} + \sqrt{\kappa} \alpha_1 \right),
    \end{align*}
   	which completes the proof.

\end{proof}

\begin{proof}[\bf Proof of Proposition 
    \ref{prop:probeventsTL}]
       By Assumptions \ref{asmp:1}--\ref{asmp:2} and the conditions of Proposition \ref{prop:probeventsTL}, the proof of Proposition \ref{prop:probeventsTL} is done by showing the following Claims \ref{claim:deviationRSSTL}--\ref{claim:deviationTL}.

       \begin{claim}[Event $\mathcal{E}_4$]
\label{claim:deviationRSSTL} 
         If $T_{0}  \gtrsim  Np\max ( \widetilde{\kappa}^{2}\sigma^{4}, \widetilde{\kappa} \sigma^{2} )$, then we have 
         \begin{equation}
           \rho \bm I_{N^2p} \preceq  \nabla^2 \mathcal{L}_0(\bbm \alpha_k^*)   \preceq L \bm I_{N^2p} 
         \quad \text{and} \quad
            \left\| \nabla \mathcal{L}_0(\bbm \alpha_0^*) \right\|_2 \lesssim\sigma M_{k}\sqrt{\frac{N^2p+N\log (N )}{T_0}}
        \end{equation}
        with probability at least
        $
        1 - \exp\left(-C_1 [Np+\log (N) ]\right) - 2\exp(-C_2 T_{0} \min\left( \kappa_0^{-2} \sigma^{-4}, \kappa_0^{-1} \sigma^{-2} \right))
        $.
    \end{claim}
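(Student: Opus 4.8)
The plan is to run exactly the argument used in the proof of Claim~\ref{claim:singledeviation}, specialized to the single target task $k=0$; because there is now only one task, no union bound over $K$ is taken, which is why the sample-size requirement loses its $\log K$ term and the logarithmic factor in the gradient bound drops from $\log(NK)$ to $\log N$. I would split the statement into the two-sided Hessian bound and the gradient-norm bound, and control each on a high-probability event.

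For the Hessian, note that $T_0\nabla^2\mathcal{L}_0(\bbm\alpha_0^*)=\bm Z_0^\top\bm Z_0=\bm I_N\otimes(\bm X_0\bm X_0^\top)$, so the spectrum of $\nabla^2\mathcal{L}_0(\bbm\alpha_0^*)$ coincides with that of $T_0^{-1}\bm X_0\bm X_0^\top$, and it suffices to establish $\rho\le T_0^{-1}\lambda_{\min}(\bm X_0\bm X_0^\top)\le T_0^{-1}\lambda_{\max}(\bm X_0\bm X_0^\top)\le L$. Using the stacked VAR$(1)$/VMA$(\infty)$ representation of $\{\bm x_{t,0}\}$ from Section~\ref{sec:VMAinf} (which encodes the temporal and cross-sectional dependence through $\mu_{\min}(\widetilde{\bm\Xi}_0)$ and $\mu_{\max}(\bm\Xi_0)$), together with the restricted strong convexity/smoothness estimate of Lemma~\ref{lemma:RSCRSS} and the covering number $(3/\epsilon)^{Np}$ for the unit sphere of $\mathbb{R}^{Np}$ from Lemma~\ref{lemma:covering}, the condition $T_0\gtrsim Np\max(\widetilde\kappa^2\sigma^4,\widetilde\kappa\sigma^2)$ gives, with probability at least $1-2\exp(-C_2 T_0\min(\kappa_0^{-2}\sigma^{-4},\kappa_0^{-1}\sigma^{-2}))$, that $\tfrac{\rho}{2}\le T_0^{-1}\sum_{t=1}^{T_0}\|\bm w^\top\bm x_{t,0}\|_2^2\le\tfrac{L}{2}$ uniformly over unit vectors $\bm w\in\mathbb{R}^{Np}$, which yields the desired two-sided bound.

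For the gradient, I would decompose $\mathcal{L}_0(\bbm\alpha_0^*)=\sum_{i=1}^N\mathcal{L}_{i,0}(\bbm\alpha_{i,0}^*)$ into coordinate-wise pieces so that $\|\nabla\mathcal{L}_0(\bbm\alpha_0^*)\|_\F^2=\sum_{i=1}^N\|\nabla\mathcal{L}_{i,0}(\bbm\alpha_{i,0}^*)\|_\F^2\le N\max_{1\le i\le N}\|\nabla\mathcal{L}_{i,0}(\bbm\alpha_{i,0}^*)\|_\F^2$. On the event $\{R_0(\bm w)\le L_0\}$ with $R_0(\bm w)=T_0^{-1}\sum_t\|\bm w^\top\bm x_{t,0}\|_2^2$, the linear functional $S_0(\bm w)=T_0^{-1}\sum_t\langle\bbm\varepsilon_{t,0}\bm x_{t,0}^\top,\bm w\rangle$ satisfies the sub-exponential deviation inequality of Lemma~\ref{lemma:deviation}; combining this with an $\epsilon$-net over the unit sphere of $\mathbb{R}^{Np}$, a union bound over $i\in[N]$, and the law of total probability to discharge the conditioning event $\{R_0(\bm w)\le L_0\}$ via the Hessian bound of the previous step, and finally taking $\epsilon=1/2$ and threshold $t_0=C\sigma M_0\sqrt{[N^2p+N\log N]/T_0}$, one obtains
\[
\mathbb{P}\!\left[\|\nabla\mathcal{L}_0(\bbm\alpha_0^*)\|_\F\gtrsim\sigma M_0\sqrt{\frac{N^2p+N\log N}{T_0}}\right]\le\exp(-C_1[Np+\log N])+2\exp\!\big(-C_2 T_0\min(\kappa_0^{-2}\sigma^{-4},\kappa_0^{-1}\sigma^{-2})\big).
\]
A union bound over this event and the Hessian event, with the constants absorbed into the generic $C_1,C_2$, gives the claimed probability.

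I expect essentially no new difficulty here: the dependence-handling ingredients (Lemmas~\ref{lemma:RSCRSS}, \ref{lemma:deviation}, \ref{lemma:covering}, and the spectral-density control of Section~\ref{sec:VMAinf}) are exactly those already deployed for the source tasks in Claim~\ref{claim:singledeviation}. The only point requiring care is the bookkeeping of how dropping the union over $K$ removes the $\log K$ inside the sample-size condition and turns $\log(NK)$ into $\log N$ in the gradient rate, and checking that the remaining constant factors are consistently absorbed.
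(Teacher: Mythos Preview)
Your proposal is correct and follows exactly the paper's approach: the paper states Claim~\ref{claim:deviationRSSTL} and implicitly treats its proof as the single-task specialization of the argument for Claim~\ref{claim:singledeviation}, so your plan to rerun that argument with $K=1$ (dropping the union over $K$ and hence the $\log K$ term) matches what the paper does. The only thing to tidy is the bookkeeping you already flag, namely that the constants and the $\log(NK)\to\log N$ replacement line up.
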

    
     \begin{claim}[Event $\mathcal{E}_5$]\label{claim:deviationTL} 
         If $T_0 \gtrsim s_1s_2s_3\max ( \widetilde{\kappa}^{2}\sigma^{4}, \widetilde{\kappa} \sigma^{2} )$, then with probability at least
        $
        1  - 2\exp\left(-C T_0 \min\left( \widetilde{\kappa}^{-2} \sigma^{-4}, \widetilde{\kappa}^{-1} \sigma^{-2} \right)\right),
        $ we have
         \begin{equation}
            \left\| \bm P^* \nabla \mathcal{L}_0(\bbm \alpha_0^*) \right\|_2 \lesssim\sigma M_{0}\sqrt{\frac{s_1s_2s_3}{T_0}}.
        \end{equation}

    \end{claim}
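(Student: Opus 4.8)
The plan is to follow exactly the route used for the source tasks in Claim~\ref{claim:deviationMTL}, specialised to $k=0$ and with the union bound over $k\in[K]$ removed (which is why the rate loses the $\log K$ term). First I would pass to a variational form. Since $\bm P^* = \bm B^*\bm B^{*\top}$ with $\bm B^{*\top}\bm B^* = \bm I_{s_1 s_2 s_3}$, the dual characterisation of the $\ell_2$-norm gives
$$\|\bm P^*\nabla\mathcal{L}_0(\bbm\alpha_0^*)\|_2 = \|\bm B^{*\top}\nabla\mathcal{L}_0(\bbm\alpha_0^*)\|_2 = \sup_{\bm w\in\mathbb{R}^{s_1 s_2 s_3}:\,\|\bm w\|_2=1}\big\langle\nabla\mathcal{L}_0(\bbm\alpha_0^*),\,\bm B^*\bm w\big\rangle .$$
Because $\|\bm B^*\bm w\|_2 = 1$ and $\bm B^*\bm w$ ranges over the unit sphere of the \emph{fixed} $s_1 s_2 s_3$-dimensional subspace $\mathcal{M}(\bm B^*)$, each $\bm B^*\bm w$ lies in $\bm \Theta(\epsilon, s_1 s_2 s_3, 1)$ in the sense of Definition~\ref{definition:matrix_set}, whose $\epsilon$-covering $\overline{\bm \Theta}$ has cardinality at most $(3/\epsilon)^{s_1 s_2 s_3}$ by Lemma~\ref{lemma:covering}, and the residual $\bm B^*\bm w - \overline{\bm w}$ again lies (after normalisation) in $\bm \Theta(\epsilon, s_1 s_2 s_3, 1)$, so that $\sup \le (1-\epsilon)^{-1}\max_{\overline{\bm w}\in\overline{\bm \Theta}}\langle\nabla\mathcal{L}_0(\bbm\alpha_0^*),\overline{\bm w}\rangle$.

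Next I would control the per-direction deviation. Writing $\nabla\mathcal{L}_0(\bbm\alpha_0^*)$ in its mode-1 matricised bilinear form $T_0^{-1}\sum_{t=1}^{T_0}\bbm\varepsilon_{t,0}\bm x_{t,0}^\top$ and intersecting with the restricted-smoothness event $\{R_0(\bm w):=T_0^{-1}\sum_{t=1}^{T_0}\|\bm w^\top\bm x_{t,0}\|_2^2\le L_0\}$, Lemma~\ref{lemma:deviation} yields, for each fixed $\overline{\bm w}\in\overline{\bm \Theta}$,
$$\mathbb{P}\!\left[\langle\nabla\mathcal{L}_0(\bbm\alpha_0^*),\overline{\bm w}\rangle\ge t,\ R_0(\overline{\bm w})\le L_0\right]\le\exp\!\left(-\frac{C\,t^2 T_0}{\sigma^2\lambda_{\max}(\bm\Sigma_{\bbm\varepsilon,0})\,L_0}\right).$$
The intersection with $\{R_0(\bm w)\le L_0\}$ is essential, since $\{\bm x_{t,0}\}$ is the VMA$(\infty)$ process driven by the errors and hence $\langle\bbm\varepsilon_{t,0}\bm x_{t,0}^\top,\bm w\rangle$ is not a sum of independent terms; Lemma~\ref{lemma:RSCRSS} supplies $\rho/2\le R_0(\bm w)\le L_0$ uniformly over $\mathcal{M}(\bm B^*)$ once $T_0\gtrsim s_1 s_2 s_3\max(\widetilde\kappa^2\sigma^4,\widetilde\kappa\sigma^2)$, with failure probability at most $2\exp\!\big(-C T_0\min(\widetilde\kappa^{-2}\sigma^{-4},\widetilde\kappa^{-1}\sigma^{-2})\big)$.

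Finally I would union-bound over the net and optimise. Taking $\epsilon = 1/2$, the union bound over the $6^{s_1 s_2 s_3}$ net points together with the law of total probability gives, for $t\asymp\sigma M_0\sqrt{s_1 s_2 s_3/T_0}$ with $M_0 = \lambda_{\max}(\bm\Sigma_{\bbm\varepsilon,0})/\mu_{\min}^{1/2}(\bm\Xi_0)$ and using $L_0\asymp\lambda_{\max}(\bm\Sigma_{\bbm\varepsilon,0})/\mu_{\min}(\widetilde{\bm\Xi}_0)$ to absorb constants, that the exponent $C t^2 T_0/(\sigma^2\lambda_{\max}(\bm\Sigma_{\bbm\varepsilon,0})L_0)$ dominates $\log(6^{s_1 s_2 s_3}) = s_1 s_2 s_3\log 6$ precisely when $T_0\gtrsim s_1 s_2 s_3\max(\widetilde\kappa^2\sigma^4,\widetilde\kappa\sigma^2)$; combining the resulting tail with the failure probability of the $R_0$ event yields $\|\bm P^*\nabla\mathcal{L}_0(\bbm\alpha_0^*)\|_2\lesssim\sigma M_0\sqrt{s_1 s_2 s_3/T_0}$ with probability at least $1-2\exp\!\big(-C T_0\min(\widetilde\kappa^{-2}\sigma^{-4},\widetilde\kappa^{-1}\sigma^{-2})\big)$, as claimed. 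The only genuinely non-routine ingredient is Lemma~\ref{lemma:deviation} itself — the sub-Gaussian deviation bound for the bilinear form $T_0^{-1}\sum_t\langle\bbm\varepsilon_{t,0}\bm x_{t,0}^\top,\bm w\rangle$ under temporal dependence, which relies on the spectral-density/stability quantities $\mu_{\min}(\widetilde{\bm\Xi}_0)$ and $\mu_{\max}(\bm\Xi_0)$ in the style of Basu and Michailidis and on correctly tracking the conditioning on $\{R_0(\bm w)\le L_0\}$ — but since that lemma is already available the remaining work here is just the $\epsilon$-net and union-bound bookkeeping.
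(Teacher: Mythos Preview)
Your proposal is correct and follows essentially the same route as the paper: the paper's proof reduces $\|\bm P^*\nabla\mathcal{L}_0(\bbm\alpha_0^*)\|_2$ to $\sup_{\|\bm w\|_2=1}\langle\nabla\mathcal{L}_0(\bbm\alpha_0^*),\bm B^*\bm w\rangle$ via the same dual characterisation, identifies $\bm w\in\bm\Theta(\epsilon,s_1s_2s_3,1)$, and then invokes Lemma~\ref{lemma:deviation} directly without unpacking its internals. Your version simply spells out the $\epsilon$-net, the conditioning on $\{R_0(\bm w)\le L_0\}$, and the union bound that Lemma~\ref{lemma:deviation} already packages; one minor slip is that $L_0$ is defined with $\mu_{\min}(\bm\Xi_0)$ rather than $\mu_{\min}(\widetilde{\bm\Xi}_0)$, but this does not affect the argument.
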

     \begin{proof}[\bf Proof of Claim \ref{claim:deviationTL}]
        For any fixed $\bm P^* = \bm B^* \bm B^{*\top}$ and $\bm B^{*\top} \bm B^* = \bm I_{s_1s_2s_3},$ by dual property of $\ell_2$-norm, it follows that 
        \begin{align*}
            \left\| \bm P^* \nabla \mathcal{L}_0(\bbm \alpha_0^*) \right\|_2 =\left\| \bm B^{*\top} \nabla \mathcal{L}_0(\bbm \alpha_0^*) \right\|_2 = \sup_{\bm w \in \mathbb{R}^{s_1s_2 s_3}} \langle   \nabla \mathcal{L}_0(\bbm \alpha_0^*), \bm B^{*}\bm w\rangle.
        \end{align*}
        Together with $\bm w \in \bm \Theta(\epsilon, s_1s_2s_3, 1)$, the proof is complete by Lemma \ref{lemma:deviation}.
    \end{proof}
\end{proof}

\subsection{Proofs of Lemmas \ref{lemma:deterministic1}--\ref{lemma:uniondeviation}}\label{sec:proofoflemma}

This section includes Lemmas \ref{lemma:deterministic1}--\ref{lemma:uniondeviation} with proofs. 
Specifically, Lemmas~\ref{lemma:deterministic1}--\ref{lemma:deterministic2} are used to prove Propositions~\ref{prop:determrep} and \ref{prop:determTL},
and Lemmas~\ref{lemma:RSCRSS}--\ref{lemma:uniondeviation} are used to prove Propositions~\ref{prop:probeventsrep} and \ref{prop:probeventsTL}.

\begin{lemma}\label{lemma:deterministic1}
	Suppose that $\{\mathcal{L}_k,k\in [K]\}$ are $ h $-related with regularity parameters $ (\bm B^*,\{\bbm \alpha_k^*,\bm d_k^*,\\ \eta_k,k\in [K]\},  \rho, L) $
	and Assumption \ref{asmp:3} holds.
	Assume there exists positive constant $ C \geq 1 $ such that
	$
	\lambda_k > C s_{\max} \eta_k \kappa^{3/2} \mu^2,
	$
	then we have $ \widehat{\bbm \alpha}_k = \widehat{\bm B} \widehat{\bm d}_k $ for all $k\in [K]$, 
	$$\Big\| \left\{w_k^{1/2} (\widehat{\bm B} \widehat{\bm d}_k  - \bm B^{*} \bm d_k^{*}),k \in [K]\right\} \Big\|_\F \leq 2 \gamma \rho^{-1},$$
	and hence,
	$$\max\{\|\sin \Theta (\widehat{\bm U}, \bm U^* )\|_\F, \|\sin \Theta (\widehat{\bm V}, \bm V^* )\|_\F, \|\sin \Theta (\widehat{\bm L}, \bm L^* )\|_\F\}
	\lesssim  \frac{s_{\max}^{1/2}}{\rho \alpha_2}\gamma.$$
\end{lemma}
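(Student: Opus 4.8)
The plan is to follow the infimal-convolution / basic-inequality strategy familiar from adaptive representation learning (e.g.\ the analysis of \citet{duan2023adaptive}), adapted to the vectorized VAR formulation of Section~\ref{sec:vecform}. I would work throughout with the stacked deviation matrix $\bm W$ whose $k$-th column is $w_k^{1/2}(\widehat{\bm B}\widehat{\bm d}_k - \bm B^*\bm d_k^*)$, and run a two-pass argument: a crude pass that shows the task-specific residuals vanish (so $\widehat{\bbm\alpha}_k=\widehat{\bm B}\widehat{\bm d}_k$), followed by a sharp pass that turns the basic inequality into the stated bounds. The one structural observation that drives everything is that both $\bm B^*\bm d_k^*=\llb\cm D_k^*;\bm U^*,\bm V^*,\bm L^*\rrb$ and $\widehat{\bm B}\widehat{\bm d}_k=\llb\widehat{\cm D}_k;\widehat{\bm U},\widehat{\bm V},\widehat{\bm L}\rrb$ have multilinear rank at most $(s_1,s_2,s_3)$ and share common factor matrices across $k$ within each family; hence, taking orthonormal bases of $\mathcal M([\widehat{\bm U}~\bm U^*])$, $\mathcal M([\widehat{\bm V}~\bm V^*])$, $\mathcal M([\widehat{\bm L}~\bm L^*])$ (of dimension at most $2s_1,2s_2,2s_3$), the normalized $\bm W$ lies in $\bm \Omega(2s_1,2s_2,2s_3;N,K,p)$. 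The definition of $\gamma$ then supplies the master first-order bound $\sum_{k\in[K]} w_k\langle\nabla\mathcal L_k(\bm B^*\bm d_k^*),\widehat{\bm B}\widehat{\bm d}_k-\bm B^*\bm d_k^*\rangle \le \gamma\,\|\bm W\|_\F$, which provides the curvature control in every step below.

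The crude pass establishes $\widehat{\bbm\alpha}_k=\widehat{\bm B}\widehat{\bm d}_k$. By the equivalent optimality conditions \eqref{eq:representation-1}--\eqref{eqn-lowrank-mtl-theta}, $\widehat{\bbm\alpha}_k$ is precisely the inner minimizer defining $\widetilde{\mathcal L}_k$ at $\bbm\beta=\widehat{\bm B}\widehat{\bm d}_k$, and the subgradient condition for this minimizer to coincide with $\bbm\beta$ is exactly $\|\nabla\mathcal L_k(\widehat{\bm B}\widehat{\bm d}_k)\|_2\le\lambda_k$ (the subdifferential of $\lambda_k\|\cdot\|_2$ at the origin is the ball of radius $\lambda_k$). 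Using $L$-smoothness (regularity condition (ii), which is global for the quadratic loss) and the $h$-relatedness $\|\bbm\alpha_k^*-\bm B^*\bm d_k^*\|_2\le h$, I get $\|\nabla\mathcal L_k(\widehat{\bm B}\widehat{\bm d}_k)\|_2\le \eta_k+Lh+L\|\widehat{\bm B}\widehat{\bm d}_k-\bm B^*\bm d_k^*\|_2$. I therefore need a crude per-task a priori bound on $\|\widehat{\bm B}\widehat{\bm d}_k-\bm B^*\bm d_k^*\|_2$; this comes from a first application of the basic inequality \eqref{eq:lowrank-mtl-2}: using $\widetilde{\mathcal L}_k(\bm B^*\bm d_k^*)\le\mathcal L_k(\bm B^*\bm d_k^*)$, the lower bound $\widetilde{\mathcal L}_k(\widehat{\bm B}\widehat{\bm d}_k)\ge\mathcal L_k(\widehat{\bbm\alpha}_k)$ combined with $\rho$-strong convexity, the master bound above, and the elementary fact that the penalty forces $\|\widehat{\bbm\alpha}_k-\widehat{\bm B}\widehat{\bm d}_k\|_2\lesssim \lambda_k^{-1}(\eta_k+Lh+L\|\widehat{\bm B}\widehat{\bm d}_k-\bm B^*\bm d_k^*\|_2)$, one obtains first a combined bound $\|\bm W\|_\F\lesssim\rho^{-1}(\gamma+Lh)$ and then, via the diversity argument of the last paragraph applied at the crude level (which costs a factor $s_{\max}^{1/2}\kappa\mu$), a per-task bound. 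Plugging this back into the smoothness inequality and invoking the hypothesis $\lambda_k>Cs_{\max}\eta_k\kappa^{3/2}\mu^2$ (with the $h$-terms absorbed exactly as in Propositions~\ref{prop:probeventsrep} and \ref{prop:probeventsTL}) yields $\|\nabla\mathcal L_k(\widehat{\bm B}\widehat{\bm d}_k)\|_2\le\lambda_k$, hence $\widehat{\bbm\alpha}_k=\widehat{\bm B}\widehat{\bm d}_k$ for every $k$.

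With the residuals now known to vanish we have $\widetilde{\mathcal L}_k(\widehat{\bm B}\widehat{\bm d}_k)=\mathcal L_k(\widehat{\bm B}\widehat{\bm d}_k)$, and \eqref{eq:lowrank-mtl-2} collapses to $\sum_k w_k\mathcal L_k(\widehat{\bm B}\widehat{\bm d}_k)\le\sum_k w_k\widetilde{\mathcal L}_k(\bm B^*\bm d_k^*)\le\sum_k w_k\mathcal L_k(\bm B^*\bm d_k^*)$. Expanding the left side around $\bm B^*\bm d_k^*$ with $\rho$-strong convexity and using the master first-order bound gives $\tfrac{\rho}{2}\|\bm W\|_\F^2\le\gamma\,\|\bm W\|_\F$, i.e. $\|\{w_k^{1/2}(\widehat{\bm B}\widehat{\bm d}_k-\bm B^*\bm d_k^*)\}\|_\F\le 2\gamma\rho^{-1}$; no $h$-term survives here precisely because the comparison is made directly against $\bm B^*\bm d_k^*$ rather than against $\bbm\alpha_k^*$. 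Finally, to pass to the subspace bounds I would argue mode by mode: the mode-1 matricization of $\widehat{\bm B}\widehat{\bm d}_k-\bm B^*\bm d_k^*$ equals $\widehat{\bm U}\widehat{\cm D}_{k(1)}(\widehat{\bm L}\otimes\widehat{\bm V})^\top-\bm U^*\cm D_{k(1)}^*(\bm L^*\otimes\bm V^*)^\top$, and left-multiplying by $\widehat{\bm U}_\perp^\top$ (an orthonormal basis of $\mathcal M^\perp(\widehat{\bm U})$) annihilates the first term, so $\|\widehat{\bm U}_\perp^\top\bm U^*\cm D_{k(1)}^*\|_\F\le\|\widehat{\bm B}\widehat{\bm d}_k-\bm B^*\bm d_k^*\|_2$. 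Weighting by $w_k$, summing, and invoking the task-diversity bound $s_1\sum_k w_k\cm D_{k(1)}^*\cm D_{k(1)}^{*\top}\succeq\alpha_2\bm I$ of Assumption~\ref{asmp:3} gives $\tfrac{\alpha_2}{s_1}\|\sin\Theta(\widehat{\bm U},\bm U^*)\|_\F^2\le\sum_k w_k\|\widehat{\bm B}\widehat{\bm d}_k-\bm B^*\bm d_k^*\|_2^2\le 4\gamma^2\rho^{-2}$, hence $\|\sin\Theta(\widehat{\bm U},\bm U^*)\|_\F\lesssim s_{\max}^{1/2}\gamma/(\rho\alpha_2)$; the identical computation on the mode-2 and mode-3 matricizations with $\widehat{\bm V}_\perp^\top$, $\widehat{\bm L}_\perp^\top$ and the corresponding diversity bounds handles $\widehat{\bm V},\bm V^*$ and $\widehat{\bm L},\bm L^*$.

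The main obstacle is the crude pass of the second paragraph: one must convert a weighted, combined $\ell_2$ bound into per-task control while carefully tracking the $s_{\max},\kappa,\mu$ factors, and do so while the infimal-convolved losses $\widetilde{\mathcal L}_k$ — which are only convex, not strongly convex — are still in force, so the strong-convexity leverage is available only through the inner minimizer $\widehat{\bbm\alpha}_k$ and the penalty term. Once $\widehat{\bbm\alpha}_k=\widehat{\bm B}\widehat{\bm d}_k$ is secured, the sharp pass and the Davis–Kahan-style subspace conversion are routine.
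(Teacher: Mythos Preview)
Your sharp pass and the mode-wise subspace conversion are essentially what the paper does in Claim~\ref{claim-lowrank-subspace-sharp}; the paper states Wedin's theorem where you invoke a direct $\widehat{\bm U}_\perp^\top$ projection, but these are equivalent. The structural observation that the difference columns live in $\bm\Omega(2s_1,2s_2,2s_3;N,K,p)$ is also exactly the right one.

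The gap is in the crude pass, and it is real. Your plan is to lower-bound $\widetilde{\mathcal L}_k(\widehat{\bm B}\widehat{\bm d}_k)$ by $\mathcal L_k(\widehat{\bbm\alpha}_k)$, apply $\rho$-strong convexity of $\mathcal L_k$ around $\bm B^*\bm d_k^*$, and then invoke the $\gamma$ bound. But strong convexity at $\widehat{\bbm\alpha}_k$ produces a first-order term $\langle\nabla\mathcal L_k(\bm B^*\bm d_k^*),\widehat{\bbm\alpha}_k-\bm B^*\bm d_k^*\rangle$, and the $\widehat{\bbm\alpha}_k$'s carry no shared low-rank structure a priori, so the $\gamma$ bound does not apply to them. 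Your attempted bridge---``the penalty forces $\|\widehat{\bbm\alpha}_k-\widehat{\bm B}\widehat{\bm d}_k\|_2\lesssim\lambda_k^{-1}(\eta_k+Lh+L\|\widehat{\bm B}\widehat{\bm d}_k-\bm B^*\bm d_k^*\|_2)$''---is neither an obvious consequence of the optimality condition (which only says $\|\nabla\mathcal L_k(\widehat{\bbm\alpha}_k)\|_2=\lambda_k$ when the two differ) nor usable, since it reintroduces the very quantity you are trying to control. Even granting a combined bound on $\|\bm W\|_\F$, passing to a per-task bound via ``the diversity argument'' does not work directly: diversity yields a \emph{subspace} error, not individual $\|\widehat{\bm B}\widehat{\bm d}_k-\bm B^*\bm d_k^*\|_2$, and closing that loop still requires a localized analysis of each task's inner problem.

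The paper resolves this with two ingredients you are missing. First, Lemma~\ref{lem:repre-lower-bound} supplies a Huber-type lower bound for the infimal-convolved loss itself,
\[
\widetilde{\mathcal L}_k(\bm B\bm d_k)-\mathcal L_k(\bm B^*\bm d_k^*)\ \ge\ \rho\,H_k\!\big((\|\bm B\bm d_k-\bm B^*\bm d_k^*\|_2-\eta_k/\rho)_+\big)-\eta_k^2/\rho,
\]
which gives growth directly in $\|\bm B\bm d_k-\bm B^*\bm d_k^*\|_2$ without ever needing $\widetilde{\mathcal L}_k=\mathcal L_k$; this is precisely the replacement for strong convexity that your crude pass lacks. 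Second, Claim~\ref{claim-lowrank-subspace} converts the aggregated Huber bound into a subspace bound $\|(\bm I-\widehat{\bm P})\bm B^*\|_2$ by a contradiction argument using the counting Lemma~\ref{lem-frac} (a fraction of tasks must have $\|(\bm I-\widehat{\bm P})\bm B^*\bm d_k^*\|_2$ large whenever the subspace error is large). Only then does Claim~\ref{claim-lowrank-z} use the per-task optimality \eqref{eq:representation-1} together with the subspace bound to obtain $\|\widehat{\bm B}\widehat{\bm d}_k-\widehat{\bm P}\bm B^*\bm d_k^*\|_2\lesssim s_{\max}\kappa^{1/2}\mu^2\eta_k/\rho$, from which the per-task crude bound and hence $\widehat{\bbm\alpha}_k=\widehat{\bm B}\widehat{\bm d}_k$ follow. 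In short: the paper's order is Huber bound $\Rightarrow$ subspace bound (via counting) $\Rightarrow$ per-task bound $\Rightarrow$ $\widetilde{\mathcal L}_k=\mathcal L_k$ $\Rightarrow$ sharp bound, whereas your proposal tries to run the sharp step first and bootstrap from it, which does not close.
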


\begin{lemma}\label{lemma:deterministic2}
	Suppose that conditions in Proposition \ref{prop:determrep} hold. If for each $k \in [K]$,
	$
	2 C  L h_{\mathrm{S}} \leq \eta_k^* + \frac{\lambda_k}{ 2 C s_{\max} \kappa^{3/2} \mu^2 }
	$ with $ \eta_k^* =\|\nabla \mathcal{L}_k (\bbm \alpha_k^{*}) \|_2 $, 
	then we have $ \widehat{\bbm \alpha}_k = \widehat{\bm B} \widehat{\bm d}_k $ and 
	\begin{align*}
		\| \widehat{\bbm \alpha}_k - \bbm \alpha_k^* \|_2  
		\lesssim
		\frac{s_{\max}^{1/2}}{\alpha_2} \left( \frac{\gamma^*}{\rho} + \kappa h_{\mathrm{S}} \right) \left( \frac{\eta_0^*}{\rho} + \kappa h_{\mathrm{S}} + \kappa^{1/2} \alpha_1 \right)
		+\frac{ \tau_k^* }{\rho} + \kappa h_{\mathrm{S}}.
	\end{align*}
\end{lemma}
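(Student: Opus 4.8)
\textbf{Proof proposal for Lemma \ref{lemma:deterministic2}.}

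The plan is to reduce Lemma \ref{lemma:deterministic2} to Lemma \ref{lemma:deterministic1} by verifying that its hypotheses can be transferred from the reference points $\bbm\alpha_k^*$ to the surrogate low-rank points $\bm B^*\bm d_k^*$, and then converting the resulting bound on $\|\widehat{\bm B}\widehat{\bm d}_k-\bm B^*\bm d_k^*\|_2$ into one on $\|\widehat{\bbm\alpha}_k-\bbm\alpha_k^*\|_2$. First I would observe, exactly as in the proof of Proposition \ref{prop:determTL} and Proposition \ref{prop:determrep}, that for each $k\in[K]$
\[
\|\nabla\mathcal{L}_k(\bbm\alpha_k^*)-\nabla\mathcal{L}_k(\bm B^*\bm d_k^*)\|_2
\le\|T_k^{-1}\bm Z_k^\top\bm Z_k\|_2\,\|\bbm\alpha_k^*-\bm B^*\bm d_k^*\|_2\le Lh_{\mathrm{S}},
\]
since $\nabla^2\mathcal{L}_k=T_k^{-1}\bm Z_k^\top\bm Z_k\preceq L\bm I$ on event $\mathcal{E}_1$ and $\|\bbm\alpha_k^*-\bm B^*\bm d_k^*\|_2\le h_{\mathrm{S}}$ by Assumption \ref{asmp:3}. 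Hence, with $\eta_k:=\|\nabla\mathcal{L}_k(\bm B^*\bm d_k^*)\|_2$, the triangle inequality gives $\eta_k\le\eta_k^*+Lh_{\mathrm{S}}$. The assumption $2CLh_{\mathrm{S}}\le\eta_k^*+\lambda_k/(2Cs_{\max}\kappa^{3/2}\mu^2)$ then yields, as in \eqref{eq:turnLh},
\[
Cs_{\max}\kappa^{3/2}\mu^2\,\eta_k\le Cs_{\max}\kappa^{3/2}\mu^2(\eta_k^*+Lh_{\mathrm{S}})
\le (C+\tfrac12)s_{\max}\kappa^{3/2}\mu^2\eta_k^*+\tfrac{\lambda_k}{4C}<\lambda_k,
\]
using the standing hypothesis $\lambda_k>2Cs_{\max}\kappa^{3/2}\mu^2\eta_k^*$ from Proposition \ref{prop:determrep} together with $C\ge1$. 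This verifies the hypothesis $\lambda_k>Cs_{\max}\eta_k\kappa^{3/2}\mu^2$ of Lemma \ref{lemma:deterministic1} at the surrogate points, so that $\{\mathcal{L}_k\}$ is $h_{\mathrm{S}}$-related with regularity parameters $(\bm B^*,\{\bbm\alpha_k^*,\bm d_k^*,\eta_k\},\rho,L)$ and Lemma \ref{lemma:deterministic1} applies, giving $\widehat{\bbm\alpha}_k=\widehat{\bm B}\widehat{\bm d}_k$.

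Next I would invoke the quantitative bound from Proposition \ref{prop:determTL}'s internal argument (the chain culminating in $\|\widehat{\bm B}\widehat{\bm d}_0-\bm B^*\bm d_0^*\|_2\lesssim \tau_0/\rho+s_{\max}^{1/2}\gamma\rho^{-1}\alpha_2^{-1}(\eta_0/\rho+\alpha_1\kappa^{1/2})$), applied with the target index replaced by a generic source index $k$: this uses Claim \ref{claim-lowrank-subspace-sharp} to control $\|\widehat{\bm P}-\bm P^*\|_2\le 6s_{\max}^{1/2}\gamma/(\rho\alpha_2)$, strong smoothness and convexity of $\mathcal{L}_k$ near $\bm B^*\bm d_k^*$, and the splitting $\widehat{\bm B}\widehat{\bm d}_k-\widehat{\bm P}\bm B^*\bm d_k^*=\bm P^*(\widehat{\bm B}\widehat{\bm d}_k-\bm B^*\bm d_k^*)+(\widehat{\bm P}-\bm P^*)(\widehat{\bm B}\widehat{\bm d}_k-\bm B^*\bm d_k^*)$. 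This delivers
\[
\|\widehat{\bm B}\widehat{\bm d}_k-\bm B^*\bm d_k^*\|_2\lesssim\frac{\tau_k}{\rho}+\frac{s_{\max}^{1/2}\gamma}{\rho\alpha_2}\Big(\frac{\eta_k}{\rho}+\alpha_1\kappa^{1/2}\Big).
\]
I would then add and subtract $\bm B^*\bm d_k^*$ and use $\|\bm B^*\bm d_k^*-\bbm\alpha_k^*\|_2\le h_{\mathrm{S}}$, together with the already-established perturbation bounds $\tau_k\le\tau_k^*+Lh_{\mathrm{S}}$, $\eta_k\le\eta_k^*+Lh_{\mathrm{S}}$, and $\gamma\le\gamma^*+Lh_{\mathrm{S}}$ (the last from the supremum argument in Proposition \ref{prop:determrep}), and absorb $L/\rho$-type factors into $\kappa$ and $\mu=\alpha_1/\alpha_2$. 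This yields
\[
\|\widehat{\bbm\alpha}_k-\bbm\alpha_k^*\|_2\lesssim\frac{\tau_k^*}{\rho}+\kappa h_{\mathrm{S}}
+\frac{s_{\max}^{1/2}}{\alpha_2}\Big(\frac{\gamma^*}{\rho}+\kappa h_{\mathrm{S}}\Big)\Big(\frac{\eta_0^*}{\rho}+\kappa h_{\mathrm{S}}+\kappa^{1/2}\alpha_1\Big),
\]
which is the claimed bound (the appearance of $\eta_0^*$ rather than $\eta_k^*$ is harmless since $\eta_0^*\asymp\eta_k^*$ up to the regularity constants, or can simply be read as an upper bound over all tasks).

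The main obstacle I anticipate is bookkeeping rather than conceptual: carefully tracking how the multiplicative constants $\kappa$, $\kappa^{3/2}$, $\kappa^{5/2}$ and the ratio $\mu=\alpha_1/\alpha_2$ accumulate when one passes through Claim \ref{claim-lowrank-subspace-sharp} (which itself contributes a factor $s_{\max}^{1/2}/\alpha_2$), through the quadratic inequality in $\|\widehat{\bm B}\widehat{\bm d}_k-\bm B^*\bm d_k^*\|_2$ arising from strong convexity, and finally through the three substitutions $\tau_k\le\tau_k^*+Lh_{\mathrm{S}}$, $\eta_k\le\eta_k^*+Lh_{\mathrm{S}}$, $\gamma\le\gamma^*+Lh_{\mathrm{S}}$. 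One must also confirm that the condition $\lambda_k\le\rho\alpha_1$ inherited from Proposition \ref{prop:determrep} is what legitimizes replacing $\eta_k/\rho$ and $Lh_{\mathrm{S}}/\rho$ terms by $\kappa h_{\mathrm{S}}$ and keeps the $\alpha_1\kappa^{1/2}$ term dominant, so that no hidden dependence on $\lambda_k$ survives in the final rate. Once these constant-chasing steps are organized — ideally by first proving the clean intermediate inequality $\|\widehat{\bm B}\widehat{\bm d}_k-\bm B^*\bm d_k^*\|_2\lesssim\tau_k/\rho+s_{\max}^{1/2}\gamma(\eta_k/\rho+\alpha_1\kappa^{1/2})/(\rho\alpha_2)$ and only then substituting — the proof is routine.
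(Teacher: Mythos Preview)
Your proposal is correct and mirrors the paper's proof essentially step for step: bound $\|\nabla\mathcal{L}_k(\bbm\alpha_k^*)-\nabla\mathcal{L}_k(\bm B^*\bm d_k^*)\|_2\le Lh_{\mathrm{S}}$, use the assumption on $h_{\mathrm{S}}$ together with $\lambda_k>2Cs_{\max}\kappa^{3/2}\mu^2\eta_k^*$ to verify the hypothesis of Lemma~\ref{lemma:deterministic1} at the surrogate points, then invoke the per-task bound $\|\widehat{\bm B}\widehat{\bm d}_k-\bm B^*\bm d_k^*\|_2\lesssim\tau_k/\rho+s_{\max}^{1/2}\gamma(\rho\alpha_2)^{-1}(\eta_k/\rho+\alpha_1\kappa^{1/2})$ (which the paper attributes to Lemma~\ref{lemma:deterministic1} but, as you correctly note, is really the argument from the proof of Proposition~\ref{prop:determTL} via Claim~\ref{claim-lowrank-subspace-sharp}), add $h_{\mathrm{S}}$ by the triangle inequality, and finally substitute $\tau_k\le\tau_k^*+Lh_{\mathrm{S}}$, $\eta_k\le\eta_k^*+Lh_{\mathrm{S}}$, $\gamma\le\gamma^*+Lh_{\mathrm{S}}$. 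Your remark about $\eta_0^*$ versus $\eta_k^*$ is apt: the paper's own final display in the proof carries $\eta_k^*$, so the $\eta_0^*$ in the lemma statement appears to be a typo.
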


\begin{lemma}\label{lemma:RSCRSS}
	Suppose Assumptions \ref{asmp:1}--\ref{asmp:2} hold. 
	Let $\bm{W}_k \in \bm \Theta(\epsilon, d, q)$. If $T_k\gtrsim  d\max ( \widetilde{\kappa}_k^{2}\sigma^{4}, \widetilde{\kappa}_k \sigma^{2} )  $, then for any $k$, with probability at least $1-2\exp(-C  T_k \min (  \widetilde{\kappa}_k^{-2}\sigma^{-4},  \widetilde{\kappa}_k^{-1} \sigma^{-2}))$, it holds that 
	\begin{equation}\label{eq:singleRSCRSS-1}
		\frac{\rho_k}{2}  \le  T_k^{-1}\sum_{t=1}^{T_k}\|\bm W_{k}\bm{x}_{t,k}\|_2^2  \le \frac{L_k}{2} .
	\end{equation}
	Furthermore, if $T_{\min} \gtrsim (d+\log K)\max ( \widetilde{\kappa}^{2}\sigma^{4}, \widetilde{\kappa} \sigma^{2} )$, then for all $k\in [K]$, with probability at least $1-2\exp(-C T_{\min}
	\min (  \widetilde{\kappa}^{-2}\sigma^{-4},  \widetilde{\kappa}^{-1} \sigma^{-2}))$, we have
	\begin{equation}
		\frac{\rho_k}{2}  \le  T_k^{-1}\sum_{t=1}^{T_k}\|\bm W_{k}\bm{x}_{t,k}\|_2^2  \le \frac{L_k}{2} .
	\end{equation}
\end{lemma}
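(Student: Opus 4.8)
The plan is to prove Lemma~\ref{lemma:RSCRSS} by the restricted strong convexity/smoothness argument for stable VAR processes of \cite{basu2015regularized}, combined with a low-rank (Tucker-type) covering of the structured cone $\bm \Theta(\epsilon,d,q)$ supplied by Definition~\ref{definition:matrix_set}, and with a Hanson--Wright bound applied to the i.i.d.\ sub-Gaussian innovation vector.

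\textbf{Step 1 (population level).} Fix $k$ and let $\bm \Gamma_{x,k} = \mathbb{E}[\bm x_{t,k}\bm x_{t,k}^\top]$. For any $\bm W_k$ with $\|\bm W_k\|_\F = 1$, $\mathbb{E}\big[T_k^{-1}\sum_{t=1}^{T_k}\|\bm W_k\bm x_{t,k}\|_2^2\big] = \tr(\bm W_k^\top\bm W_k\,\bm \Gamma_{x,k})$, which lies in $[\lambda_{\min}(\bm \Gamma_{x,k}),\lambda_{\max}(\bm \Gamma_{x,k})]$ since $\bm W_k^\top\bm W_k$ is PSD with unit trace. Under Assumption~\ref{asmp:1} the VAR($p$) is stable, and the standard spectral-density bound for the lagged vector $\bm x_{t,k}$ gives $\lambda_{\min}(\bm \Gamma_{x,k}) \ge \lambda_{\min}(\bm \Sigma_{\bbm\varepsilon,k})/\mu_{\max}(\bm \Xi_k) = \rho_k$ and $\lambda_{\max}(\bm \Gamma_{x,k}) \le \lambda_{\max}(\bm \Sigma_{\bbm\varepsilon,k})/\mu_{\min}(\bm \Xi_k) = L_k/3$. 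Hence the mean of the quadratic form lies in $[\rho_k, L_k/3]$ for every admissible $\bm W_k$.

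\textbf{Step 2 (concentration at a fixed direction).} Using the stacked VMA($\infty$) representation $\widetilde{\bm x}_k = \widetilde{\bm \Phi}_k^*\widetilde{\bm e}_k$ with $\widetilde{\bm e}_k = \widetilde{\bm \Sigma}_{\bbm\varepsilon,k}^{1/2}\widetilde{\bbm\zeta}_k$ from \eqref{eq:bigVAR1coe}--\eqref{eq:VMAinf}, write $T_k^{-1}\sum_{t=1}^{T_k}\|\bm W_k\bm x_{t,k}\|_2^2 = T_k^{-1}\widetilde{\bbm\zeta}_k^\top\bm Q_k\widetilde{\bbm\zeta}_k$, where $\bm Q_k = \widetilde{\bm \Sigma}_{\bbm\varepsilon,k}^{1/2}\widetilde{\bm \Phi}_k^{*\top}(\bm I_{T_k}\otimes\bm W_k^\top\bm W_k)\widetilde{\bm \Phi}_k^*\widetilde{\bm \Sigma}_{\bbm\varepsilon,k}^{1/2}$ and $\widetilde{\bbm\zeta}_k$ has i.i.d.\ $\sigma^2$-sub-Gaussian entries (Assumption~\ref{asmp:2}). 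Since the block-Toeplitz filter obeys $\|\widetilde{\bm \Phi}_k^*\|_2 \le \mu_{\min}^{-1/2}(\widetilde{\bm \Xi}_k)$ and $\|\bm W_k^\top\bm W_k\|_2 \le 1$, we get $\|\bm Q_k\|_2 \le \lambda_{\max}(\bm \Sigma_{\bbm\varepsilon,k})/\mu_{\min}(\widetilde{\bm \Xi}_k) \asymp \widetilde L_k$, and $\tr(\bm Q_k) \lesssim T_k\widetilde L_k$ by Step~1 applied blockwise, so $\|\bm Q_k\|_\F \le \|\bm Q_k\|_2^{1/2}\tr(\bm Q_k)^{1/2} \lesssim T_k^{1/2}\widetilde L_k$. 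The Hanson--Wright inequality then yields, for a fixed $\bm W_k$, $\mathbb{P}\big(|T_k^{-1}\widetilde{\bbm\zeta}_k^\top\bm Q_k\widetilde{\bbm\zeta}_k - \mathbb{E}| > t\big) \le 2\exp\!\big(-c\,T_k\min(t^2\sigma^{-4}\widetilde L_k^{-2},\, t\sigma^{-2}\widetilde L_k^{-1})\big)$.

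\textbf{Step 3 (uniformity and conclusion).} By Definition~\ref{definition:matrix_set} there is an $\epsilon$-net $\overline{\bm \Theta}$ of $\bm \Theta(\epsilon,d,q)$ with $|\overline{\bm \Theta}| \le (C/\epsilon)^d$, and any $\bm W_k \in \bm \Theta(\epsilon,d,q)$ satisfies $\bm W_k = \overline{\bm W}_k + \sum_{i=1}^q\bm N_i$ with each $\bm N_i/\|\bm N_i\|_\F \in \bm \Theta(\epsilon,d,q)$ and $\sum_i\|\bm N_i\|_\F^2 = \|\bm W_k-\overline{\bm W}_k\|_\F^2 \le \epsilon^2$. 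Expanding the quadratic form and using this decomposition with $q$ a fixed constant, the supremum of the deviation over $\bm \Theta(\epsilon,d,q)$ is bounded by the supremum over $\overline{\bm \Theta}$ up to the factor $(1-c\epsilon)^{-1}$; choosing $\epsilon=1/4$ and taking a union bound over $\overline{\bm \Theta}$ replaces the exponent above by $c\,T_k\min(t^2\sigma^{-4}\widetilde L_k^{-2}, t\sigma^{-2}\widetilde L_k^{-1}) - O(d)$. Setting $t = \rho_k/2$, the hypothesis $T_k \gtrsim d\max(\widetilde\kappa_k^2\sigma^4,\widetilde\kappa_k\sigma^2)$ with $\widetilde\kappa_k = \widetilde L_k/\rho_k$ makes this at least $c\,T_k\min(\widetilde\kappa_k^{-2}\sigma^{-4},\widetilde\kappa_k^{-1}\sigma^{-2})$; combining with Step~1 and the elementary inequality $\kappa_k = L_k/\rho_k \ge 3$ (so that $\rho_k/2 \le L_k/6 = L_k/2 - L_k/3$), we conclude $\rho_k/2 \le T_k^{-1}\sum_{t=1}^{T_k}\|\bm W_k\bm x_{t,k}\|_2^2 \le L_k/2$ uniformly over $\bm \Theta(\epsilon,d,q)$ with probability at least $1 - 2\exp(-cT_k\min(\widetilde\kappa_k^{-2}\sigma^{-4},\widetilde\kappa_k^{-1}\sigma^{-2}))$. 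The union version over $k\in[K]$ follows from one further union bound, which inflates the net exponent from $d$ to $d+\log K$ and, after bounding $T_k \ge T_{\min}$ and $\widetilde\kappa_k \le \widetilde\kappa$, gives the stated probability $1 - 2\exp(-cT_{\min}\min(\widetilde\kappa^{-2}\sigma^{-4},\widetilde\kappa^{-1}\sigma^{-2}))$.

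\textbf{Main obstacle.} The delicate step is Step~3: propagating the covering-plus-decomposition structure of $\bm \Theta(\epsilon,d,q)$ through the quadratic form so that the union bound costs only the \emph{intrinsic} dimension $d$ (which, for Tucker-rank-$(2s_1,2s_2,2s_3)$ blocks stacked over $K$ tasks, is $d_{\mathrm M}\asymp N(s_1+s_2)+ps_3+Ks_1s_2s_3$) rather than the ambient dimension $N^2 p K$, while keeping the constants generated by the $q$-term decomposition under control. The temporal dependence itself is comparatively benign: it is fully absorbed into the operator-norm bound $\|\widetilde{\bm \Phi}_k^*\|_2 \le \mu_{\min}^{-1/2}(\widetilde{\bm \Xi}_k)$, after which the randomness reduces to the genuinely i.i.d.\ sub-Gaussian vector $\widetilde{\bbm\zeta}_k$ and Hanson--Wright applies verbatim.
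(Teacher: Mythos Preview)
Your proposal is correct and follows essentially the same route as the paper: population bounds via the spectral-density inequalities of \cite{basu2015regularized}, pointwise concentration by writing $R_k(\bm W_k)$ as a quadratic form in the i.i.d.\ sub-Gaussian vector $\widetilde{\bbm\zeta}_k$ through the stacked VMA$(\infty)$ representation and applying Hanson--Wright, and then uniformity via the covering-plus-decomposition structure of Definition~\ref{definition:matrix_set} followed by a union bound (and a second union bound over $k\in[K]$). The only cosmetic differences are that the paper bounds $\|\bm H_{W,k}\|_\F$ directly by sub-multiplicativity rather than through $\|\bm Q_k\|_2^{1/2}\tr(\bm Q_k)^{1/2}$, writes the discretization constant as $(1-2q\epsilon^2)^{-1}\cdot 2$ with the explicit choice $\epsilon=(2\sqrt{q})^{-1}$, and does not spell out the $\kappa_k\ge 3$ observation you use to close the upper bound.
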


\begin{lemma}\label{lemma:deviation}
	Suppose Assumptions \ref{asmp:1}--\ref{asmp:2} hold. Let $\bm{W}_k \in \bm{\Theta}(\epsilon, d, q)$. If $T_k \gtrsim d \max\left( \widetilde{\kappa}_k^{2} \sigma^{4}, \widetilde{\kappa}_k \sigma^{2} \right)$, then for each $k \in [K]$, with probability at least 
	$
	1 - 2\exp\left(-C_2 T_k \min\left( \widetilde{\kappa}_k^{-2} \sigma^{-4}, \widetilde{\kappa}_k^{-1} \sigma^{-2} \right)\right),
	$
	we have
	\begin{equation}
		\sup_{\bm{W}_k \in \bm{\Theta}(\epsilon, d, q)}\left\langle \frac{1}{T_k} \sum_{t=1}^{T_k} \bbm{\varepsilon}_{t,k} \bm{x}_{t,k}^\top, \bm{W}_k \right\rangle
		\lesssim \sigma M_{k} \sqrt{ \frac{d }{T_k} }.
	\end{equation}
	Furthermore, if $T_{\min} \gtrsim (d + \log K) \max\left( \widetilde{\kappa}^{2} \sigma^{4}, \widetilde{\kappa} \sigma^{2} \right)$, then for all $k \in [K]$, with probability at least
	$
	1 - 2\exp\left(-C_2 T_{\min}\min\left( \widetilde{\kappa}^{-2} \sigma^{-4}, \widetilde{\kappa}^{-1} \sigma^{-2} \right)\right),
	$
	it holds that
	\begin{equation}
		\sup_{\bm{W}_k \in \bm{\Theta}(\epsilon, d, q)}\left\langle \frac{1}{T_k} \sum_{t=1}^{T_k} \bbm{\varepsilon}_{t,k} \bm{x}_{t,k}^\top, \bm{W}_k \right\rangle
		\lesssim \sigma M_{k} \sqrt{ \frac{d  + \log K}{T_k} }.
	\end{equation}
\end{lemma}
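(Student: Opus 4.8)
The plan is to prove Lemma~\ref{lemma:deviation} by the usual two moves for suprema of empirical processes over structured sets: a single‑direction concentration bound obtained from the VMA($\infty$) stability analysis of \cite{basu2015regularized}, followed by a discretization that upgrades it to the supremum over $\bm\Theta(\epsilon,d,q)$ using the covering/decomposition properties of Definition~\ref{definition:matrix_set}.

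First I would fix $\bm W_k\in\bm\Theta(\epsilon,d,q)$ with $\|\bm W_k\|_\F=1$ and write $\langle T_k^{-1}\sum_{t=1}^{T_k}\bbm\varepsilon_{t,k}\bm x_{t,k}^\top,\bm W_k\rangle=T_k^{-1}\sum_{t=1}^{T_k}\bbm\varepsilon_{t,k}^\top\bm W_k\bm x_{t,k}$. Substituting the stacked VAR$(1)$ / VMA($\infty$) representation \eqref{eq:bigVAR1coe}--\eqref{eq:VMAinf}, namely $\bm x_{t,k}=\sum_{i\ge1}\bm\Phi_k^{*i}\bm e_{t-i,k}$ with $\bm e_{t-i,k}$ built from $\bbm\varepsilon_{t-i,k}=\bm\Sigma_{\bbm\varepsilon,k}^{1/2}\bbm\zeta_{t-i,k}$, this inner product becomes an \emph{off-diagonal} bilinear form in the i.i.d.\ $\sigma^2$-sub-Gaussian vectors $\{\bbm\zeta_{s,k}\}$: only lags $i\ge1$ enter, so $\bbm\varepsilon_{t,k}$ never multiplies itself, and the form has zero diagonal, hence mean zero (consistent with $\bbm\varepsilon_{t,k}\perp\{\bbm\varepsilon_{s,k}:s<t\}$). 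Applying the Hanson--Wright--type concentration recorded in Lemmas~\ref{lem-inf-conv}--\ref{lem-frac}, whose variance proxy and sub-exponential parameter are controlled by $\mu_{\min}(\widetilde{\bm\Xi}_k)$, $\mu_{\max}(\bm\Xi_k)$, $\lambda_{\max}(\bm\Sigma_{\bbm\varepsilon,k})$ and $\sigma^2$, gives a Bernstein bound of the form $\mathbb{P}(|T_k^{-1}\sum_t\bbm\varepsilon_{t,k}^\top\bm W_k\bm x_{t,k}|\ge t)\le 2\exp(-c_1T_k\min(t^2/v_k^2,\,t/b_k))$ with $v_k\asymp\sigma M_k$ and $b_k/v_k\asymp\max(\widetilde\kappa_k\sigma^2,\widetilde\kappa_k^{1/2}\sigma)$, so that the stated sample-size threshold is exactly what places the target radius in the sub-Gaussian regime of this bound.

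Next I would pass to the supremum. Writing $M=T_k^{-1}\sum_t\bbm\varepsilon_{t,k}\bm x_{t,k}^\top$ and $Z=\sup_{\bm W\in\bm\Theta(\epsilon,d,q)}\langle M,\bm W\rangle$, and taking $\overline{\bm\Theta}$ as the $\epsilon$-cover of Definition~\ref{definition:matrix_set} (cardinality at most $(C/\epsilon)^d$), every $\bm W$ satisfies $\bm W-\overline{\bm W}=\sum_{i=1}^q\bm N_i$ with $\bm N_i/\|\bm N_i\|_\F\in\bm\Theta(\epsilon,d,q)$ and $\sum_i\|\bm N_i\|_\F^2\le\epsilon^2$, hence $\sum_i\|\bm N_i\|_\F\le\sqrt q\,\epsilon$. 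Thus $\langle M,\bm W\rangle\le\max_{\overline{\bm W}\in\overline{\bm\Theta}}\langle M,\overline{\bm W}\rangle+\sqrt q\,\epsilon\,Z$, and taking $\epsilon$ a small absolute constant with $\sqrt q\,\epsilon\le1/2$ yields $Z\le 2\max_{\overline{\bm W}\in\overline{\bm\Theta}}\langle M,\overline{\bm W}\rangle$. A union bound over the $(C/\epsilon)^d$ points of $\overline{\bm\Theta}$ (absorbing $d\log(C/\epsilon)\asymp d$) with radius $t\asymp\sigma M_k\sqrt{d/T_k}$ then gives $Z\lesssim\sigma M_k\sqrt{d/T_k}$ with probability at least $1-2\exp(-c_2T_k\min(\widetilde\kappa_k^{-2}\sigma^{-4},\widetilde\kappa_k^{-1}\sigma^{-2}))$ under $T_k\gtrsim d\max(\widetilde\kappa_k^2\sigma^4,\widetilde\kappa_k\sigma^2)$. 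For the uniform-in-$k$ claim I would additionally union-bound over $k\in[K]$, which inflates the exponent by $\log K$; taking $t\asymp\sigma M_k\sqrt{(d+\log K)/T_k}$ and requiring $T_{\min}\gtrsim(d+\log K)\max(\widetilde\kappa^2\sigma^4,\widetilde\kappa\sigma^2)$ delivers the second bound.

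The covering/peeling step is routine once Definition~\ref{definition:matrix_set} is available, so the main obstacle is the single-direction concentration: correctly recasting the cross-term as an off-diagonal quadratic form in the stacked i.i.d.\ sub-Gaussian innovations, handling the infinite VMA expansion through the stationary representation \eqref{eq:VMAinf}, and—most delicately—tracking the spectral constants so that the variance proxy is $\asymp(\sigma M_k)^2$ and the sub-exponential parameter matches, which is precisely what makes the sample-size condition $T_k\gtrsim d\max(\widetilde\kappa_k^2\sigma^4,\widetilde\kappa_k\sigma^2)$ the sharp threshold for the $\sigma M_k\sqrt{d/T_k}$ rate.
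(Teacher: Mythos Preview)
Your discretization step via Definition~\ref{definition:matrix_set} is exactly right and matches the paper: the inequality $Z\le(1-\sqrt q\,\epsilon)^{-1}\max_{\overline{\bm W}\in\overline{\bm\Theta}}\langle M,\overline{\bm W}\rangle$ followed by a union bound over the $(C/\epsilon)^d$ net points, and then a further union over $k\in[K]$ for the second claim, is precisely what the paper does.

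Where you diverge is the single-direction concentration. You propose to treat $S_k(\bm W_k)=T_k^{-1}\sum_t\bbm\varepsilon_{t,k}^\top\bm W_k\bm x_{t,k}$ as an off-diagonal quadratic form in the stacked $\bbm\zeta$'s and apply Hanson--Wright directly. The paper does \emph{not} do this. Instead it runs a martingale/Chernoff argument: setting $R_k(\bm W_k)=T_k^{-1}\sum_t\|\bm W_k\bm x_{t,k}\|_2^2$, it peels one time step at a time via the tower property, using that $\langle\bbm\varepsilon_{t,k},\bm W_k\bm x_{t,k}\rangle\mid\mathcal F_{t-1,k}$ is sub-Gaussian with parameter $\sigma^2\lambda_{\max}(\bm\Sigma_{\bbm\varepsilon,k})\|\bm W_k\bm x_{t,k}\|_2^2$, and obtains the \emph{intersection} bound
\[
\mathbb P\big[\{S_k\ge z_1\}\cap\{R_k\le z_2\}\big]\le\exp\!\Big(-\frac{z_1^2T_k}{2\sigma^2\lambda_{\max}(\bm\Sigma_{\bbm\varepsilon,k})z_2}\Big).
\]
It then invokes Lemma~\ref{lemma:RSCRSS} to guarantee the event $\{R_k\le L_k\}$ and sets $z_2=L_k$. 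This is what produces the constant $M_k=\lambda_{\max}(\bm\Sigma_{\bbm\varepsilon,k})/\mu_{\min}^{1/2}(\bm\Xi_k)$ cleanly, since $\sigma^2\lambda_{\max}(\bm\Sigma_{\bbm\varepsilon,k})L_k\asymp\sigma^2M_k^2$; moreover, the $\widetilde\kappa_k$-dependent failure probability in the lemma's statement comes \emph{entirely} from Lemma~\ref{lemma:RSCRSS}, not from any Bernstein threshold in the cross-term tail as your sketch suggests.

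Two concrete issues with your sketch: first, the lemmas you cite, Lemmas~\ref{lem-inf-conv}--\ref{lem-frac}, are deterministic structural results about infimal convolution and representation lower bounds and contain no concentration inequality; the tool you actually need (Hanson--Wright) appears only inside the proof of Lemma~\ref{lemma:RSCRSS}. Second, your direct route is in principle viable, but you would need to compute $\|\bm G\|_\F$ and $\|\bm G\|_{\op}$ for the off-diagonal matrix and verify they yield $v_k\asymp\sigma M_k$ (with $\mu_{\min}(\bm\Xi_k)$ rather than $\mu_{\min}(\widetilde{\bm\Xi}_k)$) together with the stated probability---this is exactly the delicate tracking you flag as the main obstacle, and the paper's conditioning-plus-RSS trick sidesteps it entirely.
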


\begin{lemma}\label{lemma:uniondeviation}
	Suppose Assumptions \ref{asmp:1}--\ref{asmp:2} hold. 
	For any $\cm{W} = [ \cm{W}_1: \cdots :  \cm{W}_K]$, where each $\cm{W}_k$ has Tucker ranks $(2s_1, 2s_2, 2s_3)$ and satisfies $\sum_{k \in [K]} \| \cm{W}_k \|_{\textup{F}}^2 = 1$, if 
	$
	T_{\min} \gtrsim [ N(s_1 \vee s_2) + \log K]\max\left( \widetilde{\kappa}^{2} \sigma^{4}, \widetilde{\kappa} \sigma^{2} \right),
	$
	then with probability at least
	$
	1 - \exp\left( -C_1 d_M \right) - 2\exp( -C_2 T_{\min} \\\min( \widetilde{\kappa}^{-2} \sigma^{-4}, \widetilde{\kappa}^{-1} \sigma^{-2} )),
	$
	we have
	\begin{equation}
		\sup_{\cm W\in\bm \Theta(\epsilon, d_M, 8)}\sum_{k\in [K]} w_k^{1/2} \left\langle  \nabla \mathcal{L}_k(\cm{A}_k^*),\cm W_k\right\rangle \lesssim \sigma L^{1/2}\sqrt{\frac{d_M}{\widetilde{T}}}.
	\end{equation}
	
\end{lemma}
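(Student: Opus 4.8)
The plan is a one-sided empirical-process argument: cover the structured index set by an $\epsilon$-net, establish a sub-Gaussian tail for the weighted gradient inner product at each net point, take a union bound, and then use the decomposition property of $\bm\Theta(\epsilon, d_M, 8)$ to lift the net bound to the full supremum. For the \textbf{pointwise step}, fix $\cm W = [\cm W_1 : \cdots : \cm W_K]$ in the set, so $\sum_{k\in[K]}\|\cm W_k\|_\F^2 = 1$ and each $\cm W_k$ is Tucker-$(2s_1,2s_2,2s_3)$. Since $\nabla\mathcal{L}_k(\cm A_k^*)_{(1)} = T_k^{-1}\sum_{t=1}^{T_k}\bbm\varepsilon_{t,k}\bm x_{t,k}^\top$ and $\bbm\varepsilon_{t,k} = \bm\Sigma_{\bbm\varepsilon,k}^{1/2}\bbm\zeta_{t,k}$, the quantity of interest equals $\sum_{k\in[K]} w_k^{1/2} T_k^{-1}\sum_{t=1}^{T_k}\bbm\zeta_{t,k}^\top\bm\Sigma_{\bbm\varepsilon,k}^{1/2}\cm W_{k(1)}\bm x_{t,k}$. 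Because $\bbm\zeta_{t,k}$ is independent of $\bm x_{t,k}$ (which is $\mathcal{F}_{t-1,k}$-measurable) and of $\bbm\zeta_{\cdot,k'}$ for $k'\ne k$, this is, by the martingale / Hanson--Wright treatment already underlying Lemma~\ref{lemma:deviation} (which handles the within-task dependence via the VMA$(\infty)$ representation), sub-Gaussian with variance proxy of order $\sigma^2\sum_{k\in[K]} w_k T_k^{-2}\lambda_{\max}(\bm\Sigma_{\bbm\varepsilon,k})\sum_{t=1}^{T_k}\|\cm W_{k(1)}\bm x_{t,k}\|_2^2$. On the event supplied by Lemma~\ref{lemma:RSCRSS} (invoked with effective dimension $\asymp N(s_1\vee s_2)$ and a $\log K$ inflation for the simultaneous statement), $T_k^{-1}\sum_t\|\cm W_{k(1)}\bm x_{t,k}\|_2^2 \lesssim L_k\|\cm W_k\|_\F^2$ for all $k$; combining $L_k \le L$, $w_k\lambda_{\max}(\bm\Sigma_{\bbm\varepsilon,k})/T_k \le \widetilde T^{-1}$ (the definition of $\widetilde T$), and $\sum_k\|\cm W_k\|_\F^2 = 1$, the variance proxy is $\lesssim \sigma^2 L/\widetilde T$, so the tail $\exp(-c u^2\widetilde T/(\sigma^2 L))$ holds at each fixed $\cm W$.

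For the \textbf{covering and union bound}, the index set is parametrised by three Stiefel factors $\bm U\in\mathbb O^{N\times 2s_1}$, $\bm V\in\mathbb O^{N\times 2s_2}$, $\bm L\in\mathbb O^{p\times 2s_3}$ shared across tasks together with $K$ free cores in $\mathbb R^{2s_1\times 2s_2\times 2s_3}$ under the Frobenius constraint; tensoring $\epsilon$-nets of the three manifolds with a net of the core block yields, per Definition~\ref{definition:matrix_set}(i), a net $\overline{\bm\Theta}$ with $|\overline{\bm\Theta}| \le (C/\epsilon)^{d_M}$ where $d_M = N(s_1+s_2)+ps_3+Ks_1s_2s_3$. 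A union bound over $\overline{\bm\Theta}$ with $u \asymp \sigma L^{1/2}\sqrt{d_M/\widetilde T}$ and $\epsilon$ a fixed small constant gives $\max_{\overline{\cm W}\in\overline{\bm\Theta}}\sum_k w_k^{1/2}\langle\nabla\mathcal{L}_k(\cm A_k^*),\overline{\cm W}_k\rangle \lesssim \sigma L^{1/2}\sqrt{d_M/\widetilde T}$ with probability at least $1 - \exp(-C_1 d_M)$, after subtracting the bad events of Lemma~\ref{lemma:RSCRSS}.

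For the \textbf{peeling step}, for arbitrary $\cm W$ in the set Definition~\ref{definition:matrix_set}(ii) gives $\cm W = \overline{\cm W} + \sum_{i=1}^{8}\cm N_i$ with $\overline{\cm W}\in\overline{\bm\Theta}$, $\sum_{i=1}^8\|\cm N_i\|_\F^2 = \|\cm W - \overline{\cm W}\|_\F^2 \le \epsilon^2$, and each $\cm N_i/\|\cm N_i\|_\F$ in the set. By linearity of $\langle\cdot,\cdot\rangle$,
\[
\sum_{k\in[K]} w_k^{1/2}\langle\nabla\mathcal{L}_k(\cm A_k^*),\cm W_k\rangle \le \max_{\overline{\cm W}\in\overline{\bm\Theta}}\sum_{k} w_k^{1/2}\langle\nabla\mathcal{L}_k(\cm A_k^*),\overline{\cm W}_k\rangle + \Big(\sum_{i=1}^{8}\|\cm N_i\|_\F\Big)\,Z \le \max_{\overline{\bm\Theta}} + \sqrt 8\,\epsilon\,Z,
\]
where $Z$ denotes the supremum on the left of the lemma; choosing $\epsilon < 1/(2\sqrt 8)$ yields $Z \le 2\max_{\overline{\bm\Theta}} \lesssim \sigma L^{1/2}\sqrt{d_M/\widetilde T}$, which is the claim.

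\textbf{Main obstacle.} The crux is the pointwise step: $\bm x_{t,k}$ and $\bbm\varepsilon_{t,k}$ are dependent within a task, so the conditional sub-Gaussian claim must go through the VMA$(\infty)$/stacked-VAR$(1)$ machinery (the source of $\mu_{\min}(\widetilde{\bm\Xi}_k)$, $\widetilde\kappa$, and the $\max(\widetilde\kappa^2\sigma^4,\widetilde\kappa\sigma^2)$ sample-size threshold), and one must verify that aggregating the $K$ weighted tasks produces exactly the variance proxy $\sigma^2 L/\widetilde T$ — i.e., that $\widetilde T = \min_k w_k^{-1}\lambda_{\max}^{-1}(\bm\Sigma_{\bbm\varepsilon,k})T_k$ is the correct pooled normalisation, not a cruder $\sum_k$-type bound. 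Equally, obtaining the covering dimension $d_M$ (rather than something of order $K\cdot Ns_1$) depends on genuinely exploiting the shared-representation structure of the index set $\bm\Omega(2s_1,2s_2,2s_3;N,K,p)$.
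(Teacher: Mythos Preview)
Your proposal is correct and follows essentially the same route as the paper: the paper also (i) reduces the supremum to a maximum over an $\epsilon$-net of size $(C/\epsilon)^{d_M}$ via the $q=8$ decomposition property of Definition~\ref{definition:matrix_set}, (ii) intersects with the RSS event $\{\|\cm W_k\|_\F^{-2}R_k(\cm W_k)\le L\}$ from Lemma~\ref{lemma:RSCRSS} (applied per task at dimension $\asymp N(s_1\vee s_2)$ with the $\log K$ inflation), and (iii) obtains the pointwise sub-Gaussian tail $\exp(-cz^2\widetilde T/(\sigma^2 L))$ by a Chernoff/martingale argument that factors the MGF across $k$ (using independence across tasks) and peels time within each task via the tower rule, yielding exactly the variance proxy $\sigma^2\sum_k w_k T_k^{-1}\lambda_{\max}(\bm\Sigma_{\bbm\varepsilon,k})\|\cm W_k\|_\F^2 \cdot L \le \sigma^2 L/\widetilde T$ that you identify. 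The only cosmetic difference is that the paper writes the pointwise step as an explicit Chernoff bound on the joint event $\{S\ge z_1\}\cap\{\max_k \|\cm W_k\|_\F^{-2}R_k\le z_2\}$ rather than invoking ``sub-Gaussian conditional on RSS'' as a one-liner, but the content is identical.
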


\begin{proof}[\bf Proof of Lemma \ref{lemma:deterministic1}]\label{sec-lemma:deterministic1-proof}
	For each $k$, when $\lambda_k$ is sufficiently large, $\widetilde{\mathcal{L}}_k = \mathcal{L}_k$ in a neighborhood of $\bm B^{*} \bm d^{*}_k$. The strong convexity of $\mathcal{L}_k$ ensures that $\widetilde{\mathcal{L}}_k$ also inherits this property in a neighborhood of $\bm B^{*} \bm d^{*}_k$. To prove Lemma \ref{lemma:deterministic1}, it suffices to show the following Claims \ref{claim-lowrank-subspace}--\ref{claim-lowrank-subspace-sharp}.

\begin{claim}\label{claim-lowrank-subspace}
When $ \lambda_k \ge \eta_k \kappa ( 3 + 2 s_{\max}^{1/2} \mu)$,
we have $\| (\bm I - \widehat{\bm P}) \bm B^{*} \|_2 \leq  \frac{ \sqrt{2s_{\max} }}{\alpha_2  } \frac{\eta}{\rho} ( 1 + 2 s_{\max}^{1/2} \mu )$, where  $\widehat{\bm P} = \widehat{\bm B} \widehat{\bm B}^\top$ is the projection onto $\mathcal{M}(\widehat{\bm B})$ and $\eta_k = w_k^{-1/2} \eta$.
\end{claim}

\begin{claim}\label{claim-lowrank-z}
    There exists a constant $C >0$ such that
    \begin{equation}\label{eq:claimLRzlam}
	 \eta_k \kappa \left( 3 +  C s_{\max} \kappa^{1/2} \mu^2\right)  \leq  \lambda_k,
    \end{equation}
    then we have
    $$
     \| \widehat{\bm B} \widehat{\bm d}_k - \widehat{\bm P} \bm B^{*} \bm d^{*}_k \|_2 \leq  C s_{\max}   \kappa^{1/2} \mu^2 \frac{\eta_k}{\rho}.
    $$
\end{claim}

\begin{claim}\label{claim-lowrank-crude}
    Suppose that there exist positive constants $C_1$ and $C_2$ such that $\lambda_k \ge C_1 L \xi_k$ holds with $\xi_k =  s_{\max}  \kappa^{1/2}\mu^2 \rho^{-1}\eta_k$, 
    then we have $\| \widehat{\bm B} \widehat{\bm d}_k -  \bm B^{*} \bm d^{*}_k \|_2 \leq C_2 \xi_k$ and $\widehat{\bbm \alpha}_k = \widehat{\bm B} \widehat{\bm d}_k $. In addition, $\widetilde{\mathcal{L}}_k = \mathcal{L}_k$ in $B( \bm B^{*} \bm d^{*}_k , C_2 \xi_k )$.
\end{claim}

\begin{claim}\label{claim-lowrank-subspace-sharp}
    Under Assumption \ref{asmp:3} and the conditions in Claim \ref{claim-lowrank-crude}, we have
    $$
    \Big\| \left\{w_k^{1/2} (\widehat{\bm B} \widehat{\bm d}_k  - \bm B^{*} \bm d_k^{*})  \right\}_{k\in [K]} \Big\|_\F  \leq \frac{2 \gamma}{\rho} \qquad\text{and}\qquad
        \|\widehat{\bm B}\widehat{\bm B}^{\top} - \bm B^*\bm B^{*\top}\|_2 \leq \frac{6s_{\max}^{1/2}\gamma}{\alpha_2 \rho }.
    $$
    \end{claim}
    
    By Claims \ref{claim-lowrank-subspace}--\ref{claim-lowrank-crude}, it holds that $\widehat{\bm B} \widehat{\bm d}_k $ belongs to the neighborhood of $\bm B^{*} \bm d^{*}_k$ such that $\widetilde{\mathcal{L}}_k = \mathcal{L}_k$. This together with the condition of Lemma \ref{lemma:deterministic1}, implies that $\widehat{\bbm \alpha}_k = \widehat{\bm B} \widehat{\bm d}_k$. 
    Then by the strong convexity of $\widetilde{\mathcal{L}}_k$, we can get the sharp bounds on $\| \{w_k^{1/2} (\widehat{\bm B} \widehat{\bm d}_k  - \bm B^{*} \bm d_k^{*})  ,k\in [K]\} \|_\F$ and $\| \widehat{\bm B} \widehat{\bm B}^\top - \bm B^{*} \bm B^{*\top}  \|_2$ as stated in Claim \ref{claim-lowrank-subspace-sharp}.
    The proof of Lemma \ref{lemma:deterministic1} is complete. 
\end{proof}


\begin{proof}[\bf Proof of Claim \ref{claim-lowrank-subspace}]
Define $r_k =2  s_{\max}^{1/2} \mu \rho^{-1}\eta_k  $. Since $ \lambda_k \ge 3 \eta_k \kappa + Lr_k $, combining \eqref{eq:lowrank-mtl-2} and Lemma \ref{lem:repre-lower-bound}, we have
\begin{equation}
\begin{split}
0 &\geq \sum_{k=1}^{K} w_k \widetilde{\mathcal{L}}_k ( \widehat{\bm B} \widehat{\bm d}_{ k}  ) - \sum_{k=1}^{K} w_k \widetilde{\mathcal{L}}_k ( \bm B^{*} \bm d^{*}_{k} )\\ 
&\geq \sum_{k =1}^K  w_k H_k \Big(
(\| \widehat{\bm B} \widehat{\bm d}_{ k} - \bm B^{*} \bm d^{*}_{k} \|_2 -  \eta_k/\rho
)_{+} 
\Big) 
- \frac{ \sum_{k =1}^K  w_k\eta_k^2}{\rho},
\end{split}
\end{equation}
where $\widetilde{\mathcal{L}}_k(\cdot)$ is defined in \eqref{eq:infcov}, $H_k(t) = t^2 / 2$ if $0 \leq t \leq r_k$ and $H_k(t) = r_k(t-r_k/2)$ if $t > r_k$. It can be shown that $
\|  \widehat{\bm B} \widehat{\bm d}_k - \bm B^{*} \bm d^{*}_k \|_2 \geq 
\| (\bm I - \widehat{\bm P}) ( \widehat{\bm B} \widehat{\bm d}_k - \bm B^{*} \bm d^{*}_k ) \|_2 
= \| (\bm I - \widehat{\bm P}) \bm B^{*} \bm d^{*}_k \|_2. $
This together with the monotonicity property of $H_k(\cdot)$, implies that
\begin{align}
\sum_{k =1}^K w_k H_k \Big(
( \| (\bm I - \widehat{\bm P}) \bm B^{*} \bm d^{*}_k \|_2 -  \eta_k/\rho
)_{+} 
\Big) 
\leq  \frac{ \sum_{k =1}^K  w_k \eta_k^2}{\rho^2} = \frac{K\eta^2}{\rho^2}.
\label{eqn-lowrank-lower-1}
\end{align}

Next, we prove Claim \ref{claim-lowrank-subspace} by contradiction. Suppose that 
\begin{align}
\| (\bm I - \widehat{\bm P}) \bm B^{*} \|_2 >  \frac{ \sqrt{2s_{\max} } }{ \alpha_2  } \frac{\eta}{\rho} \bigg( 1 + 2 s_{\max}^{1/2} \mu \bigg).
\label{eqn-lowrank-cond}
\end{align}
Let 
$
\mathcal{T} = \{ k \in [K]:~ \| (\bm I - \widehat{\bm P}) \bm B^{*} \bm d^{*}_k \|_2 > w_k^{-1/2}\alpha_2 \| (\bm I - \widehat{\bm P}) \bm B^{*} \|_2 / \sqrt{2s_{\max}} \}.
$
It follows that
\begin{align*}
& \| (\bm I - \widehat{\bm P}) \bm B^{*} \bm d^{*}_k \|_2 >\frac{ \eta_k}{ \rho } \bigg( 1 + 2 s_{\max}^{1/2} \mu  \bigg) , \qquad \forall k \in \mathcal{T} , \\
& \sum_{k =1}^K w_k H_k \Big(
( \| (\bm I - \widehat{\bm P}) \bm B^{*} \bm d^{*}_k \|_2 -  \eta_k/\rho
)_{+} 
\Big) 
> \sum_{k \in \mathcal{T}} w_k H_k\bigg(2 s_{\max}^{1/2} \mu \frac{\eta_k}{\rho} \bigg).
\end{align*}
Recall that $r_k =  2 s_{\max}^{1/2} \mu \frac{\eta_k}{\rho}$. Hence,
\begin{align*}
H_k\bigg(2 s_{\max}^{1/2} \mu \frac{\eta_k}{\rho}\bigg)
=  \frac{1}{2} \bigg(2 s_{\max}^{1/2} \mu \frac{\eta_k}{\rho} \bigg)^2.
\end{align*}
Meanwhile, by Lemma \ref{lem-frac}, we have
$$
|\mathcal{T}|/K  \geq \frac{\alpha_2^2 / s_{\max} - (\alpha_2/\sqrt{2 s_{\max} })^2}{\alpha_1^2 - (\alpha_2/\sqrt{2 s_{\max} })^2} = \frac{1}{ 2 (\sqrt{ s_{\max}  } \alpha_1 / \alpha_2 )^2 - 1} \geq  \frac{\alpha_2^2}{ 2 s_{\max}\alpha_1 ^2 }.
$$
As a result,
\begin{align}
 \sum_{k =1}^K H_k \Big(
[ w_k^{1/2}\| (\bm I - \widehat{\bm P}) \bm B^{*} \bm d^{*}_k \|_2 -  \eta/\rho
]_{+} 
\Big) >\frac{K\alpha_2^2}{ 2 s_{\max}\alpha_1 ^2 } \frac{1}{2} \bigg(
2 s_{\max}^{1/2} \mu \frac{\eta}{\rho}  \bigg)^2 = \frac{ K\eta^2}{\rho^2}.
\end{align}
This strict lower bound contradicts \eqref{eqn-lowrank-lower-1}. Therefore, the condition \eqref{eqn-lowrank-cond} does not hold. The proof is completed.
\end{proof}

\begin{proof}[\bf Proof of Claim \ref{claim-lowrank-z}]
Since $\widehat{\bm P}$ is the projection onto $\mathcal{M}(\widehat{\bm B})$, there exists $\bm u_k \in \mathbb{R}^{s_1s_2s_3}$ such that $\widehat{\bm B} \bm u_k = \widehat{\bm P} \bm B^{*} \bm d^{*}_k $. By \eqref{eq:representation-1},
\begin{align}
\widetilde{\mathcal{L}}_k (\widehat{\bm B} \widehat{\bm d}_k ) \leq \widetilde{\mathcal{L}}_k (\widehat{\bm B} \bm u_k ) = \widetilde{\mathcal{L}}_k ( \widehat{\bm P} \bm B^{*} \bm d^{*}_k ) \leq \mathcal{L}_k ( \widehat{\bm P} \bm B^{*} \bm d^{*}_k ) .
\label{eq:representation-10}
\end{align}
When $C$ in \eqref{eq:claimLRzlam} is large enough, the conditions in Claim \ref{claim-lowrank-subspace} are satisfied. Then by Claim \ref{claim-lowrank-subspace}, $s_{\max} \ge 1 $ and $\mu \ge 1$, we have
\begin{equation}\label{eq:claim-zdistance}
\begin{split}
	&\| \widehat{\bm P} \bm B^{*} \bm d^{*}_k -  \bm B^{*} \bm d^{*}_k \|_2 
\leq \| (\bm I - \widehat{\bm P} ) \bm B^{*} \|_2 \| \bm d^{*}_k \|_2\\
\leq &\sqrt{2s_{\max} } ~\mu \frac{ \eta_k  }{  \rho }   \bigg( 1 + 2 s_{\max}^{1/2}\mu \bigg) 
\leq 5s_{\max} \mu^2  \frac{ \eta_k}{\rho}.
\end{split}
\end{equation}
 When $C \geq 5$, we have $\widehat{\bm P} \bm B^{*} \bm d^{*}_k \in B(\bm B^{*} \bm d^{*}_k, 5s_{\max} \mu^2 \rho^{-1} \eta_k)$. By Lemma \ref{lem-inf-conv}, we have $\widetilde{\mathcal{L}}_k(\widehat{\bm P} \bm B^{*} \bm d^{*}_k) = \mathcal{L}_k(\widehat{\bm P} \bm B^{*} \bm d^{*}_k) $. This together with smoothness condition and \eqref{eq:claim-zdistance}, implies that
\begin{align*}
\mathcal{L}_k (\widehat{\bm P} \bm B^{*} \bm d^{*}_k) - \mathcal{L}_k ( \bm B^{*} \bm d^{*}_k )  
&\leq \langle \nabla \mathcal{L}_k (  \bm B^{*} \bm d^{*}_k ) , (\widehat{\bm P} - \bm I)  \bm B^{*} \bm d^{*}_k \rangle + \frac{L}{2}  \| (\widehat{\bm P} - \bm I)  \bm B^{*} \bm d^{*}_k \|_2^2 \\
& \leq \bigg( \eta_k + \frac{L}{2}  \| ( \widehat{\bm P} - \bm I)  \bm B^{*} \bm d^{*}_k \|_2 \bigg)   \| (\widehat{\bm P} - \bm I)  \bm B^{*} \bm d^{*}_k \|_2 \\
& \leq \bigg( \eta_k + \frac{5}{2}L  s_{\max}   \mu^2\frac{\eta_k}{\rho} \bigg)  5 s_{\max}  \mu^2  \frac{\eta_k}{\rho}\\
&\leq 20s_{\max}^2\kappa \mu^4  \frac{\eta_k^2}{\rho}  .
\end{align*}
This inequality and \eqref{eq:representation-10} lead to
\begin{align}
\widetilde{\mathcal{L}}_k (\widehat{\bm B} \widehat{\bm d}_k) - \mathcal{L}_k ( \bm B^{*} \bm d^{*}_k ) 
\leq \mathcal{L}_k (\widehat{\bm P} \bm B^{*} \bm d^{*}_k) - \mathcal{L}_k ( \bm B^{*} \bm d^{*}_k ) 
\leq 20s_{\max}^2 \kappa \mu^4  \frac{\eta_k^2}{\rho} .
\label{eqn-lowrank-lower-10}
\end{align}

On the other hand, let
$$
r_k = C s_{\max} \kappa^{1/2} \mu^2\frac{  \eta_k}{\rho}  .
$$
Then $3 \kappa \eta_k  + Lr_k \leq  \lambda_k$. Applying Lemma \ref{lem:repre-lower-bound} to $\mathcal{L}_k(\cdot)$ yields
\begin{align}\label{eq:claimslowerboundf_k}
\widetilde{\mathcal{L}}_k (\widehat{\bm B} \widehat{\bm d}_k) - \mathcal{L}_k ( \bm B^{*} \bm d^{*}_k ) \geq \rho \cdot H_k \Big(
( \| \widehat{\bm B} \widehat{\bm d}_k - \bm B^{*} \bm d^{*}_k \|_2 - \eta_k / \rho )_+
\Big) - \frac{\eta_k^2}{\rho},
\end{align}
where $H_k(t) = t^2 / 2$ if $0 \leq t \leq r_k$ and $H_k(t) = r_k(t-r_k/2)$ if $t > r_k$. 

Finally, we complete the proof by contradiction. Suppose that $\| \widehat{\bm B} \widehat{\bm d} - \widehat{\bm P} \bm B^{*} \bm d^{*}_k \|_2 > r_k$. It is easily seen that
\begin{align*}
\| \widehat{\bm B} \widehat{\bm d}_k - \bm B^{*} \bm d^{*}_k \|_2	
\geq \| \widehat{\bm P} ( \widehat{\bm B} \widehat{\bm d}_k - \bm B^{*} \bm d^{*}_k ) \|_2	
= \| \widehat{\bm B} \widehat{\bm d}_k - \widehat{\bm P} \bm B^{*} \bm d^{*}_k \|_2.
\end{align*}
By monotonicity of $H_k(\cdot)$ and \eqref{eq:claimslowerboundf_k}, it holds that
\begin{align*}
\widetilde{\mathcal{L}}_k (\widehat{\bm B} \widehat{\bm d}_k) - \mathcal{L}_k ( \bm B^{*} \bm d^{*}_k ) > \rho H_k(r_k - \eta_k / \rho) - \frac{\eta_k^2}{\rho} = \frac{\rho (r_k - \eta_k / \rho)^2}{2} - \frac{\eta_k^2}{\rho}.
\end{align*}
When $C$ is sufficiently large, this lower bound will contradict \eqref{eqn-lowrank-lower-10}. As a result, we must have $\| \widehat{\bm B} \widehat{\bm d}_k - \widehat{\bm P} \bm B^{*} \bm d^{*}_k \|_2 \leq r_k$. This completes the proof.
\end{proof}

\begin{proof}[\bf Proof of Claim \ref{claim-lowrank-crude}]
Since $\lambda_k \ge C_1 L \xi_k$, when $C_1$ is sufficiently large, the conditions in Claims \ref{claim-lowrank-subspace} and \ref{claim-lowrank-z} are satisfied. Then by Claims \ref{claim-lowrank-subspace} and \ref{claim-lowrank-z}, we have
\begin{align*}
& \| (\bm I - \widehat{\bm P}) \bm B^{*} \|_2 \leq  \frac{ \sqrt{2s_{\max} } }{ \alpha_2 }\frac{\eta_k}{\rho } \bigg( 1 + 2  s_{\max}^{1/2}\mu \bigg) , \\
& \| \widehat{\bm P} ( \widehat{\bm B} \widehat{\bm d}_k -  \bm B^{*} \bm d^{*}_k ) \|_2 =
 \| \widehat{\bm B} \widehat{\bm d}_k - \widehat{\bm P} \bm B^{*} \bm d^{*}_k \|_2 \leq C s_{\max} \kappa^{1/2} \mu^2 \frac{   \eta_k}{\rho} ,
\end{align*}
where $C$ is the constant in Claim \ref{claim-lowrank-z}. Note that
$$
\| (\bm I - \widehat{\bm P}) ( \widehat{\bm B} \widehat{\bm d}_k -  \bm B^{*} \bm d^{*}_k ) \|_2 
= \| (\bm I - \widehat{\bm P})  \bm B^{*} \bm d^{*}_k \|_2 
\leq \| (\bm I - \widehat{\bm P})  \bm B^{*} \|_2   \| \bm d^{*}_k \|_2 \leq \alpha_1 \| (\bm I - \widehat{\bm P}) \bm B^{*} \|_2 .
$$
Then, the claimed bound on $  \| \widehat{\bm B} \widehat{\bm d}_k -  \bm B^{*} \bm d^{*}_k \|_2$ follows from
$$
 \| \widehat{\bm B} \widehat{\bm d}_k -  \bm B^{*} \bm d^{*}_k \|_2^2 = 
\| \widehat{\bm P} ( \widehat{\bm B} \widehat{\bm d}_k -  \bm B^{*} \bm d^{*}_k ) \|_2^2 +
\| (\bm I - \widehat{\bm P}) ( \widehat{\bm B} \widehat{\bm d}_k -  \bm B^{*} \bm d^{*}_k ) \|_2^2 .
$$

The other claims are implied by \eqref{eqn-lowrank-mtl-theta}, Lemma \ref{lem-inf-conv} and Lemma \ref{lem:repre-lower-bound}.
\end{proof}

\begin{proof}[\bf Proof of Claim \ref{claim-lowrank-subspace-sharp}]
By Claim \ref{claim-lowrank-crude} and the strong convexity of $\mathcal{L}_k$ near $\bm B^{*} \bm d^{*}_k$, we have
\begin{align*}
\widetilde{\mathcal{L}}_k(\widehat{\bm B} \widehat{\bm d}_k) - \widetilde{\mathcal{L}}_k (\bm B^{*} \bm d^{*}_k)
& = \mathcal{L}_k (\widehat{\bm B} \widehat{\bm d}_k) - \mathcal{L}_k (\bm B^{*} \bm d^{*}_k)  \\
& \geq 
\langle \nabla \mathcal{L}_k (\bm B^{*} \bm d^{*}_k ) , \widehat{\bm B} \widehat{\bm d}_k - \bm B^{*} \bm d^{*}_k \rangle + \frac{\rho}{2} \| \widehat{\bm B} \widehat{\bm d}_k - \bm B^{*} \bm d^{*}_k \|_2^2, \qquad\forall k \in [K].
\end{align*}
Therefore, it holds that
\begin{align*}
0 &\geq \sum_{k=1}^{K}
w_k\widetilde{L}(\widehat{\bm B} \widehat{\bm d}_k) - \sum_{k=1}^{K}w_k\widetilde{L} (\bm B^{*} \bm d^{*}_k) \\
& \geq 
 \Big \langle\Big\{ w_k^{1/2}\nabla \mathcal{L}_k (\bm B^{*} \bm d^{*}_k) \Big \}_{k\in [K]} , \left\{w_k^{1/2}(\widehat{\bm B} \widehat{\bm d}_k  - \bm B^{*} \bm d_k^{*} ) \right\}_{k\in [K]} \Big \rangle + \frac{\rho}{2}\sum_{k\in [K]}w_k \| \widehat{\bm B} \widehat{\bm d}_k  - \bm B^{*} \bm d_k^{*} \|_\F^2  \\
& \geq - \sup_{\bm W \in \bbm \Omega(2s_1,2s_2,2s_3;N,K,p)}\sum_{k\in [K]} \Big\langle  w_k^{1/2}\nabla \mathcal{L}_k (\bm B^{*} \bm d^{*}_k), \bm w_k \Big \rangle \Big\| \left\{w_k^{1/2} (\widehat{\bm B} \widehat{\bm d}_k  - \bm B^{*} \bm d_k^{*})  \right\}_{k\in [K]} \Big\|_\F\\
&~~~+\frac{\rho}{2}\sum_{k\in [K]}w_k \| \widehat{\bm B} \widehat{\bm d}_k  - \bm B^{*} \bm d_k^{*} \|_\F^2 \\
& =   \frac{\rho}{2}\Big\| \left\{w_k^{1/2} (\widehat{\bm B} \widehat{\bm d}_k  - \bm B^{*} \bm d_k^{*})  \right\}_{k\in [K]} \Big\|_\F \Big( \Big\| \left\{w_k^{1/2} (\widehat{\bm B} \widehat{\bm d}_k  - \bm B^{*} \bm d_k^{*})  \right\}_{k\in [K]} \Big\|_\F - \frac{2\gamma}{\rho}\Big) .
\end{align*}
Thus, we have
\begin{equation}\label{eq:Claimsharpbound}
	\Big\| \left\{w_k^{1/2} (\widehat{\bm B} \widehat{\bm d}_k  - \bm B^{*} \bm d_k^{*})  \right\}_{k\in [K]} \Big\|_\F  \leq \frac{2 \gamma}{\rho}.
\end{equation}

Recall that $\bm B^{*} = \bm U^* \otimes \bm L^* \otimes \bm V^*  $. Then, by Assumption \ref{asmp:3}($ii$), we have 
\begin{align*}
&\lambda_{s_1}\Big(\sum_{k\in [K]} w_k\bm U^{*} \cm D_{k(1)}^{*} (\bm L^* \otimes \bm V^*)^\top(\bm L^* \otimes \bm V^*) \cm D_{k(1)}^{*\top} \bm U^{*\top}\Big)  \\
=& \lambda_{s_1}\Big(\bm U^{*} \sum_{k\in [K]} w_k( \cm D_{k(1)}^{*}  \cm D_{k(1)}^{*\top}  )\bm U^{*\top}\Big) = \lambda_{s_1}\Big(\sum_{k\in [K]} w_k \cm D_{k(1)}^{*}  \cm D_{k(1)}^{*\top}\Big) \ge \frac{ \alpha_2^2 }{s_1}.
\end{align*}
Then by Wedin's theorem \citep{wedin1972perturbation} and \eqref{eq:Claimsharpbound}, it follows that 
\begin{equation}
\begin{split}
    \|\widehat{\bm U}\widehat{\bm U}^{\top} - \bm U^*\bm U^{*\top}\|_2 
    &\le \Big\| \left\{w_k^{1/2} (\widehat{\bm B} \widehat{\bm d}_k  - \bm B^{*} \bm d_k^{*})  \right\}_{k\in [K]} \Big\|_2/\sigma_{s_1}\Big(\sum_{k\in [K]} w_k \cm D_{k(1)}^{*}  \cm D_{k(1)}^{*\top}\Big)\\
    &\le \Big\| \left\{w_k^{1/2} (\widehat{\bm B} \widehat{\bm d}_k  - \bm B^{*} \bm d_k^{*})  \right\}_{k\in [K]} \Big\|_\F/\sigma_{s_1}\Big(\sum_{k\in [K]} w_k \cm D_{k(1)}^{*}  \cm D_{k(1)}^{*\top}\Big)\\
    &\le \frac{2\sqrt{s_1}\gamma}{\rho \alpha_2}.
\end{split}
\end{equation}
Similarly, we have $\|\widehat{\bm V}\widehat{\bm V}^{\top} - \bm V^*\bm V^{*\top}\|_2 \le \frac{2\sqrt{s_2}\gamma}{\rho \alpha_2}$ and $\|\widehat{\bm L}\widehat{\bm L}^{\top} - \bm L^*\bm L^{*\top}\|_2 \le \frac{2\sqrt{s_3}\gamma}{\rho \alpha_2}$. 
Thus, 
\begin{equation}
\begin{split}
    \|\widehat{\bm B}\widehat{\bm B}^{\top} - \bm B^*\bm B^{*\top}\|_2 
    &=\|\widehat{\bm U}\widehat{\bm U}^{\top}\otimes \widehat{\bm V}\widehat{\bm V}^{\top} \otimes \widehat{\bm L}\widehat{\bm L}^{\top} - \bm U^*\bm U^{*\top}\otimes \widehat{\bm V}\widehat{\bm V}^{\top}\otimes \widehat{\bm L}\widehat{\bm L}^{\top} \\
    &\quad + \bm U^*\bm U^{*\top}\otimes \widehat{\bm V}\widehat{\bm V}^{\top}\otimes \widehat{\bm L}\widehat{\bm L}^{\top} -\bm U^*\bm U^{*\top} \otimes \bm V^*\bm V^{*\top}\otimes \widehat{\bm L}\widehat{\bm L}^{\top} \\
    &\quad + \bm U^*\bm U^{*\top}\otimes \bm V^*\bm V^{*\top}\otimes \widehat{\bm L}\widehat{\bm L}^{\top} -\bm U^*\bm U^{*\top} \otimes \bm V^*\bm V^{*\top}\otimes \bm L^*\bm L^{*\top} \|_2\\
    &\le \|\widehat{\bm U}\widehat{\bm U}^{\top} -  \bm U^*\bm U^{*\top}\|_2 \|\widehat{\bm V}\widehat{\bm V}^{\top}\|_2 \|\widehat{\bm L}\widehat{\bm L}^{\top}\|_2 \\
    & \quad + \|\bm U^*\bm U^{*\top}\|_2 \|\widehat{\bm V}\widehat{\bm V}^{\top}- \bm V^*\bm V^{*\top}\|_2 \|\widehat{\bm L}\widehat{\bm L}^{\top}\|_2\\
    & \quad +\|\bm U^*\bm U^{*\top}\|_2  \|\bm V^*\bm V^{*\top}\|_2\|\widehat{\bm L}\widehat{\bm L}^{\top}- \bm L^*\bm L^{*\top}\|_2 \\
    &\le \frac{6s_{\max}^{1/2}\gamma}{\rho \alpha_2}.
\end{split}
\end{equation}
This completes the proof.
\end{proof}

\begin{proof}[\textbf{Proof of Lemma \ref{lemma:deterministic2}}]

    From the conditions of Lemma \ref{lemma:deterministic2}, we obtain
    \begin{align}
    2 C  s_{\max} \kappa^{3/2} \mu^2\eta_k^* < \lambda_k.
    \label{eqn-armul-deterministic-lowrank-lambda2}
    \end{align}
    Note that $\nabla \mathcal{L}_k(\bbm \alpha_k) = T_k^{-1} \bm Z_k^{\top} (\bm y_k - \bm Z_k \bbm \alpha_k)$ and $\nabla^2 \mathcal{L}_k(\bbm \alpha_k) = T_k^{-1} \bm Z_k^{\top} \bm Z_k \preceq L \bm I_{N^2p} $. 
    For any $ k\in [K] $, we have
    $$
     \| \nabla \mathcal{L}_k (\bbm \alpha_k^*) - \nabla \mathcal{L}_k (\bm B^* \bm d_k^*) \|_2 =  \| T_k^{-1} \bm Z_k^\top \bm Z_k (\bbm \alpha_k^* - \bm B^* \bm d_k^*) \|_2 
    \leq \| T_k^{-1} \bm Z_k^\top \bm Z_k \|_2  \| \bbm \alpha_k^* - \bm B^* \bm d_k^* \|_2 \leq L h_{\mathrm{S}}.
    $$
    By the triangle inequality and the condition for $ h $ in Lemma \ref{lemma:deterministic2}, we have
    \begin{equation}
    \begin{split}
    	 \| \nabla \mathcal{L}_k (\bm B^* \bm d_k^*) \|_2 
    &\leq  \| \nabla \mathcal{L}_k (\bbm \alpha_k^*) \|_2 +  \| \nabla \mathcal{L}_k (\bm B^* \bm d_k^*) - \nabla \mathcal{L}_k (\bbm \alpha_k^*) \|_2 \\
    &\leq \eta_k^* +  L h_{\mathrm{S}} \\
    &< \frac{(2C + 1)\eta_k^*}{2 C} + \frac{\lambda_k}{4 C^2 s_{\max} \kappa^{3/2} \mu^2 }.
    \end{split}
    \end{equation}
    Consequently,
    \begin{align*}
    &C s_{\max} \kappa^{3/2} \mu^2   \| \nabla \mathcal{L}_k (\bm B^* \bm d_k^*) \|_2
    <(C + 1/2)s_{\max} \kappa^{3/2} \mu^2  \eta_k^* + \frac{\lambda_k}{4 C} 
    \overset{\mathrm{(i)}}{<} \lambda_k,
    \end{align*}
    where the inequality $ \mathrm{(i)} $ follows from \eqref{eqn-armul-deterministic-lowrank-lambda2} and $ C \geq 1 $. This together with Lemma \ref{lemma:deterministic1} ensures that $ \widehat{\bbm \alpha}_k = \widehat{\bm B} \widehat{\bm d}_k $, and
    \begin{align*}
    \| \widehat{\bm B} \widehat{\bm d}_k - \bm B^* \bm d_k^* \|_2 & \lesssim 
    \frac{\gamma s_{\max}^{1/2}}{\rho \alpha_2} \left( \frac{\eta_k}{\rho} + \alpha_1 \kappa^{1/2} \right) 
    +\frac{\tau_k}{\rho} ,
    \end{align*}
    where $ \| \bm P^* \nabla \mathcal{L}_k (\bm B^* \bm d_k^*) \|_2 \le \tau_k  $,
    $$
    \| \nabla \mathcal{L}_k (\bm B^* \bm d_k^*)\|_2 \le \eta_k
    \quad \text{and} \quad 
     \sup_{\bm W \in \bm \Omega(2s_1, 2s_2,2s_3 ;N, K, p)} \sum_{k\in [K]} \left\langle w_k^{1/2} \nabla \mathcal{L}_k (\bm B^* \bm d_k^*), \bm w_k \right\rangle \le \gamma.
    $$
    By the triangle inequality,
    $$
    \| \widehat{\bbm \alpha}_k - \bbm \alpha_k^* \|_2 \leq \| \widehat{\bbm \alpha}_k - \bm B^* \bm d_k^* \|_2 + \| \bm B^* \bm d_k^* - \bbm \alpha_k^* \|_2 \leq \| \widehat{\bm B} \widehat{\bm d}_k - \bm B^* \bm d_k^* \|_2 + h_{\mathrm{S}}.    $$
    Similar to \eqref{eq:turnLh}, we can show that $ \tau_k \leq \tau_k^* + Lh_{\mathrm{S}} $, $ \eta_k \leq \eta_k^* + Lh_{\mathrm{S}} $, and
    \begin{equation}
    \begin{split}
	\gamma \leq& \sup_{\bm W \in \bm \Omega(2s_1, 2s_2,2s_3; N, K, p)} \sum_{k\in [K]} \left\langle w_k^{1/2} \nabla \mathcal{L}_k (\bm A_k^*), \bm w_k \right\rangle \\
    &+ \sup_{\bm W \in \bm \Omega(2s_1, 2s_2, 2s_3; N, K, p)} \sum_{k\in [K]} \left\langle w_k^{1/2} (\nabla \mathcal{L}_k (\bm B^* \bm d_k^*) - \nabla \mathcal{L}_k (\bm A_k^*)), \bm w_k \right\rangle \\
    \leq& \gamma^* + \left\| \{ w_k^{1/2} [ \nabla \mathcal{L}_k (\bm B^* \bm d_k^*) - \nabla \mathcal{L}_k (\bm A_k^*) ] \}_{k \in [K]} \right\|_\F\\
    \leq&  \gamma^* + L h_{\mathrm{S}}.
    \end{split}
    \end{equation}
    
    Based on the above inequalities, we conclude that
    \begin{align*}
    \| \widehat{\bbm \alpha}_k - \bbm \alpha_k^* \|_2 & \lesssim 
    \frac{s_{\max}^{1/2} \gamma}{\rho \alpha_2} \left( \frac{\eta_k}{\rho} 
    + \alpha_1 \kappa^{1/2} \right) + \frac{\tau_k}{\rho} + h_{\mathrm{S}} \\
    &\le \frac{s_{\max}^{1/2} (\gamma^* + Lh_{\mathrm{S}})}{\rho \alpha_2} \left( \frac{\eta_k^* + Lh_{\mathrm{S}}}{\rho} 
    + \alpha_1 \kappa^{1/2} \right) + \frac{\tau_k^* + Lh_{\mathrm{S}}}{\rho}  +h_{\mathrm{S}} \\
    & \lesssim  \frac{s_{\max}^{1/2}}{\alpha_2} \left( \frac{\gamma^*}{\rho} + \kappa h_{\mathrm{S}} \right) \left( \frac{\eta_k^*}{\rho} + \kappa h_{\mathrm{S}} + \sqrt{\kappa} \alpha_1 \right) + \frac{\tau_k^*}{\rho} + \kappa h_{\mathrm{S}},
    \end{align*}
   	which completes the proof.
\end{proof}

\begin{proof}[\bf Proof of Lemma \ref{lemma:RSCRSS}]
        Let $R_{k}(\bm W_k) =  T_k^{-1}\sum_{t=1}^{T_k}\|\bm W_{k}\bm{x}_{t,k}\|_2^2 $. It suffices to show that 
        $$\mathbb{E}[R_{k}(\bm W_k)] - \sup_{\bm W_k\in \bm \Theta}|R_k(\bm W_k)-\mathbb{E}[R_k(\bm W_k)] \le R_{k}(\bm W_k) \le \mathbb{E}[R_{k}(\bm W_k)] + \sup_{\bm W_k\in \bm \Theta}|R_k(\bm W_k)-\mathbb{E}[R_k(\bm W_k)].$$
        First, by the spectral measure of ARMA process in \cite{basu2015regularized}, we have $\lambda_{\min}\left\{\mathbb{E}(\bm x_{t,k} \bm x_{t,k}^\top)\right\} \ge \lambda_{\min}(\bm{\Sigma}_{\bbm{\varepsilon},k})/ \mu_{\max}(\bm \Xi_k) = \rho_k$ and $\lambda_{\max}\left\{\mathbb{E}(\bm x_{t,k} \bm x_{t,k}^\top)\right\} \le \lambda_{\max}(\bm{\Sigma}_{\bbm{\varepsilon},k})/ \mu_{\min}(\bm \Xi_k) = L_k$. Since $\|\bm W_k\|_\F = 1$, then for $\mathbb{E}[R_{k}(\bm W_k)]$, it holds that 
        \begin{equation}\label{eq:EexpHess}
        	\rho_k\le \mathbb{E}[R_{k}(\bm W_k)] =\mathbb{E}\left(T_k^{-1}\sum_{t=1}^{T_k}\tr(\bm W_k \bm x_{t,k} \bm x_{t,k}^\top \bm W_k^\top) \right) \le L_k.
        \end{equation}
        
        Next, we will derive an upper bound for $\sup_{\bm W_k\in \bm \Theta}|R_k(\bm W_k)-\mathbb{E}[R_k(\bm W_k)]$.
        Recall the equivalent VMA($\infty$) form in \eqref{eq:VMAinf}, it follows that 
        \begin{equation}\label{eq:R_k}
        	 R_{k}(\bm W_k) =T_k^{-1}\widetilde{\bm x}_k^\top (\bm I_{T_k}\otimes \bm W_k^\top \bm W_k) \widetilde{\bm x}_k 
             =T_k^{-1} \widetilde{\bbm \zeta}_{k}^\top \widetilde{\bm \Sigma}_{\bbm\varepsilon,k}^{1/2} \widetilde{\bm \Phi}_k^\top(\bm I_{T_k}\otimes \bm W_k^\top \bm W_k)\widetilde{\bm \Phi}_k \widetilde{\bm \Sigma}_{\bbm\varepsilon,k}^{1/2} \widetilde{\bbm \zeta}_{k} 
        	=\widetilde{\bbm \zeta}_{k}^\top \bm H_{W,k} \widetilde{\bbm \zeta}_{k},
        \end{equation}
        where $\bm H_{W,k} = T_k^{-1}\widetilde{\bm \Sigma}_{\bbm\varepsilon,k}^{1/2} \widetilde{\bm\Phi}_k^\top(\bm I_{T_k}\otimes \bm W_k^\top \bm W_k)\widetilde{\bm \Phi}_k \widetilde{\bm \Sigma}_{\bbm\varepsilon,k}^{1/2} $.
        By the sub-multiplicative property of the Frobenius norm and operator norm, we have
        \begin{equation}
            \|\bm H_{W,k}\|_\F^2\leq T_k^{-1} \lambda_{\max}^2(\bm{\Sigma}_{\bbm{\varepsilon},k}) \lambda_{\max}^2(\widetilde{\bm{\Phi}}_k\widetilde{\bm{\Phi}}_k^\top) = T_k^{-1} \widetilde{L}_k^2
        \end{equation}
        and
        \begin{equation}
            \|\bm H_{W,k}\|_\op\leq T_k^{-1}\lambda_{\max}(\bm{\Sigma}_{\bbm{\varepsilon},k}) \lambda_{\max}(\widetilde{\bm{\Phi}}_k\widetilde{\bm{\Phi}}_k^\top) = T_k^{-1} \widetilde{L}_k,
        \end{equation}
        where the equalities hold by the properties of VMA($\infty$) process. Note that $\widetilde{\bm{\Phi}}_k$ is related to the VMA($\infty$) process, by the spectral measure of ARMA process discussed in \citet{basu2015regularized}, we can replace $\lambda_{\max}(\widetilde{\bm{\Phi}}_k\widetilde{\bm{\Phi}}_k^{\top})$ with $1/\mu_{\min}(\widetilde{\bm \Xi}_k)$, hence
        $\lambda_{\max}(\bm{\Sigma}_{\bbm{\varepsilon},k}) \lambda_{\max}(\widetilde{\bm{\Phi}}_k\widetilde{\bm{\Phi}}_k^\top) = 
        \lambda_{\max}(\bm{\Sigma}_{\bbm{\varepsilon},k})/\mu_{\min}(\widetilde{\bm \Xi}_k) = \widetilde{L}_k$.
        For any $\bm W_k\in \bm \Theta(\epsilon, d,q)$ and $t>0$, by Hanson-Wright inequality,
        \begin{equation}\label{eq:boundR-ER1}
            \begin{split}
                 &\mathbb{P}[|R_k(\bm W_k)-\mathbb{E}[R_k(\bm W_k)]|\geq t]
                \leq 2\exp\Bigg(-T_k\min\Bigg(\frac{t^2}{\sigma^4 \widetilde{L}_k^2},
                \frac{t}{ \sigma^2  \widetilde{L}_k }\Bigg)\Bigg).
            \end{split}
        \end{equation}
        Denote $\bm Q_k =T_k^{-1} \sum_{t=1}^{T_k}[\bm{x}_{t,k}\bm{x}_{t,k}^\top - \mathbb{E}(\bm{x}_{t,k}\bm{x}_{t,k}^\top)]$ for $k \in [K]$. Then we can rewrite $R_k(\bm W_k)-\mathbb{E}[R_k(\bm W_k)]$ as follows,
        \begin{equation}
            \begin{split}
                R_k(\bm W_k)-\mathbb{E}[R_k(\bm W_k)] 
                &= T_k^{-1}\sum_{t=1}^{T_k}\tr\Big( \bm W_k\left[\bm{x}_{t,k}\bm{x}_{t,k}^\top - \mathbb{E}(\bm{x}_{t,k}\bm{x}_{t,k}^\top)\right] \bm W_k^\top \Big)\\
                &= T_k^{-1}\tr\Big( \bm W_k\sum_{t=1}^{T_k}\left[\bm{x}_{t,k}\bm{x}_{t,k}^\top - \mathbb{E}(\bm{x}_{t,k}\bm{x}_{t,k}^\top)\right] \bm W_k^\top \Big)\\
                &= \langle \bm W_k \bm{Q}_k, \bm W_k \rangle.
            \end{split}
        \end{equation}
        Given any $\bm W_k \in \bm \Theta(\epsilon, d, q)$, by Definition \ref{definition:matrix_set}, there exists $\overline{\bm W}_k \in \overline{\bm \Theta}$ such that
        \begin{itemize}
			\item $\|\bm{W}_k - \overline{\bm{W}}_k\|_\F \leq \epsilon$, and the covering number of $\overline{\bm \Theta}$ is at most $(C/\epsilon)^d$;
			\item $\bm{W}_k - \overline{\bm{W}}_k = \sum_{i=1}^q \bm{N}_{k,i},$
			where $\bm{N}_{k,i}/\|\bm{N}_{k,i}\|_\F \in \bm \Theta(\epsilon, d, q)$ and
			$\|\bm{W}_k - \overline{\bm{W}}_k\|_\F^2 = \sum_{i=1}^q \|\bm{N}_{k,i}\|_\F^2.$
		\end{itemize}
        Thus, we have $\sum_{i=1}^q\| \bm N_{k.i}\|_\F = \sqrt{q}\|\bm W_k-\overline{\bm W}_k\|_\F \le \sqrt{q} \epsilon$. It follows that 
        \begin{equation}
            \begin{split}
                \langle \bm W_k \bm{Q}_k, \bm W_k \rangle
                =& \langle \overline{\bm W}_k \bm{Q}_k, \overline{\bm W}_k \rangle + 2\langle (\bm W_k-\overline{\bm W}_k) \bm{Q}_k, \bm W_k \rangle
                +\langle (\bm W_k-\overline{\bm W}_k) \bm{Q}_k, (\bm W_k -\overline{\bm W}_k) \rangle\\
                \le & 2\langle \overline{\bm W}_k \bm{Q}_k, \overline{\bm W}_k \rangle 
                +2\sum_{i=1}^q \sum_{j=1}^q  \|\bm N_{k,i}\|_\F \|\bm N_{k,j}\|_\F\Big\langle \frac{\bm N_{k,i}}{\|\bm N_{k,i}\|_\F} \bm{Q}_k, \frac{\bm N_{k,j}}{\|\bm N_{k,j}\|_\F} \Big\rangle\\
                \le&2\langle \overline{\bm W}_k \bm{Q}_k, \overline{\bm W}_k \rangle
                +2(\sum_{i=1}^q  \|\bm N_{k,i}\|_\F)^2 \sup_{\bm W_k\in \bm \Theta(\epsilon,d,q)} \langle \bm W_k \bm{Q}_k, \bm W_k \rangle\\
                \le & 2\langle \overline{\bm W}_k \bm{Q}_k, \overline{\bm W}_k \rangle
                +2q\epsilon^2 \sup_{\bm W_k\in \bm \Theta(\epsilon,d,q)} \langle \bm W_k \bm{Q}_k, \bm W_k \rangle .
            \end{split}
        \end{equation}
        Then we have
        \begin{equation}\label{eq:supR-ER1}
            \sup_{\bm W_k\in\bm \Theta(\epsilon,d,q)} \langle \bm W_k \bm{Q}_k, \bm W_k \rangle \le [1-2q \epsilon^2]^{-1}2 \max_{\overline{\bm W}_k\in \overline{\bm \Theta}} \langle \overline{\bm W}_k \bm{Q}_k, \overline{\bm W}_k \rangle.
        \end{equation}        
        Therefore, by \eqref{eq:boundR-ER1}, \eqref{eq:supR-ER1} and $|\bm \Theta|\le (C/\epsilon)^d$ with $\epsilon=(2\sqrt{q})^{-1}$, we can easily construct the union bound,
        \begin{equation}
            \begin{split}
                & \mathbb{P}\left(\sup_{\bm W_k\in \bm \Theta}|R_k(\bm W_k)-\mathbb{E}[R_k(\bm W_k)] |\geq t\right)\\
                \le &\mathbb{P}\left(\max_{\overline{\bm W}_k \in  \overline{\bm \Theta}}|R_k(\bm W_k)-\mathbb{E}[R_k(\bm W_k)] | \geq (1-2q \epsilon^2)2^{-1} t\right)\\
                \leq & 2\exp\Bigg\{C_1d -C_2T_k\min\Bigg(\frac{t^2}{\sigma^4 \widetilde{L}_k^2},
                \frac{t}{ \sigma^2  \widetilde{L}_k }\Bigg)\Bigg\}.
            \end{split}
        \end{equation}
        Note that $\widetilde{\kappa}_k = \widetilde{L}_k/\rho_k$. Let $t=\rho_k/2$, then we have
        \begin{equation}
            \begin{split}
                \min\Bigg(\frac{t^2}{\sigma^4 \widetilde{L}_k^2},
                \frac{t}{ \sigma^2  \widetilde{L}_k }\Bigg)
                \asymp \min (  \widetilde{\kappa}_k^{-2}\sigma^{-4},  \widetilde{\kappa}_k^{-1} \sigma^{-2} ).
            \end{split}
        \end{equation}
        For $T_k\gtrsim  d\max ( \widetilde{\kappa}_k^{2}\sigma^{4}, \widetilde{\kappa}_k \sigma^{2} )  $, we can show that
        \begin{equation}\label{eq:deviation1}
            \begin{split}
                 &\mathbb{P}\left(\sup_{\bm W_k\in\bm \Theta}|R_k(\bm W_k)-\mathbb{E}[R_k(\bm W_k)] |\geq  \rho_k/2\right)
                \le 2\exp\left(-CT_k \min (  \widetilde{\kappa}_k^{-2}\sigma^{-4},  \widetilde{\kappa}_k^{-1} \sigma^{-2} )  \right).
            \end{split}
        \end{equation}
        This together with \eqref{eq:EexpHess} and \eqref{eq:deviation1}, implies that, for any $k$, 
        \begin{equation}
           \frac{1}{2} \rho_k  \le \frac{1}{2}\lambda_{\min}(\bm{\Sigma}_{\bbm{\varepsilon},k})/ \mu_{\max}(\bm \Xi_k)  \le R_k( \bm W_k)\le \frac{3}{2}\lambda_{\max}(\bm{\Sigma}_{\bbm{\varepsilon},k})/ \mu_{\min}(\bm \Xi_k)   = \frac{1}{2}L_k
        \end{equation}
        with probability at least $1-2\exp(-C  T_k \min (  \widetilde{\kappa}_k^{-2}\sigma^{-4},  \widetilde{\kappa}_k^{-1} \sigma^{-2}))$.

        For all $k\in [K]$, by union bound, it can be easily shown that
        \begin{equation}
            \begin{split}
                 &\mathbb{P}\left(\bigcup_{k\in[K]}\left\{\sup_{\bm W_k\in\bm \Theta}|R_k(\bm W_k)-\mathbb{E}[R_k(\bm W_k)]| \ge t\right\} \right)\\
                 \le &\sum_{k\in [K]}\mathbb{P}\left(\sup_{\bm W_k\in\bm \Theta}|R_k(\bm W_k)-\mathbb{E}[R_k(\bm W_k)]|\geq t   \right)\\
                \le & 2\sum_{k\in [K]}\exp\Bigg\{C_1d -C_2T_k\min\Bigg(\frac{t^2}{\sigma^4 \widetilde{L}_k^2},
                \frac{t}{ \sigma^2  \widetilde{L}_k }\Bigg)\Bigg\}\\
                \le & 2 K \exp\Bigg\{C_1d -C_2 T_{\min} \min\Bigg(\frac{t^2}{\sigma^4 \widetilde{L}^2},
                \frac{t}{ \sigma^2  \widetilde{L} }\Bigg)\Bigg\}.
            \end{split}
        \end{equation}
        We take $t = \rho/2$. For $T_{\min} \gtrsim (d+\log K)\max ( \widetilde{\kappa}^{2}\sigma^{4}, \widetilde{\kappa} \sigma^{2} ) $, it follows that 
        \begin{equation}
            \begin{split}
                 &\mathbb{P}\left(\bigcup_{k\in[K]}\left\{\sup_{\bm W_k\in\bm \Theta}|R_k(\bm W_k)-\mathbb{E}[R_k(\bm W_k)]| \ge \rho/2\right\}    \right)
                \le  2\exp\left(-C T_{\min}\min ( \widetilde{\kappa}^{-2}\sigma^{-4}, \widetilde{\kappa}^{-1} \sigma^{-2} )\right).
            \end{split}
        \end{equation}
        Thus, combining the above inequality with \eqref{eq:EexpHess}, for all $k\in [K]$, with probability at least $1-2\exp\left(-C T_{\min}\min (  \widetilde{\kappa}^{-2}\sigma^{-4},  \widetilde{\kappa}^{-1} \sigma^{-2})\right)$, we have 
        \begin{align}
            \frac{1}{2} \rho  \leq &\mathbb{E}[R_{k}(\bm W_k)] - \sup_{\bm W_k\in \bm \Theta}|R_k(\bm W_k)-\mathbb{E}[R_k(\bm W_k)] \\ \le & R_{k}(\bm W_k) \le \mathbb{E}[R_{k}(\bm W_k)] + \sup_{\bm W_k\in \bm \Theta}|R_k(\bm W_k)-\mathbb{E}[R_k(\bm W_k)]\leq\frac{1}{2} L .
        \end{align}   
    This completes the proof.
        
\end{proof}

\begin{proof}[\bf Proof of Lemma \ref{lemma:deviation}]

        Let $\bm \Theta(\epsilon,d,q)$ satisfy the properties in Definition \ref{definition:matrix_set}. 
        For fixed $\bm W_k \in \bm \Theta(\epsilon,d,q)$ for each $k$, then there exists $\overline{\bm W}_k \in \overline{\bm \Theta}$ such that 
        \begin{itemize}
        	\item $\|\bm{W}_k - \overline{\bm{W}}_k\|_\F \leq \epsilon$, and the covering number of $\overline{\bm \Theta}$ is at most $(C/\epsilon)^d$;
			\item $\bm{W}_k - \overline{\bm{W}}_k = \sum_{i=1}^q \bm{N}_{k,i},$
			where $\bm{N}_{k,i}/\|\bm{N}_{k,i}\|_\F \in \bm \Theta(\epsilon, d, q)$ and
			$\|\bm{W}_k - \overline{\bm{W}}_k\|_\F^2 = \sum_{i=1}^q \|\bm{N}_{k,i}\|_\F^2.$       
		\end{itemize}
		Thus, $\sum_{i=1}^q \|\bm N_{k.i}\|_\F \le  \sqrt{q}\|\bm T_k\|_\F = \sqrt{q} \epsilon$. Combining these properties, we have 
        \begin{equation}
            \begin{split}
                \left\langle  \nabla \mathcal{L}_k(\bbm \alpha_k^*),\bm W_k\right\rangle 
                \leq& \max_{\overline{\bm W}_k\in\overline{\bm \Theta}} \left\langle  \nabla \mathcal{L}_k(\bbm \alpha_k^*),\overline{\bm W}_k\right\rangle + \sum_{i=1}^q \left\langle  \nabla \mathcal{L}_k(\bbm \alpha_k^*),\bm{N}_{k,i}/\|\bm{N}_{k,i}\|_\F\right\rangle\|\bm{N}_{k,i}\|_\F\\
                \leq& \max_{\overline{\bm W}_k\in\overline{\bm \Theta}}
                 \left\langle \nabla \mathcal{L}_k(\bbm \alpha_k^*),\overline{\bm W}_k\right\rangle + \sqrt{q}\epsilon\sup_{\bm W_k\in \bm \Theta(\epsilon,d,q)} \left\langle  \nabla \mathcal{L}_k(\bbm \alpha_k^*),\bm W_k\right\rangle,
             \end{split}
        \end{equation}
        which implies that
        \begin{equation}
            \sup_{\bm W_k\in\bm \Theta(\epsilon,d,q)} \left\langle  \nabla \mathcal{L}_k(\bbm \alpha_k^*),\bm W_k\right\rangle \leq (1-\sqrt{q}\epsilon)^{-1}\max_{\overline{\bm W}_k\in\overline{\bm \Theta}} \left\langle  \nabla \mathcal{L}_k(\bbm \alpha_k^*),\overline{\bm W}_k\right\rangle.
        \end{equation}
    
        For any fixed $\bm W_k$ such that $  \|\bm W_k\|_\F^2=1$, denote $S_k(\bm W_k)= T_k^{-1}\sum_{t=1}^{T_k}\langle\bbm{\varepsilon}_{t,k},\bm W_k\bm{x}_{t,k}\rangle$ and $R_k(\bm W_k)= T_k^{-1}\sum_{t=1}^{T_k}\|\bm W_k\bm{x}_{t,k}\|_2^2$. Similar to Lemma S5 of \citet{wang2024high}, by the standard Chernoff bound, for any $z_1>0$ and $z_2>0$,
        \begin{equation}
        \begin{split}
            &\mathbb{P}[\{S_k(\bm W_k)\geq z_1\}\cap\{R_k(\bm W_k)\leq z_2\}]\\
             = & \inf_{m>0} \mathbb{P}[\{\exp(m S_k(\bm W_k))\geq \exp(m  z_1)\}\cap\{R_k(\bm W_k)\leq z_2\}]\\
            \leq& \inf_{m>0} \exp(-m  z_1) \mathbb{E}\left(\exp[m S_k(\bm W_k)]\mathbb{I}\{R_k(\bm W_k)\leq z_2\}\right)\\
            \leq& \inf_{m>0} \exp(-m z_1 + c m^2 z_2)
            \cdot \mathbb{E}\left(\exp[m S_k(\bm W_k) - cm^2 z_2]\mathbb{I}\{R_k(\bm W_k)\leq z_2\}\right)\\
            \leq& \inf_{m>0} \exp(-m z_1 +  c m^2z_2 )
            \cdot \mathbb{E}\left(\exp[m S_k(\bm W_k) - cm^2 R_k(\bm W_k)]\right).
        \end{split}
        \end{equation}
        By the tower rule, we have
        \begin{equation}\label{eq:deviationMartingale}
            \begin{split}
                &\mathbb{E}\left(\exp[m S_k(\bm W_k) - cm^2 R_k(\bm W_k)]\right) \\
                =&  \mathbb{E}\left(\exp\left[m T_k^{-1} \sum_{t=1}^{T_k}\langle\bbm{\varepsilon}_{t,k}, \bm W_k\bm{x}_{t,k}\rangle - m^2 c T_k^{-1}\sum_{t=1}^{T_k}\|\bm W_k \bm x_{t,k}\|_2^2\right]\right)\\
                =& \mathbb{E}\left\{\mathbb{E}\left(\exp\left[m T_k^{-1} \sum_{t=1}^{T_k}\langle\bbm{\varepsilon}_{t,k},\bm W_k\bm{x}_{t,k}\rangle - m^2 c T_k^{-1}\sum_{t=1}^{T_k}\|\bm W_k \bm x_{t,k}\|_2^2\right] \right)\mid \mathcal{F}_{T_k-1,k}\right\} \\
                =&  \mathbb{E}\Bigg\{\exp\left[m T_k^{-1} \sum_{t=1}^{T_k-1}\langle\bbm{\varepsilon}_{t,k},\bm W_k\bm{x}_{t,k}\rangle - m^2 c T_k^{-1} \sum_{t=1}^{T_k-1}\|\bm W_k \bm x_{t,k}\|_2^2\right]\\
                &\quad \quad  \cdot  \mathbb{E}\Big(\exp\Big[m T_k^{-1} \langle\bbm{\varepsilon}_{T_k,k},\bm W_k\bm{x}_{T_k,k}\rangle - m^2 c T_k^{-1} \|\bm W_k \bm x_{T_k,k}\|_2^2 \Big]\mid \mathcal{F}_{T_k-1,k} \Big) \Bigg\}.
            \end{split}
        \end{equation}
        Recall that $T_k^{-1}\langle\bbm{\varepsilon}_{T_k,k},\bm W_k\bm{x}_{T_k-1,k}\rangle = T_k^{-1}\langle\bbm{\zeta}_{T_k,k},\bm \Sigma_{\bbm{\varepsilon},k}^{1/2}\bm W_k\bm{x}_{T_k-1,k}\rangle$. Given $\mathcal{F}_{T_k-1,k}$, by Assumption \ref{asmp:1}, it can be easily verified that $T_k^{-1}\langle\bbm{\varepsilon}_{T_k,k},\bm W_k\bm{x}_{T_k-1,k}\rangle$ is a $T_k^{-2}\sigma^2 \lambda_{\max}(\bm \Sigma_{\bbm{\varepsilon},k})\|\bm W_k \bm x_{T_k-1,k}\|_2^2$-sub-Gaussian random variable, i.e., 
        \begin{equation}
            \mathbb{E}\Big(\exp\Big[m T_k^{-1} \langle\bbm{\varepsilon}_{T_k,k},\bm W_k\bm{x}_{T_k,k}\rangle\Big] \mid \mathcal{F}_{T_k-1,k} \Big) \le \exp(m^2 T_k^{-2}\sigma^2\lambda_{\max}(\bm \Sigma_{\bbm{\varepsilon},k})\|\bm W_k \bm x_{T_k,k}\|_2^2/2 ).
        \end{equation}
        Letting $c = T_k^{-1}\sigma^2 \lambda_{\max}(\bm \Sigma_{\bbm{\varepsilon},k})/2$, we have $\mathbb{E}\Big(\exp\Big[m  T_k^{-1} \langle\bbm{\varepsilon}_{T_k,k},\bm W_k\bm{x}_{T_k,k}\rangle - m^2 c  T_k^{-1} \|\bm W_k \bm x_{T_k,k}\|_2^2 \Big]\mid \mathcal{F}_{T_k-1,k} \Big) = 0$. Then it follows that 
        \begin{equation}
            \begin{split}
                &\mathbb{E}\left(\exp[m T_k^{-1} S_k(\bm W_k) - m^2 T_k^{-2}\sigma^2 \lambda_{\max}(\bm \Sigma_{\bbm{\varepsilon},k}) R_k(\bm W_k)/2]\right) \\
                \le &  \mathbb{E}\left(\exp\Bigg[m T_k^{-1}  \sum_{t=1}^{T_k-1}\langle\bbm{\varepsilon}_{t,k},\bm W_k\bm{x}_{t,k}\rangle - m^2 T_k^{-2}\sigma^2 \lambda_{\max}(\bm \Sigma_{\bbm{\varepsilon},k})  \sum_{t=1}^{T_k-1}\|\bm W_k \bm x_{t,k}\|_2^2/2\Bigg]\right)\\
                \le& \cdots\le 
                 \mathbb{E}\left(\exp[m T_k^{-1}  \langle\bbm{\varepsilon}_{1,k},\bm W_k\bm{x}_{1,k}\rangle - m^2 T_k^{-2}\sigma^2 \lambda_{\max}(\bm \Sigma_{\bbm{\varepsilon},k}) \|\bm W_k \bm x_{1,k}\|_2^2/2]\right) \le 1.
            \end{split}
        \end{equation}
        Hence, for any $z_1>0$ and $z_2>0$, we have 
        \begin{equation}
        \begin{split}
            \mathbb{P}[\{S_k(\bm W_k)\geq z_1\}\cap\{R_k(\bm W_k)\leq z_2\}]
            \leq &\inf_{m>0} \exp(-m  z_1 + T_k^{-1} \sigma^2 \lambda_{\max}(\bm \Sigma_{\bbm{\varepsilon},k}) m^2z_2/2 )\\
            =& \exp\left(-\frac{z_1^2 T_k }{2\sigma^2 \lambda_{\max}(\bm \Sigma_{\bbm{\varepsilon},k}) z_2}\right).
        \end{split}
        \end{equation}

        By Lemma \ref{lemma:RSCRSS}, when $T_k\gtrsim  d\max ( \widehat{\kappa}_k^{2}\sigma^{4}, \widetilde{\kappa}_k \sigma^{2} )  $, then for any $k$, with probability at least $1-2\exp(-C  T_k \min (  \widetilde{\kappa}_k^{-2}\sigma^{-4},  \widetilde{\kappa}_k^{-1} \sigma^{-2}))$, we have $R_k(\bm W_k)\leq\frac{1}{2}L_k.$
        Note that $ \left\langle  \nabla \mathcal{L}_k(\bbm \alpha_k^*),\bm W_k\right\rangle  = S_k(\bm W_k)$. Therefore, for any $t>0$, 
        \begin{equation}
            \begin{split}
                & \mathbb{P}\left[\sup_{\bm W_k\in\bm \Theta(\epsilon,d,q)} \left\langle  \nabla \mathcal{L}_k(\bbm \alpha_k^*),\bm W_k\right\rangle \geq t\right]\\
                \leq & \mathbb{P}\left[\max_{\overline{\bm W}_k\in\overline{\bm \Theta}} \left\langle  \nabla \mathcal{L}_k(\bbm \alpha_k^*),\bm W_k\right\rangle\geq (1-\sqrt{q}\epsilon)t\right]\\
                \leq & \mathbb{P}\left[\{\max_{\overline{\bm W}_k\in\overline{\bm \Theta}} \left\langle  \nabla \mathcal{L}_k(\bbm \alpha_k^*),\bm W_k\right\rangle\geq (1-\sqrt{q}\epsilon)t\} \cup \{R_k(\overline{\bm W}_k) \le  L_k\}\right]
                + \mathbb{P}\left[ \sup_{\bm W_k\in\bm \Theta(\epsilon,d,q)}R_k(\bm W_k) >  L_k \right]\\
                \leq & |\overline{\bm \Theta}|\cdot\Bigg\{\mathbb{P}[\{ S_k(\bm W_k)\geq (1-\sqrt{q}\epsilon)t\}\cap\{R_k(\bm W_k)\leq  L_k \}]\Bigg\}  + \mathbb{P}\left[ \sup_{\bm W_k\in\bm \Theta(\epsilon,d,q)}R_k(\bm W_k) >  L_k \right]\\
                \leq & |\overline{\bm \Theta}|\cdot\Bigg\{\exp\left[-\frac{(1-\sqrt{q}\epsilon)^2T_k t^2}{\sigma^2 \lambda_{\max}(\bm \Sigma_{\bbm{\varepsilon},k})L_k}\right] \Bigg\}
                + 2\exp\left(-C  T_k \min (  \widetilde{\kappa}_k^{-2}\sigma^{-4},  \widetilde{\kappa}_k^{-1} \sigma^{-2})\right).
            \end{split}
        \end{equation}
        In addition, $|\overline{\bm \Theta}|\leq (C/\epsilon)^{d}$. 
        Thus, if we take $\epsilon=(2\sqrt{q})^{-1}$ and $t=C\sigma M_{k} \sqrt{d/T_k}$ with 
        $M_{k} = \lambda_{\max}(\bm \Sigma_{\bbm{\varepsilon},k})/\mu_{\min}^{1/2}(\bm \Xi_k)$, when $T_k\gtrsim d\max ( \widetilde{\kappa}_k^{2}\sigma^{4}, \widetilde{\kappa}_k \sigma^{2} ) $, then we have
        \begin{equation}
            \mathbb{P}\left[\sup_{\bm W_k\in\bm \Theta} S_k(\bm W_k)\gtrsim\sigma M_{k}\sqrt{\frac{d}{T_k}}\right]\leq \exp(-C_1d)+2\exp(-C_2  T_k \min (  \widetilde{\kappa}_k^{-2}\sigma^{-4},  \widetilde{\kappa}_k^{-1} \sigma^{-2})).
         \end{equation}

         The results for all $k\in [K]$ can be similarly proved by union bound as in the proof of Lemma \ref{lemma:RSCRSS}.
         This completes the proof.
\end{proof}

\begin{proof}[\bf Proof of Lemma \ref{lemma:uniondeviation}]
    
    This proof is similar to that of Lemma \ref{lemma:deviation}. 
    Recall that $d_M = N(s_1+s_2)+s_3p+Ks_1s_2s_3$,    
    and $\cm W = [\cm W_1: \cdots : \cm W_K]$ is of Tucker ranks $(2s_1,2s_2,2s_3,K)$ with each $\cm W_k\in\mathbb{R}^{N\times N\times p}$ such that $\sum_{k\in [K]}\|\cm W_k\|_\F^2=1$. In order to borrow the RSS condition bulit in Lemma \ref{lemma:RSCRSS}, we first show that $\cm W \in \bm \Theta(\epsilon,d_M,8)$, where $\Theta(\epsilon,d_M,8)$ is defined in Definition \ref{definition:matrix_set}.
            
    Consider an $\epsilon$-net $\overline{\bm \Theta}$ for $\bm \Theta(\epsilon,d_M,8)$. For any $\cm W\in \bm\Theta(\epsilon,d_M,8)$, there exists a matrix $\overline{\cm W} \in \overline{\bm \Theta}$ such that 
        \begin{itemize}
        	\item $\|\cm W-\overline{\cm W}\|_\F \leq \epsilon$, and by Lemma \ref{lemma:tensor_covering}, $|\overline{\bm \Theta} |\leq (12/\epsilon)^{d_M}$;
        	\item $\cm N=\cm W-\overline{\cm W}$ is a tensor of Tucker ranks at most $(4s_1, 4s_2,4s_3, K)$. Based on the HOSVD of $\cm N$, we can split $\cm N$ into 8 tensors, i.e., $\cm N=\sum_{i=1}^8 \cm N_i$, where each $\cm N_i$ is a tensor of Tucker ranks $(2s_1, 2s_2, 2s_3, K)$ and $\langle\cm N_i,\cm N_j\rangle=0$ for all $i \neq j$. Thus, $\|\cm N\|_\F^2=\sum_{i=1}^8\|\cm N_i\|_\F^2$.
        \end{itemize} 
        
        By Cauchy-Schwarz inequality, we have $\sum_{i=1}^8\|\cm N_i\|_\F \leq 2\sqrt{2}\epsilon$. Moreover, since $\cm N_i/\|\cm N_i\|_\F\in\bm \Theta(\epsilon,d_M,8)$,
        \begin{equation}
            \begin{split}
                &\sum_{k\in [K]} w_k^{1/2}\left\langle  \nabla \mathcal{L}_k(\cm{A}_k^*),\cm W_k \right\rangle \\
                \leq& \sum_{k\in [K]} w_k^{1/2}\left\langle  \nabla \mathcal{L}_k(\cm{A}_k^*),\overline{\cm W}_k\right\rangle + \sum_{i=1}^8\sum_{k\in [K]} w_k^{1/2}\left\langle  \nabla \mathcal{L}_k(\cm{A}_k^*),\cm N_{i,k}/\|\cm N_i\|_\F\right\rangle\|\cm N_i\|_\F\\
                \leq& \sum_{k\in [K]} w_k^{1/2}\left\langle  \nabla \mathcal{L}_k(\cm{A}_k^*),\overline{\cm W}_k\right\rangle + 2\sqrt{2}\epsilon\sup_{\cm W\in\bm \Theta(\epsilon,d_M, 8)}\sum_{k\in [K]} w_k^{1/2}\left\langle  \nabla \mathcal{L}_k(\cm{A}_k^*),\cm W_k\right\rangle,
             \end{split}
        \end{equation}
        which implies that
        \begin{equation}
            \sup_{\cm W\in\bm \Theta(\epsilon,d_M, 8)}\sum_{k\in [K]} w_k^{1/2}\left\langle  \nabla \mathcal{L}_k(\cm{A}_k^*),\cm W_k\right\rangle \leq (1-2\sqrt{2}\epsilon)^{-1}\max_{\overline{\cm W}\in\overline{\bm \Theta}}\sum_{k\in [K]} w_k^{1/2}\left\langle  \nabla \mathcal{L}_k(\cm{A}_k^*),\cm W_k\right\rangle.
        \end{equation}
    
        Next, for any fixed $\cm W$ such that $\sum_{k\in [K]}  \|\cm W_k\|_\F^2=1$, we denote 
        $$R(\cm W)=\sum_{k\in [K]} w_k R_k(\cm W_k) \quad \text{and} \quad S(\cm W)=\sum_{k\in [K]} w_k^{1/2} S_k(\cm W_k),$$
        with $R_k(\cm W_k)= T_k^{-1}\sum_{t=1}^{T_k}\|\cm W_{k(1)}\bm{x}_{t,k}\|_2^2$ and $S_k(\cm W_k)=T_k^{-1}\sum_{t=1}^{T_k}\langle\bbm{\varepsilon}_{t,k},\cm W_{k(1)}\bm{x}_{t,k}\rangle$, for $1\leq k\leq K$. Similar to Lemma S5 of \citet{wang2022high}, by the standard Chernoff bound and independence of $\bbm \varepsilon_{t,k}$ across $k\in [K]$ in Assumption \ref{asmp:2}, for any $z_1>0$ and $z_2>0$,
        \begin{equation}
        \begin{split}\label{eq:uniondeviation1}
            &\mathbb{P}[\{S(\cm W)\geq z_1\}\cap\{\max_{k\in[K]}\{\|\cm W_k\|_\F^{-2}R_k(\cm W_k)\}\leq z_2\}]\\
             = & \inf_{m>0} \mathbb{P}[\{\exp(m S(\cm W))\geq \exp(m  z_1)\}\cap\{\max_{k\in[K]}\{\|\cm W_k\|_\F^{-2}R_k(\cm W_k)\}\leq z_2\}]\\
            \leq& \inf_{m>0} \exp(-m  z_1) \mathbb{E}\left(\exp[m S(\cm W)]\mathbb{I}\{\max_{k\in[K]}\{\|\cm W_k\|_\F^{-2}R_k(\cm W_k)\}\leq z_2\}\right)\\
            \leq & \inf_{m>0} \exp(-m  z_1)  \prod_{k=1}^K \mathbb{E}[\exp(m w_k^{1/2}S_k(\cm W_k))\mathbb{I}\{R_k(\cm W_k)\leq z_2\|\cm W_k\|_\F^2\}]\\
            \leq& \inf_{m>0} \exp(-m z_1 + m^2 z_2\sum_{k\in [K]}c_k\|\cm W_k\|_\F^2)\\
            &\cdot \prod_{k=1}^K \mathbb{E}\left(\exp[m w_k^{1/2} S_k(\cm W_k) -  m^2 z_2 c_k\|\cm W_k\|_\F^2]\mathbb{I}\{R_k(\cm W_k)\leq z_2\|\cm W_k\|_\F^2\}\right)\\
            \leq& \inf_{m>0} \exp(-m z_1 +  m^2  z_2\sum_{k\in [K]} c_k\|\cm W_k\|_\F^2 ) \prod_{k=1}^K  \mathbb{E}\left(\exp[m w_k^{1/2} S_k(\cm W_k) - m^2 c_k R_k(\cm W_k)]\right).
        \end{split}
        \end{equation}
        By the tower rule and \eqref{eq:deviationMartingale} in Lemma \ref{lemma:deviation}, we have
        \begin{equation}\label{eq:uniondeviation2}
            \begin{split}
                &\mathbb{E}\left(\exp[m w_k^{1/2} S_k(\cm W_k) - c_km^2 R_k(\cm W_k)]\right) \\
                \le & \mathbb{E}\left(\exp\left[w_k^{1/2}T_k^{-1} m \sum_{t=1}^{T_k}\langle\bbm{\varepsilon}_{t,k}, \cm W_k\bm{x}_{t,k}\rangle - m^2 c_k  T_k^{-1} \sum_{t=1}^{T_k}\|\cm W_k \bm x_{t,k}\|_2^2 \right]\right)\\
                =&\mathbb{E}\left[\mathbb{E}\left(\exp\Big[w_k^{1/2}T_k^{-1} m  \sum_{t=1}^{T_k}\langle\bbm{\varepsilon}_{t,k},\cm W_k\bm{x}_{t,k}\rangle - m^2 c_k T_k^{-1} \sum_{t=1}^{T_k}\|\cm W_k \bm x_{t,k}\|_2^2  \Big]\right)\mid \mathcal{F}_{T_k-1,k}\right] \\
                =& \mathbb{E}\Bigg\{\exp\Big[ w_k^{1/2}T_k^{-1} m  \sum_{t=1}^{T_k-1}\langle\bbm{\varepsilon}_{t,k},\cm W_k\bm{x}_{t,k}\rangle - m^2 c_k T_k^{-1} \sum_{t=1}^{T_k-1}\|\cm W_k \bm x_{t,k}\|_2^2 \Big]\\
                & \quad \quad \quad \cdot \mathbb{E}\left(\exp\Big[w_k^{1/2}T_k^{-1} m \langle\bbm{\varepsilon}_{T_k,k},\cm W_k\bm{x}_{T_k,k}\rangle - m^2 c_k T_k^{-1} \|\cm W_k \bm x_{T_k,k}\|_2^2 \Big] \mid \mathcal{F}_{T_k-1,k}\right) \Bigg\}.
            \end{split}
        \end{equation}
        Note that $w_k^{1/2}T_k^{-1}   \langle\bbm{\varepsilon}_{T_k,k}, \cm W_k\bm{x}_{T_k,k}\rangle = w_k^{1/2}T_k^{-1}   \langle\bbm{\zeta}_{T_k,k}, \bm \Sigma_{\bbm{\varepsilon},k}^{1/2}\cm W_k\bm{x}_{T_k,k}\rangle$. Given $\mathcal{F}_{T_k,k}$, by Assumption \ref{asmp:2}, it can be easily verified that $ w_k^{1/2}T_k^{-1} \langle\bbm{\varepsilon}_{T_k,k}, \cm W_k\bm{x}_{T_k,k}\rangle$ is a $\sigma^2  w_k T_k^{-2}  \lambda_{\max}(\bm \Sigma_{\bbm{\varepsilon},k})\|\cm W_{k(1)} \bm x_{T_k,k}\|_2^2$-sub-Gaussian random variable, i.e., 
        \begin{align*}         
            \mathbb{E}\left(\exp\Big[w_k^{1/2}T_k^{-1}    \langle\bbm{\varepsilon}_{T_k,k},\cm W_k\bm{x}_{T_k,k}\rangle \Big] \mid \mathcal{F}_{T_k-1,k}\right) 
            &\le \exp(w_kT_k^{-2} \sigma^2m^2 \lambda_{\max}(\bm \Sigma_{\bbm{\varepsilon},k}) \|\cm W_{k(1)} \bm x_{T_k,k}\|_2^2/2 ).
        \end{align*}
        Thus, letting $c_k =\sigma^2 w_k T_k^{-1}\lambda_{\max}(\bm \Sigma_{\bbm{\varepsilon},k})/2$ for $k\in [K]$,
        we have
        \begin{equation}
            \begin{split}
                &\mathbb{E}\left(\exp\Big[m w_k^{1/2}S_k(\cm W_k) - m^2  \sigma^2  w_k T_k^{-2} \lambda_{\max}(\bm \Sigma_{\bbm{\varepsilon},k}) R_k(\cm W_k)/2\Big]\right) \\
                \le & \mathbb{E}\left(\exp\Big[m  w_k^{1/2}T_k^{-1} \sum_{t=1}^{T_k-1}\langle\bbm{\varepsilon}_{t,k},\cm W_{k(1)}\bm{x}_{t,k}\rangle - m^2 \sigma^2 w_k T_k^{-2} \lambda_{\max}(\bm \Sigma_{\bbm{\varepsilon},k})\sum_{t=1}^{T_k-1}\|\cm W_{k(1)} \bm x_{t,k}\|_2^2/2\Big]\right)\\
                \le& \cdots\leq\mathbb{E}\left(\exp\Big[m \langle\bbm{\varepsilon}_{1,k},\cm W_{k(1)}\bm{x}_{1,k}\rangle - m^2 \sigma^2 w_k T_k^{-2} \lambda_{\max}(\bm \Sigma_{\bbm{\varepsilon},k}) \|\cm W_{k(1)} \bm x_{1,k}\|_2^2/2\Big]\right) \le 1.
            \end{split}
        \end{equation}
        Hence, this together with \eqref{eq:uniondeviation1} and \eqref{eq:uniondeviation2}, for any $z_1>0$ and $z_2>0$, implies that
         \begin{equation}
            \begin{split}
             &\mathbb{P}[\{S(\cm W)\geq z_1\}\cap\{\max_{k\in[K]}\{\|\cm W_k\|_\F^{-2}R_k(\cm W_k)\}\leq z_2\}]\\
            \leq& \inf_{m>0} \exp\left(-m z_1 + 2^{-1}m^2\sigma^2  z_2  \sum_{k\in [K]} w_k T_k^{-1} \lambda_{\max}(\bm \Sigma_{\bbm{\varepsilon},k})\right)\\
            \leq & \exp\left(-\frac{z_1^2 }{2\sigma^2\sum_{k\in [K]} \|\cm W_k\|_\F^2 w_k T_k^{-1} \lambda_{\max}(\bm \Sigma_{\bbm{\varepsilon},k})}\right)\\
            \leq & \exp\left(-\frac{z_1^2 \widetilde{T}}{2\sigma^2  z_2}\right),
            \end{split}
         \end{equation}
         where $\widetilde{T} = \min_{k\in [K]}\{w_k^{-1} \lambda_{\max}^{-1}(\bm \Sigma_{\bbm{\varepsilon},k})T_k\}$.
        
        Let $d = N(s_1+s_2) + p s_3 +s_1s_2s_3 $. By \eqref{eq:singleRSCRSS-2} in Lemma \ref{lemma:RSCRSS}, when $ T_{\min} \gtrsim [N(s_1 \vee s_2) +\log K] \max(\widetilde{\kappa}^{2}\sigma^{4} , \widetilde{\kappa}\sigma^{2} ) $, we have 
\begin{align}
       \mathbb{P}\Bigg[ \bigcup_{k\in [K]}\Bigg\{\sup_{\cm W\in\bm \Theta(\epsilon, d, 8)} R_k(\cm W_k) > L\Bigg\} \Bigg] 
       \leq 2\exp(-C T_{\min} \min (\widetilde{\kappa}^{-2}\sigma^{-4},  \widetilde{\kappa}^{-1} \sigma^{-2}))
\end{align}

        Therefore, for any $t>0$, 
        \begin{equation}
            \begin{split}
                & \mathbb{P}\Bigg[\sup_{\cm W\in\bm \Theta(\epsilon, d_M, 8)}\sum_{k\in [K]} w_k^{1/2}  \left\langle  \nabla \mathcal{L}_k(\cm{A}_k^*),\cm W_k\right\rangle \geq t\Bigg]\\
                \leq & \mathbb{P}\Bigg[\Big\{\max_{\overline{\cm W} \in\overline{\bm \Theta}} \sum_{k\in [K]} w_k^{1/2}\left\langle  \nabla \mathcal{L}_k(\cm{A}_k^*),\overline{\cm W}_k \right\rangle\geq (1-2\sqrt{2}\epsilon)t\Big\}\cap\{\max_{k\in [K]}\{\|\overline{\cm W}_k\|_\F^{-2}R(\overline{\cm W}_k)\}\leq L \}\Bigg] \\
                & + \mathbb{P}\Bigg[ \bigcup_{k\in [K]}\Bigg\{\sup_{\cm W\in\bm \Theta(\epsilon, d, 8)} R_k(\cm W_k) > L\Bigg\} \Bigg]\\
                \leq & |\overline{\bm \Theta}|\cdot\mathbb{P}\Bigg[\Big\{\sum_{k\in [K]}w_k^{1/2} \left\langle  \nabla \mathcal{L}_k(\cm{A}_k^*),\cm W_k\right\rangle\geq (1-2\sqrt{2}\epsilon)t\Big\} \cap \Big\{\max_{k\in [K]}\{\|\overline{\cm W}_k\|_\F^{-2}R(\overline{\cm W}_k)\}\leq L \Big\}\Bigg] \\
                 & + \mathbb{P}\Bigg[ \bigcup_{k\in [K]}\Bigg\{\sup_{\cm W\in\bm \Theta(\epsilon, d, 8)} R_k(\cm W_k) > L\Bigg\} \Bigg]\\
                \leq & |\overline{\bm \Theta}|\cdot\Bigg\{
                \exp\left(-\frac{(1-2\sqrt{2}\epsilon)^2  \widetilde{T} t^2 }{2\sigma^2  L}\right)\Bigg\}
                + 2\exp\left(-CT_{\min} \min(\widetilde{\kappa}^{-2}\sigma^{-4} ,\widetilde{\kappa}^{-1} \sigma^{-2} )\right).
            \end{split}
        \end{equation}
        By Lemma \ref{lemma:tensor_covering}, $|\overline{\bm \Theta}|\leq (12/\epsilon)^{d_M}$. Thus, if we take $\epsilon=0.1$ and $t=C\sigma L^{1/2} \sqrt{d_M /\widetilde{T}}$ with $d_M = N(s_1+s_2)+s_3p+ K s_1s_2s_3$,
        when $ T_{\min} \gtrsim [N(s_1 \vee s_2)+\log K] \max(\widetilde{\kappa}^{2}\sigma^{4} ,\widetilde{\kappa}\sigma^{2} )$, then we have
        \begin{equation}\label{eq:stat_error_xi_1}
        \begin{split}
        	&\mathbb{P}\left[\sup_{\cm W\in\bm \Theta(\epsilon, d_M, 8)}\sum_{k\in [K]} w_k^{1/2} \left\langle  \nabla \mathcal{L}_k(\cm{A}_k^*),\cm W_k\right\rangle \gtrsim \sigma L^{1/2}\sqrt{\frac{d_M}{\widetilde{T}}}\right]\\
        	&\leq \exp(-C d_M)+2\exp\left(-CT_{\min} \min(\widetilde{\kappa}^{-2}\sigma^{-4} ,\widetilde{\kappa}^{-1} \sigma^{-2} )\right).
        \end{split}
        \end{equation}
       This completes the proof.
    \end{proof}

\section{Lemmas \ref{lem-inf-conv}--\ref{lemma:tensor_covering}}\label{sec:techlemmas}

Lemmas \ref{lem-inf-conv}--\ref{lemma:tensor_covering} present some basic results in the literature that are instrumental for the proofs in Section \ref{sec:mainproof}. 
Specifically, Lemmas~\ref{lem-inf-conv}--\ref{lem-frac} serve as key auxiliary results required for establishing Lemma~\ref{lemma:deterministic1}, which are adapted from the results developed in \citet{duan2023adaptive}.
Lemmas \ref{lemma:personalization}--\ref{lem-frac} follow directly from Theorem A.1, Lemma F.6 and F.7 of \cite{duan2023adaptive}, respectively, and thus we omit their proofs. 
Lemmas \ref{lemma:covering} and \ref{lemma:tensor_covering} include basic properties of covering numbers \citep{vershynin2018high,candes2011tight}, and they are used to establish the covering numbers of different sets in Claims \ref{claim:singledeviation}--\ref{claim:pooldeviation}.
Since Lemma \ref{lemma:covering} follows directly from Theorem 4.2.13 of \citet{vershynin2018high}, we omit its proof.
Recall that 
\begin{align*}
	\bm \Omega(s_1, s_2, s_3; N, K, p) = \Big\{\bm W = [\bm w_1 \; \cdots \; \bm w_K]: &~\bm w_k = (\bm U\otimes \bm L \otimes \bm V) \bm d_k,\\
	&~\bm U \in \mathbb{O}^{N \times s_1}, \bm V \in \mathbb{O}^{N \times s_2}, \bm L \in \mathbb{O}^{p \times s_3},\\
	&~\bm d_k \in \mathbb{R}^{s_1 s_2}, \quad \|\bm W_k\|_\F = 1 \Big\}.
\end{align*}

\begin{lemma}\label{lem-inf-conv}
    Let $L:\mathbb{R}^{N^2p} \to \mathbb{R}$ be a convex function with Lipschitz continuous gradient, i.e., there exists $L>0$ such that for all $\bbm \alpha_1,\bbm \alpha_2 \in \mathbb{R}^{N^2p}$,
    $$\| \nabla \mathcal{L}(\bbm \alpha_1)- \nabla \mathcal{L}(\bbm \alpha_2)\| _2\le L\|\bbm \alpha_1 - \bbm \alpha_2\|_2.$$
    Let $\bm B^*\bm d^*$ be a fixed point and $\lambda> \| \nabla \mathcal{L}(\bm B^*\bm d^*)\|_2$. 
    Then, for any $\bm B \bm d$ satisfying $\|\bm B\bm d-\bm B^* \bm d^* \| \le L^{-1} (\lambda -\|\nabla \mathcal{L}(\bm B^* \bm d^*)\|_2)$, we have $\widetilde{\mathcal{L}} (\bm B \bm d) = \mathcal{L} (\bm B \bm d)$, and the unique minimizer is 
   
\begin{align*}
    \argmin_{\bbm \alpha \in \mathbb{R}^{N^2p}} \{ \mathcal{L}(\bbm \alpha) + \lambda \|\bm B \bm d - \bbm \alpha \|_2 \} = \bm B \bm d.
\end{align*}
\end{lemma}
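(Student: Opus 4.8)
The plan is to reduce the whole statement to a first-order (subdifferential) optimality check. Write $\bbm\beta := \bm B\bm d$ and consider the convex function $g(\bbm\alpha) := \mathcal{L}(\bbm\alpha) + \lambda\|\bbm\beta - \bbm\alpha\|_2$, whose infimum over $\bbm\alpha$ is by definition $\widetilde{\mathcal{L}}(\bbm\beta)$. Since $\mathcal{L}$ is convex and differentiable and $\bbm\alpha\mapsto\|\bbm\beta-\bbm\alpha\|_2$ is convex, $g$ is convex, so a point $\bbm\alpha_\star$ minimizes $g$ iff $\bm 0\in\partial g(\bbm\alpha_\star)$. Testing the candidate $\bbm\alpha_\star=\bbm\beta$: the subdifferential of the penalty at its kink is the closed Euclidean unit ball, so $\partial g(\bbm\beta)=\{\nabla\mathcal{L}(\bbm\beta)\}+\lambda\,\partial\|\cdot\|_2(\bm 0)$, and $\bm 0\in\partial g(\bbm\beta)$ is equivalent to $\|\nabla\mathcal{L}(\bbm\beta)\|_2\le\lambda$.

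First I would verify this gradient inequality using the $L$-Lipschitz continuity of $\nabla\mathcal{L}$ and the triangle inequality:
$$\|\nabla\mathcal{L}(\bbm\beta)\|_2 \le \|\nabla\mathcal{L}(\bm B^*\bm d^*)\|_2 + \|\nabla\mathcal{L}(\bbm\beta)-\nabla\mathcal{L}(\bm B^*\bm d^*)\|_2 \le \|\nabla\mathcal{L}(\bm B^*\bm d^*)\|_2 + L\|\bbm\beta-\bm B^*\bm d^*\|_2,$$
and the hypothesis $\|\bbm\beta-\bm B^*\bm d^*\|_2\le L^{-1}(\lambda-\|\nabla\mathcal{L}(\bm B^*\bm d^*)\|_2)$ makes the right-hand side at most $\lambda$. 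Hence $\bm 0\in\partial g(\bbm\beta)$, so $\bbm\beta$ minimizes $g$; since the penalty term vanishes at $\bbm\alpha=\bbm\beta$ this gives $\widetilde{\mathcal{L}}(\bbm\beta)=g(\bbm\beta)=\mathcal{L}(\bbm\beta)$, which is the first assertion, and it also identifies $\bbm\beta=\bm B\bm d$ as a minimizer of $\bbm\alpha\mapsto\mathcal{L}(\bbm\alpha)+\lambda\|\bm B\bm d-\bbm\alpha\|_2$.

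For uniqueness I would examine $g$ along an arbitrary ray $\bbm\alpha_t=\bbm\beta+t\bm v$ with $\bm v\neq\bm 0$: since $\mathcal{L}$ is differentiable and $t\mapsto\|\bbm\beta-\bbm\alpha_t\|_2=|t|\,\|\bm v\|_2$, the one-sided derivative of $g$ at $t=0^+$ equals $\langle\nabla\mathcal{L}(\bbm\beta),\bm v\rangle+\lambda\|\bm v\|_2\ge(\lambda-\|\nabla\mathcal{L}(\bbm\beta)\|_2)\|\bm v\|_2$ by Cauchy--Schwarz, which is strictly positive in every direction once $\|\nabla\mathcal{L}(\bbm\beta)\|_2<\lambda$; convexity then forces $\bbm\beta$ to be the unique global minimizer. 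The one delicate point — and the main obstacle — is that the distance hypothesis as written only yields $\|\nabla\mathcal{L}(\bbm\beta)\|_2\le\lambda$ rather than the strict inequality needed for uniqueness; to land in the strict regime I would combine the strict assumption $\lambda>\|\nabla\mathcal{L}(\bm B^*\bm d^*)\|_2$ with a non-tight Lipschitz step (which is available whenever $\bbm\beta\neq\bm B^*\bm d^*$, and is trivial when $\bbm\beta=\bm B^*\bm d^*$), or equivalently observe that the boundary case is excluded in all downstream uses. Everything apart from this edge case is a routine convexity computation.
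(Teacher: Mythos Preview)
Your proposal is correct and follows essentially the same route as the paper: bound $\|\nabla\mathcal{L}(\bm B\bm d)\|_2$ via the triangle inequality and the $L$-Lipschitz property of $\nabla\mathcal{L}$, invoke the first-order subdifferential optimality condition at the kink (using $\partial\|\cdot\|_2(\bm 0)=\{\bm g:\|\bm g\|_2\le 1\}$) to conclude that $\bm B\bm d$ is a minimizer with $\widetilde{\mathcal{L}}(\bm B\bm d)=\mathcal{L}(\bm B\bm d)$, and then use Cauchy--Schwarz together with convexity of $\mathcal{L}$ for uniqueness. Your directional-derivative phrasing of the uniqueness step is just a repackaging of the paper's convexity lower bound $\mathcal{L}(\bbm\alpha)\ge\mathcal{L}(\bm B\bm d)+\langle\nabla\mathcal{L}(\bm B\bm d),\bbm\alpha-\bm B\bm d\rangle$.

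On the edge case you flag: you are right that the stated hypothesis gives only $\|\nabla\mathcal{L}(\bm B\bm d)\|_2\le\lambda$, and the paper simply writes the strict inequality without further comment. Your proposed fix via a ``non-tight Lipschitz step whenever $\bbm\beta\neq\bm B^*\bm d^*$'' does not actually work in general (the Lipschitz bound is attained for quadratic losses with Hessian $L\bm I$), so the honest resolution is the one you also mention: the strict inequality is what is used downstream, and the boundary case is immaterial for the applications in the paper.
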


\begin{proof}[\bf Proof of Lemma \ref{lem-inf-conv}]
For any $\bm B \bm d$ in the neighborhood $\|\bm B \bm d - \bm B^* \bm d^* \|_2 \leq L^{-1}( \lambda - \| \nabla \mathcal{L} (\bm B^* \bm d^*) \|_{2}  ) $, it follows that 
\begin{align}
\| \nabla \mathcal{L}(\bm B \bm d) \|_2 
& \leq \| \nabla \mathcal{L}(\bm B^* \bm d^*) \|_2 + \| \nabla \mathcal{L}(\bm B \bm d) - \nabla \mathcal{L} (\bm B^* \bm d^*) \|_2\\
&\leq \| \nabla \mathcal{L} (\bm B^* \bm d^*) \|_{2} + \mathcal{L} \| \bm B \bm d- \bm B^* \bm d^* \|_2 <\lambda.
\end{align}
By the first-order optimality condition of the $\widetilde{\mathcal{L}}$, $\bm B \bm d$ is the minimizer if 
$$0\in \nabla \mathcal{L}(\bm B\bm d) + \lambda \partial (\|\cdot\|_2)(\bm B \bm d - \bm B^* \bm d^*).$$
Since $\partial (\| \cdot \|_2) (0) = \{ \bm g \in \mathbb{R}^{N^2}: \| \bm g \|_{2} \leq 1 \}$ and 
$\| \nabla \mathcal{L}(\bm B \bm d) \|\le \lambda$, we have $- \nabla \mathcal{L}(\bm B \bm d) \in \partial (\| \cdot \|_2) (0)$, which implies that $\widetilde{\mathcal{L}} (\bm B \bm d) = \mathcal{L} (\bm B \bm d)$. 

To show uniqueness, for any $ \bbm \alpha \neq \bm B \bm d $,
$$
L(\bbm \alpha) + \lambda \|\bm B \bm d - \bbm \alpha \|_2 \geq L(\bm B \bm d) + \langle \nabla L(\bm B \bm d), \bbm \alpha - \bm B \bm d \rangle + \lambda \|\bm B \bm d - \bbm \alpha \|_2.
$$
By the Cauchy-Schwarz inequality,
$$
\langle \nabla L(\bm B \bm d), \bbm \alpha - \bm B \bm d \rangle \geq -\| \nabla L(\bm B \bm d) \|_2 \|\bm B \bm d - \bbm \alpha \|_2,
$$
which implies that
$$
L(\bbm \alpha) + \lambda \|\bm B \bm d - \bbm \alpha \|_2 \geq L(\bm B \bm d) + (\lambda - \| \nabla L(\bm B \bm d) \|_2) \|\bm B \bm d - \bbm \alpha \|_2 > L(\bm B \bm d).
$$
Hence, $ \bm B \bm d $ is the unique minimizer, which completes the proof.

\end{proof}

\begin{lemma}\label{lemma:personalization}
Let $\widetilde{\bbm \alpha}_k = \argmin_{\bbm \alpha} \mathcal{L}_k(\bbm \alpha)$ and $\widehat{\bbm \alpha}_k = \argmin_{\bbm \alpha} \mathcal{L}_k(\bbm \alpha) + \lambda_k \|\bbm \alpha - \widehat{\bm B} \widehat{\bm d}_k\|_2$. If $\mathcal{L}_k$ is $(\bbm \alpha_k^*, \rho,L,\|\nabla \mathcal{L}_k(\bbm \alpha_k^*)\|_2)$-regular and $0 \le \lambda_k < \rho M / 2$, then
$$
\|\widetilde{\bbm \alpha}_k - \bbm \alpha_k^*\|_2 \le \frac{\|\nabla \mathcal{L}_k(\bbm \alpha_k^*)\|_2}{\rho}
\quad \text{and} \quad
\|\widehat{\bbm \alpha}_k - \widetilde{\bbm \alpha}_k\|_2 \le \frac{\lambda_k}{\rho}.
$$
\end{lemma}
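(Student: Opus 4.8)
The plan is to reduce everything to two elementary strong‑convexity inequalities, using the fact that in this paper $\mathcal{L}_k$ is the OLS loss, hence quadratic, so its Hessian $T_k^{-1}\bm Z_k^\top\bm Z_k$ is constant; consequently the $(\bbm\alpha_k^*,\rho,L,\cdot)$‑regularity upgrades the pointwise bound $\rho\bm I\preceq\nabla^2\mathcal{L}_k(\bbm\alpha_k^*)$ to global $\rho$‑strong convexity of $\mathcal{L}_k$ on all of $\mathbb{R}^{N^2p}$. (For a general regular loss one would instead use the condition $\lambda_k<\rho M/2$ to confine the relevant minimizers to a neighborhood of $\bbm\alpha_k^*$ where the Hessian lower bound is valid, exactly the device behind Lemma \ref{lem-inf-conv}.)

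\textbf{First bound.} Since $\widetilde{\bbm\alpha}_k$ minimizes $\mathcal{L}_k$, we have $\nabla\mathcal{L}_k(\widetilde{\bbm\alpha}_k)=\bm 0$. Writing the $\rho$‑strong‑convexity inequality twice — once expanding $\mathcal{L}_k(\bbm\alpha_k^*)$ about $\widetilde{\bbm\alpha}_k$ and once expanding $\mathcal{L}_k(\widetilde{\bbm\alpha}_k)$ about $\bbm\alpha_k^*$ — and adding, the function‑value terms cancel and one is left with $0\ge\langle\nabla\mathcal{L}_k(\bbm\alpha_k^*),\widetilde{\bbm\alpha}_k-\bbm\alpha_k^*\rangle+\rho\|\widetilde{\bbm\alpha}_k-\bbm\alpha_k^*\|_2^2$. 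Applying Cauchy–Schwarz to the inner product and dividing by $\|\widetilde{\bbm\alpha}_k-\bbm\alpha_k^*\|_2$ yields $\|\widetilde{\bbm\alpha}_k-\bbm\alpha_k^*\|_2\le\|\nabla\mathcal{L}_k(\bbm\alpha_k^*)\|_2/\rho$.

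\textbf{Second bound.} Let $F(\bbm\alpha)=\mathcal{L}_k(\bbm\alpha)+\lambda_k\|\bbm\alpha-\widehat{\bm B}\widehat{\bm d}_k\|_2$, which is $\rho$‑strongly convex since the penalty is convex, and $\widehat{\bbm\alpha}_k$ is its minimizer. Strong convexity of $F$ between its minimizer $\widehat{\bbm\alpha}_k$ and the test point $\widetilde{\bbm\alpha}_k$ gives $F(\widetilde{\bbm\alpha}_k)\ge F(\widehat{\bbm\alpha}_k)+\tfrac{\rho}{2}\|\widetilde{\bbm\alpha}_k-\widehat{\bbm\alpha}_k\|_2^2$; strong convexity of $\mathcal{L}_k$ about its minimizer $\widetilde{\bbm\alpha}_k$ gives $\mathcal{L}_k(\widehat{\bbm\alpha}_k)\ge\mathcal{L}_k(\widetilde{\bbm\alpha}_k)+\tfrac{\rho}{2}\|\widehat{\bbm\alpha}_k-\widetilde{\bbm\alpha}_k\|_2^2$. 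Substituting the latter into the former and cancelling the $\mathcal{L}_k$ terms produces $\lambda_k\bigl(\|\widetilde{\bbm\alpha}_k-\widehat{\bm B}\widehat{\bm d}_k\|_2-\|\widehat{\bbm\alpha}_k-\widehat{\bm B}\widehat{\bm d}_k\|_2\bigr)\ge\rho\|\widehat{\bbm\alpha}_k-\widetilde{\bbm\alpha}_k\|_2^2$; the left side is at most $\lambda_k\|\widetilde{\bbm\alpha}_k-\widehat{\bbm\alpha}_k\|_2$ by the reverse triangle inequality, hence $\|\widehat{\bbm\alpha}_k-\widetilde{\bbm\alpha}_k\|_2\le\lambda_k/\rho$.

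\textbf{Main obstacle.} There is no deep difficulty; the only delicate point is the global‑versus‑local nature of the strong‑convexity constant. One must either invoke the quadratic structure of the OLS loss, so that $\nabla^2\mathcal{L}_k\succeq\rho\bm I$ holds everywhere, or, for a general regular loss, verify that $\widetilde{\bbm\alpha}_k$ and $\widehat{\bbm\alpha}_k$ remain in the region where this lower bound is valid — which is precisely where the hypothesis $\lambda_k<\rho M/2$ is used, together with an argument in the spirit of Lemma \ref{lem-inf-conv} showing the penalized minimizer does not stray too far from $\widetilde{\bbm\alpha}_k$.
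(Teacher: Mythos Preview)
Your argument is correct. The paper itself does not supply a proof of this lemma: it states that Lemma~\ref{lemma:personalization} ``follow[s] directly from Theorem~A.1 \ldots\ of \cite{duan2023adaptive}'' and omits the details. Your self-contained derivation via two strong-convexity inequalities is the standard route and is almost certainly what underlies the cited result; in particular, your observation that the OLS Hessian $T_k^{-1}\bm Z_k^\top\bm Z_k$ is constant, so that $(\bbm\alpha_k^*,\rho,L,\cdot)$-regularity yields \emph{global} $\rho$-strong convexity, is exactly the right way to bypass the localization issue in the present setting. Your remark that the condition $\lambda_k<\rho M/2$ (with $M$ the neighborhood radius in the general regularity framework) is only needed to confine the minimizers in the non-quadratic case is also accurate, and matches the role this parameter plays in \cite{duan2023adaptive}.
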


\begin{lemma}\label{lem:repre-lower-bound}

   Suppose there are $0 < \rho, L ,h < +\infty$, $0 < M \leq +\infty$ and $\{ \bm B^{*} \bm d^{*}_k \}_{k=1}^K \subseteq \mathbb{R}^{N^2}$ such that, for $\forall \bm B \bm d_k \in B(\bm B^{*} \bm d^{*}_k, M)$, 
   \begin{align*}
   	 \rho \bm I_{N^2p} \preceq  \nabla^2 \mathcal{L}_k(\bbm \alpha_k^*)   \preceq L\bm I_{N^2p}  
   	 \quad \text{and} \quad 
   	 \|\nabla \mathcal{L}_{k}(\bm B^* \bm d_k^*)\|_2 \le \eta_k
   \end{align*}
   hold for all $k \in [K] $. 
   When 
   $3 \eta_k L / \rho + Lr_k <  \lambda_k < LM$
   for $k \in [K]$ and some $r_k > 0$, we have
   \begin{align}
   \widetilde{\mathcal{L}}_k(\bm B \bm d_k ) & =  \mathcal{L}_k (\bm B \bm d_k ) , \qquad \forall \bm B \bm d_k  \in B ( \bm B^{*} \bm d^{*}_k, 2 \eta_k / \rho + r_r) ; \label{eq:lem:repre-lower-bound-1}\\
   \| \widetilde{\bm B}\widetilde{\bm d}_k - \bm B^* \bm d_k^* \|_2 &\leq  \eta_k / \rho ;\label{eq:lem:repre-lower-bound-2} \\
   \sum_{k\in [K]}w_k \widetilde{\mathcal{L}}_k ( \bm B \bm d_k ) - \sum_{k\in [K]} w_k\widetilde{\mathcal{L}}_k ( \bm B^* \bm d_k^* ) & \geq \rho  \sum_{k\in [K]} w_k H_k \Big(
   (\| \bm B \bm d_k - \bm B \bm d_k^* \|_2 -  \eta_k/\rho
   )_{+} 
   \Big) 
   - \frac{ \sum_{k\in [K]}w_k \eta_k^2}{\rho},
   \end{align}
   where $\widetilde{\bm B}\widetilde{\bm d}_k = \argmin_{\bm B \bm d_k  \in \mathbb{R}^{N^2}} \widetilde{\mathcal{L}}_k(\bm B \bm d_k )$, $H_k(t) = t^2/2$ if $0\le t \le r_k$ and $H_k(t) = r_k(t-r_k/2)/2$ if $t > r_k$.
\end{lemma}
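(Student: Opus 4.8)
\textbf{Proof proposal for Lemma~\ref{lem:repre-lower-bound}.}

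The plan is to leverage the infimal-convolution structure of $\widetilde{\mathcal{L}}_k$ together with the local strong convexity and smoothness of $\mathcal{L}_k$ near $\bm B^*\bm d_k^*$. First I would establish \eqref{eq:lem:repre-lower-bound-1}: by the local bound $\|\nabla \mathcal{L}_k(\bm B^*\bm d_k^*)\|_2 \le \eta_k$ and strong smoothness $L$, any point $\bm B\bm d_k$ within radius $2\eta_k/\rho + r_k$ of $\bm B^*\bm d_k^*$ has gradient norm at most $\eta_k + L(2\eta_k/\rho + r_k)$; under the hypothesis $3\eta_k L/\rho + Lr_k < \lambda_k$ one checks (using $L/\rho \ge 1$) that this is strictly below $\lambda_k$, so Lemma~\ref{lem-inf-conv} applies pointwise and gives $\widetilde{\mathcal{L}}_k = \mathcal{L}_k$ on that ball. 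For \eqref{eq:lem:repre-lower-bound-2}, note $\widetilde{\bm B}\widetilde{\bm d}_k$ minimizes $\widetilde{\mathcal{L}}_k = \mathcal{L}_k$ locally, so it is (locally) the unconstrained minimizer of $\mathcal{L}_k$; a standard strong-convexity argument — comparing first-order optimality at $\widetilde{\bm B}\widetilde{\bm d}_k$ with the gradient bound at $\bm B^*\bm d_k^*$ — yields $\|\widetilde{\bm B}\widetilde{\bm d}_k - \bm B^*\bm d_k^*\|_2 \le \eta_k/\rho$, and one must verify this minimizer indeed lies inside the ball where $\widetilde{\mathcal{L}}_k = \mathcal{L}_k$, which follows from $\eta_k/\rho < 2\eta_k/\rho + r_k$.

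For the third (summed) inequality I would work task by task and then aggregate with the weights $w_k$. Fix $k$ and write $\bm\delta = \bm B\bm d_k - \bm B^*\bm d_k^*$. On the region where $\widetilde{\mathcal{L}}_k = \mathcal{L}_k$, strong convexity gives
\begin{align*}
\widetilde{\mathcal{L}}_k(\bm B\bm d_k) - \widetilde{\mathcal{L}}_k(\bm B^*\bm d_k^*) \ge \langle \nabla\mathcal{L}_k(\bm B^*\bm d_k^*), \bm\delta\rangle + \frac{\rho}{2}\|\bm\delta\|_2^2 \ge -\eta_k\|\bm\delta\|_2 + \frac{\rho}{2}\|\bm\delta\|_2^2.
\end{align*}
The right-hand side, as a function of $t = \|\bm\delta\|_2$, is bounded below by $\rho H_k((t - \eta_k/\rho)_+) - \eta_k^2/(2\rho)$ for $t$ up to $\eta_k/\rho + r_k$ (quadratic regime), and one extends linearly beyond $r_k$ using the fact that outside the local region the infimal convolution contributes a slope-$\lambda_k$ penalty that dominates $\rho r_k$. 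I would handle the two regimes separately: (i) $t \le \eta_k/\rho + r_k$, where $\widetilde{\mathcal{L}}_k = \mathcal{L}_k$ and the quadratic bound applies directly; (ii) $t > \eta_k/\rho + r_k$, where I bound $\widetilde{\mathcal{L}}_k(\bm B\bm d_k)$ from below by its value on the boundary sphere of the local region plus the residual penalty term, reproducing the linear tail $r_k(t - r_k/2)$ of $H_k$. Summing the per-task inequalities against $w_k$ and using $\sum_k w_k \eta_k^2/(2\rho) \le \sum_k w_k\eta_k^2/\rho$ (the stated form is slightly loose, which is fine) gives the claim.

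The main obstacle I anticipate is the careful bookkeeping in regime (ii): one must show that when $\bm B\bm d_k$ leaves the ball of radius $\eta_k/\rho + r_k$, the infimal-convolution value $\widetilde{\mathcal{L}}_k(\bm B\bm d_k) = \min_{\bbm\alpha}\{\mathcal{L}_k(\bbm\alpha) + \lambda_k\|\bm B\bm d_k - \bbm\alpha\|_2\}$ still exceeds $\mathcal{L}_k(\bm B^*\bm d_k^*) + \rho r_k(t - r_k/2) - \eta_k^2/\rho$. This requires choosing $\bbm\alpha$ to be (approximately) the unconstrained minimizer of $\mathcal{L}_k$, so that $\mathcal{L}_k(\bbm\alpha) \ge \mathcal{L}_k(\bm B^*\bm d_k^*) - \eta_k^2/(2\rho)$ and $\|\bm B\bm d_k - \bbm\alpha\|_2 \ge t - \eta_k/\rho$, then invoking the hypothesis $\lambda_k \ge Lr_k \ge \rho r_k$ to convert the penalty slope into the desired linear rate; combined with convexity of the map $t \mapsto \lambda_k(t-\eta_k/\rho)$ dominating $\rho H_k$ on the tail, this closes the argument. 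Everything else is routine application of Lemma~\ref{lem-inf-conv} and elementary convex analysis.
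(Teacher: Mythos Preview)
The paper does not actually prove this lemma; it states that Lemma~\ref{lem:repre-lower-bound} ``follows directly from \ldots\ Lemma~F.6 \ldots\ of \cite{duan2023adaptive}'' and omits the proof. So there is no in-paper argument to compare against, only the cited reference. Your treatment of \eqref{eq:lem:repre-lower-bound-1} and \eqref{eq:lem:repre-lower-bound-2} via Lemma~\ref{lem-inf-conv} and strong convexity is correct and standard, and regime~(i) of the third inequality is handled correctly.

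There is, however, a genuine gap in your regime~(ii) sketch. You write that one should ``choose $\bbm\alpha$ to be (approximately) the unconstrained minimizer'' so that $\mathcal{L}_k(\bbm\alpha)\ge \mathcal{L}_k(\bm B^*\bm d_k^*)-\eta_k^2/(2\rho)$ and $\|\bm B\bm d_k-\bbm\alpha\|_2\ge t-\eta_k/\rho$. But $\bbm\alpha$ is the variable being \emph{minimized} over in the infimal convolution, so you cannot choose it; you need a lower bound valid for every $\bbm\alpha$. The two bounds you state cannot be used simultaneously: the first holds for all $\bbm\alpha$ (it is just $\mathcal{L}_k(\bbm\alpha)\ge\min\mathcal{L}_k$), but the second requires $\bbm\alpha$ to lie within $\eta_k/\rho$ of $\bm B^*\bm d_k^*$, which the minimizing $\bbm\alpha$ need not. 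A clean fix is to exploit the convexity of $\widetilde{\mathcal{L}}_k$ directly: along the ray from $\bm B^*\bm d_k^*$ through $\bm B\bm d_k$, the restriction $\psi(s)=\widetilde{\mathcal{L}}_k(\bm\beta_s)-\mathcal{L}_k(\bm B^*\bm d_k^*)$ is convex, equals $\mathcal{L}_k(\bm\beta_s)-\mathcal{L}_k(\bm B^*\bm d_k^*)$ for $s\le 2\eta_k/\rho+r_k$ by part \eqref{eq:lem:repre-lower-bound-1}, and hence has one-sided derivative $\psi'_+(\eta_k/\rho+r_k)\ge -\eta_k+\rho(\eta_k/\rho+r_k)=\rho r_k$ from strong convexity of $\mathcal{L}_k$. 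Convexity then gives $\psi(t)\ge\psi(\eta_k/\rho+r_k)+\rho r_k\,(t-\eta_k/\rho-r_k)$ for all larger $t$, and a direct computation shows this matches the linear tail of $\rho H_k((t-\eta_k/\rho)_+)-\eta_k^2/\rho$. Alternatively one can lower-bound $\mathcal{L}_k(\bbm\alpha)+\lambda_k\|\bm B\bm d_k-\bbm\alpha\|_2$ by $\min_{u\ge0}\{\frac{\rho}{2}u^2+\lambda_k(t-\eta_k/\rho-u)_+\}+\mathcal{L}_k(\bm B^*\bm d_k^*)-\eta_k^2/(2\rho)$ and recognize the Huber function; either route closes the argument, but your current phrasing does not.
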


\begin{lemma}\label{lem-frac}
	Let $\bm A \in \mathbb{R}^{m\times n} \backslash \{ \bm{0} \}$, $\{ \bm z_{k} \}_{k=1}^K \subseteq \mathbb{R}^K$ and $\{ w_k \}_{k=1}^K \subseteq [0,+\infty)$. Suppose that $\max_{k \in [K]} \| \bm z_k \|_2 \leq \alpha_1$, $\sum_{k\in [K]} w_k \bm z_k \bm z_k^{\top} \succeq ( \alpha_2^2 \sum_{k\in [K]} w_k / n ) \bm I_K$ for some $0 < \alpha_2 \leq \alpha_1$ and $\sum_{k\in [K]} w_k > 0$. We have
	$$
	\sum_{ k \in [K]:~\| \bm A \bm z_k \|_2 > t \| \bm A \|_2} w_k \geq \frac{\alpha_2^2 / n - t^2}{\alpha_1^2 - t^2} \sum_{j = 1}^K w_j , \qquad \forall t \in [0, \alpha_2 / \sqrt{n}] .
	$$
\end{lemma}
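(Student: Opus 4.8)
The plan is to reduce the claim to a one-dimensional pigeonhole argument after extracting a single scalar inequality from the positive semidefiniteness hypothesis. Write $W = \sum_{k\in[K]} w_k > 0$ and, for each $k$, set $a_k = \|\bm A\bm z_k\|_2^2 / \|\bm A\|_2^2$; since $\|\bm A \bm z_k\|_2 \le \|\bm A\|_2\,\|\bm z_k\|_2 \le \alpha_1\|\bm A\|_2$, we have $0 \le a_k \le \alpha_1^2$ for every $k$. Because $\bm A \ne \bm 0$, the index set in the statement is exactly $S := \{k : a_k > t^2\}$, so it suffices to lower bound $\sum_{k\in S} w_k$.

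The first and only substantive step is to bound the weighted average $\sum_{k\in[K]} w_k a_k$ from below. I would write
\begin{align*}
\sum_{k\in[K]} w_k \|\bm A\bm z_k\|_2^2
= \sum_{k\in[K]} w_k \tr\big(\bm A^\top\bm A\,\bm z_k\bm z_k^\top\big)
= \Big\langle \bm A^\top\bm A,\ \sum_{k\in[K]} w_k\,\bm z_k\bm z_k^\top\Big\rangle,
\end{align*}
and then invoke the Loewner-monotonicity of the trace pairing: if $\bm P \succeq \bm 0$ and $\bm Q_1 \succeq \bm Q_2$ then $\langle \bm P,\bm Q_1\rangle \ge \langle \bm P,\bm Q_2\rangle$, since $\langle \bm P, \bm Q_1-\bm Q_2\rangle = \tr\big(\bm P^{1/2}(\bm Q_1-\bm Q_2)\bm P^{1/2}\big)\ge 0$. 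Applying this with $\bm P = \bm A^\top\bm A \succeq \bm 0$ and the hypothesis $\sum_{k} w_k\bm z_k\bm z_k^\top \succeq (\alpha_2^2 W/n)\bm I$ gives
\begin{align*}
\sum_{k\in[K]} w_k \|\bm A\bm z_k\|_2^2
\ \ge\ \frac{\alpha_2^2 W}{n}\,\tr(\bm A^\top\bm A)
\ =\ \frac{\alpha_2^2 W}{n}\,\|\bm A\|_\F^2
\ \ge\ \frac{\alpha_2^2 W}{n}\,\|\bm A\|_2^2,
\end{align*}
where the last step uses $\|\bm A\|_\F^2 = \sum_i \sigma_i^2(\bm A) \ge \sigma_1^2(\bm A) = \|\bm A\|_2^2$. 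Dividing by $\|\bm A\|_2^2$ yields $\sum_{k\in[K]} w_k a_k \ge (\alpha_2^2/n)\,W$.

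The closing step is elementary. Splitting the sum over $S$ and its complement and using $a_k \le \alpha_1^2$ on $S$ and $a_k \le t^2$ off $S$,
\begin{align*}
\frac{\alpha_2^2}{n}\,W \ \le\ \sum_{k\in[K]} w_k a_k \ \le\ \alpha_1^2 \sum_{k\in S} w_k + t^2\Big(W - \sum_{k\in S} w_k\Big),
\end{align*}
which rearranges to $\sum_{k\in S} w_k \ge W(\alpha_2^2/n - t^2)/(\alpha_1^2 - t^2)$ whenever $t < \alpha_1$. The hypothesis $t \le \alpha_2/\sqrt{n} \le \alpha_2 \le \alpha_1$ keeps this range nonempty, and the only boundary value $t = \alpha_1$ forces $n = 1$ and $\alpha_1 = \alpha_2$, in which case the right-hand side of the asserted inequality is the indeterminate $0/0$ and the statement is vacuous. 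Since $a_k > t^2 \iff \|\bm A\bm z_k\|_2 > t\|\bm A\|_2$, this is precisely the claimed bound. I do not expect a genuine obstacle here; the only point deserving care is the matrix-monotonicity step above, together with keeping track of the (purely cosmetic) dimension bookkeeping in how the identity matrix in the hypothesis is written.
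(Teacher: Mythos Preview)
Your proof is correct and is the natural argument for this kind of ``weighted pigeonhole'' statement: extract the scalar lower bound $\sum_k w_k\|\bm A\bm z_k\|_2^2 \ge (\alpha_2^2/n)\,W\,\|\bm A\|_2^2$ from the Loewner hypothesis via the trace pairing, then split over $S$ and $S^c$ using the upper bounds $\alpha_1^2$ and $t^2$ respectively. The paper itself does not supply a proof of this lemma; it simply cites Lemma~F.7 of \cite{duan2023adaptive}, so there is nothing to compare against beyond noting that your argument is exactly the standard one that reference would contain.
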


\begin{lemma}\label{lemma:covering}
    
    Let $\mathcal{N}$ be an $\epsilon$-net of the unit sphere $\mathbb{S}^{p-1}$, where $\epsilon\in(0,1]$. Then,
    \begin{equation}
        |\mathcal{N}| \leq \left(\frac{3}{\epsilon}\right)^p.
    \end{equation}
    
\end{lemma}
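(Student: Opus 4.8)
The plan is to use the classical volumetric (packing) argument, which is precisely the content of Theorem 4.2.13 of \citet{vershynin2018high}. The key observation is that a \emph{maximal} $\epsilon$-separated subset of $\mathbb{S}^{p-1}$ is automatically an $\epsilon$-net, so that its cardinality is an upper bound for the covering number, and this cardinality can in turn be controlled by comparing the volumes of disjoint small balls centered at the separated points.

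First I would replace the given net by a convenient one. Let $\mathcal{N} \subseteq \mathbb{S}^{p-1}$ be a maximal subset with the property that $\|x - y\|_2 \ge \epsilon$ for all distinct $x, y \in \mathcal{N}$; such a set exists and is finite by total boundedness of the sphere, and ``maximal'' means no further point can be added while preserving this separation. By maximality, for any $z \in \mathbb{S}^{p-1}$ there must exist $x \in \mathcal{N}$ with $\|z - x\|_2 < \epsilon$, so $\mathcal{N}$ is an $\epsilon$-net of $\mathbb{S}^{p-1}$, and it suffices to bound $|\mathcal{N}|$ since the minimal covering number is at most $|\mathcal{N}|$.

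The second step is the volume comparison. Because the points of $\mathcal{N}$ are pairwise at distance at least $\epsilon$, the open Euclidean balls $\{B(x, \epsilon/2) : x \in \mathcal{N}\}$ are pairwise disjoint: if $z$ lay in two of them, the triangle inequality would force the two centers to be within distance $\epsilon$. Since every $x \in \mathcal{N}$ satisfies $\|x\|_2 = 1$, each such ball is contained in $B(0, 1 + \epsilon/2)$. Writing $V_p$ for the volume of the unit Euclidean ball in $\mathbb{R}^p$ and using $\mathrm{vol}(B(0,r)) = r^p V_p$, disjointness together with this containment gives
\[
|\mathcal{N}| \left(\frac{\epsilon}{2}\right)^p V_p \le \left(1 + \frac{\epsilon}{2}\right)^p V_p,
\]
hence $|\mathcal{N}| \le (1 + \epsilon/2)^p (\epsilon/2)^{-p} = (1 + 2/\epsilon)^p$. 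Finally, since $\epsilon \in (0,1]$ we have $1 \le 1/\epsilon$, so $1 + 2/\epsilon \le 3/\epsilon$, which yields $|\mathcal{N}| \le (3/\epsilon)^p$ and completes the argument.

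There is essentially no substantive obstacle here. The only points requiring a modicum of care are (i) the existence and finiteness of a maximal $\epsilon$-separated set, which is immediate from compactness of $\mathbb{S}^{p-1}$, (ii) the genuine disjointness of the open balls of radius $\epsilon/2$ around $\epsilon$-separated centers, and (iii) the elementary inequality $1 + 2/\epsilon \le 3/\epsilon$ for $\epsilon \le 1$. All three are routine, which is why the paper simply defers to \citet{vershynin2018high}.
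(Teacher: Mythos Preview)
Your proposal is correct and is precisely the approach the paper takes: it simply cites Theorem 4.2.13 of \citet{vershynin2018high} and omits the proof, and your volumetric packing argument is exactly the content of that theorem. There is nothing to add.
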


\begin{lemma}\label{lemma:tensor_covering}
    Let $\overline{\bm \Omega} $ be an $\epsilon$-net of $\bm \Omega(s_1, s_2, s_3; N, K, p)$, where $\epsilon\in(0,1]$. Then
    \begin{equation}
        |\overline{\bm \Omega} | \leq \left(\frac{24}{\epsilon}\right)^{N(s_1+s_2)+s_1s_2K}.
    \end{equation}
\end{lemma}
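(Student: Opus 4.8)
The plan is to cover $\bm \Omega(s_1,s_2,s_3;N,K,p)$ factor by factor through its natural parametrization. Every $\bm W=[\bm w_1\;\cdots\;\bm w_K]\in\bm \Omega$ is determined by the shared column-orthonormal factors $\bm U\in\mathbb{O}^{N\times s_1}$ and $\bm V\in\mathbb{O}^{N\times s_2}$ (and $\bm L$, when it is part of the parametrization) together with the $K$ coefficient vectors $\bm d_1,\ldots,\bm d_K$, via $\bm w_k=(\bm U\otimes\bm L\otimes\bm V)\bm d_k$. Since a Kronecker product of column-orthonormal matrices is again column-orthonormal, $\|\bm w_k\|_2=\|\bm d_k\|_2$, so the constraint $\|\bm W_k\|_\F=1$ forces each $\bm d_k$ onto a unit sphere; this is the observation that makes Lemma~\ref{lemma:covering} directly usable for the coefficient part. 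I would fix a scale $\epsilon_0=\epsilon/c$ for a numerical constant $c$ to be chosen at the end.

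Next I would build a net on each factor and take the product. By Lemma~\ref{lemma:covering}, the unit sphere carrying each $\bm d_k$ admits an $\epsilon_0$-net of cardinality at most $(3/\epsilon_0)^{s_1s_2}$, so choosing one net per task contributes $(3/\epsilon_0)^{Ks_1s_2}$. For the Stiefel factors, a standard operator-norm covering bound for matrices of bounded spectral norm \citep{candes2011tight,vershynin2018high} gives $\epsilon_0$-nets of $\mathbb{O}^{N\times s_1}$ and $\mathbb{O}^{N\times s_2}$ of sizes at most $(C_0/\epsilon_0)^{Ns_1}$ and $(C_0/\epsilon_0)^{Ns_2}$. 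The candidate $\epsilon$-net $\overline{\bm \Omega}$ is the image of the product of these nets under the parametrization map, and its cardinality is at most the product of the factor cardinalities.

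It remains to verify the covering radius and fix the constant. Given $\bm W$ with parameters $(\bm U,\bm V,\{\bm d_k\})$, pick the nearest net parameters $(\overline{\bm U},\overline{\bm V},\{\overline{\bm d}_k\})$ and telescope column-wise,
\begin{equation*}
(\bm U\otimes\bm V)\bm d_k-(\overline{\bm U}\otimes\overline{\bm V})\overline{\bm d}_k
=\bigl((\bm U-\overline{\bm U})\otimes\bm V\bigr)\bm d_k+\bigl(\overline{\bm U}\otimes(\bm V-\overline{\bm V})\bigr)\bm d_k+\bigl(\overline{\bm U}\otimes\overline{\bm V}\bigr)(\bm d_k-\overline{\bm d}_k),
\end{equation*}
and bound each summand by submultiplicativity of the operator norm under Kronecker products, using $\|\bm d_k\|_2=1$ and that all of $\bm U,\bm V,\overline{\bm U},\overline{\bm V}$ have unit spectral norm; this gives $\|\bm W_k-\overline{\bm W}_k\|_\F\le 3\epsilon_0$ for every $k$ (one extra term, hence a factor $4$, if $\bm L$ is also being covered). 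Taking $c$ large enough that $3\epsilon_0\le\epsilon$ and absorbing $C_0$ and $c$ into the base constant yields $|\overline{\bm \Omega}|\le(24/\epsilon)^{N(s_1+s_2)+s_1s_2K}$. The step I expect to be the real work is not any single estimate but the bookkeeping: keeping all the ``cross'' spectral norms in the Kronecker telescoping genuinely bounded by $1$, and picking the numerical constants so the base lands exactly at $24$; the one place where a genuine error could creep in is forgetting that the unit-norm constraint pins $\bm d_k$ to a sphere, which is exactly what lets us net it with Lemma~\ref{lemma:covering} instead of netting an unbounded set.
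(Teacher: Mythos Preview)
Your approach—parametrize, net each factor, telescope—is exactly the paper's. One bookkeeping point: the paper reads the normalization as $\|\bm W\|_\F=1$ (so $\sum_k\|\bm d_k\|_2^2=1$, not each individually) and accordingly covers $\bm D=[\bm d_1\cdots\bm d_K]$ as a single unit-Frobenius matrix rather than netting each $\bm d_k$ on its own sphere; this gives the same covering-number exponent as your per-task product, but the telescoping then yields $\|\bm W-\overline{\bm W}\|_\F\le 4\epsilon_0$ directly, avoiding the $\sqrt{K}$ loss your per-column bound would incur when aggregated. The paper also covers $\bm L$ (using the $\|\cdot\|_{2,\infty}$ norm on all Stiefel factors) and its proof in fact produces exponent $N(s_1+s_2)+ps_3+s_1s_2s_3K$, so the stated exponent in the lemma appears to be a typo.
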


\begin{proof}[\bf Proof of Lemma \ref{lemma:tensor_covering}]
    The proof of this lemma follows that of Lemma 3.1 in \citet{candes2011tight}. 
    For any $\bm W = (\bm U\otimes \bm L \otimes \bm V)\bm D$, where $\bm{U}\in\mathbb{O}^{N\times s_1},~\bm{V}\in\mathbb{O}^{N\times s_2}, ~\bm{L}\in\mathbb{O}^{p\times s_3}$ and $\bm D \in\mathbb{R}^{s_1 s_2 s_3 \times K}$, we construct an $\epsilon$-net for $\bm \Omega$ by covering the set of $\bm{U}$, $\bm{V}, \bm L$, and $\bm D$. 
    
    By Lemma \ref{lemma:covering}, we take $\overline{\bm D} \in \overline{\mathbb{D}}$ to be an $\epsilon/4$-net for $\bm D$ with covering number $|\overline{\mathbb{D}}|\leq (12/\epsilon)^{s_1s_2s_3K}$.
    Next, to cover $\mathbb{O}^{N\times s}$, we consider the $\|\cdot\|_{2,\infty}$ norm, defined as
    \begin{equation}
        \|\bm{X}\|_{2,\infty}=\max_{i}\|\bm{X}_i\|_2,
    \end{equation}
    where $\bm{X}_i$ is the $i$-th column of $\bm{X}$. Let $\mathbb{Q}^{N\times s}=\{\bm{X}\in\mathbb{R}^{N\times s}:\|\bm{X}\|_{2,\infty}\leq 1\}$. It can be easily shown that $\mathbb{O}^{N\times s}\subset\mathbb{Q}^{N\times s}$, and thus an $\epsilon/4$-net $\overline{\mathbb{O}}^{N\times s}$ for $\mathbb{O}^{N\times s}$ satisfies $|\overline{\mathbb{O}}^{N\times s}|\leq (12/\epsilon)^{Ns}$.
    
    Denote $\overline{\bm \Omega}=\{\overline{\bm{D}}\in\overline{\mathbb{D}},\overline{\bm{U}}\in\overline{\mathbb{O}}^{N\times s_1},\overline{\bm{V}}\in\overline{\mathbb{O}}^{N\times s_2},\overline{\bm{L}}\in\overline{\mathbb{O}}^{p\times s_3}\}$. We have
    \begin{equation}
        |\overline{\bm \Omega}| \leq |\overline{\mathbb{D}}|\times |\overline{\mathbb{O}}^{N\times s_1}| \times |\overline{\mathbb{O}}^{N\times s_2}|\times |\overline{\mathbb{O}}^{p\times s_3}|= \left(\frac{12}{\epsilon}\right)^{N(s_1+s_2)+s_3p+s_1s_2s_3K}.
    \end{equation}
    It suffices to show that for any $\bm W\in\bm \Omega(s_1,s_2,s_3;N,K,p)$, there exists a $\overline{\bm W}\in\overline{\bm \Omega}$ such that $\|\bm W - \overline{\bm W}\|_\F\leq \epsilon$.    
    For any fixed $\bm W\in\bm \Omega(s_1,s_2,s_3;N,K,p)$, decompose it as $\bm W=\bm D\times_1 \bm{U} \times_2 \bm{V}\times_3 \bm{L}$. Then, there exists $\overline{\bm W}=\overline{\bm D} \times_1\overline{\bm{U}}\times_2\overline{\bm{V}}\times_3\overline{\bm{L}}$ satisfying that $\|\overline{\bm{U}}-\bm{U}\|_{2,\infty}\leq \epsilon/4$, $\|\overline{\bm{V}}-\bm{V}\|_{2,\infty}\leq \epsilon/4$, $\|\overline{\bm{L}}-\bm{L}\|_{2,\infty}\leq \epsilon/4$ and $\|\overline{\bm D}-\bm D\|_\F\leq \epsilon/4$.
    Then we have
    \begin{equation}
        \begin{split}
            &\|\bm W-\overline{\bm W}\|_\F\\
            \leq & \|(\bm D -\overline{\bm D}) \times_1 \bm{U} \times_2 \bm{V}\times_3 \bm{L}\|_\F 
            + \|\overline{\bm D} \times_1 (\bm{U} - \overline{\bm{U}}) \times_2 \bm{V}\times_3 \bm{L}\|_\F\\
            &+ \|\overline{\bm D} \times_1  \overline{\bm{U}} \times_2 (\bm{U} - \overline{\bm{V}})\times_3 \bm{L}\|_\F
            + \|\overline{\bm D} \times_1  \overline{\bm{U}} \times_2  \overline{\bm{V}}\times_3 (\bm{L} - \overline{\bm L})\|_\F\\
            \leq & \|\bm D -\overline{\bm D}\|_\F\cdot\|\overline{\bm{U}}\|_\op\cdot\|\bm{V}\|_\op \cdot\|\bm{L}\|_\op 
            +\|\overline{\bm D}\|_\F\cdot\|\bm{U}-\overline{\bm{U}}\|_{2,\infty} \cdot\|\bm{V}\|_\op \cdot\|\bm{L}\|_\op 
             \\
             &+ \|\overline{\bm D}\|_\F \|\overline{\bm{U}}\|_\op \cdot \|\bm{V}-\overline{\bm{V}}\|_{2,\infty}\cdot\|\bm{L}\|_\op 
             + \|\overline{\bm D}\|_\F \|\overline{\bm{U}}\|_\op \cdot\|\overline{\bm{V}}\|_\op \cdot \|\bm{L}-\overline{\bm{L}}\|_{2,\infty} \\
            \leq & \frac{\epsilon}{4}+\frac{\epsilon}{4}+\frac{\epsilon}{4}+\frac{\epsilon}{4}=\epsilon.
        \end{split}
    \end{equation}
    The proof is complete. 
\end{proof} 

\section{Additional Results for Empirical Analysis}
\label{sec:additional-empirical}


\subsection{Data preprocessing}

For the dataset in Section \ref{sec:RealData} of the manuscript,  
we first truncate the data such that the 20 variables have uniformly aligned time period for each country, and a preprocessing procedure is then conducted: a possible seasonal adjustment, a possible transformation to remove non-stationarity, and finally standardizing each resulting sequence with mean zero and variance one. 
Table~\ref{tab:realdata} provides economic category for each variable and details for data preprocessing. 

After data preprocessing, the target series (Japan) has 87 quarterly observations, whereas the source series have lengths ranging from 91 (Sweden) to 216 (United States). The starting and ending quarters of the aligned data across ten countries are illustrated in Figure~\ref{fig:country_indicator_length2}.


\subsection{Additional results for projection}

In the manuscript, we only report the projection matrices of $\widetilde{\bm U}_k$'s and their residual counterparts.
This section illustrates the residual projection matrices of $\widetilde{\bm V}_k$'s and $\widetilde{\bm L}_k$'s in Figure~\ref{fig:proj-VL}, which indicates that their residual projection matrices are nearly zero for all countries. This together with Figure \ref{fig:country_indicator_length} in the manuscript, suggests that these task-specific representations $\widetilde{\bm U}_k$'s, $\widetilde{\bm V}_k$'s and $\widetilde{\bm L}_k$'s share common spaces with small rank, respectively.

\newpage

\begin{table}[htbp]
	\centering
	\caption{Categories of twenty quarterly macroeconomic variables and data preprocessing. The category code (C): 1 = GDP and production, 2 = labor market, 3 = interest rates and money supply, 4 = prices and inflation, 5 = housing and real estate, and 6 = others. The seasonal adjustment code (S) indicates whether a variable is seasonally adjusted: 0 = not adjusted, and 1 = seasonally adjusted. All variables are transformed to be stationary using the transformation code (T): 1 = first difference, 2 = second difference, 3 = first difference of log-transformed series, and 4 = second difference of log-transformed series.}
	\label{tab:realdata}
	\resizebox{\textwidth}{!}{
		\begin{tabular}{lrlrl}
			\toprule
			Abbreviation & \multicolumn{1}{l}{C} & \multicolumn{1}{c}{S} & \multicolumn{1}{l}{T} & Description \\
			\midrule
			GDP   & 1 & 1 & 3 & Real Gross Domestic Product \\
			EGS   & 1 & 1 & 3 & Exports of Goods and Services/International Trade: Total Exports of Goods \\
			IGS   & 1 & 1 & 3 & Imports of Goods and Services/International Trade: Total Imports of Goods \\
			RPT   & 1 & 1 & 3 & Total Production (All Sectors) \\
			RPM   & 1 & 1 & 3 & Manufacturing Production \\
			UNE   & 2 & 1 & 1 & Unemployment Rate \\
			EMP   & 2 & 1 & 3 & Employment Level \\
			HE    & 2 & 1 & 4 & Hourly Earnings (Index = 2015) \\
			ULC   & 2 & 1 & 4 & Unit Labour Cost (Index = 2015) \\
			LTIR  & 3 & 0 & 1 & Interest Rates: 10-Year Government Bond Yields \\
			STIR  & 3 & 0 & 1 & Interest Rates: 3-Month or 90-Day Interbank Rates and Yields \\
			M1    & 3 & 1 & 4 & Moneystock: M1 \\
			M3    & 3 & 1 & 4 & Moneystock: M3 \\
			CPI:T & 4 & 0 & 4 & Consumer Price Index: Total (Index = 2015) \\
			CPI:E & 4 & 0 & 4 & Consumer Price Index: Energy (Index = 2015) \\
			CPI:NFNE & 4 & 0 & 4 & Consumer Price Index: All items no food no energy (Index = 2015) \\
			HPI   & 5 & 0 & 4 & Real House Price Indices (Index = 2015) \\
			PIR   & 5 & 0 & 2 & Price to Income Ratio (Index = 2015) \\
			EXCH  & 6 & 0 & 4 & Nominal Effective Exchange Rate \\
			SP    & 6 & 0 & 3 & Share Price Index \\
			\bottomrule
	\end{tabular}}
\end{table}

\begin{figure}
	\centering
	\includegraphics[width=0.8\linewidth]{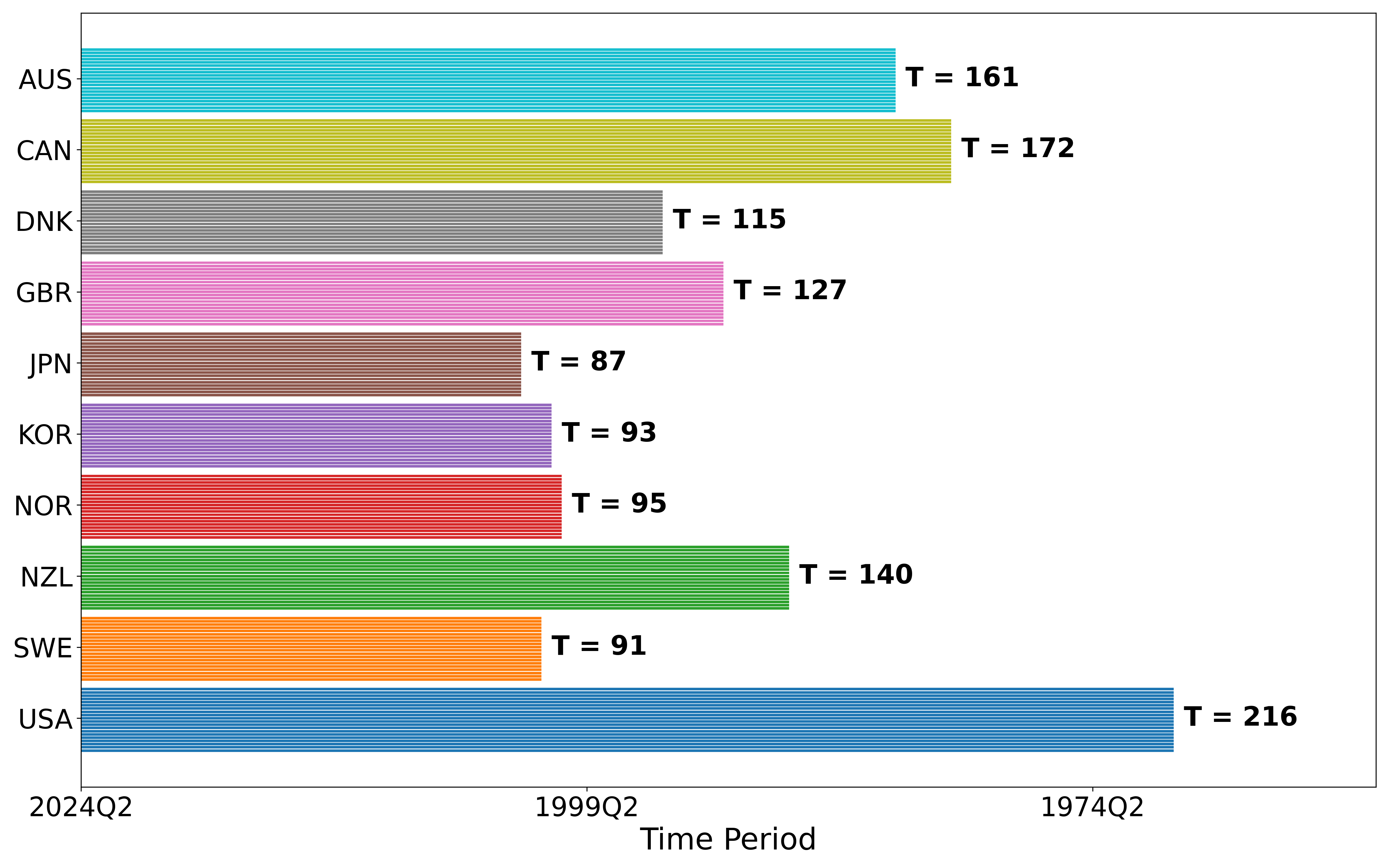}
	\caption{Ending and starting points for quarterly data of 20 macroeconomic variables across 10 countries after truncation.}
	\label{fig:country_indicator_length2}
\end{figure}

\begin{figure}[htbp]
	\centering
	\includegraphics[width=0.8\textwidth]{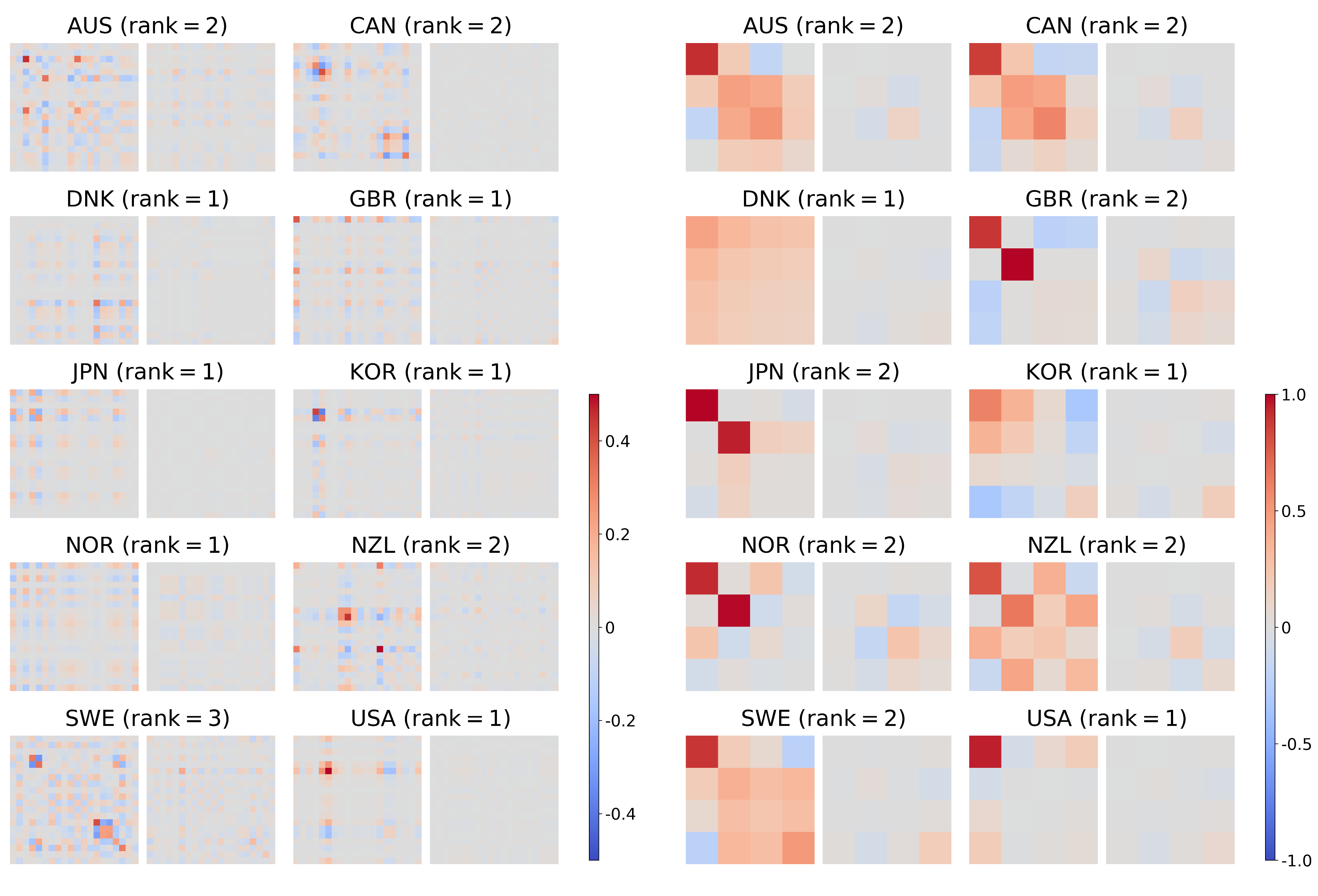}
	\caption{Heatmaps for the projection matrices of predictor factor spaces (left) and temporal factor spaces (right), before and after removing the estimated common subspace. The common subspace dimensions are selected via the elbow method, with rank $s_2 = 5$ for the predictor space and rank $s_3 = 2$ for the temporal space.}
	\label{fig:proj-VL}
\end{figure}

\newpage

\linespread{1.54}
\selectfont{}

\setlength{\bibsep}{1pt}
\bibliography{mybib}

\end{document}